\newif\ifstoc
\newcommand{\stocoption}[2]{{\ifstoc #1 \else #2 \fi}}
\newenvironment{packed_item}{
\begin{itemize}
  \setlength{\itemsep}{0.5pt}
  \setlength{\parskip}{0pt}
  \setlength{\parsep}{0pt}
}{\end{itemize}}
\newenvironment{packed_enum}{
\begin{enumerate}
  \setlength{\itemsep}{0.5pt}
  \setlength{\parskip}{0pt}
  \setlength{\parsep}{0pt}
}{\end{enumerate}}
\newtheorem{question}{Question}
\newenvironment{proof}{

\noindent{\bf Proof:}}
{\hfill$\blacksquare$

}
\newenvironment{proofof}[1]{

\noindent{\bf Proof of {#1}:}}
{\hfill$\blacksquare$

}
\newcounter{note}[section]
\renewcommand{\thenote}{\thesection.\arabic{note}}
\newcommand{\ktnote}[1]{\refstepcounter{note}$\ll${\bf Kunal~\thenote:} {\sf #1}$\gg$\marginpar{\tiny\bf KT~\thenote}}
\newtheorem{theorem}{Theorem}
\newtheorem{definition}{Definition}
\newtheorem{cor}{Corollary}
\newtheorem{corollary}[theorem]{Corollary}
\newtheorem{lemma}{Lemma}
\def\rank{{\rm rank }} 
\def\tr{{\rm tr}} 
\def\E{\mathbb{E}} 
\def\Pr{{\rm Pr}} 
\def\Var{{\rm Var}} 
\def\R{{\mathds{R}}} 
\newcommand{\eps}{\varepsilon}
\renewcommand{\vec}[1]{\mathbf{#1}}
\newcommand{\junk}[1]{}
\def\b0{{\bf 0}}
\newcommand{\mech}{\mathcal{M}}
\DeclareMathOperator{\inv}{inv}
\DeclareMathOperator{\vol}{vol}
\DeclareMathOperator{\vrad}{vrad}
\DeclareMathOperator{\sym}{sym}
\DeclareMathOperator{\disc}{disc}
\DeclareMathOperator{\herdisc}{herdisc}
\DeclareMathOperator{\vecdisc}{vecdisc}
\DeclareMathOperator{\hervecdisc}{hervecdisc}
\DeclareMathOperator{\lindisc}{lindisc}
\DeclareMathOperator{\vollb}{volLB}
\DeclareMathOperator{\detlb}{detLB}
\DeclareMathOperator{\specLB}{specLB}
\DeclareMathOperator{\opt}{opt}
\DeclareMathOperator{\err}{err}
\DeclareMathOperator{\diag}{diag}
\DeclareMathOperator{\supp}{supp}
\DeclareMathOperator{\vspan}{span}
\def\polylog{\operatorname{polylog}}
\newcommand{\cut}[1]{}
\newcommand{\epsdp}{\ensuremath{\eps}-DP}
\newcommand{\epsdeldp}{\ensuremath{(\eps,\delta)}-DP}
\begin{document}
\title{The Geometry of Differential Privacy:\\
The Sparse and Approximate Cases}
\author{Aleksandar Nikolov\thanks{Department of Computer Science,
    Rutgers University, Piscataway, NJ 08854. This work was done while
  the author was at Microsoft Research SVC.} \and Kunal Talwar\thanks{Microsoft Research SVC, Mountain View, CA 94043.} \and Li Zhang\thanks{Microsoft Research SVC, Mountain View, CA 94043.}}
\date{}
\maketitle

\begin{abstract}

In this work, we study trade-offs between accuracy and privacy in the context of linear queries over histograms. This is a  rich class of queries that includes contingency tables and range queries, and has been a focus of a long line of work~\cite{BlumLR08,RothR10,DworkRV10,HardtT10,HardtR10,LiHRMM10,12vollb}. For a given set of $d$ linear queries over a database $x \in \R^N$, we seek to find the differentially private mechanism that has the minimum mean squared error. For pure differential privacy, \cite{HardtT10,12vollb} give an $O(\log^2 d)$ approximation to the optimal mechanism. Our first contribution is to give an $O(\log^2 d)$ approximation guarantee for the case of $(\eps,\delta)$-differential privacy. Our mechanism is simple, efficient and adds carefully chosen correlated Gaussian noise to the answers. We prove its approximation guarantee relative to the {\em hereditary discrepancy} lower bound of~\cite{MNstoc}, using tools from convex geometry.

We next consider this question in the case when the number of queries exceeds the number of individuals in the database, i.e. when  $d > n \triangleq \|x\|_1$. The lower bounds used in the previous approximation algorithm no longer apply, and in fact better mechanisms are known in this setting~\cite{BlumLR08,RothR10,HardtR10,GuptaHRU11,GuptaRU11}. Our second main contribution is to give an $(\eps,\delta)$-differentially private mechanism that for a given query set $A$ and an upper bound $n$ on $\|x\|_1$, has mean squared error within $\polylog(d,N)$ of the optimal for $A$ and $n$. This approximation is achieved by coupling the Gaussian noise addition approach with linear regression over the $\ell_1$ ball. Additionally, we show a similar polylogarithmic approximation guarantee for the best $\eps$-differentially private mechanism in this sparse setting. Our work also shows that for arbitrary counting queries, i.e. $A$ with entries in $\{0,1\}$, there is an $\eps$-differentially private mechanism with expected error $\tilde{O}(\sqrt{n})$ per query, improving on the $\tilde{O}(n^{\frac{2}{3}})$ bound of~\cite{BlumLR08}, and matching the lower bound implied by~\cite{DinurN03} up to logarithmic factors.

The connection between  hereditary discrepancy and the privacy mechanism enables us to derive the first polylogarithmic approximation to the hereditary discrepancy of a matrix $A$.

\junk{Interestingly, computing hereditary discrepancy exactly is not known to be in NP.}
\end{abstract}

\section{Introduction}
Differential privacy~\cite{DMNS} is a recent privacy definition that has quickly become the standard notion of privacy in statistical databases. Informally, a mechanism (a randomized function on databases) satisfies differential privacy if the distribution of the outcome of the mechanism does not change noticeably when one individual's input to the database is changed. Privacy is measured by how small this change must be: an $\eps$-differentially private (\epsdp) mechanism $\mech$ satisfies $\Pr[\mech(x) \in S] \leq \exp(\eps) \Pr[\mech(x\rq{})\in S]$ for any pair $x,x\rq{}$ of neighboring databases, and for any measurable subset $S$ of the range. A relaxation of this definition is {\em approximate differential privacy}. A mechanism $\mech$ is $(\eps,\delta)$-differentially private (\epsdeldp) if $\Pr[\mech(x) \in S] \leq \exp(\eps) \Pr[\mech(x\rq{})\in S] + \delta$ with $x,x\rq{},S$ as before. Here $\delta$ is thought of as negligible in the size of the database. Both these definitions satisfy several desirable properties such as composability, and are resistant to post-processing of the output of the mechanism. 

In recent years, a large body of research has shown that this strong privacy definition still allows for very accurate analyses of statistical databases. At the same time, answering a large number of adversarially chosen queries accurately is inherently impossible with any semblance of privacy. Indeed Dinur and Nissim~\cite{DinurN03} show that answering $\tilde{O}(d)$ random subset sums ($\tilde{O}$ hides polylogarithmic factors in $N,d,1/\delta$.) of a set of $d$ bits with (per query) error $o(\sqrt{d})$ allows an attacker to reconstruct (an arbitrarily good approximation to) all the private information. Thus there is an inherent trade-off between privacy and accuracy when answering a large number of queries. In this work, we study this trade-off in the context of counting queries, and more generally linear queries.

We think of the database as being given by a multiset of database
rows, one for each individual. We will let $N$ denote the size of the
universe that these rows come from, and we will denote by $n$ the
number of individuals in the database. We can represent the database
as its histogram $x \in \R^N$ with $x_i$ denoting the number of
occurrences of the $i$th element of the universe. Thus $x$ would in
fact be a vector of non-negative integers with $\|x\|_1 = n$. We will
be concerned with reporting reasonably accurate answers to a given set
of $d$ {\em linear} queries over this histogram $x$. This set of
queries can naturally be represented by a matrix $A \in \R^{d\times
  N}$ with the vector $Ax \in \R^d$ giving the correct answers to the
queries.  When $A \in \{0,1\}^{d\times N}$, we call such  queries
counting queries. We are interested in the (practical) regime where
$N\gg d\gg n$, although our results hold for all settings of the
parameters. 

A differentially private mechanism will return a noisy answer to the query $A$ and, in this work, we measure the performance of the mechanisms in terms of its worst case total expected squared error. Suppose that $X\subseteq \R^{N}$ is the set of all possible databases. The error of a mechanism $\mech$ is defined as
 $\err_\mech(A,X) = \max_{x\in X} \E[\|\mech(x)-Ax\|_2^2]$. Here the expectation is taken over the internal coin tosses of the mechanism itself, and we look at the worst case of this expected squared error over all the databases in $X$.  Unless stated otherwise, the error of the mechanism will refer to this worst case expected $\ell_2^2$ error. Phrased thus, the Gaussian noise mechanism of Dwork et al.~\cite{DworkKMMN06} gives error at most $O(d^{2})$ for any counting query and guarantees \epsdeldp\footnote{Here and in the rest of the introduction, we suppress the dependence of the error on $\eps$ and $\delta$.} over all the databases, i.e. $X=\R^N$.  Moreover, the aforementioned lower bounds imply that there exist counting queries for which this bound can not be improved. For \epsdp, Hardt and Talwar~\cite{HardtT10} gave a mechanism with error $O(d^{2}\log \frac{N}{d})$ and showed that this is the best possible for random counting queries. Thus the worst case accuracy for counting queries is fairly well-understood in this measure.

 Specific sets of counting queries of interest can however admit much
 better mechanisms than adversarially chosen queries for which the
 lower bounds are shown. Indeed several classes of specific queries
 have attracted attention. Some, such as range queries, are
 \lq\lq{}easier\rq\rq{}, and asymptotically better mechanisms can be
 designed for them. Others, such as constant dimensional contingency
 tables, are nearly as hard as general counting queries, and
 asymptotically better mechanisms can be ruled out in some ranges of
 the parameters.  These query-specific upper bounds are usually proved
 by carefully exploiting the structure of the query, and
 query-specific lower bounds have been proved by reconstruction
 attacks that exploit a lower bound on the smallest singular value of
 an appropriately chosen
 $A$~\cite{DinurN03,DworkMT07,DworkY08,KasiviswanathanRSU10,De12,KasiviswanathanRS13}. It
 is natural to address this question in a competitive analysis
 framework: can we design an efficient algorithm that given any query
 $A$, computes (even approximately) the minimum error differentially
 private mechanism for $A$?

Hardt and Talwar~\cite{HardtT10} answered this question in the affirmative for \epsdp\ mechanisms, and gave a mechanism that has error within factor $O(\log^3 d)$ of the optimal assuming a conjecture from convex geometry known as the hyperplane conjecture or the slicing conjecture. Bhaskara et al.~\cite{12vollb} removed the dependence on the hyperplane conjecture and improved the approximation ratio to $O(\log^2 d)$. Can relaxing the privacy requirement to \epsdeldp\ help with accuracy? In many settings, \epsdeldp\ mechanisms can be simpler and more accurate than the best known \epsdp\ mechanisms.  This motivates the first question we address.
\begin{question}
\label{ques:optdense}
Given $A$, can we efficiently approximate the optimal error \epsdeldp\ mechanism for it? 
\end{question}
Hardt and Talwar~\cite{HardtT10} showed that for some $A$, the lower bound for \epsdp\ mechanism can be $\Omega(\log N)$ larger than known \epsdeldp\ mechanisms. For non-linear Lipschitz queries, De~\cite{De12} showed that this gap can be as large as $\Omega(\sqrt{d})$ (even when $N=d$). This leads us to ask:
\begin{question}
How large can the gap between the optimal \epsdp\ mechanism and the optimal \epsdeldp\ mechanism be for linear queries?
\end{question}

When the databases are sparse, e.g. when $\|x\|_1\leq n \ll d$, one may obtain better mechanisms.  Blum, Ligett and Roth~\cite{BlumLR08} gave an \epsdp\ mechanism that can answer any set of $d$ counting queries  with error $\tilde{O}(dn^{\frac  4 3})$. A series of subsequent works~\cite{DworkNRRV09,DworkRV10,RothR10,HardtR10,GuptaHRU11,HardtLM12,GuptaRU11} led to \epsdeldp\ mechanisms that have error only $\tilde{O}(dn)$. Thus when $n < d$, the lower bound of $O(d^2)$ for arbitrary databases can be breached by exploiting the sparsity of the database. This motivates a more refined measure of error that takes the sparsity of $A$ into account. Given an $A$ and $n$, one can ask for the mechanism $\mech$ that minimizes the sparse case error $\max_{x: \|x\|_1 \leq n} \E[\|\mech(x) - Ax\|_2^2]$. The next set of questions we study address this measure.
\begin{question}
Given $A$ and $n$, can we approximate the optimal sparse case error \epsdeldp\ mechanism for $A$ when restricted to databases of size at most $n$?
\end{question}

\begin{question}
Given $A$ and $n$, can we approximate the optimal sparse case error \epsdp\ mechanism for $A$ when restricted to databases of size at most $n$?
\end{question}

The gap between the $\tilde{O}(dn^{\frac  4 3})$ error \epsdp\ mechanism of~\cite{BlumLR08} and the $\tilde{O}(dn)$ error \epsdeldp\ mechanism of~\cite{HardtR10} leads us to ask:
\begin{question}
  Is there an \epsdp\ mechanism with error $\tilde{O}(dn)$ for
  databases of size at most $n$?
\label{ques:isblroptimal}
\end{question}

\subsection{Results}
In this work, we answer Questions~\ref{ques:optdense}-\ref{ques:isblroptimal} above. Denote by $B_1^N\triangleq \{ x\in \R^N\;:\; \|x\|_1\leq 1\}$ the $N$-dimensional $\ell_1$ ball. Recall that for any query matrix $A\in \R^{d\times N}$ and any set $X\subseteq \R^N$, the (worst-case expected squared) error of $\mech$ is defined as  
\[\err_\mech(A,X) \triangleq \max_{x\in X} \E[\|\mech(x)-Ax\|_2^2]\,.\]

In this paper, we are interested in both the case when $X=\R^N$, called the \emph{dense} case, and when $X=nB_1^N$ for $n<d$, called the \emph{sparse} case. We also write $\err_\mech(A) \triangleq \err_\mech(A,\R^N)$ and $\err_\mech(A,n) \triangleq \err_\mech(A,nB_1^N)$. 

Our first result is a simple and efficient mechanism that for  query matrix $A$ gives an $O(\log^2 d)$ approximation to the optimal error.
\begin{theorem}
Given a query matrix $A \in \R^{d\times N}$, there is an efficient \epsdeldp\ mechanism $\mech$ and an efficiently computable lower bound $L_A$ such that
\begin{packed_item}
\item $\err_M(A) \leq O(\log^2 d\log 1/\delta) \cdot L_A$, and
\item for any \epsdeldp\ mechanism $\mech\rq{}$, $\err_{\mech\rq{}}(A,d)\geq  L_A$.
\end{packed_item}
\label{thm:epsdeldense}
\end{theorem}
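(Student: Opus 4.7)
The plan is to construct both the mechanism $\mech$ and the lower bound $L_A$ from a single semidefinite program, namely the minimum-trace ellipsoid containing the symmetric convex hull $K = AB_1^N$. Concretely, let $M^*$ and $T^*$ denote the optimum and optimum value of the SDP
\begin{align*}
\min \tr(M) \quad \text{subject to} \quad M \succeq 0,\ (Ae_i)^T M^{-1}(Ae_i) \leq 1 \text{ for all } i\in[N].
\end{align*}
Both are efficiently computable, the constraints being Schur-complement representable as linear matrix inequalities. I would take $\mech(x) = Ax + \eta$ with $\eta \sim N(0, \sigma^2 M^*)$ and $\sigma = \Theta(\sqrt{\log(1/\delta)}/\eps)$, and define $L_A \triangleq c\, T^*/\log^2 d$ for a constant $c$ absorbing the $\eps$-dependence.

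The privacy and error analysis of $\mech$ is then the standard Gaussian-mechanism toolkit. On neighboring histograms $x, x'$ the difference $Ax - Ax' = \pm A e_i$ has squared $M^*$-Mahalanobis norm at most $1$ by the SDP constraint, which is exactly what the analysis needs to conclude $(\eps, \delta)$-DP. The expected squared error is $\sigma^2 \tr(M^*) = O(\log(1/\delta)/\eps^2)\cdot T^* = O(\log^2 d\,\log(1/\delta))\cdot L_A$, matching the upper-bound half of the theorem. For the lower-bound half I would invoke the hereditary discrepancy lower bound of \cite{MNstoc}: any $(\eps,\delta)$-DP mechanism $\mech'$ satisfies $\err_{\mech'}(A, d) = \Omega(\herdisc(A)^2)$ (suppressing $\eps,\delta$). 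It therefore suffices to prove the purely geometric inequality $T^* \leq O(\log^2 d)\cdot \herdisc(A)^2$, which would give $L_A \leq \herdisc(A)^2 \leq \err_{\mech'}(A, d)$.

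The main obstacle is this last inequality: the min-trace SDP must approximate hereditary discrepancy squared within a $\log^2 d$ factor. I would prove it by rounding $M^*$ into a $\pm 1$ coloring. The first ingredient is a partial-coloring lemma in the Banaszczyk/Rothvoss style: given that the columns of $A$ lie in the ellipsoid $E_{M^*}$ of trace $T^*$, there exist signs $\chi \in \{-1, 0, +1\}^N$, nonzero on a constant fraction of coordinates, with $\|A\chi\|_2^2 \leq O(\log d)\cdot T^*$, losing a single $\log d$ factor versus the bound one would get from the ellipsoid radius alone. The second ingredient is iteration: running the partial-coloring $O(\log N)$ times on the uncolored coordinates yields a full coloring whose squared discrepancy telescopes, costing another $\log d$ factor in total. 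Since restricting $A$ to a submatrix only shrinks the SDP constraint set and thus only decreases $T^*$, the same bound applies hereditarily, yielding $\herdisc(A)^2 \geq \Omega(T^*/\log^2 d)$. The hardest technical step is the partial-coloring lemma itself, which is where the convex geometry enters: Gaussian measure estimates for convex bodies and an entropy method in the spirit of Banaszczyk, adapted to the ellipsoid $E_{M^*}$.
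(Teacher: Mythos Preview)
Your mechanism and its privacy/error analysis are fine, and the reduction to the geometric inequality $T^* \le O(\log^2 d)\,\herdisc(A)^2$ is set up correctly.  The gap is in how you propose to prove that inequality.  A Banaszczyk/Rothvoss-style partial coloring, followed by iteration, produces a coloring $\chi$ with $\|A\chi\|_2^2 \le O(\log^2 d)\,T^*$; hereditarily this says $\herdisc(A)^2 \le O(\log^2 d)\,T^*$.  That is the \emph{opposite} direction from what you need.  Partial-coloring arguments are inherently upper bounds on discrepancy (they exhibit a good coloring), whereas your lower-bound half requires showing that \emph{every} coloring of \emph{some} submatrix has large discrepancy, i.e.\ that $\herdisc(A)^2$ is at least a $1/\log^2 d$ fraction of $T^*$.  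Nothing in the entropy/Gaussian-measure toolkit you invoke can give that; you need a witness submatrix with large least singular value or determinant.

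The paper obtains exactly such a witness, and this is the real work.  It uses a recursive minimum-volume-enclosing-ellipsoid decomposition rather than a single min-trace ellipsoid; at each level, John's theorem plus the Bourgain--Tzafriri/Vershynin restricted invertibility principle produce $\Omega(d_i)$ contact points forming a submatrix with $\sigma_{\min}^2$ comparable to the level's noise scale $r_i^2$.  That yields the spectral lower bound $\specLB(A)$, which lower-bounds $\herdisc(A)^2$ and hence $\opt_{\eps,\delta}(A)$.  If you want to keep your single-SDP mechanism, the inequality $T^* \le O(\log^2 d)\,\herdisc(A)^2$ is in fact true, but its proof goes through SDP duality and the same restricted-invertibility/determinant lower bound machinery, not through partial coloring.
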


We also show that the gap of $\Omega(\log (N/d))$ between \epsdp\ and \epsdeldp\ mechanisms shown in~\cite{HardtT10} is essentially the worst possible, within $\polylog(d)$ factor, for linear queries. More precisely, the lower bound on \epsdp\ mechanisms used in~\cite{HardtT10} is always within $O(\log (N/d) \polylog(d))$ of the lower bound $L_A$ computed by our algorithm above.  Let $\mech^*$  denote the \epsdp\ generalized $K$-norm mechanism in~\cite{HardtT10}.
\begin{theorem}\label{thm:epsdelgap}
  For any \epsdeldp\ mechanism $\mech$, $\err_{\mech}(A)=\Omega(1/(\log^{O(1)}(d)\log (N/d)))\err_{\mech^*}(A)$.
\end{theorem}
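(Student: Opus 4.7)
The plan is to chain together three ingredients: the $(\eps,\delta)$-DP lower bound $L_A$ from Theorem~\ref{thm:epsdeldense}, the known upper bound on $\err_{\mech^*}(A)$ in terms of the Hardt-Talwar volume lower bound $\vollb(A)$, and a convex-geometric inequality comparing these two lower bounds that costs a factor of $O(\log(N/d))$. By Theorem~\ref{thm:epsdeldense}, any $(\eps,\delta)$-DP mechanism $\mech$ has $\err_\mech(A)\geq L_A$. By \cite{HardtT10,12vollb}, the generalized $K$-norm mechanism $\mech^*$ satisfies $\err_{\mech^*}(A)\leq \log^{O(1)}(d)\cdot \vollb(A)$, where $\vollb(A)$ is defined in terms of the volume of the sensitivity polytope $K=AB_1^N\subseteq\R^d$. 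Hence it suffices to prove
\[
\vollb(A)\leq O\!\bigl(\log(N/d)\cdot \log^{O(1)} d\bigr)\cdot L_A.
\]

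To establish this inequality, I would exploit the form of $L_A$ as a maximum, over submatrices $B$ of $A$, of a spectral quantity of $B$ (essentially a singular-value / hereditary-discrepancy based lower bound inherited from \cite{MNstoc}), and would bound $\vollb(A)$ via a well-chosen subselection of columns. The crucial geometric statement is that for any $d\times N$ matrix $A$ one can find a subset $S\subseteq[N]$ of size $|S|=O(d)$ such that $\vollb(A_S)\geq \Omega(\vollb(A)/\log(N/d))$; equivalently, the volume of $K$ is dominated (up to a $\log(N/d)$ factor) by the volume of the convex hull of some $O(d)$ columns of $A$. This is a covering/entropy statement for the symmetric polytope $K$: $B_1^N$ can be written as a convex combination of $d$-dimensional faces, and only a $\log(N/d)$ overhead is lost in selecting a single face. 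Once such a subset $S$ is exhibited, the restricted-spectral nature of $L_A$ allows one to charge $\vollb(A_S)$ directly to $L_A$ up to $\polylog(d)$ factors.

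The main obstacle is precisely this third step: relating the volume radius of the full sensitivity polytope $AB_1^N$ to a restricted spectral quantity coming from an $O(d)$-column submatrix, with only $\log(N/d)$ overhead. I expect this to go through either a Gaussian width comparison (using the fact that the Gaussian width of the convex hull of $N$ points in $\R^d$ scales like $\sqrt{\log(N/d)}$ times the max norm of a point in some suitable rounding), or through a restricted-invertibility-type selection of columns of $A$ that preserves the relevant spectral mass. The $\log(N/d)$ factor is morally tight, matching the known $\Omega(\log(N/d))$ gap from the Hardt-Talwar lower bound example, so any proof cannot improve on it in general; the work is in avoiding extra powers of $\log d$ or dependence on $N$.
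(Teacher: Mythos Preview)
Your chaining outline is exactly the paper's: lower bound any $(\eps,\delta)$-DP mechanism via the quantity behind Theorem~\ref{thm:epsdeldense}, upper bound $\err_{\mech^*}(A)$ by $\log^{O(1)}(d)\cdot\vollb(A)$ via~\cite{HardtT10,12vollb}, and compare the two geometric quantities at a cost of $O(\log(N/d))$. You have also correctly isolated the only nontrivial step, the column-selection inequality, and correctly located the $\sqrt{\log(N/d)}$ factor in the polytope-with-few-vertices phenomenon.

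What you are missing is the concrete mechanism for the selection, and here the paper does \emph{not} use a covering/entropy argument or a Gaussian-width comparison. It works through the minimum-volume enclosing ellipsoid $E$ of $K=AB_1^N$. First, the B\'ar\'any--F\"uredi/Gluskin bound (Theorem~\ref{thm:polyt-vol-apx}) gives $\vrad(K)\le O(\sqrt{\log(N/d)/d})\,\vrad(E)$; this is the precise source of the $\log(N/d)$. Second, John's theorem says the contact points $u_i\in K\cap\partial E$ decompose the identity, and a Binet--Cauchy averaging (Lemma~\ref{lm:det-avg}) extracts $d$ of them with $|\det|^{1/d}=\Omega(1)$ after normalizing $E$ to a ball; undoing the normalization gives $S\subset[N]$, $|S|=d$, with $|\det(A|_S)|^{1/d}=\Omega(\vrad(E))=\Omega(\sqrt{d/\log(N/d)})\,\vrad(K)$. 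This is Corollary~\ref{cor:d-vert-vrad}, and it exactly says $\detlb(A,d)\ge \Omega(\vollb(A)/\log(N/d))$ at the top scale; the general $\vollb$ (a maximum over projections) is handled by applying the same argument inside each projection. The final charge is to $\detlb$ (costing an extra $\log^2 d$ via~(\ref{eq:detlb})), not directly to the spectral $L_A$.

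Your suggested alternative, restricted invertibility on the MEE contact points (Theorem~\ref{thm:bt}), is also viable and would let you charge directly to $\specLB$ rather than $\detlb$, possibly saving a $\log d$ factor; but it still needs John's theorem and B\'ar\'any--F\"uredi as the two ingredients. The ``face decomposition of $B_1^N$'' heuristic you mention does not lead anywhere on its own.
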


We next move to the sparse case. Here we give results analogous to the dense case with a slightly worse approximation ratio.
\begin{theorem}
Given $A \in \R^{d\times N}$ and a bound $n$, there is an efficient \epsdeldp\ mechanism $\mech$ and an efficiently computable lower bound $L_{A,n}$ such that
\begin{packed_item}
\item $\err_{\mech}(A, n) \leq O(\log^{3/2} d \cdot \sqrt{\log N\log 1/\delta}+\log^2 d \log 1/\delta) \cdot L_{A,n}$, and
\item For any \epsdeldp\ mechanism $\mech\rq{}$, $\err_{\mech\rq{}}(A, n)\geq  L_{A,n}$.
\end{packed_item}
\label{thm:epsdelsparse}
\end{theorem}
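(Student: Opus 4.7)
The plan is to couple the Gaussian mechanism of Theorem~\ref{thm:epsdeldense} with a linear-regression denoising step over the $\ell_1$ ball. The mechanism first releases $y = Ax + w$, where $w$ is correlated Gaussian noise with the same covariance $\Sigma$ used in Theorem~\ref{thm:epsdeldense}, thereby inheriting $(\eps,\delta)$-differential privacy. It then computes $\hat x \in \arg\min_{z \in nB_1^N}\|Az - y\|_2^2$ and outputs $A\hat x$. Privacy is preserved by post-processing, so the whole task is to bound the squared error of $A\hat x$.

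For the lower bound, I would define $L_{A,n}$ by maximizing the dense-case lower bound $L_{A_S}$ from Theorem~\ref{thm:epsdeldense} over column-submatrices $A_S$ with $|S|\le n$ (appropriately rescaled so it applies to $\ell_1$-balls of radius $n$). This is valid in the sparse case since any $\epsdeldp$ mechanism for $A$ on $nB_1^N$ restricts, by zeroing out coordinates outside $S$, to an $\epsdeldp$ mechanism for $A_S$ on the $\ell_1$-ball of radius $n$ in $\R^S$, whose expected squared error is at most the original.

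For the upper bound, optimality of $\hat x$ yields the Lasso-style inequality
\[
\|A\hat x - Ax\|_2^2 \;\le\; 2\langle A^{\top} w,\, \hat x - x\rangle \;\le\; 4n\,\|A^{\top} w\|_\infty,
\]
using $\hat x - x \in 2nB_1^N$. A Gaussian maximal inequality bounds $\E\,\|A^{\top} w\|_\infty$ by $O(\sqrt{\log N\,\log(1/\delta)})\cdot\max_j\sqrt{A_j^{\top}\Sigma A_j}$. Together with the dense-case bound $\E\,\|w\|_2^2 \le O(\log^2 d\,\log(1/\delta))\cdot L_A$ on the raw noise (used whenever it is tighter than the regression-refined bound), the two contributions produce the two summands of the theorem's approximation ratio.

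The main technical obstacle is to control both bounds by $L_{A,n}$ rather than the possibly much larger $L_A$, up to $\polylog$ factors. I plan a Maurey-type column-selection argument: choose a subset $S$ of $n$ columns maximizing $A_j^{\top}\Sigma A_j$, and show that the dense-case lower bound $L_{A_S}$ of Theorem~\ref{thm:epsdeldense} already certifies both $n^2\log N\,\max_j A_j^{\top}\Sigma A_j$ (the regression contribution) and the portion of the raw Gaussian error contributed by columns in $S$. Since $L_{A,n}\ge L_{A_S}$, both contributions are then bounded by $\polylog(d,N,1/\delta)\cdot L_{A,n}$, giving the claimed approximation.
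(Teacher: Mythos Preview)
Your analysis has a genuine gap: the two global bounds you derive, $\E\|w\|_2^2$ and $n\,\E\|A^\top w\|_\infty$, are \emph{each} controlled only by quantities that can exceed $L_{A,n}$ by a polynomial factor, and taking their minimum does not rescue you. Concretely, take $d=n^2+1$ and let $A$ have $n^2$ orthonormal columns $\sqrt{L/n}\,e_1,\ldots,\sqrt{L/n}\,e_{n^2}$ together with one column $\sqrt{L}\,e_{n^2+1}$. Then $\specLB(A,n)=L$, so $L_{A,n}=\Theta(L)$. The dense-case covariance $\Sigma$ from Theorem~\ref{thm:epsdeldense} has $\tr(\Sigma)=\Theta(nL\log d)\cdot c(\eps,\delta)^2$ (the $n^2$ small directions contribute $n^2\cdot L/n=nL$), so the raw bound is $\Theta(nL)\cdot\polylog$. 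On the other hand $\max_j a_j^\top\Sigma a_j$ is attained at the big column and equals $\Theta(L^2\log d)\cdot c(\eps,\delta)^2$, so $n\,\E\|A^\top w\|_\infty=\Theta(nL)\cdot\polylog$ as well. Hence $\min\{\E\|w\|_2^2,\,n\,\E\|A^\top w\|_\infty\}=\Omega(n)\cdot L_{A,n}$, and no ``Maurey-type'' choice of $S$ with $|S|\le n$ can certify more than $L_{A_S}=L$ here. The proposed column-selection argument therefore cannot close the gap.

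The missing idea is that the regression step and the raw-noise bound must be applied to \emph{different orthogonal pieces} of the noise, not to the same global $w$. The paper uses the base decomposition $U_1,\ldots,U_k$ (with $d_i\approx d/2^i$ and $r_i=\max_j\|U_i^\top a_j\|_2$) and picks the threshold $t$ with $d_t\ge \eps n>d_{t+1}$. For levels $i\le t$ one has $nr_i^2\lesssim \specLB(A,\eps n)\le L_{A,n}$ via restricted invertibility (Lemma~\ref{lm:base}), so the regression bound $n\max_j|\langle a_j, r_iU_iw_i\rangle|\lesssim nr_i^2\sqrt{\log N\log(1/\delta)}$ is controlled by $L_{A,n}$; this yields the $\log^{3/2}d\sqrt{\log N\log(1/\delta)}$ term. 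For levels $i>t$ one has $d_i<\eps n$, so the dense-case analysis gives $\E\|r_iU_iw_i\|_2^2\lesssim \opt_{\eps,\delta}(A,d_i/\eps)\le \opt_{\eps,\delta}(A,n)$; this yields the $\log^2 d\log(1/\delta)$ term. Crucially, the least-squares projection is taken onto $nXX^\top K$ (the image of $nK$ in the ``large-dimension'' subspace $X=\sum_{i\le t}U_i$), not onto $nK$ itself; this is exactly what prevents the big-column direction in the example above from contaminating the regression bound. Your mechanism and analysis need this split to go through.
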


\begin{theorem}
Given $A \in \R^{d\times N}$ and a bound $n$, there is an efficient \epsdp\ mechanism $\mech$ and an efficiently computable lower bound $L_{A,n}$ such that
\begin{packed_item}
\item $\err_{\mech}(A, n) \leq O(\log^{O(1)} d \cdot\log^{3/2} N) \cdot L_{A,n}$, and
\item For any \epsdp\ mechanism $\mech\rq{}$, $\err_{\mech\rq{}}(A, n)\geq  L_{A,n}$.
\end{packed_item}
\label{thm:epssparse}
\end{theorem}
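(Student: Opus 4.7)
I would mirror the construction used for Theorem~\ref{thm:epsdelsparse} (the \epsdeldp\ sparse case), but replace Gaussian noise with the refined $K$-norm mechanism of \cite{HardtT10,12vollb}. Let $K\supseteq AB_1^N$ be the multi-scale convex body that achieves the $O(\log^2 d)$ dense \epsdp\ approximation, draw $w$ from the density proportional to $\exp(-\eps\|y\|_K)$ (so that $x\mapsto Ax+w$ is \epsdp), and post-process by solving
\[
\hat x \;=\; \arg\min_{\|x'\|_1\le n}\; \|Ax'-(Ax+w)\|_2,
\]
releasing $A\hat x$. Privacy is preserved by post-processing.

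\textbf{Upper bound.} Optimality of $\hat x$ and feasibility of $x$ give $\|A\hat x-(Ax+w)\|_2\le\|w\|_2$; expanding the square yields
\[
\|A(\hat x-x)\|_2^2 \;\le\; 2\langle \hat x-x,\,A^Tw\rangle \;\le\; 2\|\hat x-x\|_1\,\|A^Tw\|_\infty \;\le\; 4n\,\|A^Tw\|_\infty.
\]
The task reduces to controlling $\E\|A^Tw\|_\infty$. Each coordinate $\langle A_j,w\rangle$ is a one-dimensional marginal of the $K$-norm distribution with sub-exponential tail whose scale is the support function $h_K(A_j)$, so a union bound over the $N$ columns gives $\E\|A^Tw\|_\infty = O(\log N)\cdot\max_j h_K(A_j)$. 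I would then bound $\max_j h_K(A_j)$ via the convex-geometric tools of \cite{12vollb} (volume lower bounds, restricted invertibility) against the sparse lower-bound quantity $L_{A,n}$ defined below; combining with the factor $4n$ and the union-bound loss yields the claimed $\polylog(d)\cdot\log^{3/2}N$ approximation, the extra $\sqrt{\log N}$ relative to Theorem~\ref{thm:epsdelsparse} reflecting the sub-exponential (versus sub-Gaussian) nature of pure-DP marginals.

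\textbf{Lower bound.} Define
\[
L_{A,n} \;:=\; \max_{S\subseteq[N],\,|S|\le n}\; \specLB(A_S),
\]
where $\specLB(\cdot)$ is the dense \epsdp\ lower bound of \cite{HardtT10,12vollb} (the generalized volume / hereditary-discrepancy lower bound). Any \epsdp\ mechanism accurate on $nB_1^N$ is accurate on databases supported on any fixed $S$ with $|S|\le n$ (itself a dense instance for $A_S$), so $\specLB(A_S)$ lower-bounds $\err_{\mech'}(A,n)$, and the max over $S$ preserves this. For efficient computability I would replace the subset-max by a convex-programming relaxation of $\specLB$, in the spirit of the hereditary-discrepancy relaxations in \cite{MNstoc,12vollb}, avoiding the $\binom{N}{n}$ enumeration.

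\textbf{Main obstacle.} The delicate step is to bound the support-function values $h_K(A_j)$ --- the scale of the marginals $\langle A_j,w\rangle$ --- by the \emph{sparse} quantity $L_{A,n}$, even though $K$ is calibrated to all columns of $A$. The projection enables this decoupling in principle, but the analysis must invoke a scale-by-scale decomposition of $K$ (reusing the recursive machinery of \cite{12vollb}) so that each $A_j$'s contribution can be charged against the lower bound of an appropriate $n$-column submatrix; this is what forces the additional $\polylog(d)$ factors on top of those already present in the dense $O(\log^2 d)$ approximation. A secondary difficulty is that pure-DP noise has heavier tails than Gaussian, costing a full $\log N$ (rather than $\sqrt{\log N}$) in the union bound and producing the $\log^{3/2}N$ in the final ratio.
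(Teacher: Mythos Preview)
Your high-level plan is right: add $K$-norm style noise, project via least squares onto $nAB_1^N$, and reduce the error bound to $\E\max_j n|\langle a_j,w\rangle|$. The paper does exactly this. But the step you flag as the ``main obstacle'' is the entire content of the proof, and your proposed resolution has a concrete error and a genuine gap.

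First, the claim that the marginal $\langle a_j,w\rangle$ has sub-exponential scale $h_K(a_j)$ is not right. The generalized $K$-norm noise is a sum $m\sum_\ell w_\ell$ of independent log-concave pieces in orthogonal subspaces $\mathcal V_\ell$ of dimensions $d_\ell$; the relevant scale is governed by the covariance bound $\lambda_{\max}(M_\ell)\le O(\log^2 d)\,(d_\ell/\eps^2)\,\vrad(\Pi_\ell K)^2$, not by the support function. So the quantity you must control is $\sum_\ell \|a_j\|_2^2\,\lambda_{\max}(M_\ell)$, which mixes the column norm with volume radii at every scale. Second, charging this against the \emph{sparse} lower bound cannot be done using only the $K$-norm mechanism's internal decomposition. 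The paper introduces an additional outer decomposition (Algorithm~\ref{alg:base}, based on the minimum-volume enclosing ellipsoid) producing orthogonal blocks $U_1,\dots,U_k$ with radii $r_i=\max_j\|U_i^Ta_j\|_2$; it applies the $K$-norm mechanism \emph{separately} to each $U_i^TA$. Two further ingredients you do not mention are then essential: (i) the B\'ar\'any--F\"uredi/Gluskin bound $\vrad(\Pi U_i^TK)^2=O((\log N)/\dim)\,r_i^2$, which converts the $\vrad$-based covariance into a quantity depending only on $r_i$; and (ii) restricted invertibility (Lemma~\ref{lm:base}), which produces a set $S_i$ of size $\Omega(d_i)$ with $\sigma_{\min}(\Pi A|_{S_i})=\Omega(r_i)$, hence $\specLB(A,\eps n)=\Omega(\eps n r_i^2)$ whenever $d_i\ge\eps n$. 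This is exactly what ties the upper bound $n r_i^2/\eps$ to $L_{A,n}$. The paper also splits at the threshold $d_t\ge\eps n$: blocks with $d_i\ge\eps n$ get LSE, while the remaining low-dimensional part is handled by the dense result (Theorem~\ref{thm:htvollb} plus Theorem~\ref{thm:cost-pure}); your plan omits this split. Concentration for the union bound comes from Borell's inequality for log-concave measures, which is where the extra $\sqrt{\log N}$ enters, as you correctly anticipate.
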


We remark that in these theorems, our upper bounds hold for all $x$ with $\|x\|_1 \leq n$, whereas the lower bounds hold even when $x$ is an integer vector.

The \epsdeldp\ mechanism of Theorem~\ref{thm:epsdelsparse} when run on any counting query has error no larger than the best known bounds~\cite{GuptaRU11} for counting queries, up to constants (not ignoring logarithmic factors). The \epsdp\ mechanism of Theorem~\ref{thm:epssparse} when run on any counting query can be shown to have nearly the same asymptotics, answering question~\ref{ques:isblroptimal} in the affirmative.
\begin{theorem}\label{thm:countingeps}
For any counting query $A$, there is an \epsdp\ mechanism $\mech$ such that $\err_{\mech}(A,n) = \tilde{O}(dn)$.
\end{theorem}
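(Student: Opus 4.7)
The plan is to apply Theorem~\ref{thm:epssparse} to a counting query matrix $A \in \{0,1\}^{d \times N}$ and then argue that the instance-dependent lower bound satisfies $L_{A,n} = \tilde O(dn)$. Since Theorem~\ref{thm:epssparse} produces an efficient \epsdp\ mechanism with $\err_{\mech}(A,n) \leq \polylog(d,N) \cdot L_{A,n}$, such a bound on $L_{A,n}$ immediately yields the claim. Because $L_{A,n}$ is a lower bound on the optimal \epsdp\ error, it is enough to exhibit \emph{any} \epsdp\ mechanism (not necessarily efficient) with error $\tilde O(dn)$ on counting queries; then $L_{A,n}$ is at most that mechanism's error.

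To construct such a mechanism, I would run the generalized $K$-norm mechanism of Hardt--Talwar~\cite{HardtT10} with body $K$ a suitable rescaling/symmetrization of $A(nB_1^N)$, and then post-process by projecting the noisy output onto the convex set $A(nB_1^N)$, which cannot increase the $\ell_2$ error since the true answer $Ax$ already lies in this set. The squared error of the $K$-norm mechanism is governed by the geometric quantity $d \cdot \vol_d(K)^{2/d}$. For counting queries the columns of $A$ lie in $\{0,1\}^d$, so $A(nB_1^N)$ is the convex hull of the vectors $\pm n A_i$, i.e.\ a scaled zonotope with $N$ generators each of $\ell_2$-norm at most $\sqrt d$. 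A Spencer/Banaszczyk-type volume estimate for such zonotopes (equivalently, a bound on the hereditary partial discrepancy of $n$-column submatrices of $A$, which is $\tilde O(\sqrt n)$ since every such submatrix is in $\{0,1\}^{d \times n}$) then gives $\vol_d(K)^{1/d} = \tilde O(\sqrt n)$, and hence total squared error $\tilde O(dn)$.

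The main obstacle is precisely the sharpening from the trivial estimate $\vol_d(K)^{1/d} = O(n)$, which follows from $A(nB_1^N) \subseteq [-n,n]^d$ and would yield only $\tilde O(dn^2)$, down to $\tilde O(\sqrt n)$. This is where the $0/1$ combinatorial structure of $A$ is essential: informally, one must argue that the noise of the $K$-norm mechanism can be spread across all $N$ zonotope directions without inflating the per-coordinate error beyond $\tilde O(\sqrt n)$, which is the standard $\sqrt n$ scaling familiar from Dinur--Nissim~\cite{DinurN03}. The projection step onto $A(nB_1^N)$ is what enables this gain, since the relevant body has small Gaussian width for $0/1$ generators.

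Once $L_{A,n} = \tilde O(dn)$ is established, Theorem~\ref{thm:epssparse} supplies the efficient \epsdp\ mechanism with error at most $\polylog(d,N) \cdot \tilde O(dn) = \tilde O(dn)$, matching the Dinur--Nissim lower bound up to polylogarithmic factors and completing the proof.
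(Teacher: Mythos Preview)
Your algorithm --- add $K$-norm noise with $K=AB_1^N$ and then project onto $nAB_1^N$ --- is exactly what the paper does (Algorithm~\ref{alg:knormlse}). The detour through Theorem~\ref{thm:epssparse} is unnecessary and circular: once you have \emph{any} \epsdp\ mechanism with error $\tilde O(dn)$ you are done, so the work is entirely in the second paragraph.

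The analysis in that paragraph, however, does not go through. The squared error of the $K$-norm mechanism is \emph{not} governed by $d\cdot\vol_d(K)^{2/d}$ after the projection step; and your volume claim $\vol_d(nAB_1^N)^{1/d}=\tilde O(\sqrt n)$ is false, since $nAB_1^N = n\cdot AB_1^N$ forces $\vol_d(nAB_1^N)^{1/d}=n\cdot\vol_d(AB_1^N)^{1/d}$, which scales linearly in $n$. (Also, $AB_1^N$ is the symmetric convex hull of the columns, not a zonotope, and Spencer/Banaszczyk bounds on $n$-column submatrices give \emph{lower} bounds on error, not upper bounds on volume.)

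What the projection actually buys you is Lemma~\ref{lm:lse}: with $L=nK$ the error satisfies $\|\hat y - y\|_2^2 \le 4\|w\|_{L^\circ} = 4n\,\|w\|_{K^\circ} = 4n\,\max_j |\langle a_j,w\rangle|$. So the quantity to control is $\E\max_j|\langle a_j,w\rangle|$ for $K$-norm noise $w$, and the target is $\tilde O(d/\eps)$. The paper does this in three steps: bound $\E\langle a_j,w\rangle^2$ using the eigenvalue bound on the covariance of each log-concave piece of $w$ from Theorem~\ref{thm:knorm}, which in turn is bounded via the B\'ar\'any--F\"uredi/Gluskin estimate $\vrad(\Pi K)\le O(\sqrt{\log N/k})\sqrt d$ (Lemma~\ref{lem:vradbound}); upgrade the second-moment bound to a tail bound using Borell's inequality for log-concave measures (Theorem~\ref{thm:logconcaveborell}); then take a union bound over the $N$ columns. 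Your proposal gestures at the last step (``Gaussian width'') but misses that the noise is not Gaussian, and does not supply the covariance/eigenvalue bound or the log-concave concentration that replace the Gaussian argument.
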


We will summarize some key ideas we use to achieve these results. More details will follow in Section~\ref{sec:technique}.

For the upper bounds, the first crucial step is to decompose $A$ into ``geometrically  nice'' components and then add Gaussian noise to each component. This is similar to the approach in~\cite{HardtT10,12vollb} but we use the minimum volume enclosing ellipsoid, rather than the $M$-ellipsoid used in those works, to facilitate the decomposition process. This allows us to handle the approximate and the sparse cases. In addition, it simplifies the mechanism as well as the analysis. For the sparse case, we further couple the mechanism with  least squares estimation of the noisy answer with respect to $nAB_1^N$.  By utilizing techniques from statistical estimation, we can show that this process can reduce the error when $n <d$, and prove an error upper bound dependent on the size of the smallest projection of $nAB_1^N$.

For the lower bounds, we first lower bound the accuracy of $(\epsilon,\delta)$-DP mechanism by the hereditary discrepancy of the query matrix $A$, which we in turn lower bound in terms of the least singular values of submatrices of $A$. Finally, we close the loop by utilizing the restricted invertibility principle by Bourgain and Tzafriri~\cite{bour-tza} and its extension by Vershynin~\cite{vershynin} which, informally, shows that if there does not exist a ``small'' projection of $nAB_1^N$ then $A$ has a ``large'' submatrix with a ``large'' least singular value.
\cut{
The algorithms of Theorems~\ref{thm:epsdeldense}
and~\ref{thm:epsdelsparse} are more efficient and conceptually simpler
than their \epsdp\ counterparts. The main computational building blocks
for the algorithm of Theorem~\ref{thm:epsdeldense} are computing a
(approximate) minimum volume enclosing ellipsoid of a convex polytope
and sampling from a multidimensional correlated Gaussian
distribution. Both problems are classically studied and have
algorithms with running time linear in $N$ and polynomial in $d$. The
algorithm of Theorem~\ref{thm:epsdelsparse} additionally needs to
compute the point inside a convex polytope closest to a given
point. This problem is also classical in statistics; it can be approximated by $O(n)$ rounds of a simple gradient descent algorithm to within error no higher than the error already incurred due to privacy. Both algorithms achieve privacy by adding noise from a distribution oblivious to the private database, and, therefore, interaction with private data can be limited to a small and simple part of the algorithm. }

\medskip\noindent{\bf Approximating Hereditary Discrepancy}

The discrepancy of a matrix $A \in \R^{d\times N}$ is defined to be $\disc(A) = \min_{x \in \{-1,+1\}^N} \|Ax\|_{\infty}$. The hereditary discrepancy of a matrix is defined as $\herdisc(A) = \max_{S \subseteq [N]} \disc(A|_{S})$, where $A|_{S}$ denotes the matrix $A$ restricted to the columns indexed by $S$.

As hereditary discrepancy is a maximum over exponentially many
submatrices, it is not a priori clear if there even exists a
polynomial-time verifiable certificate for low hereditary
discrepancy. Additionally, we can show that it is $\mathsf{NP}$-hard
to approximate hereditary discrepancy to within a factor of
$3/2$. Bansal~\cite{Bansal10} gave a pseudo-approximation algorithm
for hereditary discrepancy, which efficiently computes a coloring of
discrepancy at most a factor of $O(\log dN)$ larger than $\herdisc(A)$
for a $d\times N$ matrix $A$. His algorithm allows efficiently
computing a lower bound on $\herdisc$ for any restriction $A|_S$;
however, such a lower bound may be arbitrarily loose, and before our
work it was not known how to efficiently compute nearly matching lower
and upper bounds on
$\herdisc$.  

Muthukrishnan and Nikolov~\cite{MNstoc} show that for a query matrix $A \in \R^{d\times N}$, the error of any \epsdeldp\ mechanism is lower bounded by (an $\ell_2^2$ version of) $(\herdisc(A))^2$ (up to logarithmic factors). Moreover, the lower bound used in Theorem~\ref{thm:epsdeldense} is in fact a lower bound on this version of $\herdisc(A)$. Using the von Neumann minimax theorem, we can go between the $\ell_2^2$ and the $\ell_\infty$ versions of these concepts, allowing us to sandwich the hereditary discrepancy of $A$ between two quantities: a determinant based lower bound and the efficiently computable  expected error of the private mechanism. As the two quantities are nearly matching, our work therefore leads to a polylogarithmic approximation to the hereditary discrepancy of any matrix $A$.

\subsection{Techniques}\label{sec:technique}
In addition to known techniques from the differential privacy literature, our work borrows tools from discrepancy theory, convex geometry and statistical estimation. We next briefly describe how they fit in.

Central to designing a provably good approximation algorithm is an efficiently computable lower bound on the optimum. Muthukrishnan and Nikolov~\cite{MNstoc} proved that (a slight variant of) the hereditary discrepancy of $A$ leads to a lower bound for the error of any \epsdeldp\ mechanism. Lov\'{a}sz, Spencer and Vesztergombi~\cite{lovasz1986discrepancy} showed that  hereditary discrepancy itself can be lower bounded by a quantity called the determinant lower bound. Geometrically, this lower bound corresponds to picking the $d$ columns of $A$ that (along with the origin) give us a simplex with the largest possible volume. The volume or this simplex, appropriately normalized, gives us a lower bound on OPT. More precisely for any simplex $S$, $d^3 \cdot \vol(S)^{\frac{2}{d}} \log^2 d$ gives a lower bound on the error. The $\log^2 d$ factor can be removed by using a lower bound based on the least singular values of submatrices of $A$. Geometrically, for the least singular value lower bound we need to find a simplex of large volume whose $d$ non-zero vertices are also nearly pairwise orthogonal. 

If the $N$ columns of $A$ all lie in a unit ball of radius $R$, it can be shown that adding Gaussian noise proportional to $R$ suffices to guarantee \epsdeldp, resulting in a mechanism having total squared error $dR^2$. Can we relate this quantity to the lower bound? It turns out that if the unit ball of radius $R$ is the minimum volume ellipsoid containing the columns of $A$, this can be done. In this case, a result of Vershynin~\cite{vershynin}, building on the restricted invertability results by Bourgain and Tzafriri~\cite{bour-tza}, tells us that one can find $\Omega(d)$ vertices of $K$ that touch the minimum containing ellipsoid, and are nearly orthogonal. The simplex formed by these vertices therefore has large volume, giving us a \epsdeldp\ lower bound of $\Omega(dR^2)$. In this case, the Gaussian mechanism with the optimal $R$ is within a constant factor of the lower bound. When the minimum volume enclosing ellipsoid is not a ball, we need to project the query along the $\frac{d}{2}$ shortest axes of this ellipsoid, answer this projection using the Gaussian mechanism, and recurse on the orthogonal projection. Using the full power of the restricted invertability result by Vershynin allows us to construct a large simplex and prove our competitive ratio.

Hardt and Talwar~\cite{HardtT10} also used a volume based lower bound, but for \epsdp\ mechanisms, one can take $K$, the symmetric convex hull of all the columns of $A$ and use its volume instead of the volume of $S$ in the lower bound above. How do these lower bounds compare? By a result of B\'ar\'any and F\"uredi~\cite{barany} and Gluskin~\cite{gluskin}, one can show that the volume of the convex hull of $N$ points can be bounded by $(\log N)^{d/2} d^{-d/2}$ times that of the minimum enclosing ellipsoid. This, along with the aforementioned restricted invertability results, allows us to prove that the \epsdp\ lower bound is within $O((\log N) \polylog d)$ of the \epsdeldp\ lower bound.

How do we handle sparse queries? The first observation is that the lower bounding technique gives us $d$ columns of $A$  and the resulting lower bound holds not just for $A$ but even for the  $d\times d$ submatrix of $A$ corresponding to the maximum volume simplex $S$; moreover, the lower bound holds even when all databases are restricted to $O(d)$ individuals. Thus the lower bound holds when $n=O(d)$ and this value marks the transition between the sparse and the dense cases. Moreover, when the minimum volume ellipsoid containing the columns of $A$ is a ball,  the restricted invertibility principle of Bourgain and Tzafriri and Vershynin gives us a $d$-dimensional simplex with nearly pairwise orthogonal vertices, and, therefore any $n$-dimensional face of this simplex is another simplex of large volume. The large $n$-dimensional simplex gives a lower bound on error when databases are restricted to have at most $n$ individuals. 

For smaller $n$, the error added by the Gaussian mechanism may be too large, and even though the value $Ax$ lies in $nAB_1^N$, the noisy  answer will likely fall outside this set. A common technique in statistical estimation for handling such error is to \lq\lq{}project\rq\rq{} the noisy point back into $nAB_1^N$, i.e. report the point $\hat{y}$ in $nAB_1^N$ that minimizes the Euclidean distance to the noisy answer $\tilde{y}$. This projection step provably reduces the expected error! Geometrically, we use well known techniques from statistics to show that the error after projection is bounded by the ``shadow'' that $nAB_1^N$ leaves on the noise vector; this shadow is much smaller than the length of the noise vector when $n = o(d)$. In fact, when the noise is a spherical Gaussian, it can be shown that $\|\hat{y}-y\|_2^2$ is only about $\frac{n}{d}\|\tilde{y}-y\|_2^2$.  
This gives near optimal bounds for the case when the minimum volume ellipsoid is a ball; the general case is handled using a recursive mechanism as before.

To get an \epsdp\ mechanism, we use the $K$-norm mechanism~\cite{HardtT10} instead of Gaussian noise. To bound the shadow of $nAB_1^N$ on $w$, where $w$ is the noise vector generated by the $K$-norm mechanism, we first analyze the expectation of $\langle a_i, w\rangle$ for any column of $A$, and we use the log concavity of the noise distribution to prove concentration of this random variable. A union bound helps complete the argument as in the Gaussian case.

\subsection{Related Work}

Dwork et al.~\cite{DMNS} showed that any query can be released while adding noise proportional to the total {\em sensitivity} of the query. This motivated the question of designing mechanisms with good guarantees for any set of low sensitivity queries. Nissim, Raskhodnikova and Smith~\cite{NissimRS07} showed that adding noise proportional to (a smoothed version of) the {\em local sensitivity} of the query suffices for guaranteeing differential privacy; this may be much smaller than the worst case sensitivity for non-linear queries. Lower bounds on the amount of noise needed for general low sensitivity queries have been shown in~\cite{DinurN03, DworkMT07, DworkY08, DMNS,RastogiSH07, HardtT10,De12}.  Kasiviswathan et al.~\cite{KasiviswanathanRSU10} showed upper and lower bounds for contingency table queries and more recently~\cite{KasiviswanathanRS13} showed lower bounds on publishing error rates of classifiers or even M-estimators. Muthukrishnan and Nikolov~\cite{MNstoc} showed that combinatorial discrepancy lower bounds the noise for answering any set of linear queries. 

Using learning theoretic techniques, Blum, Ligett and Roth~\cite{BlumLR08} first showed that one can exploit sparsity of the database, and answer a large number of counting queries with error small compared to the number of individuals in the database. This line of work has been further extended and improved in terms of error bounds, efficiency, generality and interactivity in several subsequent works~\cite{DworkNRRV09,DworkRV10, RothR10, HardtR10,GuptaHRU11,HardtLM12}.

Ghosh, Roughgarden and Sundarajan~\cite{GhoshRS09} showed that for any one dimensional counting query, a discrete version of the Laplacian mechanism is optimal for pure privacy in a very general utilitarian framework and Gupte and Sundararajan~\cite{GupteS10} extended this to risk averse agents. Brenner and Nissim~\cite{BrennerN10} showed that such universally optimal private mechanisms do not exist for two counting queries or for a single non-binary sum query.  As mentioned above, Hardt and Talwar~\cite{HardtT10}, and Bhaskara et al.~\cite{12vollb} gave relative guarantees for multi-dimensional queries under pure privacy with respect to total squared error. De~\cite{De12} unified and strengthened these bounds and showed stronger lower bounds for the class of non-linear low sensitivity queries.

For specific queries of interest, improved upper bounds are known. Barak et al.~\cite{BarakCDMK07} studied low dimensional marginals and showed that by running the Laplace mechanism on a different set of queries, one can reduce error. Using a similar strategy, improved mechanisms were given by~\cite{XiaoWG10,ChanSS10} for orthogonal counting queries, and near optimal mechanisms were given by  Muthukrishnan and Nikolov~\cite{MNstoc} for  halfspace counting queries. The  approach of answering a set of queries different from the target query set has also been studied in more generality and for other sets of queries by~\cite{LiHRMM10,DingWHL11,RastogiSH07,XiaoWG10,XiaoXY10,YuanZWXYH12}. Li and Miklau~\cite{LiM12a,LiM12b} study a class of mechanisms called extended matrix mechanisms and show that one can efficiently find the best mechanisms from this class. Hay et al.~\cite{HayRMS10} show that in certain settings such as unattributed histograms, correcting noisy answers to enforce a consistency constraint can improve accuracy. 

Very recently, Fawaz et al.~\cite{convolutions} used the hereditary discrepancy lower bounds of Muthukrishnan and Nikolov, as well as the determinant lower bound on discrepancy of Lovasz, Spencer, and Vesztergombi, to prove that a certain Gaussian noise mechanism is nearly optimal (in the dense setting) for computing any given convolution map. Like our algorithms, their algorithm adds correlated Gaussian noise; however, they always use the Fourier basis to correlate the noise.

We refer the reader to texts by Chazelle~\cite{Chazelle} and Matou\v{s}ek~\cite{Matousek} and the chapter by Beck and S\'{o}s~\cite{beck-sos} for an introduction to discrepancy theory. Bansal~\cite{Bansal10} showed that a semidefinite relaxation can be used to design a pseudo-approximation algorithm for hereditary discrepancy. Matou\v{s}ek~\cite{Matousek11} showed that the determinant based lower bound of Lov\'{a}sz, Spencer and Vesztergombi~\cite{lovasz1986discrepancy} is tight up to polylogarithmic factors. Larsen~\cite{disc-larsen} showed applications of hereditary discrepancy to data structure lower bounds, and Chandrasekaran and  Vempala~\cite{disc-ip} recently showed applications of hereditary discrepancy to problems in integer programming.


%
%
%
%
%
%
%
%
%
%
%
%

\paragraph{Roadmap.} In Section~\ref{sect:prelims} we introduce
relevant preliminaries. In Section~\ref{sect:apx} we present our main
results for approximate differential privacy, and in
Section~\ref{sect:pure} we present our main results for pure
differential privacy. In Section~\ref{sect:universal} we prove
absolute upper bounds on the error required for privately answering
sets of $d$ counting queries. In Section~\ref{sect:ext} we give some
extensions and applications of our main results, namely an optimal efficient
mechanism for $\ell_\infty$ error in the dense case, and the
efficient approximation to hereditary discrepancy implied by that
mechanism. We conclude in Section~\ref{sect:concl}. 

\section{Preliminaries}
We start by introducing some basic notation. 

Let $B_1^d$, and $B_2^d$ be, respectively, the $\ell_1$ and $\ell_2$
unit balls in $\R^d$.  Also, let $\sym\{a_1, \ldots a_N\}$ be the
convex hull of the vectors $\pm a_1, \ldots, \pm a_N$. Equivalently,
$\sym\{a_1, \ldots, a_N\} = AB_1^N$ where $A$ is a matrix whose columns
equal $a_1, \ldots, a_N$.

For a $d \times N$ matrix $A$ and a set $S \subseteq [N]$, we denote
by $A|_S$ the submatrix of $A$ consisting of those columns of $A$
indexed by elements of $S$. Occasionally we refer to a matrix $V$
whose columns form an orthonormal basis for some subspace of interest
$\mathcal{V}$ as the orthonormal basis of
$\mathcal{V}$. $\mathcal{P}_k$ is the set of orthogonal projections
onto $k$-dimensional subspaces of $\R^d$.

By $\sigma_{\min}(A)$ and $\sigma_{\max}(A)$ we denote, respectively,
the smallest and largest singular value of $A$.\stocoption{}{ I.e.,
$\sigma_{\min}(A) = \min_{x: \|x\|_2 = 1}{\|Ax\|_2}$ and
$\sigma_{\max}(A) = \max_{x: \|x\|_2 = 1}{\|Ax\|_2}$.} In general,
$\sigma_i(A)$ is the $i$-th largest singular value of $A$, and
$\lambda_i(A)$ is the $i$-th largest eigenvalue of
$A$. \stocoption{}{We recall the minimax characterization of
  eigenvalues for symmetric matrices:
  \begin{equation*}
    \lambda_i = \max_{\mathcal{V}: \dim \mathcal{V} = i}\min_{x \in
      \mathcal{V}:\|x\|_2 = 1}{x^TAx}.
  \end{equation*}
For a matrix $A$ (and the corresponding linear operator), we denote by $\|A\|_2=\sigma_{\max}(A)$ the spectral norm of $A$ and $\|A\|_F = \sqrt{\sum_i \sigma_i^2(A)} = \sqrt{\sum_{i,j} a_{i,j}^2}$ the Frobenius norm of $A$. By $\ker A$ we denote the kernel of $A$, i.e. the subspace of vectors $x$ for which
$Ax = 0$. } 

\subsection{Geometry}


For a set $K \subseteq \R^d$, we denote by $\vol_d(K)$ its
$d$-dimensional volume. Often we use instead the \emph{volume radius}
\begin{equation*}
\vrad_d(K) \triangleq (\vol(K)/\vol(B_2^d))^{1/d}.
\end{equation*}
Subscripts are omitted when
this does not cause confusion.\stocoption{}{ When $K$ lies in a $k$-dimensional
affine subspace of $\R^d$, $\vol(K)$ and $\vrad(K)$ (without
subscripts) are understood to imply $\vol_k$ and $\vrad_k$,
respectively.}

For a convex body $K \subseteq \R^d$, the \emph{polar body} $K^\circ$ is
defined by $K^\circ = \{y: \langle y, x \rangle \leq 1~\forall x \in
K\}$. The fundamental fact about polar bodies we use is that for any
two convex bodies $K$ and $L$
\begin{equation}\label{eq:conv-duality}
  K \subseteq L \Leftrightarrow L^\circ \subseteq K^\circ.
\end{equation}
In the remainder of this paper, when we claim that a fact follows ``by
convex duality,'' we mean that it is implied by
(\ref{eq:conv-duality}). 

A convex body $K$ is \emph{(centrally) symmetric} if $-K = K$. The \emph{Minkowski
  norm} $\|x\|_K$ induced by a symmetric convex body $K$ is defined as 
$\|x\|_K \triangleq \min\{r \in \R: x \in rK\}$. The Minkowski norm induced by
the polar body $K^\circ$ of $K$ is the \emph{dual norm} of $\|x\|_K$
and also has the form 
  $\|y\|_{K^\circ} = \max_{x \in K}{\langle x,  y\rangle}$. 
For convex symmetric $K$, the induced norm and dual
norm satisfy H\"{o}lder's inequality:
\begin{equation}
  \label{eq:holder}
  |\langle x, y \rangle| \leq \|x\|_K \|y\|_{K^\circ}.
\end{equation}

An \emph{ellipsoid} in $\R^d$ is the image of $B_2^d$ under an affine
map. All ellipsoids we consider are symmetric, and therefore, are
equal to an image $F B_2^d$ of the ball $B_2^d$ under a linear map
$F$. A full dimensional ellipsoid $E = FB_2^d$ can be equivalently
defined as $E = \{x: x^T(FF^T)^{-1}x \leq 1\}$. \junk{ Moreover, without loss of generality we may assume that $F$ is
positive semidefinite.}  The polar body of a symmetric ellipsoid $E = F
B_2^d$ is the ellipsoid (or cylinder with an ellipsoid as its base in
case $F$ is not full dimensional) $E^\circ = \{x: x^TFF^Tx \leq
1\}$.


We repeatedly use a classical theorem of Fritz John, characterizing
the (unique) \emph{minimum volume enclosing ellipsoid} (MEE) of any convex body
$K$.\stocoption{}{ We note that John's theorem is frequently stated in terms of the
maximum volume enclosed ellipsoid in $K$; the two variants of the
theorem are equivalent by convex duality. The MEE of $K$ is also known
as a the L\"{o}wner or L\"{o}wner-John ellipsoid of $K$. }

\begin{theorem}[\cite{john}]\label{thm:john}
  Any convex body $K \subseteq \R^d$ is contained in a unique
  ellipsoid of minimal volume. This ellipsoid is $B_2^d$ if and only
  if there exist unit vectors $u_1, \ldots, u_m \in K \cap B_2^d$ and
  positive reals $c_1, \ldots, c_m$\junk{, $\sum{c_i} = d$,} such that
  \begin{align*}
    \sum{c_i u_i} &= 0\\
    \sum{c_i u_i u_i^T} &= I
  \end{align*}
\end{theorem}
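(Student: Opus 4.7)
The plan is to treat both parts of the theorem via a single convex optimization problem. I parameterize ellipsoids as $E_{M,b}=\{x\in\R^d:(x-b)^TM(x-b)\leq 1\}$ with $M$ symmetric positive definite, so that $\vol(E_{M,b})=\vol(B_2^d)\det(M)^{-1/2}$. Then the MEE is the maximizer of $\log\det M$ subject to the convex constraints $(u-b)^TM(u-b)\leq 1$ for every $u\in K$. Existence of an optimum follows from compactness after restricting to ellipsoids of volume at most that of any fixed ball enclosing $K$.

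For uniqueness I would use strict concavity of $\log\det$ on the PSD cone together with convexity of the feasible region. Joint convexity of $(M,b)\mapsto(u-b)^TM(u-b)$ for fixed $u$ along any segment between feasible parameter pairs is a direct calculation: writing out the quadratic form and grouping the cross terms shows that the midpoint of two feasible pairs is again feasible. Averaging any two candidate MEEs therefore produces a feasible ellipsoid whose $\log\det$ is strictly larger unless the two coincide, which gives uniqueness.

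For the characterization, suppose $(M,b)=(I,0)$ is optimal, i.e., $B_2^d$ is the MEE of $K$. Strong duality applies after passing to the affine hull of $K$ (so that Slater's condition holds), so the KKT conditions furnish positive multipliers $c_i>0$ supported on contact points $u_i\in K\cap\partial B_2^d$ for which the Lagrangian
\[
L(M,b) \;=\; -\log\det M + \sum_i c_i\bigl[(u_i-b)^TM(u_i-b)-1\bigr]
\]
is stationary. Differentiating in $b$ at $(I,0)$ yields $\sum_i c_iu_i=0$, and differentiating in $M$ yields $-I+\sum_i c_iu_iu_i^T=0$, i.e.\ $\sum_i c_iu_iu_i^T=I$. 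Conversely, if such $u_i$ and $c_i$ exist, these very identities verify KKT stationarity of $L$ at $(I,0)$, and convexity of the program promotes stationarity to global optimality, so $B_2^d$ is the MEE.

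The main obstacle is the convex-analytic setup: verifying the joint convexity of the quadratic constraint in $(M,b)$ carefully, justifying strong duality when $K$ is not full-dimensional (by working inside its affine hull and replacing $B_2^d$ by its intersection with this subspace), and noting that the contact set is automatically rich enough for the identity on $M$ to hold (in fact, $\sum c_i u_i u_i^T=I$ forces the $u_i$ to span $\R^d$). Once this framework is established, both uniqueness and the characterization follow from standard strictly concave maximization and Lagrangian stationarity.
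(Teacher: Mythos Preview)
The paper does not prove this statement at all: Theorem~\ref{thm:john} is quoted from John's original work and used as a black box, so there is no ``paper's own proof'' to compare against. That said, your proposal contains a genuine error worth flagging.

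Your uniqueness argument rests on the claim that the map $(M,b)\mapsto (u-b)^T M (u-b)$ is jointly convex (or at least that the feasible set in the $(M,b)$ parameterization is convex). This is false. In dimension $1$ with $K=[-1,1]$, the pairs $(M_1,b_1)=(1,0)$ and $(M_2,b_2)=(1/4,1)$ are both feasible (the corresponding intervals are $[-1,1]$ and $[-1,3]$), but the midpoint $(5/8,1/2)$ is not: $(5/8)(-1-1/2)^2=45/32>1$, so $-1\notin E_{5/8,1/2}$. The standard remedy is to change parameterization: write the enclosing ellipsoid as $\{x:\|Tx-c\|_2\le 1\}$ with $T\succ 0$. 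Then the volume is proportional to $(\det T)^{-1}$, the objective $\log\det T$ is strictly concave in $T$, and each constraint $\|Tu-c\|_2\le 1$ is genuinely jointly convex in $(T,c)$ since it is the composition of a norm with an affine map. With this fix, your strict-concavity argument for uniqueness goes through, and the KKT computation (now with gradient in $T$ rather than $M$) yields the same decomposition $\sum c_i u_i=0$, $\sum c_i u_iu_i^T=I$ after evaluating at $T=I$, $c=0$.

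One further point you gloss over: the constraint set is infinite (one constraint per $u\in K$), so invoking KKT requires either a semi-infinite programming argument or the observation that only finitely many constraints --- those at contact points --- can be active, together with a Carath\'eodory-type bound on how many multipliers are needed. This is routine but should be mentioned.
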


According to John's characterization, when the MEE of $K$ is the ball
$B_2^d$, the contact points of $K$ and $B_2^d$ satisfy a structural
property --- the identity decomposes into a linear combination of the
projection matrices onto the lines of the contact points. Intuitively,
this means that $K$ ``hits'' $B_2^d$ in all directions --- it has to,
or otherwise $B_2^d$ can be ``pinched'' in order to produce a smaller
ellipsoid that still contains $K$.  This intuition is formalized by a
theorem of Vershynin, which generalizes the work
of Bourgain and Tzafriri on restricted
invertibility~\cite{bour-tza}.\stocoption{}{ Vershynin~(\cite{vershynin} Theorem~3.1) shows that
  there exist $\Omega(d)$ contact points of $K$ and $B_2^d$ which are
  approximately pairwise orthogonal.} 

\begin{theorem}[\cite{vershynin}]\label{thm:bt}
  Let $K \subseteq \R^d$ be a symmetric convex body whose minimum volume
  enclosing ellipsoid is the unit ball $B_2^d$. Let $T$ be a linear
  map with spectral norm $\|T\|_2 \leq 1$. Then for any $\beta$, there
  exist constant $C_1(\beta)$, $C_2(\beta)$ and contact points $x_1, \ldots, x_k$
  with $k \geq (1-\beta) \|T\|_F^2$ such that the matrix $TX =
  (Tx_i)_{i = 1}^k$ satisfies
  \begin{equation*}
    C_1(\beta) \frac{\|T\|_F}{\sqrt{d}} \leq \sigma_{\min}(TX) \leq
    \sigma_{\max}(TX) \leq    C_2(\beta) \frac{\|T\|_F}{\sqrt{d}}  
  \end{equation*}
\end{theorem}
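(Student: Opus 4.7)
The plan is to reduce this to a restricted invertibility statement about matrices in isotropic position, using John's theorem to convert the geometric hypothesis into an algebraic decomposition of the identity. First, I would invoke Theorem~\ref{thm:john}: since the MEE of $K$ is $B_2^d$, there exist contact points $u_1,\ldots,u_m \in K\cap S^{d-1}$ and positive weights $c_1,\ldots,c_m$ with $\sum_i c_i u_i u_i^T = I_d$. Form the $d\times m$ matrix $U$ whose $i$-th column is $\sqrt{c_i}\,u_i$; then $UU^T = I_d$, so every nonzero singular value of $U$ equals $1$, and in particular $\|U\|_2 = 1$.

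Next I would consider the matrix $V = TU$, whose columns are $\sqrt{c_i}\,Tu_i$. Two identities control its size. On one hand,
\begin{equation*}
\|V\|_F^2 = \tr(TUU^T T^T) = \tr(TT^T) = \|T\|_F^2,
\end{equation*}
so the Frobenius mass of $V$ is exactly $\|T\|_F^2$. On the other hand, $\|V\|_2 \leq \|T\|_2\,\|U\|_2 \leq 1$ by hypothesis. Thus $V$ is a matrix of bounded spectral norm with a prescribed Frobenius mass, which is precisely the setting where a restricted-invertibility principle applies.

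The main step is to select a large column-subset of $V$ whose restriction is well-conditioned. I would invoke a Bourgain--Tzafriri-type restricted invertibility theorem \cite{bour-tza} (in Vershynin's sharper form, which gives the two-sided singular value control needed here): for any $\beta\in(0,1)$ there is a subset $S\subseteq[m]$ with $|S| \geq (1-\beta)\,\|V\|_F^2/\|V\|_2^2 \geq (1-\beta)\|T\|_F^2$ such that the submatrix $V|_S$ has all singular values in the interval $[C_1(\beta),\,C_2(\beta)]\cdot \|V\|_F/\sqrt{d}$. Rewriting in the original coordinates, the columns of $V|_S$ are the vectors $\sqrt{c_i}\,Tu_i$ for $i\in S$; up to harmless column rescalings (the weights $c_i$ are bounded in magnitude, and can be absorbed into the constants or handled by a pigeonhole subselection on dyadic scales of $c_i$), the submatrix $(Tu_i)_{i\in S}$ inherits the same two-sided bound $C_1(\beta)\|T\|_F/\sqrt{d} \leq \sigma_{\min} \leq \sigma_{\max} \leq C_2(\beta)\|T\|_F/\sqrt{d}$. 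Setting $\{x_1,\ldots,x_k\} = \{u_i : i\in S\}$ yields the required contact points.

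The main obstacle I expect is bookkeeping the weights $c_i$ in the passage from the columns of $V = TU$ (which carry the factors $\sqrt{c_i}$) back to the ``unweighted'' contact-point vectors $Tu_i$ demanded by the theorem, while still keeping $k\geq(1-\beta)\|T\|_F^2$. This is handled by splitting the indices into $O(\log d)$ dyadic groups according to the magnitude of $c_i$, applying the restricted invertibility principle within the largest-mass group, and absorbing the resulting $\log$-factor into the constants $C_1(\beta), C_2(\beta)$; alternatively, one can invoke directly the form of restricted invertibility that is insensitive to column rescaling. Everything else --- the John decomposition, the Frobenius/spectral calculation for $V$, and the translation to contact points --- is routine once the right restricted-invertibility black box is in place.
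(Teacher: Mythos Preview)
The paper does not prove this statement: Theorem~\ref{thm:bt} is quoted from \cite{vershynin} and used as a black box, so there is no ``paper's own proof'' to compare against. Your outline is in fact the standard route to this result---apply John's characterization (Theorem~\ref{thm:john}) to obtain an isotropic decomposition $\sum c_i u_iu_i^T=I$ over the contact points, and then run a restricted-invertibility argument on $TU$.

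That said, one point in your sketch deserves care. When you write that you would invoke ``a Bourgain--Tzafriri-type restricted invertibility theorem \ldots\ in Vershynin's sharper form, which gives the two-sided singular value control,'' you are essentially assuming the content of the theorem you are proving: the two-sided bound (control of $\sigma_{\max}$ as well as $\sigma_{\min}$) is precisely Vershynin's contribution over the original Bourgain--Tzafriri lower bound. If you want a self-contained argument, you must actually carry out Vershynin's selection procedure (or an equivalent one, e.g.\ the Marcus--Spielman--Srivastava barrier method) rather than cite it. The weight-removal issue you flag is real as well: since each $c_i\le 1$ (from $c_iu_iu_i^T\preceq I$), dividing the selected columns by $\sqrt{c_i}$ preserves the lower singular-value bound but can inflate $\sigma_{\max}$, so a naive dyadic bucketing on the $c_i$ does not obviously recover the sharp $k\ge(1-\beta)\|T\|_F^2$ without loss. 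In Vershynin's actual proof the selection is performed directly on the weighted decomposition in a way that sidesteps this rescaling, which is why the paper is content to cite the result rather than rederive it.
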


\subsection{Statistical Estimation}

A key element in our algorithms for the sparse case is the use of least squares estimation
to reduce error. Below we present a bound on the error of least
squares estimation with respect to symmetric convex bodies. This analysis appears
to be standard in the statistics literature; a special case of it 
appears for example in~\cite{lse-stats}. 

\begin{figure*}[h]
  \begin{center}
  \includegraphics{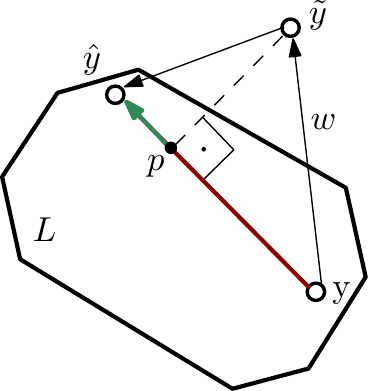}
  \caption{A schematic illustration of the proof of
    Lemma~\ref{lm:lse}. The vector $p-y$ is proportional in length to
    $\langle \hat{y} - y, w \rangle$ and the vector $\hat{y} - p$ is
    proportional in length to $\langle \hat{y} - y, \hat{y} -
    \tilde{y} \rangle$. As $\|w\| \geq \|\hat{y} - \|\tilde{y}\|$,
    $\|p-y\| \geq \|\hat{y} - p\|$. }
  \label{fig:lse}
  \end{center}
\end{figure*}

\begin{lemma}\label{lm:lse}
  Let $L \subseteq \R^d$ be  a symmetric convex body, and let $y \in
  L$ and $\tilde{y} = y + w$ for some $w \in \R^d$. Let, finally,
  $\hat{y} =$ $\arg \min_{\hat{y} \in L}{\|\hat{y} - \tilde{y}\|_2^2}.$ 
  We have
  $
    \|\hat{y} - y\|_2^2 \leq \min\{4\|w\|_2^2, 4\|w\|_{L^\circ}\}.
    $
\end{lemma}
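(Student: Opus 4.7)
The plan is to establish the two bounds separately; both rest on the simple fact that $\hat{y}$ minimizes the Euclidean distance to $\tilde{y}$ over $L$, so in particular $\|\hat{y}-\tilde{y}\|_2 \leq \|y-\tilde{y}\|_2 = \|w\|_2$ (using that $y \in L$).

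For the first bound, $\|\hat y - y\|_2^2 \leq 4\|w\|_2^2$, I would just apply the triangle inequality: $\|\hat{y}-y\|_2 \leq \|\hat{y}-\tilde{y}\|_2 + \|\tilde{y}-y\|_2 \leq 2\|w\|_2$, and square.

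For the second bound I would follow the geometry depicted in Figure~\ref{fig:lse}. Squaring $\|\hat{y}-\tilde{y}\|_2 \leq \|w\|_2$ and expanding $\hat{y}-\tilde{y} = (\hat{y}-y) - w$ gives
\begin{equation*}
\|\hat{y}-y\|_2^2 - 2\langle \hat{y}-y,\,w\rangle + \|w\|_2^2 \;\leq\; \|w\|_2^2,
\end{equation*}
so $\|\hat{y}-y\|_2^2 \leq 2\langle \hat{y}-y,\, w\rangle$. This is exactly the inequality $\|p-y\|\geq \|\hat y - p\|$ in the figure, read off by projecting $\tilde y$ onto the line through $y$ and $\hat y$.

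To convert the inner product bound into the claimed $L^\circ$-norm form, I would use that $L$ is symmetric, so $\hat{y}, y \in L$ implies $\hat{y}-y \in 2L$, i.e.\ $\|\hat{y}-y\|_L \leq 2$. Hölder's inequality~\eqref{eq:holder} then gives $\langle \hat{y}-y, w\rangle \leq \|\hat{y}-y\|_L \, \|w\|_{L^\circ} \leq 2\|w\|_{L^\circ}$, and combining with the previous display yields $\|\hat y - y\|_2^2 \leq 4\|w\|_{L^\circ}$.

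No step is really the obstacle here; the only subtlety is being careful about the factor of $2$ arising from symmetry (so $\hat y - y \in 2L$ rather than $L$) when passing to the Minkowski norm, and remembering that the optimality-based inequality $\|\hat y-\tilde y\|_2 \leq \|y-\tilde y\|_2$ is the one common ingredient driving both halves of the bound.
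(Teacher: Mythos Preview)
Your proof is correct and essentially identical to the paper's. The paper derives the key inequality $\|\hat y - y\|_2^2 \leq 2\langle \hat y - y, w\rangle$ by splitting $\hat y - y = (\tilde y - y) + (\hat y - \tilde y)$ and comparing the two cross terms, whereas you obtain it by squaring $\|\hat y - \tilde y\|_2 \leq \|w\|_2$ and expanding; these are the same computation rearranged, and the subsequent use of $\|\hat y - y\|_L \leq 2$ together with H\"older's inequality matches the paper exactly.
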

\begin{proof}
  First we show the easier bound $\|\hat{y} - y\|_2 \leq 2\|w\|_2$,
  which follows by the triangle inequality:
  \begin{equation*}
    \|\hat{y} - y\|_2 \leq     \|\hat{y} - \tilde{y}\|_2 + \|\tilde{y}
    - y\|_2 \leq 2    \|\tilde{y} - y\|_2. 
  \end{equation*}

  The second bound is based on H\"{o}lder's inequality and the
  following simple but very useful fact, illustrated schematically in
  Figure~\ref{fig:lse}:
  \begin{align}
    \|\hat{y} - y\|_2^2 &= \langle \hat{y} - y, \tilde{y} - y \rangle + \langle
    \hat{y} - y, \hat{y} - \tilde{y} \rangle\notag\\
    &\leq 2\langle \hat{y} - y, \tilde{y} - y \rangle \label{eq:magic-ineq}.
  \end{align}
  \stocoption{}{The inequality (\ref{eq:magic-ineq}) follows from  }\stocoption{\junk{\begin{align*}
    \langle \hat{y} - y, \tilde{y} - y \rangle &= \|\tilde{y} - y\|_2^2 + \langle \hat{y} -
    \tilde{y}, \tilde{y} - y \rangle \\ &\geq \|\hat{y} - \tilde{y}\|_2^2 + \langle \hat{y} -
    \tilde{y}, \tilde{y} - y \rangle = \langle \hat{y} - \tilde{y}, \hat{y} - y\rangle.
  \end{align*}}}{\begin{equation*}
    \langle \hat{y} - y, \tilde{y} - y \rangle = \|\tilde{y} - y\|_2^2 + \langle \hat{y} -
    \tilde{y}, \tilde{y} - y \rangle \geq \|\hat{y} - \tilde{y}\|_2^2 + \langle \hat{y} -
    \tilde{y}, \tilde{y} - y \rangle = \langle \hat{y} - \tilde{y}, \hat{y} - y\rangle.
  \end{equation*}}
  Inequality (\ref{eq:magic-ineq}), $w = \tilde{y} - y$, and H\"{o}lder's inequality imply
  \begin{equation*}
    \|\hat{y} - y\|_2^2 \leq 2\langle \hat{y} - y, w \rangle \leq
    2\|\hat{y} - y\|_L\|w\|_{L^\circ} \leq 4\|w\|_{L^\circ},
  \end{equation*}
  which completes the proof.
\end{proof}

\subsection{Differential Privacy}

Following recent work in differential privacy, we model private data
as a database $D$ of $n$ rows, where each row of $D$ contains
information about an individual.  Formally, a database $D$ is a
multiset of size $n$ of elements of the universe $U = \{t_1, \ldots,
t_N\}$ of possible user types. Our algorithms take as input a
{histogram} $x \in \R^N$ of the database $D$, where the $i$-th
component $x_i$ of $x$ encodes the number of individuals in $D$ of
type $t_i$. Notice that in this histogram representation, we have
$\|x\|_1 = n$ when $D$ is a database of size $n$. Also, two
neighboring databases $D$ and $D'$ that differ in the presence or
absence of a single individual correspond to two histograms $x$ and
$x'$ satisfying $\|x - x'\|_1 = 1$.

Through most of this paper, we work under the notion of \emph{approximate
differential privacy}. The definition follows.

\begin{definition}[\cite{DMNS,odo}]
  A (randomized) algorithm $\mech$ with input domain $\mathbb{\R}^N$
  and output range $Y$ is \emph{$(\eps, \delta)$-differentially
    private} if for every $n$, every ${x}, {x}'$ with $\|{x} -
  {x}'\|_1 \leq 1$, and every measurable $S \subseteq Y$, $\mech$
  satisfies
  \begin{equation*}
    \Pr[\mech({x}) \in S] \leq e^\eps \Pr[\mech({x}') \in S] + \delta.
  \end{equation*}
\end{definition}
When $\delta = 0$, we are in the regime of \emph{pure differential
  privacy}.

An important basic property of differential privacy is that the
privacy guarantees degrade smoothly under composition and are not
affected by post-processing.

\begin{lemma}[\cite{DMNS,odo}]
  \label{lm:simple-composition}
  Let $\mech_1$ and $\mech_2$ satisfy $(\eps_1, \delta_1)$- and
  $(\eps_2, \delta_2)$-differential privacy, respectively. Then the
  algorithm which on input $\vec{x}$ outputs the tuple
  $(\mech_1(\vec{x}), \mech_2(\mech_1(\vec{x}), \vec{x}))$ satisfies
  $(\eps_1 + \eps_2, \delta_1 + \delta_2)$-differential privacy.
\end{lemma}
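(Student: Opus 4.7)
The plan is to prove this by conditioning on the intermediate output $y_1 = \mech_1(\vec{x})$ and applying the two privacy guarantees in sequence. Fix neighbors $\vec{x},\vec{x}'$ with $\|\vec{x}-\vec{x}'\|_1\leq 1$ and a measurable set $S$ in the joint output range of the composed mechanism $\mech$. Writing the slices $S_{y_1} := \{y_2 : (y_1,y_2)\in S\}$, the first step is the disintegration
\[
  \Pr[\mech(\vec{x})\in S] \;=\; \int \Pr[\mech_1(\vec{x})\in dy_1]\,\Pr[\mech_2(y_1,\vec{x})\in S_{y_1}],
\]
which separates the randomness of $\mech_2$ conditional on $y_1$ from the outer distribution of $\mech_1$ over $y_1$.

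Second, I would apply the $(\eps_2,\delta_2)$-DP guarantee of $\mech_2$ pointwise in $y_1$, bounding the inner probability by $e^{\eps_2}\Pr[\mech_2(y_1,\vec{x}')\in S_{y_1}] + \delta_2$. Integrating against $\mech_1(\vec{x})$ contributes an additive $\delta_2$ and reduces the problem to controlling $\E_{y_1\sim\mech_1(\vec{x})}[f(y_1)]$ for the fixed test function $f(y_1):=\Pr[\mech_2(y_1,\vec{x}')\in S_{y_1}] \in [0,1]$. Third, I would swap $\vec{x}$ for $\vec{x}'$ in the outer expectation via the ``expectation form'' of approximate DP: for any $[0,1]$-valued measurable $f$, $(\eps_1,\delta_1)$-DP of $\mech_1$ implies $\E_{\mech_1(\vec{x})}[f] \leq e^{\eps_1}\E_{\mech_1(\vec{x}')}[f] + \delta_1$. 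I would prove this via the layer-cake identity $f(y) = \int_0^1 \mathbf{1}\{f(y)>t\}\,dt$, applying the set form of $(\eps_1,\delta_1)$-DP to each superlevel set $\{f>t\}\subseteq Y_1$, and integrating over $t\in[0,1]$.

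Combining the three steps gives $\Pr[\mech(\vec{x})\in S] \leq e^{\eps_1+\eps_2}\Pr[\mech(\vec{x}')\in S] + (e^{\eps_2}\delta_1 + \delta_2)$, which yields the claim up to the constant factor $e^{\eps_2}$ on $\delta_1$ that is absorbed for constant $\eps_2$. The only delicate point, and in my view the ``hard part,'' is the bookkeeping of the additive $\delta$ terms: the same estimate can be obtained symmetrically as $e^{\eps_1}\delta_2 + \delta_1$ by applying the guarantees in the other order, or organized as a hybrid argument inserting $e^{\eps_2}\int\Pr[\mech_1(\vec{x})\in dy_1]\Pr[\mech_2(y_1,\vec{x}')\in S_{y_1}]$ between the two endpoints, bounding the first difference by $\delta_2$ using $\mech_2$'s guarantee and the second by $\delta_1$ using the identity $\int[p_{\mech_1(\vec{x})}(y_1)-e^{\eps_1}p_{\mech_1(\vec{x}')}(y_1)]^+\,dy_1 \leq \delta_1$ together with the fact that $\Pr[\mech_2(y_1,\vec{x}')\in S_{y_1}]\leq 1$. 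The rest is routine integration; no geometry or non-trivial probability inequalities are needed.
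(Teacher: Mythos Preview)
The paper does not give its own proof of this lemma; it is stated with citations to~\cite{DMNS,odo} and used as a black box. So there is no ``paper proof'' to compare against.

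Your argument is the standard one and is essentially correct. The one honest gap you already flag is real: the chain of inequalities as you wrote it produces $(\eps_1+\eps_2,\ e^{\eps_2}\delta_1+\delta_2)$ (or, in the other order, $e^{\eps_1}\delta_2+\delta_1$), not the exact $\delta_1+\delta_2$ in the lemma statement. Your remark that the extra $e^{\eps_2}$ is ``absorbed for constant $\eps_2$'' is fine for how the lemma is used in this paper, but it does not literally prove the stated bound. If you want the sharp additive $\delta_1+\delta_2$, one clean route is to apply the $(\eps_1,\delta_1)$ guarantee to $\mech_1$ \emph{first} via the positive-part identity you mention, obtaining
\[
\int P_1(dy_1)\,Q_{y_1}(S_{y_1}) \le e^{\eps_1}\!\int P_1'(dy_1)\,Q_{y_1}(S_{y_1}) + \delta_1,
\]
and only then apply $\mech_2$'s guarantee inside the $P_1'$-integral; this still leaves an $e^{\eps_1}$ on $\delta_2$, so neither ordering alone gives the exact constant. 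Getting precisely $\delta_1+\delta_2$ requires a slightly more careful coupling/hockey-stick argument than the straightforward two-step bound you outlined. For the purposes of this paper the distinction is immaterial.
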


\subsubsection{Optimality for Linear Queries}

In this paper we study the necessary and sufficient error incurred by
differentially private algorithms for approximating \emph{linear queries}. A
set of $d$ linear queries is given by a $d \times N$ \emph{query
  matrix} or \emph{workload} $A$; the
exact answers to the queries on a histogram $x$ are given by the
$d$-dimensional vector $y = Ax$. 

We define error as total squared error. More precisely, for an
algorithm $\mech$ and a subset $X\subseteq \R^N$, we define
\begin{equation*}
  \err_\mech(A, X) \triangleq \sup_{x \in X} \E \|Ax - \mech(A, x)\|_2^2.
\end{equation*}

We also write $\err_\mech(A, nB_1^N)$ as $\err_\mech(A, n)$.
The optimal error achievable by any $(\eps, \delta)$-differentially
private algorithm for queries $A$ and databases of size up to $n$ is
\begin{equation*}
  \opt_{\eps, \delta}(A, n) \triangleq \inf_{\mech} \err_\mech(A, n),
\end{equation*}
where the infimum is taken over all $(\eps, \delta)$-differentially
private algorithms. When no restrictions are placed on the size $n$ of
the database, the appropriate notion of optimal error is $\opt_{\eps,
  \delta}(A) \triangleq \sup_{n}\opt_{\eps, \delta}(A, n)$. Similarly, for an
algorithm $\mech$, the error when database size is not bounded is
$\err_\mech(A) \triangleq \sup_{n} \err_\mech(A, n)$. A priori it is not clear
that these quantities are necessarily finite, but we will show that
this is the case.

In order to get tight dependence on the privacy parameter $\eps$
in our analyses, we will use the following relationship between
$\opt_{\eps, \delta}(A, n)$ and $\opt_{\eps', \delta'}(A, n)$. 

\begin{lemma}\label{lm:eps}
  For any $\eps$, any $\delta < 1$, any integer $k$ and for $\delta'
  \geq \frac{e^{k\eps} - 1}{e^\eps - 1}\delta$, 
  \begin{equation*}
    \opt_{\eps, \delta}(A, n) \geq k^2 \opt_{k\eps,
      \delta'}(A, n/k). 
  \end{equation*}
\end{lemma}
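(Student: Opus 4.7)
The plan is to use a scaling argument combined with the group privacy property of differential privacy. The rough intuition is that a mechanism good for databases of size $n$ can be turned into a mechanism for databases of size $n/k$ by scaling the input up by $k$, running the original mechanism, and scaling the output back down; this divides the error by $k^2$, and group privacy converts the $(\eps,\delta)$ guarantee into a $(k\eps, \delta')$ one.

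More precisely, given any $(\eps,\delta)$-DP mechanism $\mech$ achieving error close to $\opt_{\eps, \delta}(A, n)$, I would define $\mech'(x) \triangleq \tfrac{1}{k} \mech(kx)$. For the error bound: if $x \in (n/k) B_1^N$, then $kx \in n B_1^N$, so
\begin{equation*}
\E \|\mech'(x) - Ax\|_2^2 = \tfrac{1}{k^2}\, \E \|\mech(kx) - A(kx)\|_2^2 \leq \tfrac{1}{k^2}\, \err_\mech(A, n),
\end{equation*}
so $\err_{\mech'}(A, n/k) \leq \tfrac{1}{k^2} \err_\mech(A, n)$.

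For privacy, the key step is the standard group privacy lemma: iterating Lemma~\ref{lm:simple-composition} (or a direct induction on $k$) shows that an $(\eps,\delta)$-DP mechanism satisfies, for any pair of histograms $u, u'$ with $\|u - u'\|_1 \leq k$,
\begin{equation*}
\Pr[\mech(u) \in S] \leq e^{k\eps} \Pr[\mech(u') \in S] + \frac{e^{k\eps} - 1}{e^\eps - 1}\,\delta.
\end{equation*}
Now if $x, x'$ are neighbors in the sense $\|x - x'\|_1 \leq 1$, then $\|kx - kx'\|_1 \leq k$, and since $\mech'(x) \in S'$ iff $\mech(kx) \in kS'$, applying the group privacy bound to $u = kx$, $u' = kx'$, $S = kS'$ yields that $\mech'$ is $(k\eps, \delta')$-DP for $\delta' \geq \frac{e^{k\eps}-1}{e^\eps - 1}\delta$. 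Therefore $\opt_{k\eps, \delta'}(A, n/k) \leq \err_{\mech'}(A, n/k) \leq \tfrac{1}{k^2} \err_\mech(A, n)$, and taking the infimum over $\mech$ gives the claim.

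The only slightly delicate point is verifying the exact group-privacy constant $\frac{e^{k\eps}-1}{e^\eps - 1}$ on $\delta$ (rather than, say, $k \delta$ as in the simpler union-bound argument); this is obtained by expanding the recursion $\delta_{j+1} \leq e^\eps \delta_j + \delta$ starting from $\delta_1 = \delta$ and summing the resulting geometric series, and this is the only non-formal computation needed. Everything else is just the scaling identity for $\ell_2^2$ error.
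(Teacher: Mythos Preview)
Your proposal is correct and essentially identical to the paper's proof: both define $\mech'(x) = \tfrac{1}{k}\mech(kx)$, use the scaling identity for $\ell_2^2$ error, and obtain the $(k\eps,\delta')$ guarantee by walking along a length-$k$ path between $kx$ and $kx'$ to accumulate the geometric series $\sum_{j=0}^{k-1} e^{j\eps}\delta$. One minor nit: Lemma~\ref{lm:simple-composition} is the \emph{composition} lemma, not group privacy, so your parenthetical ``direct induction on $k$'' is the relevant justification.
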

\stocoption{We defer the proof to the appendix.}{
\begin{proof}
  Let $\mech$ be an $(\eps, \delta)$-differentially private algorithm
  achieving $\opt_{\eps, \delta}(A, n)$. We will use $\mech$ as a
  black box to construct a $(k\eps, \delta')$-differentially private
  algorithm $\mech'$ which satisfies the error guarantee
  $\err_{\mech'}(A, n/k) \leq \frac{1}{k^2}\err_\mech(A, n)$. 

  The algorithm $\mech'$ on input $x$ satisfying $\|x\|_1 \leq n/k$
  outputs $\frac{1}{k}\mech(kx)$. We need to show that $\mech'$
  satisfies $(k\eps, \delta')$-differential privacy. Let $x$ and $x'$ be two
  neighboring inputs to $\mech'$, i.e. $\|x - x'\|_1 \leq 1$, and let
  $S$ be a measurable subset of the output $\mech'$. Denote $p_1 =
  \Pr[\mech'(x) \in S]$ and $p_2 = \Pr[\mech'(x') \in S]$. We need to
  show that $p_1 \leq e^{k\eps}p_2 + \delta'$. To that end, define
  $x_0 = kx$, $x_1 = kx + (x' - x)$, $x_2 = kx + 2(x'-x)$, $\ldots$,
  $x_k = kx'$. Applying the $(\eps, \delta)$-privacy guarantee of
  $\mech$ to each of the pairs of neighboring inputs $x_0, x_1$, $x_1,
  x_2$, $\ldots$, $x_{k-1}, x_k$ in sequence gives us
  \begin{equation*}
    p_1 \leq e^{k\eps}p_2 + (1 + e^\eps + \ldots + e^{(k-1)\eps})\delta =
    e^{k\eps}p_2 + \frac{e^{k\eps} - 1}{e^\eps - 1}\delta.
  \end{equation*}
  This finishes the proof of privacy for $\mech'$. It is
  straightforward to verify that $\err_{\mech'}(A, n/k) \leq
  \frac{1}{k^2}\err_\mech(A, n)$.  
\end{proof}
} Above, we state the error and optimal error definitions for
histograms $x$, which can be arbitrary real vectors. All our
algorithms work in this general setting. Recall, however, that the
histograms arising from our definition of databases are integer
vectors. Our lower bounds do hold against integer histograms as
well. Therefore, defining $\err$ and $\opt$ in terms of integer
histograms (i.e. taking $\err_\mech(A, n) \triangleq \err_\mech(A,
nB_1^N \cap \mathbb{N}^N)$) does not change the asymptotics of our
theorems.

\subsubsection{Gaussian Noise Mechanism}

A basic mechanism for achieving $(\eps, \delta)$-differential privacy
for linear queries is adding appropriately scaled independent Gaussian
noise to each query\stocoption{~\cite{sulq}.}{. This approach goes
  back to the work of Blum et al.~\cite{sulq}, predating the
  definition of differential privacy.} Next we define this basic
mechanism formally and give a privacy guarantee. The privacy analysis of the Gaussian mechanism in the context of $(\eps,
\delta)$-differential privacy was first given
in~\cite{odo}. \stocoption{The full proof is included in the appendix
  for completeness.}{We give the full proof here for completeness.}

\begin{lemma}\label{lm:gauss-mech-ind}
  Let $A = (a_i)_{i = 1}^N$ be a $d \times N$ matrix such that
  $\forall i: \|a_i\|_2 \leq \sigma$. Then a mechanism which on input
  $x \in \R^N$ outputs $Ax + w$, where $w \sim N(0, \sigma \frac{1 +
    \sqrt{2\ln(1/\delta)}}{\eps})^d$, satisfies $(\eps,
  \delta)$-differential privacy.
\end{lemma}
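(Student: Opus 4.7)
The plan is to establish the standard $(\eps, \delta)$-privacy analysis of the Gaussian mechanism via two reductions and a tail bound. First I would bound the $\ell_2$ sensitivity of the linear map $x \mapsto Ax$: for neighboring histograms $x, x'$ with $\|x - x'\|_1 \leq 1$, the identity $A(x-x') = \sum_i (x-x')_i a_i$ combined with the hypothesis $\|a_i\|_2 \leq \sigma$ yields $\|A(x-x')\|_2 \leq \sigma$. Hence the shift $\Delta = A(x-x')$ between the two output distributions $N(Ax, s^2 I_d)$ and $N(Ax', s^2 I_d)$ has Euclidean length at most $\sigma$, where $s = \sigma(1+\sqrt{2\ln(1/\delta)})/\eps$.

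Next I would reduce to a one-dimensional computation. Since the noise distribution is isotropic, its density is rotationally invariant, so applying a rotation that aligns $\Delta$ with the first coordinate axis leaves the joint distribution unchanged. After this rotation, the density ratio $p_{Ax}(z)/p_{Ax'}(z)$ depends only on the first coordinate of $z - Ax$, and the problem reduces to comparing two univariate Gaussians with means at distance $\|\Delta\|_2 \leq \sigma$ and common variance $s^2$.

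The heart of the argument is a tail bound on the privacy loss random variable $L(z) = \ln(p_{Ax}(z)/p_{Ax'}(z))$. A direct computation in one dimension shows that when $z$ is drawn from $p_{Ax}$, $L$ is itself Gaussian with mean $\|\Delta\|_2^2/(2s^2)$ and standard deviation $\|\Delta\|_2/s$. Using $\|\Delta\|_2 \leq \sigma$ together with the chosen value of $s$, one verifies that
\[
\frac{\eps s}{\|\Delta\|_2} - \frac{\|\Delta\|_2}{2s} \;\geq\; \bigl(1 + \sqrt{2\ln(1/\delta)}\bigr) - \frac{\eps}{2\bigl(1+\sqrt{2\ln(1/\delta)}\bigr)} \;\geq\; \sqrt{2\ln(1/\delta)},
\]
so the standard Gaussian tail bound $\Pr[N(0,1) > t] \leq e^{-t^2/2}$ gives $\Pr_{z \sim p_{Ax}}[L(z) > \eps] \leq \delta$. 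The only mildly delicate step is arranging these constants correctly, and it is essentially the whole content of the lemma; everything else is bookkeeping.

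Finally I would assemble the $(\eps,\delta)$-DP guarantee via the standard two-part decomposition: letting $\mathcal{B} = \{z : L(z) > \eps\}$, for any measurable $S$ we have
\[
\Pr[\mech(x) \in S] \leq \Pr[\mech(x) \in S \setminus \mathcal{B}] + \Pr[\mech(x) \in \mathcal{B}] \leq e^\eps \Pr[\mech(x') \in S] + \delta,
\]
using the density ratio bound $p_{Ax}(z) \leq e^\eps p_{Ax'}(z)$ on $\mathcal{B}^c$ and the tail bound $\Pr[\mech(x) \in \mathcal{B}] \leq \delta$ from the previous paragraph. This completes the proof.
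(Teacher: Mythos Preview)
Your proposal is correct and follows essentially the same approach as the paper's own proof: bound the $\ell_2$ sensitivity by $\sigma$, compute the privacy loss as a one-dimensional Gaussian, apply a tail bound to show it exceeds $\eps$ with probability at most $\delta$, and finish with the standard good-set/bad-set decomposition. The only cosmetic differences are that the paper works directly with $D_v(w)$ in $\R^d$ (observing that $v^Tw$ is already a one-dimensional Gaussian) rather than rotating first, and it bounds $|D_v(w)|$ two-sidedly rather than just $L(z) > \eps$; neither changes the substance.
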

\stocoption{}{\begin{proof}
  Let $C = \frac{1+\sqrt{2\ln(1/\delta)}}{\eps}$ and let $p$ be the
  probability density function of $N(0, C\sigma)^d$. Let also $K =
  AB_1$, so $\|x - x'\|_1 \in B_1$ implies $A(x - x') \in K \subseteq
  B_2^d$. Define
  \begin{equation*}
    D_v(w) \triangleq \ln \frac{p(w)}{p(w + v)}.
  \end{equation*}
  We will prove that when $w \sim N(0, C\sigma)$, for all $v \in K$,
  $\Pr[|D_v(w)| > \eps] \leq \delta$. This suffices to prove $(\eps,
  \delta)$-differential privacy. Indeed, let the algorithm output $Ax
  + w$ and fix any $x'$ s.t. $\|x - x'\|_1 \leq 1$. Let $v = A(x-x')
  \in K$ and $S = \{w: |D_v(w)| > \eps\}$. For any measurable $T
  \subseteq \R^d$ we have
  \begin{align*}
    \Pr[Ax + w \in T] &= \Pr[w \in T - Ax] \\
    &= \int_{S \cap (T- Ax)}{p(w)\,\mathrm{d}w} + \int_{\bar{S} \cap
      (T-Ax)}{p(w)\,\mathrm{d}w}\\ 
    &\leq \delta + e^\eps\int_{\bar{S} \cap (T - Ax')}{p(w)\,\mathrm{d}w}\\
    &= \delta + e^\eps\Pr[w \in T - Ax']
    = \delta + e^\eps\Pr[Ax' + w \in T].
  \end{align*}
  
  We fix an arbitrary $v \in K$ and proceed to prove $|D_v(w)| \leq
  \eps$ with probability at least $1 - \delta$. We will first compute
  $\E D_v(w)$ and then apply a tail bound.  Recall that $p(w) \propto
  \exp(-\frac{1}{2C^2\sigma^2}\|w\|_2^2)$. Notice also that, since $v
  \in K$ can be written as $\sum_{i = 1}^N{\alpha_ia_i}$ where
  $\sum{|\alpha_i|} \leq 1$, we have $\|v\|_2 \leq \sigma$. Then we
  can write
  \begin{align*}
    \E D_v(w) &= \E \frac{\|v + w\|_2^2 - \|w\|_2^2}{2C^2\sigma^2}\\
    &=\E\frac{\|v\|^2 + 2v^Tw}{2C^2\sigma^2} \leq \frac{1}{2C^2}
  \end{align*}
  Note that to bound $|D_v(w)|$ we simply need to bound
  $\frac{1}{C^2\sigma^2}v^Tw$ from above and below. Since
  $\frac{1}{C^2\sigma^2}v^Tw \sim N(0, \frac{\|v\|}{C\sigma})$, we can
  apply a Chernoff bound and we get
  \begin{equation*}
    \Pr\left[|v^Tw| > \frac{1}{C}\sqrt{2\ln(1/\delta)}\right] \leq \delta.
  \end{equation*}
  Therefore, with probability $1-\delta$,
  \begin{equation*}
    \frac{1/2C - \sqrt{2\ln(1/\delta)}}{C} \leq D_v(w) \leq \frac{1/2C
      + \sqrt{2\ln(1/\delta)}}{C}. 
  \end{equation*}
  Substituting $C \geq \frac{1 + \sqrt{2\ln(1/\delta)}}{\eps}$
  completes the proof. 
\end{proof}}

The following corollary is a useful geometric generalization of
Lemma~\ref{lm:gauss-mech-ind}. 
\begin{corollary}\label{cor:gauss-incl}
  Let $A = (a_i)_{i=1}^N$ be a $d \times N$ matrix of rank $d$ and let
  $K = \sym\{a_1, \ldots, a_N\}$. Let $E = FB_2^d$ ($F$ is a linear map)
  be an ellipsoid containing $K$. Then a mechanism that outputs $Ax +
  Fw$ where $w \sim N(0, \frac{1 +
    \sqrt{2\ln(1/\delta)}}{\eps})^d$ satisfies $(\eps,
  \delta)$-differential privacy.
\end{corollary}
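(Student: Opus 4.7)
The plan is to reduce this directly to Lemma~\ref{lm:gauss-mech-ind} by a change of coordinates, and then invoke post-processing. The key observation is that since $A$ has rank $d$, the set $K = AB_1^N$ is full-dimensional, so the assumption $K \subseteq E = FB_2^d$ forces $F$ to have rank $d$, hence $F$ is invertible.

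First I would apply $F^{-1}$ to both sides of the inclusion $K \subseteq FB_2^d$ to get $F^{-1}K \subseteq B_2^d$. Since the columns of $F^{-1}A$ are the vectors $F^{-1}a_i$, and each $a_i$ (and $-a_i$) lies in $K$, we obtain $\|F^{-1} a_i\|_2 \leq 1$ for every $i$. Thus the matrix $F^{-1}A$ satisfies the hypothesis of Lemma~\ref{lm:gauss-mech-ind} with $\sigma = 1$.

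Next I would invoke Lemma~\ref{lm:gauss-mech-ind} on $F^{-1}A$: the mechanism that on input $x$ outputs $F^{-1}Ax + w$, where $w \sim N(0, (1 + \sqrt{2\ln(1/\delta)})/\eps)^d$, is $(\eps,\delta)$-differentially private. Now I would use the fact that differential privacy is preserved under post-processing (a standard consequence of Lemma~\ref{lm:simple-composition}, since applying a fixed linear map to the output does not depend on the database beyond what the mechanism already revealed). Applying the deterministic map $z \mapsto Fz$ to the output of this mechanism produces $F(F^{-1}Ax + w) = Ax + Fw$, which therefore also satisfies $(\eps,\delta)$-differential privacy. This is exactly the mechanism described in the statement.

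There is essentially no obstacle; the only point to verify carefully is the invertibility of $F$, which follows from the rank assumption on $A$. The corollary is really just the observation that the Gaussian mechanism is invariant under linear changes of coordinates: shaping the noise by $F$ is equivalent to first whitening the query to $F^{-1}A$, adding isotropic Gaussian noise of the appropriate scale, and then un-whitening by $F$.
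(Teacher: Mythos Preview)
Your proof is correct and essentially identical to the paper's own argument: both observe that $\rank A = d$ forces $F$ to be invertible, apply Lemma~\ref{lm:gauss-mech-ind} to $F^{-1}A$ with $\sigma = 1$, and then invoke post-processing under the map $z \mapsto Fz$.
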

\stocoption{}{\begin{proof}
  Since $K$ is full dimensional (by $\rank A = d$) and $E$ contains
  $K$, $E$ is full dimensional as well, and, therefore, $F$ is an
  invertible linear map. Define $G = F^{-1}A$. For each column $g_i$
  of $G$, we have $\|g_i\|_2 \leq 1$. Therefore, by
  Lemma~\ref{lm:gauss-mech-ind}, a mechanism that outputs $Gx + w$
  (where $w$ is distributed as in the statement of the corollary)
  satisfies $(\eps, \delta)$-differential privacy. Therefore, $FGx +
  Fw = Ax + Fw$ is $(\eps, \delta)$-differentially private by the
  post-processing property of differential privacy. 
\end{proof}
}
We present a composition theorem, specific to composing
Gaussian noise mechanisms. We note that a similar composition result
in a much more general setting but with slightly inferior dependence
on the parameters is proven in~\cite{DworkRV10}. 
\begin{corollary}\label{cor:gauss-composition}
  Let $\mathcal{V}_1, \ldots, \mathcal{V}_k$ be vector spaces of
  respective dimensions $d_1, \ldots, d_k$, such that $\forall i \leq
  k-1$, $\mathcal{V}_{i+1} \subseteq \mathcal{V}_i^\perp$ and $d_1 +
  \ldots + d_k = d$ . Let $A = (a_i)_{i=1}^N$ be a $d \times N$ matrix
  of rank $d$ and let $K = \sym(a_1, \ldots, a_N)$. Let $\Pi_i$ be the
  projection matrix for $\mathcal{V}_i$ and let $E_i = F_iB_2^{d_i}
  \subseteq \mathcal{V}_i$ be an ellipsoid such that $\Pi_i K \subseteq
  E_i$. Then the mechanism that outputs $Ax +
  \sqrt{k}\sum_{i = 1}^k{F_iw_i}$ where for each $i$, $w_i
  \sim N(0,  \frac{1 + \sqrt{2\ln(1/\delta)}}{\eps})^{d_i}$,
  satisfies $(\eps, \delta)$-differential privacy.
\end{corollary}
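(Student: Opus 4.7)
The plan is to reduce this statement to Corollary~\ref{cor:gauss-incl} by packaging the $k$ orthogonal noise components into a single Gaussian mechanism with respect to a carefully chosen enclosing ellipsoid. Concretely, I would define the linear map $\tilde{F} : \R^{d_1} \oplus \cdots \oplus \R^{d_k} \to \R^d$ by
\begin{equation*}
  \tilde{F}(w_1, \ldots, w_k) \triangleq \sqrt{k}\sum_{i=1}^k F_i w_i,
\end{equation*}
and let $\tilde{E} \triangleq \tilde{F} B_2^d$, where I identify $\R^{d_1}\oplus \cdots \oplus \R^{d_k}$ with $\R^d$ via $d_1+\cdots+d_k = d$. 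Since the subspaces $\mathcal{V}_i$ are pairwise orthogonal and $F_i$ maps into $\mathcal{V}_i$, the tuple $(w_1,\ldots,w_k) \sim N(0,C)^{d_1}\times\cdots\times N(0,C)^{d_k}$ (for $C \triangleq (1+\sqrt{2\ln(1/\delta)})/\eps$) is just $w \sim N(0,C)^d$, so the noise added by the mechanism is exactly $\tilde{F}w$. Hence the mechanism coincides with the Gaussian noise mechanism of Corollary~\ref{cor:gauss-incl} applied to the ellipsoid $\tilde{E}$, provided that $\tilde{E} \supseteq K$.

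The central step is therefore to verify the inclusion $K \subseteq \tilde{E}$. Take any $v \in K$; by orthogonality of the $\mathcal{V}_i$, $v = \sum_i \Pi_i v$. By hypothesis, $\Pi_i v \in \Pi_i K \subseteq E_i = F_i B_2^{d_i}$, so there exist $u_i \in B_2^{d_i}$ with $\Pi_i v = F_i u_i$. Setting $w_i \triangleq u_i/\sqrt{k}$ yields
\begin{equation*}
  \sqrt{k}\sum_i F_i w_i = \sum_i F_i u_i = v, \qquad \sum_i \|w_i\|_2^2 = \frac{1}{k}\sum_i \|u_i\|_2^2 \leq \frac{1}{k}\cdot k = 1,
\end{equation*}
so $v \in \tilde{F}B_2^d = \tilde{E}$. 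The factor $\sqrt{k}$ in the mechanism is precisely what is needed to absorb the fact that each $\|u_i\|_2$ can be as large as $1$, i.e., that we must simultaneously cover $k$ unit balls.

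Once the inclusion $K \subseteq \tilde{E}$ is established, Corollary~\ref{cor:gauss-incl} applies directly (using $\rank A = d$ to ensure $\tilde{E}$ is a bona fide full-dimensional ellipsoid) and yields $(\eps,\delta)$-differential privacy of the mechanism $x \mapsto Ax + \tilde{F}w$. The only subtlety worth checking is the identification of $\tilde{F}w$ with $\sqrt{k}\sum_i F_i w_i$ as random variables, which is immediate because the product of independent standard Gaussians on the orthogonal subspaces $\mathcal{V}_i$ is a standard Gaussian on $\mathcal{V}_1 \oplus \cdots \oplus \mathcal{V}_k = \R^d$. I do not expect a serious obstacle; the only place one might slip is in asserting that $u_i$ with $\|u_i\|_2 \leq 1$ exists — this uses nothing beyond $\Pi_i K \subseteq E_i$ — and in the bookkeeping of the $\sqrt{k}$ factor, which the computation above handles cleanly.
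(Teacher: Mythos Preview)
Your proposal is correct and follows essentially the same approach as the paper: both reduce to Corollary~\ref{cor:gauss-incl} by exhibiting a single linear map $\tilde{F}$ (equivalently, a covariance $G = k\sum_i F_iF_i^T$) such that the combined noise is $\tilde{F}w$ with $w \sim N(0,c(\eps,\delta))^d$, and then verifying the inclusion $K \subseteq \tilde{F}B_2^d$.

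The only difference is in how that inclusion is checked. The paper passes to the dual and uses Cauchy--Schwarz: from $\Pi_i K \subseteq E_i$ it derives $\langle \Pi_i a_j, x\rangle^2 \le x^TF_iF_i^Tx$, then bounds $\langle a_j, x\rangle^2 = (\sum_i \langle \Pi_i a_j, x\rangle)^2 \le k\sum_i x^TF_iF_i^Tx$. You instead give a direct primal construction: decompose $v\in K$ as $\sum_i \Pi_i v$, write each $\Pi_i v = F_i u_i$ with $\|u_i\|_2\le 1$, and observe that $(u_1/\sqrt{k},\ldots,u_k/\sqrt{k})$ lies in $B_2^d$. These are dual views of the same computation; your version is arguably a touch cleaner since the role of the $\sqrt{k}$ factor is made explicit without invoking duality or Cauchy--Schwarz.
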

\stocoption{}{\begin{proof}
  Let $c(\eps, \delta) = \frac{1+\sqrt{2\ln(1/\delta)}}{\eps}$. Since the random
  variables $F_1w_1, \ldots, F_kw_k$ are pairwise independent Gaussian
  random variables, and $F_iw_i$ has covariance matrix $c(\eps, \delta)^2 F_iF_i^T$, we have that $w = \sqrt{k}\sum_{i = 1}^k{F_iw_i}$ is a
  Gaussian random variable with covariance $c(\eps, \delta)^2G$, whee $G = k
  \sum_{i = 1}^k{F_iF_i^T}$. By Corollary~\ref{cor:gauss-incl}, it is
  sufficient to show that the ellipsoid $E = GB_2^d$ contains $K$. By
  convex duality, this is equivalent to showing $E^\circ \subseteq
  K^\circ$, which is in turn equivalent to $\forall x: \|x\|_{K^\circ} \leq
  \|x\|_{E^{\circ}}$. Recalling that $\|x\|^2_{E^{\circ}} = x^TGG^Tx$
  and $\|x\|_{K^\circ} = \max_{y \in K}{\langle y, x\rangle} = \max_{j
    = 1}^N{\langle a_j, x \rangle}$,  we need to establish
  \begin{equation}
    \label{eq:inclusion-duality}
    \forall x \in \R^d, \forall j \in [N]: \langle a_j, x \rangle^2
    \leq x^TGG^Tx. 
  \end{equation}
  We proceed by establishing (\ref{eq:inclusion-duality}). Since for
  all $i$, $\Pi_i K \subseteq E_i$, by duality and the same reasoning
  as above, we have that for all $i$ and $j$, $\langle \Pi_i a_j, x
  \rangle^2 \leq x^TF_iF_i^Tx$. Therefore, by the Cauchy-Schwarz
  inequality,
  \begin{align*}
    \langle a_j, x \rangle^2 &= \left(\sum_{i = 1}^k{\langle \Pi_i
        a_j, x \rangle}\right)^2\\
    &\leq k\sum_{i = 1}^k{\langle \Pi_i  a_j, x \rangle^2}\\
    &\leq k\sum_{i = 1}^k{x^TF_iF^Tx} = x^TGG^Tx. 
  \end{align*}
  This completes the proof.
\end{proof}}

\subsubsection{Noise Lower Bounds}
We will make extensive use of a lower bound on the noise complexity of
$(\eps, \delta)$-differentially private mechanisms in terms of
combinatorial discrepancy. First we need to define the notion of
hereditary $\alpha$-discrepancy:
\begin{equation*}
  \herdisc_\alpha(A, n) \triangleq \max_{S \subseteq [N]: |S| \leq n}{\min_{\substack{x \in \{-1,0,
      +1\}^S\\\|x\|_1 \geq \alpha |S|}}{\|(A|_S)x}\|_2}.
\end{equation*}
We denote $\herdisc(A) \triangleq \max_n \herdisc_1(A, n)$. An equivalent
notation is $\herdisc^{\ell_2}(A)$. When the $\ell_2$ norm is
substituted with $\ell_\infty$, we have the classical notion hereditary
discrepancy, here denoted $\herdisc^{\ell_\infty}(A)$. 

Next we present the lower bound, which is a simple extension of the
discrepancy lower bound on noise recently proved by Muthukrishnan and
Nikolov~\cite{MNstoc}. 
\begin{theorem}[\cite{MNstoc}]
  \label{thm:disc-lb}
  Let $A$ be an $d \times N$ real matrix. For any constant $\alpha$
  and sufficiently small constant $\eps \leq \eps(\alpha)$ and $\delta
  \leq \delta(\alpha)$,
  \begin{equation*}
    \opt_{\eps, \delta}(A, n) = \Omega(1) \herdisc_\alpha(A, n)^2. 
  \end{equation*}
\end{theorem}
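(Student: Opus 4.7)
My proof plan follows the approach of Muthukrishnan and Nikolov [MNstoc], with a small modification to accommodate the $\alpha$-density constraint that enters $\herdisc_\alpha$. The first step is a simple reduction: fix a set $S^\ast \subseteq [N]$ with $|S^\ast| \leq n$ that attains the maximum in the definition of $\herdisc_\alpha(A,n)$, and let $D = \herdisc_\alpha(A,n)$. By definition, every $z \in \{-1,0,+1\}^{S^\ast}$ with $\|z\|_1 \geq \alpha|S^\ast|$ satisfies $\|(A|_{S^\ast})z\|_2 \geq D$. Any $(\eps,\delta)$-DP mechanism for $A$ restricts to an $(\eps,\delta)$-DP mechanism for the submatrix $A|_{S^\ast}$ on databases supported on $S^\ast$ (which automatically satisfy the size bound since $|S^\ast| \leq n$). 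So I only need to lower bound the error of DP mechanisms for this submatrix in terms of $D$.

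For the main argument I would follow the MN randomized reconstruction attack. Sample $X \in \{0,1\}^{S^\ast}$ from a suitable product distribution, run an $(\eps,\delta)$-DP mechanism $\mech$, and attempt to reconstruct $X$ from $\mech(X)$. The separation condition on $A|_{S^\ast}$ implies that any two candidates $X, X'$ whose difference has $\ell_1$ mass at least $\alpha|S^\ast|$ produce well-separated images $AX, AX'$, so small mechanism error would force the output distributions on such pairs to be statistically distinguishable. An information-theoretic argument---upper-bounding $I(X; \mech(X))$ by roughly $O(|S^\ast|\eps + \polylog(1/\delta))$ via DP composition (in the spirit of Lemma \ref{lm:eps}), and lower-bounding the same quantity by $\Omega(D^2/E)$ when $E = \E\|\mech(X) - A|_{S^\ast}X\|_2^2$ is small enough that reconstruction succeeds---yields a contradiction unless $E \geq \Omega(D^2)$, which is exactly the claim for small enough constant $\eps$ and $\delta$.

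The main obstacle is handling the potentially large $\ell_1$ distance between the coupled candidate databases, which can be as large as $|S^\ast|$ and thus as large as $n$. A naive application of group privacy yields an $e^{|S^\ast|\eps}$ blowup, which is useless for constants $\eps$ not shrinking with $n$. The MN argument circumvents this by averaging the reconstruction advantage over an ensemble of coupled pairs rather than pinning down one neighboring pair, so that the effective per-coordinate distinguishability, not the total $\ell_1$ distance, controls the bound. Extending this argument from $\alpha = 1$ to general constant $\alpha$ is the easy part: a random sign flip of $X$ on a uniformly chosen subset yields a perturbation vector $z \in \{-1,0,+1\}^{S^\ast}$ whose support has size $\geq \alpha|S^\ast|$ with probability bounded away from $0$ by a Chernoff bound, so the separation hypothesis $\|(A|_{S^\ast})z\|_2 \geq D$ is invoked with constant probability, and the expected-error bound $\Omega(D^2)$ follows by restricting the averaging to this constant-probability event and absorbing the resulting constants into $\eps(\alpha)$ and $\delta(\alpha)$.
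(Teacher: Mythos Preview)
The paper does not give its own proof of this theorem; it is quoted from \cite{MNstoc} and used as a black box. So the relevant comparison is to the original Muthukrishnan--Nikolov argument.

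Your reduction to the maximizing submatrix $A|_{S^\ast}$ and the overall shape---sample a random $X\in\{0,1\}^{S^\ast}$, reconstruct from a low-error answer, contradict privacy---are exactly right and match MN. The gap is in the step you describe as ``an information-theoretic argument.'' Bounding $I(X;\mech(X))$ from above by $O(|S^\ast|\eps+\polylog(1/\delta))$ does not hold for $(\eps,\delta)$-privacy with $\delta>0$: already the mechanism that with probability $\delta$ outputs $X$ verbatim and otherwise outputs a fixed symbol is $(0,\delta)$-DP and has $I(X;\mech(X))=\Theta(\delta\,|S^\ast|)$, and more generally KL divergences between $\mech(x)$ and $\mech(x')$ on neighbors can be infinite under approximate DP. Your companion lower bound $I\geq\Omega(D^2/E)$ is likewise not justified, and even granting both bounds they combine to $E\gtrsim D^2/(|S^\ast|\eps)$ rather than the $n$-independent $E\geq\Omega(D^2)$ the theorem asserts. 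Lemma~\ref{lm:eps} is a scaling/group-privacy statement and does not supply a mutual-information bound.

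The MN argument avoids information quantities entirely. From $\max_x\E\|\mech(x)-(A|_{S^\ast})x\|_2^2<\gamma D^2$, Markov's inequality gives that with probability at least $1-O(\gamma)$ the rounding $x'=\arg\min_{z\in\{0,1\}^{S^\ast}}\|(A|_{S^\ast})z-\mech(x)\|_2$ satisfies $\|(A|_{S^\ast})(x-x')\|_2<D$, hence $\|x-x'\|_1<\alpha|S^\ast|$ by the definition of $\herdisc_\alpha$. Thus for uniform $X$, $\E[\#\{i:x'_i\neq X_i\}]\leq(\alpha+O(\gamma))|S^\ast|$. On the other hand, for each coordinate $i$, condition on $X_{-i}$: the two inputs $X$ and $X\oplus e_i$ are neighbors, so applying $(\eps,\delta)$-privacy \emph{directly to that single pair} (no group privacy, no composition) gives $\Pr[x'_i=X_i\mid X_{-i}]\leq\tfrac12\bigl(1+1-e^{-\eps}(1-\delta)\bigr)$. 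Summing over $i$ yields $\E[\#\text{errors}]\geq \tfrac12 e^{-\eps}(1-\delta)\,|S^\ast|$. For constant $\alpha$ below this threshold and $\gamma$ small enough, the two bounds contradict each other, forcing $E\geq\Omega_\alpha(D^2)$. This per-coordinate use of the privacy definition, rather than any global mutual-information inequality, is the missing ingredient in your sketch, and it is precisely how MN sidestep the $e^{|S^\ast|\eps}$ blowup you were worried about.
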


We further develop two lower bounds for $\herdisc_\alpha(A,n)$ which are more convenient to work with. The first lower bound is by using spectral
techniques. Observe first that, since the $\ell_2$-norm of any vector
does not increase under projection, we have $\herdisc_\alpha(A, n)
\geq \herdisc_\alpha(\Pi A, n)$ for any projection matrix $\Pi$. 
Furthermore, recall that for a matrix $M$, $\sigma_{\min}(M) =
\min_{x:\|x\|_2 = 1}{\|Mx\|_2}$. For any $x \in \{-1, 0, 1\}^S$
satisfying $\|x\|_1 \geq \alpha |S|$, we have $\|x\|_2^2 \geq
\alpha |S|$. Therefore, 
\begin{equation}\label{eq:disc-speclb}
  \herdisc_\alpha(A, n)^2 \geq \max_{S \subseteq [N]: |S| \leq
    n}{\alpha |S| \sigma_{\min}^2(A|_S)}. 
\end{equation}

Let's define
\begin{equation*}
\specLB(A, n) \triangleq \max_{\substack{S \subseteq [N]\\|S| = k \leq
    n}} \max_{\Pi \in \mathcal{P}_k}{k \sigma_{\min}^2(\Pi A|_S)}.
\end{equation*}
Substituting (\ref{eq:disc-speclb}) into Theorem~\ref{thm:disc-lb}, we
have that there exist constants $c_1$ and $c_2$ such that
\begin{equation}
  \label{eq:speclb}
  \opt_{c_1, c_2}(A, n) = \Omega(1) \cdot \specLB(A, n). 
\end{equation}
For the remainder of this paper we fix some constants $c_1$ and $c_2$
for which (\ref{eq:speclb}) holds. Similarly to the notation for
$\opt$, we will also sometimes denote $\specLB(A) = \max_n \specLB(A,
n)$.  We will use primarily the spectral lower bound~(\ref{eq:speclb}) for arguing the optimality of our algorithms. 

\stocoption{A more powerful spectral lower bound follows from
  relating $\herdisc_\alpha$ and $\herdisc_1$, and from a direct extension
  of the work of Lov\'{a}sz, Spencer, and
  Vesztergombi~\cite{lovasz1986discrepancy}. Define
  $$
  \detlb(A, n) \triangleq \max_{k      \leq n}    \max_{\substack{\Pi \in \mathcal{P}_k\\S \subseteq [N]: |S| =        k}}{|S|\cdot |\det(\Pi A|_S)|^{2/|S|}}.
  $$
  Then we have the lower bound:
  }{
To show the small gap between the approximate and pure privacy (Theorem~\ref{thm:epsdelgap}), we next develop a determinant based lower bound.
We first switch from $\herdisc_\alpha$ to the classical notion of hereditary
discrepancy, equivalent to $\herdisc_1$, by observing
the following relation between
$\herdisc_\alpha$ and $\herdisc_1$ from~\cite{MNstoc}:
\begin{equation}
  \label{eq:alpha2one}
  \herdisc_{1}(A, n) \leq \frac{\log n}{\log
    \frac{1}{1-\alpha}}\herdisc_\alpha(A, n) 
\end{equation}

We then use an extension of the classical determinant lower bound for
hereditary discrepancy, due to Lov\'{a}sz, Spencer, and Vesztergombi.
\begin{theorem}[\cite{lovasz1986discrepancy}]
  For any real $d \times N$ matrix ${A}$,
  \begin{equation*}
   (\herdisc_1({A}, n))^2 \geq \Omega(1) \cdot \detlb(A, n) \triangleq \max_{k
      \leq n}
    \max_{\substack{\Pi \in \mathcal{P}_k\\S \subseteq [N]: |S| = k}}{k\cdot |\det(\Pi A|_S)|^{2/k}}.
  \end{equation*}
\end{theorem}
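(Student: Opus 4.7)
My plan is to adapt the iterative Hadamard-type argument of Lov\'{a}sz, Spencer, and Vesztergombi to the $\ell_2$ setting with orthogonal projections. First, I would fix a maximizer $(k, \Pi, S)$ of $\detlb(A, n)$ and denote by $M = \Pi A|_S$ the corresponding $k \times k$ matrix (in an orthonormal basis of the range of $\Pi$), with $D := |\det M|$, so that $\detlb(A, n) = k D^{2/k}$. Since orthogonal projection cannot increase $\ell_2$ norms, $\|A|_S x\|_2 \geq \|M x\|_2$ for every $x$; thus it suffices to exhibit a subset $S^* \subseteq [N]$ with $|S^*| \leq n$ on which $\min_{x \in \{-1,+1\}^{|S^*|}} \|A|_{S^*} x\|_2^2 = \Omega(k D^{2/k})$.

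Next I would implement the iterative construction. Starting from $T_1 = S$, at each step $i = 1, \ldots, k$ I invoke the definition of $\herdisc_1(A, n)$ on the submatrix with columns $T_i$ to obtain a full $\pm 1$ coloring $\sigma_i \in \{-1, +1\}^{T_i}$ satisfying $\|A|_{T_i} \sigma_i\|_2 \leq \herdisc_1(A, n)$; extend $\sigma_i$ by zeros on $S \setminus T_i$ to a vector $x^{(i)} \in \{-1, 0, +1\}^S$, pick any $j_i \in T_i$ with $\sigma_{i, j_i} \neq 0$, and set $T_{i+1} = T_i \setminus \{j_i\}$. The resulting matrix $X = [x^{(1)} \mid \cdots \mid x^{(k)}]$, after permuting rows so that row $j_i$ becomes the $i$-th row, is upper triangular with $\pm 1$ diagonal entries, hence $|\det X| = 1$. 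Hadamard's inequality applied to $MX$ then gives
\[
D \;=\; |\det(MX)| \;\leq\; \prod_{i=1}^k \|M x^{(i)}\|_2 \;\leq\; \herdisc_1(A, n)^k,
\]
which yields the classical LSV-type lower bound $\herdisc_1(A, n)^2 \geq D^{2/k}$, still short of $k D^{2/k}$ by a factor of $k$.

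To recover the missing factor of $k$, I would exploit the maximality of $(k, \Pi, S)$ in $\detlb$. If $\sigma_{\min}(M)$ were substantially smaller than $D^{1/k}$, a Cauchy-Binet expansion after projecting away from the smallest left-singular direction of $M$ would produce a $(k-1) \times (k-1)$ sub-determinant of magnitude at least $\Omega(D/(\sigma_{\min}(M) \sqrt{k}))$; the associated $\detlb$ value would then exceed $k D^{2/k}$, contradicting maximality and forcing $\sigma_{\min}(M) = \Omega(D^{1/k}/\sqrt{k})$. Combined with the spectral lower bound (\ref{eq:disc-speclb}) at $\alpha = 1$, this gives $\herdisc_1(A, n)^2 \geq k \sigma_{\min}(M)^2 = \Omega(D^{2/k})$. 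To gain the last factor of $k$, I would apply the restricted invertibility principle (Theorem~\ref{thm:bt}) to the ellipsoid $M B_2^k$, passing to a sub-configuration on $\Omega(k)$ columns of $S$ whose restricted matrix has $\sigma_{\min}^2 = \Omega(\|M\|_F^2 / k) \geq \Omega(D^{2/k})$, where the last step is AM-GM applied to the squared singular values of $M$. Substituting into the spectral lower bound then gives $\herdisc_1(A, n)^2 = \Omega(k D^{2/k}) = \Omega(\detlb(A, n))$.

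The main obstacle is this final step. The classical LSV-style Hadamard argument alone yields only the $D^{2/k}$ scaling; extracting the additional factor of $k$ requires a well-conditioned sub-configuration of linear size, for which the natural tool is the restricted invertibility principle of Bourgain-Tzafriri and its refinement by Vershynin, combined with the AM-GM observation $\|M\|_F^2 \geq k D^{2/k}$ that ensures the conditioning bound extracted from restricted invertibility is of the correct order $D^{1/k}$ rather than $D^{1/k}/\sqrt{k}$.
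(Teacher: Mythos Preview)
Your Hadamard-type argument in the second paragraph is correct and indeed yields $\herdisc_1(A,n)^2 \geq D^{2/k}$, missing the factor of $k$; you are right that this is the crux. However, your proposed fix via restricted invertibility does not go through. Theorem~\ref{thm:bt} applies to a symmetric convex body whose minimum enclosing ellipsoid is the unit ball; ``the ellipsoid $MB_2^k$'' is not such a body (its MEE is itself, and there are no discrete contact points to select), and there is no reason the polytope $MB_1^k$ is in John position either. More importantly, the conclusion you claim --- that one can always pass to $\Omega(k)$ columns of $M$ with $\sigma_{\min}^2 = \Omega(\|M\|_F^2/k)$ --- is false in general: take $M$ with columns $c\,e_1, e_2, \ldots, e_k$ for large $c$. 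Then $\|M\|_F^2/k \approx c^2/k$, yet every subset of $\geq 2$ columns contains a unit vector orthogonal to the rest, so $\sigma_{\min}^2 \leq 1$ regardless of the projection. The general restricted-invertibility statement only guarantees a subset of size $\Omega(\|M\|_F^2/\|M\|_2^2)$, which can be $O(1)$; plugging this back in recovers only $\Omega(D^{2/k})$, not $\Omega(kD^{2/k})$. So the final step, which you correctly identify as the obstacle, is not resolved by your plan.

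The paper's proof takes a completely different and much shorter route. It introduces \emph{linear discrepancy} $\lindisc(M) = \max_{c \in [-1,1]^k} \min_{x \in \{-1,1\}^k} \|Mx - Mc\|_2$, shows by an iterative bit-rounding argument that $\lindisc(M) \leq 2\,\herdisc_1(M)$, and then lower bounds $\lindisc(M)$ by a covering/volume argument: $\lindisc(M) \leq D$ means $[-1,1]^k$ is covered by $2^k$ translates of the ellipsoid $\{x : \|Mx\|_2 \leq D\}$, so comparing volumes gives $D^k \geq |\det M|/\vol(B_2^k)$. The missing $\sqrt{k}$ factor (i.e.\ $k$ after squaring) comes directly from the standard estimate $\vol(B_2^k)^{1/k} = \Theta(k^{-1/2})$, with no appeal to restricted invertibility or the spectral lower bound.
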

\begin{proof}
  The proof proceeds in two steps: showing that a quantity known as
  linear discrepancy is at most a constant factor larger than
  hereditary discrepancy; lower bounding linear discrepancy in terms
  of $\detlb$. Most of the proof can be adapted with little
  modification from proofs of the lower bound on $\ell_\infty$
  discrepancy in~\cite{lovasz1986discrepancy}. Here we follow the
  exposition in~\cite{Chazelle}.

  Let us first define linear discrepancy. For a $d\times k$ matrix
  $M$ and $c \in [-1, 1]^k$, let  $\disc^c(M)$ be defined as
  \begin{equation*}
    \disc^c(M) \triangleq \min_{x \in \{-1, 1\}^k}{\|Mx - Mc\|_2}
  \end{equation*}
  The linear discrepancy of $M$ is then defined as $\lindisc(M) \triangleq
  \max_{c \in [-1, 1]^k}{\disc^c(M)}$. We claim that for any $M$, 
  \begin{equation}
    \label{eq:lindisc-herdisc}
    \lindisc(M) \leq 2 \herdisc(M).
  \end{equation}
  The bound (\ref{eq:lindisc-herdisc}) is proven for the more common
  $\ell_\infty$ variants of $\lindisc$ and $\herdisc$
  in~\cite{Chazelle}, but the proof can be seen to apply without
  modification to discrepancy defined with respect to any norm. For
  completeness, we give the full argument here. For $c \in \{-1, 0,
  1\}^k$, we have $\disc^c(M) \leq \herdisc(M)$. Call a vector $c \in
  [-1, 1]^k$ $q$-integral if any coordinate $c_i$ of $c$ can be
  written as $\sum_{j = 0}^q{b_j 2^{-j}}$ where $(b_j)_{j=0}^q\in \{0,
  1\}^{q+1}$. In other words, $c$ is $q$-integral if the binary
  expansion of any coordinate of $c$ is identically zero after the
  $q$-th digit. The bound (\ref{eq:lindisc-herdisc}) holds for
  $0$-integral $c$, and we prove that it holds for $q$-integral $c$ by
  induction on $q$. For the induction step, assume that the bound
  holds for any $(q-1)$-integral $c'$ and let $c$ be
  $q$-integral. Define $s \in \{-1, 1\}^n$ by setting $s_i = -1$ if
  $c_i \geq 0$ and $s_i = 1$ otherwise. Then $c' = 2c + s \in [-1,
  1]^k$ is $(q-1)$-integral, and, by the induction hypothesis, there
  exists $x_1 \in \{-1, 1\}^k$ such that $\|Mx_1 - Mc'\|_2 \leq
  2\herdisc(M)$. Let $c'' = \frac{x_1 - s}{2} \in \{-1, 0,
  1\}^k$. Dividing by 2 and rearranging, we have $\|Mc'' - Mc\|_2 \leq
  \herdisc(M)$. The vector $c''$ is $0$-integral, and therefore there
  exists some $x_2 \in \{-1, 1\}^k$ such that $\|Mx_2 - Mc''\|_2 \leq
  \herdisc(M)$. By the triangle inequality, $\|Mx_2 - Mc\|_2 \leq
  2\herdisc(M)$, and this completes the inductive step.

  We complete the proof of the theorem by proving a lower bound on
  $\lindisc$ in terms of $\detlb$. We note that a similar lower bound
  can be proved for any variant of $\lindisc$ defined in terms of any
  norm. The exact lower bound will depend on the volume radius of the
  unit ball of the norm. Since the proof of (\ref{eq:lindisc-herdisc})
  also works for any norm, we get a determinant lower bound for
  hereditary discrepancy defined in terms of any norm as well. 
  
  We show that for any $d \times k$ matrix $M$
  \begin{equation}
    \label{eq:detlb-lindisc}
    \lindisc(M) = \Omega(1) \max_{\Pi \in \mathcal{P}_k}\sqrt{k}|\det \Pi M|^{1/k}.
  \end{equation}
  Letting $k$ range over $[n]$, $M$ range over all $d \times k$
  submatrices of $A$, and applying the bounds
  (\ref{eq:lindisc-herdisc}) and (\ref{eq:detlb-lindisc}) implies the
  theorem.

  We proceed to prove (\ref{eq:detlb-lindisc}). Note that if $\rank(M)
  < k$, (\ref{eq:detlb-lindisc}) is trivially true; therefore, we may
  assume that $\rank(M) = k$. Note also that without loss of
  generality we can take $\Pi$ to be the orthogonal projection onto
  the range of $M$, since this is the projection operator that
  maximizes $|\det \Pi M|$.  Let $E$ be the ellipsoid $E = \{x:
  \|Mx\|_2^2 \leq 1\}$. The inequality $\lindisc(M) \leq D$ is
  equivalent to
  \begin{equation}
    \label{eq:cover}
    [-1, 1]^k \subseteq \bigcup_{x \in \{-1, 1\}^k}{D\cdot E + x}.
  \end{equation}
  Thus $2^k = \vol([-1,1]^d) \leq 2^k \vol(D\cdot E)$, and therefore
  $D^k \geq \frac{1}{\vol(E)}$. On the other hand, the volume of $E$
  is equal to
  $$
  \vol(E) = \frac{\vol(B_2)}{|\det(M^TM)|^{1/2}} =
  \frac{\vol(B_2)}{|\det(\Pi M)|}. 
  $$
  Applying the standard estimate $\vol(B_2^k)^{1/k} =
  \Theta(k^{-1/2})$ completes the proof. 
\end{proof}

By the determinant lower bound, and (\ref{eq:alpha2one}), we get our
determinant lower bound on the noise necessary for privacy.} For some constant $c_1,c_2>0$,
\begin{equation}
  \label{eq:detlb}
  \opt_{c_1, c_2}(A, n) = \Omega\left(\frac{1}{\log^2 n}\right) \detlb(A, n). 
\end{equation}

Finally, we recall the stronger volume lower bound against $(\eps,
0)$-differential privacy from~\cite{HardtT10, 12vollb}. This lower bound is
nearly optimal for $(\eps, 0)$-differential privacy, but does not hold
for $(\eps, \delta)$-differential privacy when $\delta$ is
$2^{-o(d)}$. 

\begin{theorem}[\cite{HardtT10,12vollb}]\label{thm:htvollb}
  For any $d \times N$ real matrix $A = (a_i)_{i=1}^N$, 
  \begin{equation}\label{eq:vollb}
    \opt_{\eps, 0}(A, \frac{d}{\eps}) \geq \vollb(A, \eps) \triangleq
    \max_{k=1}^d \max_{\Pi \in  \mathcal{P}_k}{\frac{k^2}{\eps^2}
      \vrad(\Pi K)^2},   
  \end{equation}
  where $K = \sym\{a_i\}_{i = 1}^d$. 

  Furthermore, there exists an efficient mechanism $\mech_K$ (the
  \emph{generalized $K$-norm mechanism}) which is $(\eps,
  0)$-differentially private and satisfies $\err_{\mech_K}(A) =
  O(\log^3 d) \vollb(A, \eps)$. 
\end{theorem}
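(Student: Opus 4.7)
My plan is to prove the lower bound by a classical packing argument combined with group privacy, and to prove the upper bound by first analyzing the basic $K$-norm mechanism and then describing a recursive decomposition that handles the $\max$ over projections in $\vollb(A,\eps)$.

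For the lower bound, I would fix $k \in [d]$ and a projection $\Pi \in \mathcal{P}_k$ maximizing the right-hand side of (\ref{eq:vollb}), and set $r \triangleq \vrad(\Pi K)$ and $t \triangleq k/\eps \leq d/\eps$. Using standard volumetric packing, there exist $M = 2^{\Omega(k)}$ points $y_1,\dots,y_M \in t\cdot \Pi K$ whose pairwise $\ell_2$ distances are at least $c\cdot tr$ for some absolute constant $c > 0$. Each $y_i$ can be written as $\Pi A x_i$ for some $x_i \geq 0$ with $\|x_i\|_1 \leq t$, so each $x_i$ is a valid database of size at most $t \leq d/\eps$. Now suppose $\mech$ is $\eps$-DP with error $E \triangleq \err_\mech(A, d/\eps)$. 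By Markov, $\Pr[\|\mech(x_i) - Ax_i\|_2 \geq ctr/2] \leq 4E/(c^2 t^2 r^2)$, and the events $\|\mech - Ax_i\|_2 < ctr/2$ are pairwise disjoint across $i$ (since projecting onto $\Pi$'s range preserves distances). Group privacy along any path of length at most $2t$ between $x_i$ and $x_j$ gives $\Pr[\mech(x_j) \in B_i] \geq e^{-2t\eps}\Pr[\mech(x_i) \in B_i]$, and summing yields $M \cdot e^{-2t\eps}(1 - 4E/(c^2t^2r^2)) \leq 1$. Plugging in $M = 2^{\Omega(k)}$ and $t = k/\eps$ makes $e^{-2t\eps} = 2^{-\Theta(k)}$, so the inequality forces $E = \Omega(t^2 r^2) = \Omega(k^2 r^2/\eps^2)$, as desired. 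Taking the maximum over $k$ and $\Pi$ completes the lower bound.

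For the upper bound, the basic $K$-norm mechanism on input $x$ outputs $Ax + w$ where $w \in \R^d$ has density proportional to $\exp(-\eps\|w\|_K)$ (supported on the span of $K$). Privacy follows from the fact that $A(x-x') \in K$ for neighboring $x,x'$, so the log density ratio is at most $\eps \|A(x-x')\|_K \leq \eps$. A standard integration against the $K$-norm density shows $\E\|w\|_2^2 = O(d^2/\eps^2)\cdot \vrad(K)^2 \cdot \log^{O(1)} d$ when $K$ is in suitable position; this matches the $k=d$, $\Pi = I$ term of $\vollb$. To match the maximum over lower-dimensional projections, I would use the recursive construction of Hardt--Talwar: compute an $M$-ellipsoid of $K$, split $\R^d$ into orthogonal subspaces corresponding to the ``short'' and ``long'' axes, run a $K$-norm mechanism on the projection of $K$ onto the short-axis subspace, and recurse on the projection onto the long-axis subspace. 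The Gaussian-type composition (Corollary~\ref{cor:gauss-composition} is the analogue for our $(\eps,\delta)$ setting) ensures the pieces can be combined while preserving $\eps$-DP with a $O(\log d)$ loss per level of recursion; the error of each piece is bounded by $\vollb$ on the corresponding projection by John-ellipsoid volume estimates, and the $O(\log^3 d)$ factor absorbs the recursion depth together with the polylog cost of isotropizing each $K$-norm piece.

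The main technical obstacle is the upper bound: controlling the total error across the recursion and relating the volume of each projected $K_i$ back to $\vollb(A,\eps)$, rather than to the volume radius of $K$ itself. The key step is arguing that the $M$-ellipsoid decomposition at each level splits $\vrad(K)^2$ across subspaces in a way that the sum telescopes against $\max_{k,\Pi} (k^2/\eps^2)\vrad(\Pi K)^2$; this is where the polylogarithmic slack of $O(\log^3 d)$ enters, and it is also the source of the gap between the pure-DP volume bound and the tight hereditary-discrepancy bound used in Theorem~\ref{thm:epsdeldense} for the $(\eps,\delta)$ case. The lower bound, by contrast, is essentially a self-contained packing argument and poses no real difficulty beyond carefully tracking constants.
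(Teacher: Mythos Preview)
The paper does not prove Theorem~\ref{thm:htvollb}; it is stated in the preliminaries as a known result from \cite{HardtT10,12vollb} and used as a black box. So there is no ``paper's own proof'' to compare against.

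That said, your sketch for the lower bound is essentially the packing/group-privacy argument of Hardt--Talwar and is correct in outline. One small slip: points of $t\,\Pi K = \Pi A(tB_1^N)$ are of the form $\Pi A x$ with $\|x\|_1 \leq t$, but there is no reason $x \geq 0$; this does not matter here since the paper defines $\opt_{\eps,0}(A,n)$ over all of $nB_1^N$, not just the nonnegative orthant. You should also be explicit about why $2^{\Omega(k)}$ well-separated points exist: a maximal $2\rho$-separated set in $t\Pi K$ covers $t\Pi K$ with radius-$2\rho$ balls, so its size is at least $\vol(t\Pi K)/\vol(2\rho B_2^k) = (tr/2\rho)^k$, which is what drives the argument.

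Your upper-bound sketch is in the right spirit (recursive decomposition plus a $K$-norm mechanism on each piece) but is too vague to be a proof. In the cited works the decomposition is driven by an $M$-ellipsoid (or, in \cite{12vollb}, by a different construction that removes the hyperplane-conjecture dependence), and the key inequality is that the error of the $K$-norm noise on each projected body is controlled by the volume radius of that projection, which in turn is dominated by $\vollb$. The phrase ``the sum telescopes against $\max_{k,\Pi}(k^2/\eps^2)\vrad(\Pi K)^2$'' is where all the work lies, and it does not literally telescope; rather, each of the $O(\log d)$ levels contributes at most $\vollb$ up to the polylog slack, and the $\log^3 d$ collects the recursion depth, the cost of composing $O(\log d)$ pieces, and the isotropic-constant factor. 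If you want a self-contained proof you will need to fill in those estimates from \cite{HardtT10,12vollb} directly.
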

\label{sect:prelims}

\section{Algorithms for Approximate Privacy}\label{sect:apx}

In this section we present our main results: efficient nearly optimal
algorithms for approximate privacy in the cases of dense databases ($n
> d/\eps$) and sparse databases ($n = o(d/\eps)$). Both algorithms
rely on recursively computing an orthonormal basis for $\R^d$, based
on the minimum volume enclosing ellipsoid of the columns of the query
matrix $A$. We first present the algorithm for computing this basis,
together with a property essential for the analyses of the two
algorithms presented next. 

\junk{ }

\subsection{The Base Decomposition Algorithm}
We first present an algorithm (Algorithm~\ref{alg:base}) that, given a
matrix $A \in \R^{d \times N}$, computes a set of orthonormal
matrices $U_1, \ldots, U_k$, where $k \leq \lceil 1 + \log d
\rceil$. For each $i \neq j$, $U_i^T U_j = 0$, and the union of the
columns of $U_1, \ldots, U_k$ forms an orthonormal basis for
$\R^d$. Thus, Algorithm~\ref{alg:base} computes a basis for $\R^d$,
and partitions (``decomposes'') it into $k = O(\log d)$ bases of
mutually orthogonal subspaces.\stocoption{The decomposition allows us
  to use a simple mechanism in one subspace and match its error by a
  lower bound for the a set of orthogonal subspaces.}{ This set of bases also induces a
decomposition of $A$ into $A = A_1 + \ldots + A_k$, where $A_i =
U_iU_i^TA$. The base decomposition of Algorithm~\ref{alg:base} is
essential to both our dense case and sparse case
algorithms. Intuitively, for both cases we can show that the error of
a simple mechanism applied to $A_i$ can be matched by an error lower
bound for $A_{i+1} + \ldots A_{k}$. The error lower bounds are based
on the spectral lower bound $\specLB$ on discrepancy; the geometric
properties of the minimum enclosing ellipsoid of a convex body
together with the restricted invertibility principle of Bourgain and
Tzafriri are key in deriving the lower bounds.}

\begin{algorithm}
  \caption{\textsc{Base Decomposition}} \label{alg:base}
  \begin{algorithmic}
    \REQUIRE $A = (a_i)_{i = 1}^N \in \R^{d \times N}$ ($\rank A =
    d$); 

    \STATE Compute $E = FB_2^d$, the minimum volume enclosing
    ellipsoid of $K = AB_1$; 

    \STATE Let $(u_i)_{i = 1}^d$ be the (left) singular vectors of $F$
    corresponding to singular values $\sigma_1 \geq \ldots \geq
    \sigma_d$;

    \IF{$d = 1$}
    \STATE \textbf{Output} $U_1 = u_1$. 
    \ELSE
    \STATE Let $U_1 = (u_i)_{i > d/2}$ and $V = (u_i)_{i \leq d/2}$;
    \STATE Recursively compute a base decomposition $V_2, \ldots, V_k$
    of $V^TA$ ($k \leq \lceil 1 + \log d\rceil$ is the depth of the
    recursion);
    \STATE For each $i>1$, let $U_i = VV_i$;
    \STATE \textbf{Output} $\{U_1, \ldots, U_k\}$.
    \ENDIF
  \end{algorithmic}
\end{algorithm}

The next lemma captures the technical property of the decomposition of
Algorithm~\ref{alg:base} that allows us to prove matching upper and
lower error bounds for our dense and sparse case algorithms.

\begin{lemma}\label{lm:base}
  Let $d_i$ be the dimension of the span of $U_i$. Furthermore, for $i
  < k$, let $W_i = \sum_{j > i}{U_j}$, and let $W_k = U_k$. For every
  $i \leq k$, there exists a set $S_i \subseteq [N]$, such that $|S_i| =
  \Omega(d_i)$ and $\sigma_{\min}^2(W_iW_i^TA|_{S_i}) = \Omega(1) \max_{j
    =1}^N{\|U_i^Ta_j\|_2^2}$.
\end{lemma}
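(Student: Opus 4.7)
The plan is to prove the lemma by induction on the depth of the recursion in Algorithm~\ref{alg:base}. The base case $d=1$ is immediate: $U_1=u_1$ is a unit vector in $\R^1$, and taking $S_1=\{j^*\}$ where $j^*=\arg\max_j|u_1^T a_j|$ gives $\sigma_{\min}^2(u_1u_1^T a_{j^*})=(u_1^T a_{j^*})^2=\max_j\|U_1^T a_j\|_2^2$.

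For the inductive step I split into two cases. For $i\geq 2$, the claim reduces directly to the inductive hypothesis applied to the recursive call on $B := V^T A\in\R^{(d/2)\times N}$, whose decomposition I call $\tilde V_1,\ldots,\tilde V_{k-1}$ with $U_i=V\tilde V_{i-1}$. Applied at inner index $i-1$, the hypothesis yields $S_i\subseteq [N]$ with $|S_i|=\Omega(d_i)$ and $\sigma_{\min}^2(\tilde W_{i-1}\tilde W_{i-1}^T B|_{S_i})=\Omega(1)\cdot\max_j\|\tilde V_{i-1}^T V^T a_j\|_2^2$. Using the algorithm's identity $W_i=V\tilde W_{i-1}$ (which follows from $U_i=V\tilde V_{i-1}$ and the definition of $W_i$), together with $V^TV=I$, I get $W_iW_i^T A|_{S_i}=V\tilde W_{i-1}\tilde W_{i-1}^T B|_{S_i}$, so $\sigma_{\min}$ of the outer matrix equals $\sigma_{\min}$ of the inner one, and $\|U_i^T a_j\|_2=\|\tilde V_{i-1}^T V^T a_j\|_2$, which transfers the conclusion verbatim to the outer level.

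The substantive case is $i=1$, where $\vspan(W_1)=\vspan(u_1,\ldots,u_{d/2})$. My first observation is that the right-hand side is controlled: any $a_j\in K\subseteq E=FB_2^d$ satisfies $\sum_i\sigma_i^{-2}(u_i^T a_j)^2\leq 1$, and pulling $\max_{i>d/2}\sigma_i^2=\sigma_{d/2+1}^2$ out of $\sum_{i>d/2}(u_i^T a_j)^2$ gives $\max_j\|U_1^T a_j\|_2^2\leq\sigma_{d/2+1}^2\leq\sigma_{d/2}^2$. To produce a matching lower bound on $\sigma_{\min}^2(W_1W_1^T A|_{S_1})$, I rescale to $\tilde K=F^{-1}K$, whose MEE is $B_2^d$, and apply Vershynin's restricted invertibility theorem (Theorem~\ref{thm:bt}) with $T$ equal to the orthogonal projection onto $\vspan(v_1,\ldots,v_{d/2})$, where $v_i$ are the right singular vectors of $F$. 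Since $\|T\|_2\leq 1$ and $\|T\|_F^2=d/2$, this produces $k\geq(1-\beta)(d/2)=\Omega(d_1)$ contact points $x_1,\ldots,x_k$ of $\tilde K$ with $B_2^d$ such that $\sigma_{\min}(TX)=\Omega(1)$. Because $K$ is a polytope and ellipsoids are strictly convex, contact points of $\tilde K$ with $B_2^d$ are extreme points of $\tilde K$, and hence of the form $\pm F^{-1}a_{j(\ell)}$; so I set $S_1=\{j(1),\ldots,j(k)\}$. Introducing $F_V=\sum_{\ell\leq d/2}\sigma_\ell u_\ell v_\ell^T$, which is an isomorphism from $\vspan(v_1,\ldots,v_{d/2})$ onto $\vspan(u_1,\ldots,u_{d/2})$ with smallest singular value $\sigma_{d/2}$, a direct expansion in the SVD of $F$ shows $W_1W_1^T a_{j(\ell)}=F_V(Tx_\ell)$. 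Therefore $\sigma_{\min}^2(W_1W_1^T A|_{S_1})\geq\sigma_{d/2}^2\cdot\sigma_{\min}^2(TX)=\Omega(\sigma_{d/2}^2)\geq\Omega(1)\cdot\max_j\|U_1^T a_j\|_2^2$.

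The main obstacle I expect is managing the two coordinate systems cleanly. Vershynin's theorem applies only to a body whose MEE is exactly $B_2^d$, forcing me to work in the rescaled space $\tilde K=F^{-1}K$, but the subspace $\vspan(U_1)$ we ultimately care about is defined via the left singular vectors of $F$ in the original space. The book-keeping that makes the argument work is precisely the identity $W_1W_1^T a_{j(\ell)}=F_V Tx_\ell$: projecting onto the large-$\sigma$ directions in the rescaled $v_i$ basis transports back through $F_V$ to projecting onto $\vspan(W_1)$ in the original $u_i$ basis, and the factor lost in transport is exactly $\sigma_{\min}(F_V)=\sigma_{d/2}$, which matches, up to the constant from $\sigma_{d/2+1}\leq\sigma_{d/2}$, the upper bound on $\|U_1^T a_j\|_2$.
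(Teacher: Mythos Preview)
Your proof is correct and follows essentially the same approach as the paper's: reduce to the $i=1$ case, apply Vershynin's restricted invertibility theorem (Theorem~\ref{thm:bt}) to $F^{-1}K$ with $T$ the projection onto the top-$d/2$ singular directions, and combine the resulting $\sigma_{\min}$ bound with the containment $U_1^TK\subseteq\sigma_{d/2+1}B_2$. The only cosmetic difference is that the paper first rotates coordinates so that $F$ is diagonal (making $u_i=v_i=e_i$ and letting $T$, $F$, $F^{-1}$ commute), whereas you keep a general SVD and package the same commutation via the auxiliary map $F_V=\sum_{\ell\le d/2}\sigma_\ell u_\ell v_\ell^T$ and the identity $W_1W_1^T a_{j(\ell)}=\pm F_V(Tx_\ell)$.
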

\begin{proof}
  Let us, for ease of notation, assume that $d$ is a power of $2$. We
  prove that there exists a set $S$, $|S| = \Omega(d)$, such that
  \begin{equation}
    \label{eq:base}
    \sigma_{\min}^2(VV^TA|_S) = \Omega(1) \max_{j \in N}{\|U_1^Ta_j\|_2^2}.
  \end{equation}
  This proves the lemma for $i=1$ (substitute $W_i = V$ and $d_i =
  d/2$). Applying (\ref{eq:base}) inductively to $V^TA$ establishes
  the lemma for all $i < k$. In case $i = k$, observe that $d_k = 1$
  and we only need to show that the matrix $U_kU_k^TA$ has a column
  with $\ell_2$ norm at least $\max_{j \in N}{\|U_k^Ta_j\|_2}$. This is
  trivially true since each column of $U_kU_k^TA$ has the same
  $\ell_2$ norm as the corresponding column of $U_k^TA$. 

  By applying an appropriate unitary transformation to the columns of
  $A$, we may assume that the major axes of $E$ are co-linear with the
  standard basis vectors of $\R^d$, and, therefore, $F$ is a diagonal
  matrix with $F_{ii} = \sigma_i$. This transformation comes without
  loss of generality, since it applies a unitary transformation to the
  columns of $A$ and $V$ and does not affect the singular values of
  any matrix $VV^TA|_S$ for any $S \subseteq [N]$. For the rest of the
  proof we assume that $F$ is diagonal.

  Since $F$ is diagonal, $u_i$ is equal to $e_i$, the $i$-th standard
  basis vector. Therefore $U_1$ is diagonal and equal to the
  projection onto $e_{d/2 + 1}, \ldots, e_d$, and $V$ is also diagonal
  and equal to the projection onto $e_1, \ldots, e_{d/2}$. Consider $L
  = F^{-1}K = F^{-1}AB_1^d$ (recall that we assumed that $\rank A = d$
  and therefore $F$ is non-singular). Since the minimum enclosing
  ellipsoid of $K$ is $E = FB_2^d$, we have that the minimum enclosing
  ellipsoid of $L$ is $B_2^d$. Let $T = VV^T$ be the projection
  onto $e_1, \ldots, e_{d/2}$. Then, by Theorem~\ref{thm:bt}, and
  because $\|T\|_F^2 = d/2$, we have that there exists a set $S$ of
  size $|S| = \Omega(d)$ such that $\sigma_{\min}^2(TF^{-1}A|_S) =
  \Omega(1)$. We chose $F$, and therefore $F^{-1}$, as well as $T$ to
  be diagonal matrices, so they all commute. Then, since $T$ is a
  projection matrix,
\stocoption{\begin{align}
    \sigma_{\min}^2&(VV^TA|_S) = \sigma_{\min}^2(TA|_S)=
    \sigma_{\min}^2(F T F^{-1}A|_S) \notag\\ 
    &= \sigma_{\min}^2(F T T F^{-1}A|_S) 
    \geq \sigma_{\min}^2(FT) \sigma_{\min}^2(TF^{-1}A|_S)\notag\\ 
    &=  \Omega(\sigma_{d/2}^2).\label{eq:singval-lb}
  \end{align}}{\begin{align}
    \sigma_{\min}^2(VV^TA|_S) &= \sigma_{\min}^2(TA|_S)=
    \sigma_{\min}^2(F T F^{-1}A|_S) =
    \sigma_{\min}^2(F T T F^{-1}A|_S) \notag\\
    &\geq \sigma_{\min}^2(FT) \sigma_{\min}^2(TF^{-1}A|_S) =
    \Omega(\sigma_{d/2}^2).\label{eq:singval-lb}
  \end{align}}

  Observe that, since $K  \subseteq E$, we have 
  \begin{equation*}
    U_1^TK \subseteq U_1^T E = U_1^T F B_2^d \subseteq \sigma_{d/2 +
      1}B_2^d. 
  \end{equation*}
  Therefore, $\max_{j = 1}^N {\|U_1^T a_j\|_2^2} \leq \sigma_{d/2+ 1}^2
  \leq \sigma_{d/2}^2$. Substituting for $\sigma_{d/2}^2$ into
  (\ref{eq:singval-lb}) completes the proof.
\end{proof}

\subsection{The Dense Case: Correlated Gaussian Noise}\label{sec:densepsdel}

Our first result is an efficient algorithm whose expected error
matches the spectral lower bound $\specLB$ up to polylogarithmic
factors and is therefore nearly optimal. This proves Theorem~\ref{thm:epsdeldense}. The algorithm adds
correlated unbiased Gaussian noise to the exact answer $Ax$. The noise
distribution is computed based on the decomposition algorithm of the
previous subsection.

\begin{algorithm}
  \caption{\textsc{Gaussian Noise Mechanism}}\label{alg:gaussnoise}
  \begin{algorithmic}
    \REQUIRE \emph{(Public)}: query matrix $A = (a_i)_{i = 1}^N \in \R^{d
      \times N}$ ($\rank A = d$); 
    \REQUIRE \emph{(Private)}: database $x \in \R^N$

    \STATE Let $U_1, \ldots, U_k$ be base decomposition computed by
    Algorithm~\ref{alg:base} on input $A$, where $U_i$ is an
    orthonormal basis for a space of dimension $d_i$;

    \STATE Let $c(\eps, \delta) =
    \frac{1+\sqrt{2\ln(1/\delta)}}{\eps}$.

    \STATE For each $i$, let $r_i = \max_{j = 1}^N{\|U_i^Ta_j\|_2}$;

    \STATE For each $i$, sample $w_i \sim N(0, c(\eps, \delta))^{d_i}$;
    \STATE \textbf{Output} $Ax + \sqrt{k}\sum_{i = 1}^k{r_iU_iw_i}$. 
  \end{algorithmic}
\end{algorithm}

\begin{theorem} \label{thm:dense} Let $\mech_g(A, x)$ be the output of
  Algorithm~\ref{alg:gaussnoise} on input a $d \times N$ query matrix
  $A$ and private input $x$.  $\mech_g(A, x)$ is $(\eps,
  \delta)$-differentially private and for all small enough $\eps$ and all
  $\delta$ small enough with respect to $\eps$ satisfies
\stocoption{\begin{align*}
    \err_{\mech_g}(A) &= O(\log^2 d\log 1/\delta) \opt_{\eps,
      \delta}(A, \frac{d}{\eps})\\ &= O(\log^2 d\log 1/\delta) \opt_{\eps,
      \delta}(A). 
  \end{align*}}{\begin{equation*}
    \err_{\mech_g}(A) = O(\log^2 d\log 1/\delta) \opt_{\eps,
      \delta}(A, \frac{d}{\eps}) = O(\log^2 d\log 1/\delta) \opt_{\eps,
      \delta}(A). 
  \end{equation*}}
\end{theorem}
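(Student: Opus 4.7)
The plan is first to verify $(\eps,\delta)$-privacy by applying Corollary~\ref{cor:gauss-composition}, and then to sandwich the quantity $\sum_i d_i r_i^2$ (which, up to $c(\eps,\delta)^2$, is the expected squared noise) between the mechanism's error and a spectral lower bound on $\opt_{\eps,\delta}(A,d/\eps)$. For privacy, I would take $\mathcal{V}_i := \mathrm{colspan}(U_i)$ and $F_i := r_i U_i$. By construction of Algorithm~\ref{alg:base}, the $\mathcal{V}_i$ are pairwise orthogonal and together span $\R^d$, and the ellipsoid $E_i = F_iB_2^{d_i}$ contains $\Pi_i K = U_iU_i^T K$ because every $U_iU_i^T a_j$ has Euclidean length at most $r_i$. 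Corollary~\ref{cor:gauss-composition} then gives $(\eps,\delta)$-differential privacy of the output $Ax+\sqrt{k}\sum_i r_i U_i w_i$.

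The upper bound is a routine Gaussian calculation: the pairwise orthogonality of the column spans of the $U_i$ makes the noise vectors $r_iU_iw_i$ mutually orthogonal, so
\[
\err_{\mech_g}(A) = k\,c(\eps,\delta)^2 \sum_{i=1}^k d_i r_i^2 \leq O(\log^2 d)\, c(\eps,\delta)^2 \max_i d_i r_i^2,
\]
using $k = O(\log d)$. For the matching lower bound on $\max_i d_i r_i^2$, I would apply Lemma~\ref{lm:base} to obtain, for each $i$, a set $S_i$ of size $\Omega(d_i)$ with $\sigma_{\min}^2(W_iW_i^TA|_{S_i}) = \Omega(r_i^2)$. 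Letting $\Pi_i$ be the orthogonal projection onto the $|S_i|$-dimensional column span of $W_iW_i^TA|_{S_i}$, a subspace of $\mathrm{colspan}(W_i)$, one checks that $\Pi_i$ acts as the identity on that column span and annihilates $\mathrm{colspan}(W_i)^\perp$, so $\Pi_iA|_{S_i} = \Pi_iW_iW_i^TA|_{S_i}$ has the same minimum singular value as $W_iW_i^TA|_{S_i}$. Hence $\specLB(A,d) \geq |S_i|\sigma_{\min}^2(\Pi_iA|_{S_i}) = \Omega(d_ir_i^2)$, giving $\max_i d_ir_i^2 = O(\specLB(A,d))$.

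Combining, $\err_{\mech_g}(A) = O(\log^2 d)\,c(\eps,\delta)^2\,\specLB(A,d)$. To turn this into a ratio against $\opt_{\eps,\delta}(A,d/\eps)$ I would invoke inequality~(\ref{eq:speclb}), which lower bounds $\opt$ by $\specLB$ only at the fixed privacy parameters $(c_1,c_2)$, together with Lemma~\ref{lm:eps}. Applied with $k = \Theta(1/\eps)$, Lemma~\ref{lm:eps} yields $\opt_{\eps,\delta}(A,d/\eps) = \Omega(\specLB(A,d)/\eps^2)$, provided $\delta$ is small enough relative to $\eps$ that the induced $\delta' = \frac{e^{c_1}-1}{e^\eps-1}\delta$ stays below $c_2$. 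Since $c(\eps,\delta)^2 = O(\log(1/\delta)/\eps^2)$, the claimed ratio $\err_{\mech_g}(A) = O(\log^2 d \log(1/\delta))\,\opt_{\eps,\delta}(A,d/\eps)$ follows, and the equality with $\opt_{\eps,\delta}(A)$ is immediate from $\opt_{\eps,\delta}(A) \geq \opt_{\eps,\delta}(A,d/\eps)$. The main obstacle I foresee is the bookkeeping around this parameter translation, and in particular checking that the hypothesis \emph{``$\delta$ small with respect to $\eps$''} in the theorem is strong enough to yield the $\delta'$ required by Lemma~\ref{lm:eps}; the geometric step replacing $W_iW_i^TA|_{S_i}$ by $\Pi_iA|_{S_i}$ is routine by comparison.
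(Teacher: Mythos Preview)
Your proposal is correct and follows essentially the same route as the paper: privacy via Corollary~\ref{cor:gauss-composition} with $F_i = r_iU_i$, the error computation $\E\|r_iU_iw_i\|_2^2 = d_i c(\eps,\delta)^2 r_i^2$, Lemma~\ref{lm:base} to lower bound $\specLB$ by $\Omega(d_ir_i^2)$, and then (\ref{eq:speclb}) together with Lemma~\ref{lm:eps} to pass from $\specLB$ at the fixed constants $(c_1,c_2)$ to $\opt_{\eps,\delta}(A,d/\eps)$. Your explicit replacement of $W_iW_i^T$ by the projection $\Pi_i$ onto the column span of $W_iW_i^TA|_{S_i}$, so that $\Pi_i \in \mathcal{P}_{|S_i|}$ as the definition of $\specLB$ requires, is in fact a detail the paper elides; the paper simply writes $\Pi_i = W_iW_i^T$ and substitutes, which is harmless for exactly the reason you give but is less careful than your version.
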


Notice that even though we assume $\rank A = d$, this is without loss
of generality: if $\rank A = r < d$, we can compute an orthonormal
basis $V$ for the range of $A$ and apply the algorithm to $A' = V^T A$
to compute an approximation $\tilde{z}$ to $A'x$. We have that $VA'x =
VV^TAx = Ax$ since $VV^T$ is a projection onto the range of $A$ and
$Ax$ belongs to that range. Then $\tilde{y} = V\tilde{z}$ gives us an
approximation of $Ax$ satisfying $\|\tilde{y} - Ax\|_2 = \|\tilde{z} -
A'x\|_2$.

We start the proof of Theorem~\ref{thm:dense} with the privacy
analysis. For ease of notation, we assume throughout the analysis that
$d$ is a power of 2. 

\begin{lemma}\label{lm:dense-priv}
  $\mech_g(A,x)$ satisfies $(\eps, \delta)$-differential privacy. 
\end{lemma}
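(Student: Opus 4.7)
The plan is to directly invoke the composition-style Gaussian mechanism guarantee of Corollary~\ref{cor:gauss-composition}, after verifying that the decomposition produced by Algorithm~\ref{alg:base} meets its hypotheses. Concretely, let $\mathcal{V}_i$ denote the column span of $U_i$ and let $\Pi_i = U_i U_i^T$ be the orthogonal projection onto $\mathcal{V}_i$. By construction of Algorithm~\ref{alg:base}, the subspaces $\mathcal{V}_1, \ldots, \mathcal{V}_k$ are pairwise orthogonal and their dimensions sum to $d$, so the hypothesis $\mathcal{V}_{i+1} \subseteq \mathcal{V}_i^\perp$ (in fact a stronger version) is immediate.

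Next I would identify the ellipsoids $E_i$ used in Corollary~\ref{cor:gauss-composition}. Taking $F_i = r_i U_i$ (viewed as a linear map from $\R^{d_i}$ to $\R^d$ with image in $\mathcal{V}_i$), we get $E_i = r_i U_i B_2^{d_i} \subseteq \mathcal{V}_i$. The required containment $\Pi_i K \subseteq E_i$ holds because for every column $a_j$ of $A$,
\begin{equation*}
\|\Pi_i a_j\|_2 = \|U_i U_i^T a_j\|_2 = \|U_i^T a_j\|_2 \leq r_i,
\end{equation*}
using that $U_i$ has orthonormal columns; since $\Pi_i K = \Pi_i A B_1^N = \sym\{\Pi_i a_j\}_{j=1}^N$, it is contained in $r_i B_2^d \cap \mathcal{V}_i = E_i$.

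With these choices, Corollary~\ref{cor:gauss-composition} asserts that the mechanism outputting
\begin{equation*}
Ax + \sqrt{k} \sum_{i=1}^k F_i w_i = Ax + \sqrt{k}\sum_{i=1}^k r_i U_i w_i,
\end{equation*}
with $w_i \sim N(0, c(\eps,\delta))^{d_i}$, is $(\eps,\delta)$-differentially private. This is exactly the output of Algorithm~\ref{alg:gaussnoise}, so the lemma follows.

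There is essentially no hard step here: the entire privacy argument is an application of Corollary~\ref{cor:gauss-composition}. The only thing to be careful about is the bookkeeping to confirm the geometric containment $\Pi_i K \subseteq E_i$ with the right scaling $r_i$, and to match $F_i w_i$ in the corollary with $r_i U_i w_i$ in the algorithm; both are routine given the definitions of $r_i$ and $U_i$.
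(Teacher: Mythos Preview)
Your proof is correct and follows essentially the same approach as the paper: both invoke Corollary~\ref{cor:gauss-composition} with $\mathcal{V}_i$ the span of $U_i$, $F_i = r_iU_i$, and verify the containment $\Pi_i K \subseteq E_i$ from the definition of $r_i$. The bookkeeping you spell out matches the paper's argument almost line for line.
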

\begin{proof}
  The lemma follows from Corollary~\ref{cor:gauss-composition}.  Next
  we describe in detail why the corollary applies. 

  Let $U_1, \ldots, U_k$ be the base decomposition computed by
  Algorithm~\ref{alg:base} on input $A$. Let $\mathcal{V}_i$ be the
  subspace spanned by the columns of $U_i$ and let $d_i$ be the
  dimension of $\mathcal{V}_i$. The projection matrix onto
  $\mathcal{V}_i$ is $U_iU_i^T$. Let $E_i$ be the ellipsoid $U_i
  (r_iB_2^{d_i}) = F_i B_2^{d_i}$ ($F_i$ is $r_iU_i$). By the
  definition of $r_i$, $U_i^TK \subseteq r_iB_2^{d_i}$, and therefore,
  $\Pi_i K \subseteq E_i$. $\mech_{g}(A, x)$ is distributed
  identically to $Ax + \sqrt{k}\sum_{i=1}^k{F_iw_{i}}$.  Therefore, by
  Corollary~\ref{cor:gauss-composition}, $\mech_g(A, x)$ satisfies
  $(\eps, \delta)$-differential privacy.
\end{proof}

\begin{lemma}\label{lm:dense-util}
  For all small enough $\eps$ and all $\delta$ small enough with
  respect to $\eps$, for all $i$,
  \begin{equation*}
  \E\|r_i U_iw_i\|_2^2 \leq O(\log \frac{1}{\delta})\opt_{\eps,
    \delta}(A, \frac{d_i}{\eps}),
  \end{equation*}
  where $r_i$, $U_i$ and $w_i$ are as defined in
  Algorithm~\ref{alg:gaussnoise}.
\end{lemma}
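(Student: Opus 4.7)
The plan is to compute both sides and show that the spectral lower bound, combined with a rescaling of privacy parameters, dominates the expected squared noise.

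First I would evaluate the left-hand side directly. Since $U_i$ has orthonormal columns, $\|U_i w_i\|_2 = \|w_i\|_2$, and $w_i \sim N(0, c(\eps,\delta))^{d_i}$, so
\begin{equation*}
  \E\|r_i U_i w_i\|_2^2 = r_i^2 \, d_i \, c(\eps,\delta)^2
  = O\!\left(\frac{r_i^2 d_i \log(1/\delta)}{\eps^2}\right).
\end{equation*}
It therefore suffices to show $\opt_{\eps,\delta}(A, d_i/\eps) = \Omega(r_i^2 d_i/\eps^2)$.

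Next I would invoke Lemma~\ref{lm:base} to produce a set $S_i \subseteq [N]$ with $|S_i| = \Omega(d_i)$ and $\sigma_{\min}^2(W_i W_i^T A|_{S_i}) = \Omega(r_i^2)$. To cast this as the spectral lower bound $\specLB$, I need a projection $\Pi \in \mathcal{P}_{|S_i|}$ (a rank-$|S_i|$ orthogonal projection). I would take $\Pi$ to be the orthogonal projection onto the range of $W_iW_i^T A|_{S_i}$; since $W_iW_i^T A|_{S_i}$ has full column rank (the smallest singular value is positive), this range is $|S_i|$-dimensional, and because $\mathrm{range}(\Pi) \subseteq \mathrm{range}(W_i)$ we have $\Pi = \Pi W_iW_i^T$, hence $\Pi A|_{S_i} = W_iW_i^T A|_{S_i}$ and $\sigma_{\min}^2(\Pi A|_{S_i}) = \Omega(r_i^2)$. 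This yields
\begin{equation*}
  \specLB(A, |S_i|) \geq |S_i|\, \sigma_{\min}^2(\Pi A|_{S_i}) = \Omega(d_i r_i^2),
\end{equation*}
and by (\ref{eq:speclb}), $\opt_{c_1, c_2}(A, |S_i|) = \Omega(d_i r_i^2)$.

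Finally, I would lift this bound from privacy parameters $(c_1, c_2)$ to $(\eps, \delta)$ via Lemma~\ref{lm:eps}. Choosing the integer $k = \lfloor c_1/\eps \rfloor$ (valid when $\eps$ is small enough that $k \geq 1$), one gets
\begin{equation*}
  \opt_{\eps, \delta}(A, d_i/\eps)
  \;\geq\; k^2 \, \opt_{k\eps, \delta'}(A, d_i/(k\eps)),
\end{equation*}
with $\delta' = \frac{e^{k\eps}-1}{e^\eps-1}\delta = O(\delta/\eps)$. Provided $\delta$ is small enough with respect to $\eps$ that $\delta' \leq c_2$, monotonicity in the privacy parameter $\delta$ gives $\opt_{k\eps, \delta'}(\cdot) \geq \opt_{c_1, c_2}(\cdot)$, and monotonicity in the database size (combined with a trivial restriction of $S_i$ to a subset of size $\min(|S_i|, d_i/c_1)$, which preserves the $\Omega(d_i r_i^2)$ spectral bound by singular-value interlacing) gives $\opt_{c_1, c_2}(A, d_i/(k\eps)) \geq \Omega(d_i r_i^2)$. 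Combining,
\begin{equation*}
  \opt_{\eps, \delta}(A, d_i/\eps) \;\geq\; \Omega(1/\eps^2) \cdot \Omega(d_i r_i^2)
  \;=\; \Omega\!\left(\frac{d_i r_i^2}{\eps^2}\right),
\end{equation*}
which, when compared to the expression for $\E\|r_i U_i w_i\|_2^2$ above, yields the lemma.

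The main thing to be careful about is the bookkeeping in the parameter-rescaling step: ensuring $k$ can be chosen integer, that the resulting $\delta'$ stays below $c_2$ (this is exactly what the hypothesis ``$\delta$ small enough with respect to $\eps$'' affords), and that the database-size monotonicity is applied in the right direction so that $|S_i|$ and $d_i/(k\eps)$ are comparable up to constants. The application of Lemma~\ref{lm:base} and the spectral lower bound are otherwise straightforward.
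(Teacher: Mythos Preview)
Your proposal is correct and follows essentially the same approach as the paper: compute $\E\|r_i U_i w_i\|_2^2 = d_i r_i^2 c(\eps,\delta)^2$, invoke Lemma~\ref{lm:base} to lower bound $\specLB(A,d_i)$ by $\Omega(d_i r_i^2)$, apply (\ref{eq:speclb}), and then use Lemma~\ref{lm:eps} to pass from $(c_1,c_2)$ to $(\eps,\delta)$. You are slightly more careful than the paper in two places---choosing $\Pi$ as the projection onto the range of $W_iW_i^TA|_{S_i}$ so that its rank matches $|S_i|$ exactly, and spelling out the trimming of $S_i$ and the monotonicity arguments in the rescaling step---but these are refinements of the same argument, not a different route.
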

\begin{proof}
  \junk{
  Let us use the notation from Algorithm~\ref{alg:base} for $E$, $F$,
  $\sigma_i$ $U_1$ and $V$. 

  We show that 
  \begin{equation}
    \label{eq:U1-dense-ub}
    \E\|r_1U_1w_1\|_2^2 \leq O(\log\frac{1}{\delta})\opt_{\eps, \delta}(A).
  \end{equation}
  Since $\opt_{\eps, \delta}(V^TA) \leq \opt_{\eps, \delta}(A)$,
  applying (\ref{eq:U1-dense-ub}) recursively to $V^TA$ implies the
  statement of the lemma for all $i$.
  }

  To upper bound $\E\|r_iU_iw_i\|_2^2$, observe that, since the
  columns of $U_i$ are $d_i$ pairwise orthogonal unit vectors, $\|r_i
  U_i w_i\|_2^2 = r_i^2 \|w_i\|_2^2$.  Therefore, it follows that
  \begin{equation}
    \E \|r_iU_iw_i\|_2^2 = d_i c(\eps,   \delta)^2 r_i^2 = O(\log
    (1/\delta) \frac{1}{\eps^2}dr_i^2).\label{eq:dense-ub} 
  \end{equation}
  By (\ref{eq:dense-ub}), it is enough to lower bound $\opt_{\eps,
    \delta}(A)$ by $\Omega(\frac{1}{\eps^2} dr_i^2)$. As a
  first step we lower bound $\specLB(A)$. Then the lower bound on
  $\opt_{\eps, \delta}$ will follow from (\ref{eq:speclb}) and
  Lemma~\ref{lm:eps}. 

  To lower bound $\specLB(A)$, we invoke Lemma~\ref{lm:base}. It
  follows from the lemma that for every $i$ there exists a projection matrix $\Pi_i =
  W_iW_i^T$ and a set $S_i$ such that $\sigma_{\min}^2(\Pi_iA|_{S_i})
  = \Omega(r_i^2)$, and, furthermore, $|S_i| = \Omega(d_i)$.  
  Substituting into the the definition of $\specLB(A, d_i)$, we have that
  for all $i$.
  \begin{equation*}
    \specLB(A, d_i) \geq |S_i| \sigma_{\min}^2(\Pi_iA|_{S_i}) =
    \Omega(d_ir_i^2). 
  \end{equation*}
  Therefore, by (\ref{eq:speclb}), $\opt_{c_1, c_2}(A, d_i) =
  \Omega(d_ir_i^2)$ for all $i$. Finally, by Lemma~\ref{lm:eps}, there
  exists a small enough $\delta = \delta(\eps)$, for which
  $\opt_{\eps, \delta}(A, \frac{d_i}{\eps}) = \Omega(\frac{1}{\eps^2}d_ir_i^2)$,
  and this completes the proof.
\end{proof}

\begin{proofof}{Theorem~\ref{thm:dense}}
  The proof of the theorem follows from Lemma~\ref{lm:dense-priv} and
  Lemma~\ref{lm:dense-util}. The privacy guarantee is direct from
  Lemma~\ref{lm:dense-priv}. Next we prove that the error of $\mech_g$
  is near optimal.

  Let $w$ be the noise vector generated by
  Algorithm~\ref{alg:gaussnoise} so that $w= \sqrt{k}\sum_{i =1}^{1 + \log
    d}{r_iU_iw_i}$. We have $\err_{\mech_g}(A) = \E \|w\|_2^2$. We
  proceed to upper bound this quantity in terms of
  $\opt_{\eps,\delta}(A)$.

  By Lemma~\ref{lm:dense-util}, for each $w_i$,  
  $\E \|r_iU_iw_i\|_2^2 = O(\log 1/\delta)\opt_{\eps, \delta}(A)$.
  Since $U_i^TU_j = 0$ for all $i \neq j$, and $k = O(\log d)$, we can
  bound $\E \|w\|_2^2$ \stocoption{by}{as}
\stocoption{\begin{align*}
k \sum_{i = 1}^{1+\log d}&{E\|r_iU_iw_i\|_2^2} =
    O(\log d\log 1/\delta)\sum_{i = 1}^k{\opt_{\eps, \delta}(A,
      \frac{d_i}{\eps})}\\
     &= O(\log^2 d\log 1/\delta)\opt_{\eps, \delta}(A,\frac{d}{\eps})
  \end{align*}}{
  \begin{equation*}
    \E\|w\|_2^2 = k \sum_{i = 1}^{1+\log d}{E\|r_iU_iw_i\|_2^2} =
    O(\log d\log 1/\delta)\sum_{i = 1}^k{\opt_{\eps, \delta}(A,
      \frac{d_i}{\eps})}
      = O(\log^2 d\log 1/\delta)\opt_{\eps, \delta}(A,\frac{d}{\eps})
  \end{equation*}}
  This  completes the proof.   
\end{proofof}

\subsection{The Sparse Case: Least Squares Estimation}
\label{sect:sparse}

\junk{For each query matrix $A$, Algorithm~\ref{alg:gaussnoise} achieves
error nearly equal to $\opt_{\eps, \delta}(A)$. Among error bounds
that depend only on $A$ and not on the database $x$, this is the best
we can hope for. Also, inspecting the proof of
Theorem~\ref{thm:dense}, one notices that
Algorithm~\ref{alg:gaussnoise} is optimal when $n =
\Omega(d/\eps)$. However, since the pioneering work of Blum, Ligett,
and Roth~\cite{BLR}, it has been known that there exist algorithms
that achieve smaller error when the input database has small size,
i.e.~$n = o(d/\eps)$. } In this subsection we present an algorithm with
stronger accuracy guarantees than Algorithm~\ref{alg:gaussnoise}: it
is optimal for any query matrix $A$ and any database size bound
$n$ (Theorem~\ref{thm:epsdelsparse}). The algorithm combines the noise distribution of
Algorithm~\ref{alg:gaussnoise} with a least squares estimation
step. Privacy is guaranteed by noise addition, while the least squares
estimation step reduces the error significantly when $n =
o(d/\eps)$. The algorithm is shown as Algorithm~\ref{alg:lse}.

\begin{algorithm}
  \caption{\textsc{Least Squares Mechanism}}\label{alg:lse}
  \begin{algorithmic}
    \REQUIRE \emph{(Public)}: query matrix $A = (a_i)_{i = 1}^N \in \R^{d
      \times N}$ ($\rank A = d$); database size bound $n$
    \REQUIRE \emph{(Private)}: database $x\in\R^N$

    \STATE Let $U_1, \ldots, U_k$ be base decomposition computed by
    Algorithm~\ref{alg:base} on input $A$, where $U_i$ is an
    orthonormal basis for a space of dimension $d_i$;
    
    \STATE Let $t$ be the largest integer such that $d_t \geq \eps n$;

    \junk{$\sum_{i =  t+1}^k{d_i} \geq \eps n$;}
    
    \STATE Let $X = \sum_{i=1}^{t}{U_i}$ and $Y = \sum_{i = t+1}^k{U_i}$;
    
    \STATE Call Algorithm~\ref{alg:gaussnoise} to compute $\tilde{y} =
    \mech_g(A, x)$; 
  
    \STATE Let $\tilde{y}_1 = XX^T\tilde{y}$ and $\tilde{y}_2 =
    YY^T\tilde{y}$; 
    
    \STATE Let $\hat{y}_1 = \arg \min\{\|\tilde{y}_1 - \hat{y}_1\|_2^2:
    \hat{y}_1 \in nXX^TK\}$, where $K = AB_1$. 
    
    \STATE \textbf{Output}  $\hat{y}_1 + \tilde{y}_2$. 
  \end{algorithmic}

\end{algorithm}

\begin{theorem} \label{thm:sparse} Let $\mech_\ell(A, x, n)$ be the
  output of Algorithm~\ref{alg:lse} on input a $d \times N$ query
  matrix $A$, database size bound $n$ and private input $x$.
  $\mech_\ell(A, x)$ is $(\eps, \delta)$-differentially private and
  for all small enough $\eps$ and all $\delta$ small enough with
  respect to $\eps$ \stocoption{has error}{satisfies}
\stocoption{\begin{equation*}O(\log^{3/2} d \sqrt{\log N \log (1/\delta)} +
    \log^2 d \log (1/\delta)) \opt_{\eps, \delta}(A, n). 
  \end{equation*}}{   
  \begin{equation*}
    \err_{\mech_\ell}(A) = O(\log^{3/2} d \sqrt{\log N \log (1/\delta)} +
    \log^2 d \log (1/\delta)) \opt_{\eps, \delta}(A, n). 
  \end{equation*} } 
\end{theorem}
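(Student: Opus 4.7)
The plan is to prove privacy and utility separately. Privacy is essentially free: Algorithm~\ref{alg:lse} runs Algorithm~\ref{alg:gaussnoise} once to obtain $\tilde{y}$, and every subsequent step---projecting onto $\vspan(X)$ and $\vspan(Y)$, and least-squares projection onto $nXX^T K$---is deterministic post-processing of $\tilde y$ and the public data. So privacy follows immediately from Lemma~\ref{lm:dense-priv}. For utility, the first key observation is that since $XX^T$ and $YY^T$ are orthogonal projections onto complementary subspaces of $\R^d$, Pythagoras gives
\begin{equation*}
  \|\mech_\ell(A,x,n) - Ax\|_2^2 = \|\hat{y}_1 - XX^T Ax\|_2^2 + \|\tilde{y}_2 - YY^T Ax\|_2^2,
\end{equation*}
so I can bound the two parts independently.

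For the $Y$ part (the ``low'' subspaces $i>t$, where $d_i < \eps n$), note that $\tilde y_2 - YY^T A x = YY^T w$ where $w$ is the Gaussian noise from $\mech_g$. Lemma~\ref{lm:dense-util} bounds $\E\|r_iU_iw_i\|_2^2 \leq O(\log(1/\delta))\,\opt_{\eps,\delta}(A,d_i/\eps)$; since $d_i/\eps < n$ for $i > t$, monotonicity of $\opt$ in the database size gives $\opt_{\eps,\delta}(A,d_i/\eps)\leq \opt_{\eps,\delta}(A,n)$. Summing over the $O(\log d)$ subspaces and accounting for the $\sqrt{k}$ scaling of the noise (which contributes $k = O(\log d)$ in squared error) yields the $O(\log^2 d \log(1/\delta))\,\opt_{\eps,\delta}(A,n)$ term in the theorem.

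For the $X$ part (the ``high'' subspaces $i\leq t$, where $d_i\geq \eps n$) the least squares step is critical. Since $\|x\|_1 \leq n$, $XX^T Ax \in nXX^T K$ with $K = AB_1^N$, so I can apply Lemma~\ref{lm:lse} with $L = nXX^TK$, $y = XX^T Ax$, $\tilde y = \tilde y_1$, and noise $XX^T w$, obtaining
\begin{equation*}
  \|\hat y_1 - XX^T Ax\|_2^2 \leq 4\|XX^T w\|_{L^\circ} = 4n\max_{j\in[N]} \langle w, XX^T a_j\rangle .
\end{equation*}
Each inner product is a centered Gaussian; a direct computation using $w = \sqrt{k}\sum_i r_i U_i w_i$ and $\|U_i^T a_j\|_2 \leq r_i$ gives variance at most $k\,c(\eps,\delta)^2\sum_{i\leq t} r_i^4$, so a standard Gaussian maximal inequality yields
\begin{equation*}
  \E\|XX^T w\|_{L^\circ} = O\!\left(n\sqrt{\log N}\cdot c(\eps,\delta)\cdot \sqrt{k}\cdot \sqrt{\textstyle\sum_{i\leq t} r_i^4}\right)
  = O\!\left(\tfrac{n\log d\sqrt{\log N\log(1/\delta)}}{\eps}\right)\max_{i\leq t} r_i^2,
\end{equation*}
after bounding $\sqrt{k}\sqrt{\sum_{i\le t} r_i^4}\leq \sqrt{kt}\,\max_{i\leq t} r_i^2 = O(\log d)\max_{i\leq t} r_i^2$. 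To turn this into a ratio against $\opt_{\eps,\delta}(A,n)$, I invoke Lemma~\ref{lm:base}: for each $i\leq t$ it produces a set $S_i$ of size $\Omega(d_i) = \Omega(\eps n)$ with $\sigma_{\min}^2(W_iW_i^TA|_{S_i}) = \Omega(r_i^2)$. Restricting to a subset of $\Theta(\eps n)$ columns gives $\specLB(A,\eps n) = \Omega(\eps n\cdot \max_{i\leq t}r_i^2)$; combining the spectral lower bound (\ref{eq:speclb}) with the $\eps$-scaling of Lemma~\ref{lm:eps} (applied with $k = \Theta(1/\eps)$) yields $\opt_{\eps,\delta}(A,n) = \Omega(n\,\max_{i\leq t} r_i^2/\eps)$. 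Dividing the two bounds gives the $O(\log^{3/2} d\sqrt{\log N\log(1/\delta)})$ factor on the $X$-part contribution.

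The main obstacle will be the $X$-part: aligning the Gaussian maximal bound on $\|XX^T w\|_{L^\circ}$ with the restricted-invertibility lower bound $\Omega(n\max_{i\leq t}r_i^2/\eps)$. The two must be computed at exactly the right scale---$\specLB$ must be invoked at database size $\Theta(\eps n)$ rather than $n$ or $d_i$, and Lemma~\ref{lm:eps} must be used carefully to switch from the constants $(c_1,c_2)$ appearing in (\ref{eq:speclb}) to general $(\eps,\delta)$. It is also essential to use the dual-norm bound $4\|w\|_{L^\circ}$ from Lemma~\ref{lm:lse}, rather than the trivial $4\|w\|_2^2$: this is the geometric ``shadow'' mechanism by which the sparsity $n < d/\eps$ enters the accuracy analysis, and it is the reason $\hat y_1$ beats the raw Gaussian answer $\tilde y_1$ in the sparse regime.
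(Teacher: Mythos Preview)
Your proposal is correct and follows the paper's proof almost step for step: privacy via post-processing of $\mech_g$, the Pythagorean split into the $X$- and $Y$-parts, Theorem~\ref{thm:dense}/Lemma~\ref{lm:dense-util} for the $Y$-part, and Lemma~\ref{lm:lse} plus Lemma~\ref{lm:base}/$\specLB$/Lemma~\ref{lm:eps} for the $X$-part. The one tactical difference is that on the $X$-part the paper first applies the triangle inequality $\max_j|\sum_{i\le t}\langle a_j, r_iU_iw_i\rangle|\le \sum_{i\le t}\max_j|\langle a_j, r_iU_iw_i\rangle|$ and then invokes Lemma~\ref{lm:sparse-util} term by term (picking up a factor $\sqrt{k}\cdot t$), whereas you exploit the independence of the $w_i$ to compute the variance of the full sum directly and apply a single Gaussian maximal bound (picking up only $\sqrt{kt}$); your route is in fact a $\sqrt{\log d}$ factor sharper than the theorem statement, so your concluding ``$\log^{3/2}d$'' is conservative.
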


\begin{lemma}\label{lm:sparse-priv}
  $\mech_\ell(A,x, n)$ satisfies $(\eps, \delta)$-differential privacy. 
\end{lemma}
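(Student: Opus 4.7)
The plan is to invoke the post-processing property of differential privacy together with the privacy guarantee for Algorithm~\ref{alg:gaussnoise} established in Lemma~\ref{lm:dense-priv}. The crucial observation is that in Algorithm~\ref{alg:lse}, the private database $x$ is accessed only once, through the call $\tilde{y} = \mech_g(A, x)$. Every other computation—the base decomposition $U_1, \ldots, U_k$, the choice of $t$, the matrices $X$ and $Y$, the splitting of $\tilde{y}$ into $\tilde{y}_1$ and $\tilde{y}_2$, the least squares projection of $\tilde{y}_1$ onto $nXX^TK$, and the final summation $\hat{y}_1 + \tilde{y}_2$—depends either on public information (the matrix $A$ and the size bound $n$) or only on $\tilde{y}$.

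First I would note that $\tilde{y} = \mech_g(A, x)$ is $(\eps, \delta)$-differentially private by Lemma~\ref{lm:dense-priv}. Then I would define the (randomized) post-processing function $f(\tilde{y})$ that, given $\tilde{y}$, computes $\tilde{y}_1 = XX^T \tilde{y}$ and $\tilde{y}_2 = YY^T \tilde{y}$, solves the convex program $\hat{y}_1 = \arg\min_{\hat{y}_1 \in nXX^TK} \|\tilde{y}_1 - \hat{y}_1\|_2^2$, and returns $\hat{y}_1 + \tilde{y}_2$. The map $f$ uses only $A$, $n$, and $\tilde{y}$; in particular it does not require access to $x$. Thus $\mech_\ell(A, x, n) = f(\mech_g(A, x))$.

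Finally, I would apply the post-processing property of differential privacy (this is the ``trivial'' direction of Lemma~\ref{lm:simple-composition}, or can be seen directly from the definition by integrating the inequality $\Pr[\mech_g(x)\in S]\leq e^\eps \Pr[\mech_g(x')\in S]+\delta$ over the randomness of $f$ and any measurable output set). This yields the $(\eps,\delta)$-differential privacy of $\mech_\ell$. There is no real obstacle here: all the work was already done in establishing the privacy of the Gaussian noise mechanism, and the only thing to check is that the additional steps in Algorithm~\ref{alg:lse} are indeed data-independent given $\tilde{y}$, which is immediate from inspection of the pseudocode.
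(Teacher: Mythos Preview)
Your proposal is correct and follows essentially the same approach as the paper: invoke Lemma~\ref{lm:dense-priv} for the privacy of $\mech_g$, observe that $\mech_\ell$ is a (in fact deterministic) function of the output of $\mech_g$ using only public inputs, and conclude via the post-processing property (Lemma~\ref{lm:simple-composition}). The paper's proof is just a more terse version of what you wrote.
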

\begin{proof}
  The output of $\mech_\ell(A, x, n)$ is a deterministic function of
  the output of $\mech_g(A, x)$. By Lemma~\ref{lm:dense-priv},
  $\mech_g(A, x)$ satisfies $(\eps, \delta)$-differential privacy,
  and, therefore, by Lemma~\ref{lm:simple-composition} (i.e. the
  post-processing property of differential privacy), $\mech_\ell(A, x,
  n)$ satisfies $(\eps, \delta)$-differential privacy.
\end{proof}

\begin{lemma}\label{lm:sparse-util}
  Let $U_i$ and $t$ be as defined in Algorithm~\ref{alg:lse} and let
  $r_i$ and $w_i$ be as defined in Algorithm~\ref{alg:gaussnoise}. Then $\E \max_{j = 1}^N{n |\langle a_j, r_i U_i w_i\rangle}| \leq
  O(\sqrt{\log N \log (1/\delta)})\opt_{\eps, \delta}(A, n)$. 
\end{lemma}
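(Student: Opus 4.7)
The plan is to bound the expected maximum of $N$ Gaussian inner products and match it against a spectral lower bound on $\opt_{\eps,\delta}(A,n)$ derived from Lemma~\ref{lm:base}.

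First, since the columns of $U_i$ are orthonormal and $w_i \sim N(0, c(\eps,\delta))^{d_i}$ has independent coordinates of variance $c(\eps,\delta)^2$, the scalar $\langle a_j, r_i U_i w_i\rangle = r_i \langle U_i^T a_j, w_i \rangle$ is a centered Gaussian with standard deviation $r_i c(\eps,\delta) \|U_i^T a_j\|_2 \leq r_i^2 c(\eps,\delta)$, using $r_i = \max_j \|U_i^T a_j\|_2$. Standard subgaussian tail bounds and a union bound over $j \in [N]$ then give
\begin{equation*}
\E \max_{j=1}^N n\,|\langle a_j, r_i U_i w_i\rangle| = O\!\left(n r_i^2\, c(\eps,\delta)\, \sqrt{\log N}\right) = O\!\left(\frac{n r_i^2 \sqrt{\log(1/\delta)\log N}}{\eps}\right).
\end{equation*}

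Second, I show the matching lower bound $\opt_{\eps,\delta}(A, n) = \Omega(n r_i^2/\eps)$ for $i \leq t$, which is the relevant case since these are precisely the components that are projected to $nXX^TK$ in Algorithm~\ref{alg:lse}. By Lemma~\ref{lm:base} there exist $S_i \subseteq [N]$ of size $\Omega(d_i)$ and a projection $W_iW_i^T$ whose range has dimension roughly $d_i$ and satisfies $\sigma_{\min}^2(W_iW_i^T A|_{S_i}) = \Omega(r_i^2)$. Restricting to any $k$-element subset $S \subseteq S_i$ can only increase $\sigma_{\min}$, and projecting onto an appropriate $k$-dimensional subspace of the range of $W_iW_i^T$ yields $\specLB(A, n') = \Omega(\min(n', d_i)\, r_i^2)$ for every $n' \leq |S_i|$. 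By inequality~(\ref{eq:speclb}), this gives $\opt_{c_1, c_2}(A, n') = \Omega(\min(n', d_i)\, r_i^2)$.

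Third, I apply Lemma~\ref{lm:eps} with $k = \lceil c_1/\eps \rceil$ to transfer this lower bound to the $(\eps,\delta)$ regime:
\begin{equation*}
\opt_{\eps,\delta}(A, n) \;\geq\; k^2\, \opt_{c_1, \delta'}(A, n/k) \;=\; \Omega\!\left(\frac{1}{\eps^2}\,\min\!\left(\frac{n\eps}{c_1},\, d_i\right) r_i^2\right),
\end{equation*}
which is valid provided $\delta$ is small enough relative to $\eps$ so that the induced $\delta' = O(\delta/\eps)$ does not exceed $c_2$. Since $i \leq t$ means $d_i \geq \eps n$, the minimum inside is $n\eps/c_1$, so we obtain $\opt_{\eps,\delta}(A, n) = \Omega(n r_i^2/\eps)$. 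Combining with the first step finishes the proof.

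The main obstacle is the third step: one must carefully match the dimension of the projection in $\specLB$ to the size of the chosen subset $S \subseteq S_i$, since the projection furnished by Lemma~\ref{lm:base} has rank only approximately $d_i$, and then track constants through Lemma~\ref{lm:eps} to ensure the resulting $(c_1, \delta')$-DP guarantee can be invoked.
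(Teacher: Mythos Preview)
Your proposal is correct and follows essentially the same three-step approach as the paper: bound the expected maximum of the $N$ Gaussian inner products by $O(\sqrt{\log N \log(1/\delta)}\, n r_i^2/\eps)$, use Lemma~\ref{lm:base} together with the monotonicity of $\sigma_{\min}$ under column restriction to get $\specLB(A,\eps n)=\Omega(\eps n r_i^2)$, and then invoke Lemma~\ref{lm:eps} to convert this into $\opt_{\eps,\delta}(A,n)=\Omega(n r_i^2/\eps)$. The obstacle you flag about matching the rank of the projection to $|S|$ in the definition of $\specLB$ is not a genuine difficulty: once $\sigma_{\min}(W_iW_i^T A|_{T_i})>0$, the column span of $W_iW_i^T A|_{T_i}$ has dimension exactly $|T_i|$, and projecting onto it leaves $\sigma_{\min}$ unchanged, so the paper simply omits this step.
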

\begin{proof}
  \junk{
    In order to simplify notation, let us assume once again that $d$ is
    a power of $2$. Let us also use the notation for $E$, $F$,
    $\sigma_i$, $U_1$, and $V$ from
    Algorithm~\ref{alg:base}. 
    
    Analogously to the proof of Lemma~\ref{lm:dense-util}, by applying
    an appropriate unitary transformation to the columns of $A$, we may
    assume that the major axes of $E$ are aligned with the standard
    basis vectors $e_1, \ldots, e_d$, and, therefore, $F$ can be taken
    to be a diagonal matrix with $F_{ii} = \sigma_i$ the $i$-th singular
    value of $F$. Then $U_1$ is diagonal diagonal and equal to the
    projection onto $e_{d/2+1}, \ldots, e_d$ and $V$ is also diagonal
    and equal to the projection onto $e_1, \ldots, e_{d/2}$. We show
    that, as long as $t \geq 1$, 
    \begin{equation}
      \label{eq:U1-sparse-ub}
      \E \max_{j = 1}^N{n |\langle a_j, r_1 U_1 w_i\rangle|} \leq
      O(\log N \log \frac{1}{\delta})\opt_{\eps, \delta}(A, n) 
    \end{equation}
    Applying (\ref{eq:U1-sparse-ub}) recursively to $V^TA$ implies the
    statement of the lemma for all $i \leq t$, since $\opt_{\eps,
      \delta}(V^TA, n) \leq \opt_{\eps, \delta}(A, n)$. } 

  First we upper bound $\E \max_{j = 1}^N{n |\langle a_j, r_i U_i
    w_i\rangle|}$. After rearranging the terms, we have $\langle a_j,
  r_i U_i w_i\rangle = \langle U_i^Ta_j, r_i w_i \rangle$ for any $i$
  and $j$. By the definition of $r_i$, $\|U_i^Ta_j\|_2 \leq
  r_i$. Therefore, $\langle U_i^Ta_j, r_i w_i \rangle$ is a Gaussian
  random variable with mean $0$ and standard deviation at most $r_i^2c(\eps,
  \delta)$. Using standard bounds on the expected maximum of $N$
  Gaussians (e.g.~using a Chernoff bound and a union bound), we have
  that
  \begin{align*}
    \E \max_{j = 1}^N{n |\langle a_j, r_i U_i  w_i\rangle|} &= n \E
    \max_{j = 1}^N |\langle U_i^Ta_j, r_i w_i \rangle|\\ 
    &= O(\sqrt{\log N} c(\eps, \delta) nr_i^2)\\
    &=O(\sqrt{\log N \log (1/\delta)} \frac{1}{\eps} nr_i^2).  
  \end{align*}
  To finish the proof of the lemma, we need to lower bound
  $\opt_{\eps, \delta}(A, n)$ by $\Omega(\frac{1}{\eps} nr_i^2)$. We
  will use Lemma~\ref{lm:base} to lower bound $\specLB(A,
  \eps n)$ by $\Omega(\eps nr_i^2)$ and then we will
  invoke Lemma~\ref{lm:eps} to get the right dependence on $\eps$.

  By Lemma~\ref{lm:base}, for every $i$ there exists a projection
  matrix $\Pi_i = W_iW_i^T$ and a set $S_i$ such that
  $\sigma_{\min}^2(\Pi_iA|_{S_i}) = \Omega(r_i^2)$, and, furthermore,
  $|S_i| = \Omega(d_i)$. By the definition of $t$, for all $i \leq t$,
  $d_i \geq \eps n$, and, therefore, $|S_i| =
  \Omega(\eps n)$. Take $T_i \subseteq S_i$ to be an arbitrary
  subset of $S_i$ of size $\Omega(\eps n)$. The smallest
  singular value of $\Pi_iA|_{S_i}$ is a lower bound on the smallest
  singular value of $\Pi_iA|_{T_i}$:
  \begin{align*}
    \sigma_{\min}(\Pi_iA|_{T_i}) &= \stocoption{}{\min_{x: \|x\|_2=
      1}{\|(\Pi_iA|_{T_i})x\|_2}=}
     \min_{\substack{x: \|x\|_2=   1\\ \supp(x) \subseteq T_i}}{\|(\Pi_iA|_{S_i})x\|_2}\\
    &\geq \min_{x: \|x\|_2=   1}{\|(\Pi_iA|_{S_i})x\|_2} = \sigma_{\min}(\Pi_iA|_{S_i}),
  \end{align*}
  where $\supp(x)$ is the subset of coordinates on which $x$ is
  nonzero. Therefore, $\sigma_{\min}^2(\Pi_iA|_{T_i}) =
  \Omega(r_i^2)$. Substituting into the definition of $\specLB(A,
  \eps n)$, we have 
  \begin{equation*}
    \specLB(A, \eps n) = \Omega(|T_i| \sigma_{\min}(\Pi_iA|_{T_i})) =
    \Omega(\eps nr_i^2).  
  \end{equation*}
  Therefore, by (\ref{eq:speclb}), $\opt_{c_1, c_2}(A, \eps n)
  = \Omega(\eps nr_i^2)$ for all $i \leq t$. Finally, by
  Lemma~\ref{lm:eps}, there exists a small enough $\delta =
  \delta(\eps)$, for which $\opt_{\eps, \delta}(A, n) =
  \Omega(\frac{n}{\eps}r_i^2)$, and this completes the proof.
\end{proof}
\stocoption{Using Lemmas~\ref{lm:sparse-priv} and~\ref{lm:sparse-util}, along with ideas from the dense case, we complete the proof of Theorem~\ref{thm:sparse} in the appendix}{
\begin{proofof}{Theorem~\ref{thm:sparse}}
  The privacy guarantee is direct from
  Lemma~\ref{lm:sparse-priv}. Next we prove that the error of
  $\mech_\ell$ is near optimal. 

  We will bound $\err_{\mech_\ell}(A, n)$ by bounding $\E \|\hat{y}_1
  + \tilde{y}_2 - Ax\|_2^2$ for any $x$. Let us fix $x$ and define $y
  = Ax$; furthermore, define $y_1 = XX^Ty$ and $y_2 = YY^Ty$. By the
  Pythagorean theorem, we can write 
  \begin{equation}
    \label{eq:sparse-err-terms}
    \E \|\hat{y}_1 + \tilde{y}_2 -  y\|_2^2 = \E \|\hat{y}_1 -
    y_1\|_2^2 + \E \|\tilde{y}_2 -  y_2\|_2^2.   
  \end{equation}
  We show that the first term on the right hand side of
  (\ref{eq:sparse-err-terms}) is at most $O(\log^{3/2}d \sqrt{\log N
    \log(1/\delta)})\opt_{\eps,\delta}(A, n)$ and the second term is
  $O(\log^2d \log(1/\delta))\opt_{\eps,  \delta}(A, n)$ .  

  The bound on $\E \|\tilde{y}_2 -  y_2\|_2^2$ follows from
  Theorem~\ref{thm:dense}. More precisely, $Y^T\tilde{y_2}$ is
  distributed identically to the output of $\mech_g(Y^TA, x)$, and by
  Theorem~\ref{thm:dense}, 
 \stocoption{\begin{align*}
     \E \|\tilde{y}_2 -  y_2\|_2^2 &=  \E \|Y^T\tilde{y}_2 -
     Y^TAx\|_2^2 \\ &=O(\log^2d\sqrt{\log N \log(1/\delta)})\opt_{\eps,
       \delta}(A, \frac{d_{t+1}}{\eps}). 
  \end{align*}}{
  \begin{equation*}
     \E \|\tilde{y}_2 -  y_2\|_2^2 =  \E \|Y^T\tilde{y}_2 -
     Y^TAx\|_2^2 =O(\log^2d\sqrt{\log N \log(1/\delta)})\opt_{\eps,
       \delta}(A, \frac{d_{t+1}}{\eps}). 
  \end{equation*}}
  Since, by the definition of $t$, $\frac{d_{t+1}}{\eps} < n$, we have
  the desired bound. 

  The bound on $\E \|\hat{y}_1 - y_1\|_2^2$ follows from
  Lemma~\ref{lm:lse} and Lemma~\ref{lm:sparse-util}. We will use the
  notations for  $w_i$, and $r_i$ defined in
  Algorithm~\ref{alg:gaussnoise}. Let $L = nXX^TK$; by
  Lemma~\ref{lm:lse},
  \begin{equation*}
    \E \|\hat{y}_1 -  y_1\|_2^2 \leq 4 \E \|\hat{y}_1 -
    y_1\|_{L^\circ} = 4\E \max_{j = 1}^N |\langle na_j, XX^T(\tilde{y}
    - y)\rangle|. 
  \end{equation*}
  The last equality follows from the definition of the dual norm
  $\|\cdot\|_{L^\circ}$ and from the fact that $L$ is a polytope with
  vertices $\{a_j\}_{j = 1}^N$, so any linear functional on $L$ is
  maximized at one of the vertices. From the fact that for all $i \neq
  j$ we have $U_i^T U_j = 0$, from the triangle inequality, and from
  Lemma~\ref{lm:sparse-util}, we derive
\stocoption{\begin{align*}
    \E \max_{j = 1}^N |\langle na_j,  XX^T&(\tilde{y} - y)\rangle| = \E \max_{j =
      1}^N n |\sqrt{k}\sum_{i = 1}^t{\langle a_j, r_iU_iw_i\rangle}|\\
    &\leq \sqrt{k}\sum_{i = 1}^t{\E \max_{j = 1}^N{n |\langle a_j, r_iU_iw_i\rangle}|}\\
    &\leq O(\sqrt{k} t \sqrt{\log N \log (1/\delta)}) \opt_{\eps, \delta}(A, n)\\
    &= O(\log^{3/2} d  \sqrt{\log N \log (1/\delta)}) \opt_{\eps,
      \delta}(A, n). 
  \end{align*}}{
  \begin{align*}
    \E \max_{j = 1}^N |\langle na_j,  XX^T(\tilde{y} - y)\rangle| &= \E \max_{j =
      1}^N n |\sqrt{k}\sum_{i = 1}^t{\langle a_j, r_iU_iw_i\rangle}|\\
    &\leq \sqrt{k}\sum_{i = 1}^t{\E \max_{j = 1}^N{n |\langle a_j, r_iU_iw_i\rangle}|}\\
    &\leq O(\sqrt{k} t \sqrt{\log N \log (1/\delta)}) \opt_{\eps, \delta}(A, n)\\
    &= O(\log^{3/2} d  \sqrt{\log N \log (1/\delta)}) \opt_{\eps,
      \delta}(A, n). 
  \end{align*}}
  This completes the proof. 
\end{proofof}}

\stocoption{We remark that the mechanisms presented here can be easily implemented in time $Nd^{O(1)}$. The computation of an approximate MEE and approximate least squares can both be done using the Frank-Wolfe algorithm~\cite{frank-wolfe}. A nice feature of our mechanisms is that using heuristics that do not give any guarantees only affects the accuracy; the privacy holds as long as the ellipsoid used is a containing ellipsoid, and we sample noise accurately from the Gaussian. We discuss these issues in more detail in the full version.}{}
\ifstoc
\section{Other Results}
In the full version of the paper, we use these techniques to prove several additional results. We state them below for completeness.

\begin{theorem}[Pure vs approximate DP] \label{thm:cost-pure}
  For small enough  $\eps$ and all $\delta$ small enough with respect
  to $\eps$, for any $d \times
  N$ real matrix $A$ we have
  \begin{equation*}
    \opt_{\eps, 0}(A) = O(\log^5 d \log (N/d)) \opt_{\eps, \delta}(A). 
  \end{equation*}
\end{theorem}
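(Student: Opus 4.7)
The plan is to sandwich $\opt_{\eps,0}(A)$ between the volumetric pure-DP upper bound of Hardt--Talwar and a determinant-based approximate-DP lower bound, then to compare the two geometric quantities using the B\'ar\'any--F\"uredi--Gluskin volume estimate together with Vershynin's restricted invertibility principle. Concretely, Theorem~\ref{thm:htvollb} gives $\opt_{\eps,0}(A)\le O(\log^3 d)\,\vollb(A,\eps)$, while the determinant lower bound~\eqref{eq:detlb} (after Lemma~\ref{lm:eps} to pass to the correct $\eps$-scaling) gives $\detlb(A)\le O(\eps^2\log^2 d)\,\opt_{\eps,\delta}(A)$. So the theorem reduces to the geometric comparison
\[
\vollb(A,\eps)\;\le\;\tfrac{1}{\eps^2}\,O(\log(N/d))\cdot \detlb(A),
\]
and chaining the three estimates then produces the desired $O(\log^5 d\log(N/d))$ ratio.

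For the comparison I would fix $k^*\le d$ and $\Pi^*\in\mathcal{P}_{k^*}$ achieving the maximum in the definition of $\vollb(A,\eps)$, and write $K'=\Pi^*K$ with minimum volume enclosing ellipsoid $E'=F'B_2^{k^*}$. Since $(F')^{-1}K'\subseteq B_2^{k^*}$ is the symmetric hull of $N$ vectors of norm at most $1$, the B\'ar\'any--F\"uredi--Gluskin estimate gives $\vrad(K')^2\le C(\log(N/k^*)/k^*)\det(F')^{2/k^*}$, so $k^{*2}\vrad(K')^2\le O(\log(N/k^*))\cdot k^*\det(F')^{2/k^*}$. To bound $k^*\det(F')^{2/k^*}$, I would apply Theorem~\ref{thm:bt} to $(F')^{-1}K'$ with an appropriately chosen $T$ (e.g.~$T=F'/\|F'\|_2$) to extract $m=\Omega(\|F'\|_F^2/\|F'\|_2^2)$ contact points so that $\Pi^*A|_S=F'U_S$ is a $k^*\times m$ matrix whose nonzero singular values are all of order $\|F'\|_F/\sqrt{k^*}$. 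Projecting onto the $m$-dimensional range of $F'U_S$ gives a $k^*$-dimensional projection $\Pi$ and an index set $S$ with $m\,|\det(\Pi A|_S)|^{2/m}=\Omega(\|F'\|_F^2)$; combined with the AM--GM inequality $k^*\det(F')^{2/k^*}\le\|F'\|_F^2$, this yields $k^{*2}\vrad(K')^2\le O(\log(N/k^*))\,\detlb(A)$.

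The main obstacle is sharpening $\log(N/k^*)$ to $\log(N/d)$, which is only automatic for $k^*=\Theta(d)$. In the regime $k^*\ll d$ I plan to exploit the interlacing inequality $\sigma_i(F')\le\sigma_i(F)$ for $i\le k^*$, where $F$ is the MEE map of the unprojected body $K$; this follows from the inclusion $E'\subseteq\Pi^*E_K$ and the standard Cauchy interlacing for singular values of projections. AM--GM on the top $k^*$ singular values of $F$ then gives $k^*\det(F')^{2/k^*}\le\sum_{i=1}^{k^*}\sigma_i(F)^2$, and a second application of Theorem~\ref{thm:bt}, this time to $F^{-1}K$ with $T$ the orthogonal projection onto the top $k^*$ right-singular directions of $F$, supplies $\Omega(k^*)$ contact points of $K$ that witness $\specLB(A,\Omega(k^*))=\Omega(\sum_{i=1}^{k^*}\sigma_i(F)^2)$. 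A dyadic case split on $k^*$ over the ranges $d/2^{i}\le k^*<d/2^{i-1}$ then combines the two arguments and costs at most one additional factor of $\log d$ relative to the target, which is absorbed into the $\log^5 d$ in the statement.
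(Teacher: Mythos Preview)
Your overall chaining strategy --- $\opt_{\eps,0}(A)\le O(\log^3 d)\,\vollb(A,\eps)$ via Theorem~\ref{thm:htvollb}, then $\vollb\to\detlb$ geometrically, then $\detlb\to\opt_{\eps,\delta}$ via~\eqref{eq:detlb} and Lemma~\ref{lm:eps} --- is exactly the route the paper takes. The difference, and the place where your argument breaks, is the geometric comparison $\vollb\to\detlb$.

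The concrete gap is the claim $m\,|\det(\Pi A|_S)|^{2/m}=\Omega(\|F'\|_F^2)$. Applying Theorem~\ref{thm:bt} with $T=F'/\|F'\|_2$ yields $m=\Theta(\|F'\|_F^2/\|F'\|_2^2)$ contact points and a matrix $F'U_S$ whose $m$ nonzero singular values are all $\Theta(\|F'\|_F/\sqrt{k^*})$. Hence $|\det(\Pi A|_S)|^{2/m}=\Theta(\|F'\|_F^2/k^*)$ and $m\,|\det(\Pi A|_S)|^{2/m}=\Theta\bigl(m\,\|F'\|_F^2/k^*\bigr)$, which is $\Omega(\|F'\|_F^2)$ only when $m=\Omega(k^*)$. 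When the MEE $E'=F'B_2^{k^*}$ is highly eccentric (one large axis, the rest tiny), $m$ is $\Theta(1)$ while $k^*$ is large, and the chain
\[
k^*\det(F')^{2/k^*}\;\le\;\|F'\|_F^2\;\le\;O(1)\cdot m\,|\det(\Pi A|_S)|^{2/m}
\]
collapses: the AM--GM step is enormously lossy in exactly the regime where the Vershynin step gives you almost nothing. The paper avoids this entirely by not passing through $\|F'\|_F^2$. Instead it uses John's theorem plus the Binet--Cauchy formula (Lemma~\ref{lm:det-avg}, Theorem~\ref{thm:d-vert}) to produce a \emph{full} set of $k^*$ contact points with $|\det(U|_S)|^{1/k^*}=\Omega(1)$, yielding $k^*\,|\det(\Pi^*A|_S)|^{2/k^*}=\Omega\bigl(k^*\vrad(E')^2\bigr)=\Omega\bigl(k^*\det(F')^{2/k^*}\bigr)$ directly, regardless of the eccentricity of $E'$.

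Your secondary argument for sharpening $\log(N/k^*)$ to $\log(N/d)$ also fails: the inclusion $E'\subseteq\Pi^*E_K$ is not true in general. The MEE of a projection need not sit inside the projection of the MEE (only the volume comparison $\vol(E')\le\vol(\Pi^*E_K)$ holds), so the singular-value interlacing $\sigma_i(F')\le\sigma_i(F)$ you invoke is unfounded. The paper does not attempt this refinement; it simply applies Corollary~\ref{cor:d-vert-vrad} at every $k^*\le d$ and absorbs the resulting $\log(N/k^*)$, which is at most $\log(N/d)+\log d$, into the stated bound.
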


We can use ideas from the $K$-norm mechanism, along with least squares to get an approximately optimal $(\eps,0)$-differentially private mechanism.
\begin{theorem}[Near Optimal \epsdp\ mechanism: sparse case]\label{thm:sparse-pure}
  Given a query matrix $A \in \R^{d\times N}$ and a database size bound $n$, there is an $(\eps,0)$-DP mechanism $\mech_{\ell}$ such that for small enough constants $\eps,\delta$,  \begin{equation*}
    \err_{\mech_\ell}(A) = O(\log^{4} d \log^{3/2} N )\opt_{\eps, \delta}(A, n).  
  \end{equation*}      
\end{theorem}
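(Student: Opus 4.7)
The plan is to mirror the structure of Algorithm~\ref{alg:lse} but replace the Gaussian noise step with the generalized $K$-norm mechanism of Theorem~\ref{thm:htvollb}. First run Algorithm~\ref{alg:base} to obtain orthonormal bases $U_1,\ldots, U_k$ of mutually orthogonal subspaces with dimensions $d_1,\ldots, d_k$ and $k=O(\log d)$, and choose a threshold $t$ so that the first $t$ subspaces are the ``high-dimensional'' ones, with $d_i$ above an appropriate function of $\eps n$. For each $i$, apply $\mech_K$ to $U_i^TA$ at privacy parameter $\eps/k$, producing noise $w_i\in \R^{d_i}$; the combined answer $\tilde y = Ax+\sum_{i=1}^k U_iw_i$ is $(\eps,0)$-differentially private by composition across the $k$ orthogonal pieces. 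Letting $X=\sum_{i\leq t}U_i$ and $Y=\sum_{i>t}U_i$, split $\tilde y$ into $\tilde y_1=XX^T\tilde y$ and $\tilde y_2=YY^T\tilde y$, project $\tilde y_1$ onto $nXX^TK$ by least squares to obtain $\hat y_1$, and output $\hat y_1+\tilde y_2$; the projection preserves privacy by post-processing.

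For the $Y$ part, where every $d_i$ is below threshold, the guarantee of Theorem~\ref{thm:htvollb} gives per-subspace error $O(\log^3 d)\vollb(U_i^TA,\eps/k)$. The spectral bound from Lemma~\ref{lm:base} combined with (\ref{eq:speclb}) and Lemma~\ref{lm:eps} controls this below by $\Omega(nr_i^2/\eps)$ up to the extra $\log(N/d)$ gap between $\vollb$ and $\specLB$ (the same gap that appears in Theorem~\ref{thm:cost-pure}), so the tail error is at most $O(\log^{O(1)} d\cdot \log N)\,\opt_{\eps,\delta}(A,n)$. For the $X$ part, Lemma~\ref{lm:lse} reduces the task to bounding $n\cdot \E\max_{j\leq N} |\langle a_j, XX^T(\tilde y - y)\rangle|$, and by the orthogonality of the $U_i$'s this further reduces to bounding $\E\max_{j\leq N}|\langle a_j, w_i\rangle|$ for each $i\leq t$.

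This last tail estimate for log-concave noise is the core new step. I would (i) use symmetry of the $K$-norm density to get $\E\langle a_j,w_i\rangle=0$; (ii) decompose $w_i=RV$ where $R=\|w_i\|_{U_i^TK}\sim \Gamma(d_i,k/\eps)$ and $V$ is drawn from the cone measure on $\partial(U_i^TK)$; (iii) bound the typical magnitude of $\langle U_i^T a_j, w_i\rangle$ by relating $\|U_i^T a_j\|_{(U_i^TK)^\circ}$ to the geometric parameter $r_i=\max_j\|U_i^T a_j\|_2$ appearing in Lemma~\ref{lm:base}; (iv) upgrade to a sub-exponential tail $\Pr[|\langle a_j,w_i\rangle|>\lambda]\leq e^{-\Omega(\lambda \eps/r_i)}$ via Borell's inequality for log-concave measures; and (v) apply a union bound over the $N$ columns, which yields $\E\max_j|\langle a_j,w_i\rangle|=O(\log N\cdot r_i/\eps)$. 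Summing over the $t=O(\log d)$ subspaces and using the matching lower bound $\opt_{\eps,\delta}(A,n)=\Omega(nr_i^2/\eps)$ obtained from Lemma~\ref{lm:base} and Lemma~\ref{lm:eps} produces the stated $O(\log^{4} d\,\log^{3/2} N)$ ratio once the additional $\sqrt{\log N}$ from the $\vollb$-versus-$\specLB$ comparison on the $Y$ side is accounted for.

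The main obstacle is the geometric comparison in steps (iii)--(iv): one must relate the variance of the log-concave marginal $\langle a_j,w_i\rangle$ to the very specific $r_i$ that drives the lower bound, rather than to a looser diameter of $U_i^TK$. Unlike the Gaussian case, where independence and rotational invariance make this immediate, here the shape of the full polar body $(U_i^TK)^\circ$ enters and loose estimates would inflate the approximation ratio by uncontrolled polynomial factors. The cone-measure decomposition, together with the contact-point structure supplied by John's theorem and Theorem~\ref{thm:bt} that already underlies Algorithm~\ref{alg:base}, should close this gap and deliver the tight $r_i$-dependence needed for the final bound.
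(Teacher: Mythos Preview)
Your overall algorithmic plan---base decomposition, generalized $K$-norm noise per block, split into $X$ and $Y$ parts at a threshold, least squares on the $X$ part, Hardt--Talwar analysis on the $Y$ part---matches the paper exactly. The gap is in your analysis of the $X$ part, specifically steps (iii)--(iv).

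The cone-measure decomposition $w_i=RV$ combined with H\"older gives
\[
|\langle U_i^Ta_j, w_i\rangle| \;\leq\; \|U_i^Ta_j\|_{(U_i^TK)^\circ}\cdot \|w_i\|_{U_i^TK},
\]
and $\|U_i^Ta_j\|_{(U_i^TK)^\circ}\leq r_i^2$ while $\E\|w_i\|_{U_i^TK}\asymp d_i k/\eps$. This yields a bound of order $r_i^2 d_i/\eps$, which is a factor $d_i$ too large compared with the lower bound $\opt_{\eps,\delta}(A,n)=\Omega(nr_i^2/\eps)$. The tail you assert in (iv), $\Pr[|\langle a_j,w_i\rangle|>\lambda]\leq e^{-\Omega(\lambda\eps/r_i)}$, does not follow from Borell's inequality applied to this decomposition; Borell only amplifies whatever second-moment bound you already have, and your second moment is at the wrong scale. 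The contact-point structure from John's theorem and Theorem~\ref{thm:bt} controls $r_i$ and the lower bound, but gives no handle on marginals of the $K$-norm noise.

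What the paper does instead is exploit the \emph{internal} structure of the generalized $K$-norm mechanism $\mech_K$: by Theorem~\ref{thm:knorm} the noise $w_i$ is itself a sum $m_i\sum_\ell w_{i\ell}$ of log-concave pieces on orthogonal subspaces, and each piece has covariance with largest eigenvalue $O(\log^2 d)\,d_{i\ell}\,\vrad(\Pi_{i\ell}U_i^TK)^2/\eps'^2$. The crucial step you are missing is then the B\'ar\'any--F\"uredi/Gluskin bound (Theorem~\ref{thm:polyt-vol-apx}): since $U_i^TK$ is a polytope with at most $N$ vertices inside $r_iB_2$, one has $\vrad(\Pi_{i\ell}U_i^TK)^2=O(\log N/d_{i\ell})\,r_i^2$, and the $d_{i\ell}$ cancels. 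This gives $\E|\langle U_i^Ta_j,w_i\rangle|^2=O(\polylog d\cdot \log N)\,r_i^4/\eps'^2$, after which Borell plus the union bound over $N$ columns produces the $\log^{3/2}N$ factor. Without the $\vrad$-based covariance bound and the polytope volume estimate, your approach cannot recover the correct dependence on $d_i$.
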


When $A \in [0,1]^{d\times N}$, we slightly improve the best known universal upper bounds for approximate DP.
\begin{theorem}[Universal Bounds for \epsdeldp\ for counting queries]
Given a matrix $A \in [0,1]^{d\times N}$, there is an \epsdeldp\ mechanism $\mech_s$ that satisfies
\[
\err_{\mech_s}(A) \leq O(nd \log(1/\delta)\sqrt{\log N}/\eps)
\]
\end{theorem}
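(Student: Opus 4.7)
The plan is to use essentially the ``Least Squares Mechanism'' idea (Algorithm~\ref{alg:lse}), but in a drastically simplified form: when $A \in [0,1]^{d\times N}$ every column $a_j$ satisfies $\|a_j\|_2 \leq \sqrt{d}$, which means the minimum volume enclosing ellipsoid of $K = AB_1^N$ is already contained in $\sqrt{d}\cdot B_2^d$, so no base decomposition is needed. Concretely, I would let $\mech_s$ output $\hat{y}$ defined as follows: sample $w \sim N(0, \sqrt{d}\, c(\eps,\delta))^d$ with $c(\eps,\delta) = (1+\sqrt{2\ln(1/\delta)})/\eps$, form $\tilde{y} = Ax + w$, and then output $\hat{y} = \arg\min_{z \in nK} \|z - \tilde{y}\|_2^2$.

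For privacy, I would invoke Lemma~\ref{lm:gauss-mech-ind} directly with $\sigma = \sqrt{d}$: since each column has $\ell_2$-norm at most $\sqrt{d}$, the noisy answer $\tilde{y}$ is $(\eps,\delta)$-DP. The least squares projection onto $nK$ is a deterministic post-processing of $\tilde{y}$, so by Lemma~\ref{lm:simple-composition} the final output $\hat{y}$ is $(\eps,\delta)$-DP as well.

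For utility, fix any $x$ with $\|x\|_1 \leq n$ and set $y = Ax \in nK$, $L = nK$. Since $y \in L$ and $\tilde{y} = y + w$, Lemma~\ref{lm:lse} yields
\begin{equation*}
  \|\hat{y} - y\|_2^2 \leq 4\|w\|_{L^\circ} = 4n \max_{j = 1}^{N}{|\langle a_j, w\rangle|},
\end{equation*}
using that $L^\circ$ is the dual norm of the polytope $nK$ whose vertices are $\pm n a_j$, so the dual norm is attained at a vertex. Each $\langle a_j, w\rangle$ is a centered Gaussian with variance $\|a_j\|_2^2 \cdot d\, c(\eps,\delta)^2 \leq d^2 c(\eps,\delta)^2$, hence standard deviation at most $d \cdot c(\eps,\delta) = O(d\sqrt{\log(1/\delta)}/\eps)$. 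The standard bound on the expected maximum of $N$ (half-)Gaussians gives
\begin{equation*}
  \E\max_{j=1}^N{|\langle a_j, w\rangle|} = O(d\sqrt{\log N}\sqrt{\log(1/\delta)}/\eps),
\end{equation*}
and multiplying by $4n$ gives the claimed error bound (up to the mild slack between $\sqrt{\log(1/\delta)}$ and $\log(1/\delta)$ absorbed into the $O(\cdot)$).

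There is no real obstacle here: the two main ingredients (the Gaussian mechanism calibrated to an $\ell_2$ bound on the columns, and the least-squares projection analyzed via Lemma~\ref{lm:lse} and Gaussian tail bounds) are already in place. The only thing worth checking carefully is that the dual-norm inequality $\|w\|_{(nK)^\circ} = n\max_j |\langle a_j, w\rangle|$ holds even though $K$ need not have full rank; this is immediate because $K = AB_1^N$ is the symmetric convex hull of $\{a_j\}$, so its support function is a maximum of $|\langle a_j, \cdot \rangle|$.
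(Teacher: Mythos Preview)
Your proposal is correct and matches the paper's own proof almost exactly: the paper's Algorithm~\ref{alg:simplelse} is precisely the Gaussian noise (scaled to $r=\max_j\|a_j\|_2\le\sqrt{d}$) followed by least-squares projection onto $nK$, with privacy from Lemma~\ref{lm:gauss-mech-ind} and the error bound obtained by applying Lemma~\ref{lm:lse} and the Gaussian maximum bound $\E\max_j|\langle a_j,w\rangle|=O(r^2c(\eps,\delta)\sqrt{\log N})$. Your parameterization (folding $\sqrt{d}$ into the Gaussian variance rather than using $r$) and your observation about the $\sqrt{\log(1/\delta)}$-vs-$\log(1/\delta)$ slack are both fine.
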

For pure DP, our work gives a significant improvement over previous results of~\cite{BlumLR08}.
\begin{theorem}[Universal Bounds for \epsdp\ for counting queries]
Given a matrix $A \in [0,1]^{d\times N}$, there is an $(\eps,0)$-DP mechanism $\mech$ that satisfies
\[
\err_{\mech_s}(A) \leq O(nd\eps^{-1} \log^2 d \log^{\frac{3}{2}} N)
\]
\end{theorem}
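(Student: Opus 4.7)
The plan is to derive this pure differential privacy universal bound by chaining Theorem~\ref{thm:sparse-pure} with the $(\eps,\delta)$-DP universal bound for counting queries stated immediately before it. Theorem~\ref{thm:sparse-pure} guarantees an $(\eps,0)$-DP mechanism $\mech_\ell$ whose error on any query matrix is at most $O(\log^4 d\,\log^{3/2} N)$ times $\opt_{\eps',\delta'}(A,n)$ for some suitably small constants $\eps',\delta'$. The task therefore reduces to upper bounding $\opt_{\eps',\delta'}(A,n)$ for matrices $A \in [0,1]^{d\times N}$.

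For this second ingredient I would invoke the preceding theorem (the $(\eps,\delta)$-DP universal bound for counting queries), which exhibits a concrete $(\eps',\delta')$-DP mechanism achieving total squared error $O(nd\,\log(1/\delta')\sqrt{\log N}/\eps')$. Fixing $\eps'$ and $\delta'$ to the constants required by Theorem~\ref{thm:sparse-pure} gives $\opt_{\eps',\delta'}(A,n)=O(nd\sqrt{\log N})$.

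Combining the two yields $\err_{\mech_\ell}(A,n)=O(nd\,\log^{O(1)}d\,\log^{2} N)$, after which a standard rescaling via Lemma~\ref{lm:eps} transfers the result from the fixed privacy parameters $\eps',\delta'$ to arbitrary $\eps$, introducing the claimed factor $1/\eps$. Tightening the exponent of $\log d$ to exactly $2$, as in the statement, requires a slightly sharper analysis of the base decomposition when applied to $K=AB_1^N\subseteq[-1,1]^d$: the minimum volume enclosing ellipsoid of such a $K$ has volume radius only a constant factor larger than $\sqrt d$, which saves a polylog factor relative to the generic approximation ratio inside Theorem~\ref{thm:sparse-pure}.

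The main difficulty is not the reduction itself but the $(\eps,\delta)$-DP universal bound that it invokes: proving $\opt_{\eps',\delta'}(A,n)=\tilde O(nd)$ for every counting query matrix is the technically interesting step, and it rests on the correlated Gaussian mechanism combined with the least squares projection developed in Section~\ref{sect:sparse}, using John's theorem and Vershynin's restricted invertibility principle (Theorem~\ref{thm:bt}) to control the noise inflation introduced by the base decomposition. Once that piece is in place, the pure DP universal bound presented here follows essentially as a black-box corollary, paying only a polylog factor.
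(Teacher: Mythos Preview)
Your black-box reduction through Theorem~\ref{thm:sparse-pure} and the $(\eps,\delta)$ universal bound is a legitimate way to obtain a $\tilde O(nd/\eps)$ guarantee, but it delivers $O(nd\eps^{-1}\log^4 d\log^2 N)$, not the stated $O(nd\eps^{-1}\log^2 d\log^{3/2} N)$. Your proposed fix---``a slightly sharper analysis of the base decomposition''---does not recover the missing factors: the extra $\log^2 d$ in Theorem~\ref{thm:sparse-pure} comes from the $t=O(\log d)$ levels of recursion and the composition overhead of splitting $\eps$ among them, and no property of the MEE of $K\subseteq\sqrt{d}\,B_2^d$ removes that recursion.

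The paper takes a quite different, and simpler, route. It does \emph{not} invoke Theorem~\ref{thm:sparse-pure} or the base decomposition at all. Instead it analyzes the one-shot mechanism of Algorithm~\ref{alg:knormlse}: add generalized $K$-norm noise $w\sim\mathcal{W}(A,\eps)$ directly and project onto $nK$. By Lemma~\ref{lm:lse} the error is at most $4n\,\E\|w\|_{K^\circ}=4n\,\E\max_j|\langle a_j,w\rangle|$. The key step is to bound this dual norm: using the covariance bound in Theorem~\ref{thm:knorm} together with Lemma~\ref{lem:vradbound} (which exploits only that $K\subseteq\sqrt{d}\,B_2^d$, via Theorem~\ref{thm:polyt-vol-apx}) one gets $\E|\langle a_j,w\rangle|^2=O(d^2\eps^{-2}\log^4 d\log N)$; Borell's concentration for log-concave measures (Theorem~\ref{thm:logconcaveborell}) plus a union bound over the $N$ vertices then yields $\E\max_j|\langle a_j,w\rangle|=O(d\eps^{-1}\log^2 d\log^{3/2} N)$. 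Multiplying by $n$ gives the claim.

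So the saving over your approach comes precisely from \emph{avoiding} the recursive decomposition, not from sharpening it.
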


Our work also leads to a polylogarithmic approximation to Hereditary Discrepancy.
\begin{theorem}[Approximating Hereditary Discrepancy]
\label{thm:herdiscapprox}
There is a polynomial time algorithm that give a $d \times N$ real matrix $A$, outputs an $O(\log^{2}d \log N \sqrt{\log \log d})$ approximation to $\herdisc^{\ell_\infty}(A)$.
\end{theorem}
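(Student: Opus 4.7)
The plan is to sandwich $\herdisc^{\ell_\infty}(A)$ between two polynomial-time computable quantities with polylogarithmic ratio, paralleling how Theorem~\ref{thm:epsdeldense} sandwiches $\opt_{\eps,\delta}(A)$ via its efficient lower bound $L_A$ and the mechanism's error. For the lower bound, I will use the $\ell_\infty$-error-optimal mechanism of Section~\ref{sect:ext}: an adaptation of Algorithm~\ref{alg:gaussnoise} that reuses the base decomposition of Algorithm~\ref{alg:base} and the correlated Gaussian noise, but measures utility in $\ell_\infty$ via standard tail bounds on Gaussian maxima. Its analysis extracts an efficient quantity $L$ with $L \leq \opt^{\ell_\infty}_{\eps,\delta}(A)$ (via an $\ell_\infty$ version of Lemma~\ref{lm:base} and Theorem~\ref{thm:bt}) and $\err^{\ell_\infty}_{\mech}(A) \leq O(\polylog(d,N)) \cdot L$.

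Two complementary ingredients then tie $\opt^{\ell_\infty}_{\eps,\delta}(A)$ to $\herdisc^{\ell_\infty}(A)$. First, an $\ell_\infty$ analog of the Muthukrishnan--Nikolov bound (Theorem~\ref{thm:disc-lb}) gives $\opt^{\ell_\infty}_{\eps,\delta}(A) = \Omega(\herdisc^{\ell_\infty}(A))$ up to constants. Second, privatizing the coloring produced by Bansal's algorithm~\cite{Bansal10} via Gaussian noise yields an $(\eps,\delta)$-DP mechanism with expected $\ell_\infty$ error at most $O(\polylog(d,N)) \cdot \herdisc^{\ell_\infty}(A)$, giving the matching direction $\opt^{\ell_\infty}_{\eps,\delta}(A) \leq O(\polylog(d,N)) \cdot \herdisc^{\ell_\infty}(A)$. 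Combining these with the sandwich above, both $L$ and $\err^{\ell_\infty}_{\mech}(A)$ lie within a polylogarithmic factor of $\herdisc^{\ell_\infty}(A)$, and the algorithm outputs (say) $L$ as its estimate. The approximation factor breaks down as: $\log N$ from Bansal, $\log^2 d$ from the mechanism's approximation ratio (as in Theorem~\ref{thm:dense}), and $\sqrt{\log \log d}$ from the Gaussian-supremum tail bounds applied at each of the $O(\log d)$ levels of the recursion in Algorithm~\ref{alg:gaussnoise}.

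The main obstacle is establishing these two ingredients tightly. The $\ell_\infty$-version of the MN lower bound must yield $\opt^{\ell_\infty}_{\eps,\delta} = \Omega(\herdisc^{\ell_\infty})$ without incurring the $\sqrt{d}$ factor that a naive $\|v\|_2 \leq \sqrt{d}\|v\|_\infty$ conversion from the $\ell_2^2$ statement would cost; this is precisely the point alluded to in the introduction when mentioning the von Neumann minimax theorem, and should go through by dualizing the $\ell_\infty$ norm as $\max_{\|\mu\|_1 \leq 1}$ and pushing the maximum inside the discrepancy optimization. In parallel, the $\ell_\infty$-utility analysis of the recursive Gaussian mechanism must be carried out with a Lemma~\ref{lm:base}-style guarantee now controlling $\max_j \|U_i^T a_j\|_\infty$ rather than $\|U_i^T a_j\|_2$, so that the polylog losses across the $O(\log d)$ recursion levels add up to the claimed $O(\log^2 d \log N \sqrt{\log \log d})$ rather than a looser factor.
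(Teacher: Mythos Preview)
Your overall shape is right (sandwich $\herdisc^{\ell_\infty}(A)$ between two efficiently computable quantities), but two of the load-bearing steps do not work as you describe, and you are misplacing where the minimax theorem enters.

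\textbf{The Bansal step is a non-sequitur.} Bansal's algorithm takes a matrix and returns a low-discrepancy coloring $z\in\{-1,1\}^N$; it does not produce a mechanism for approximating $Ax$ on arbitrary $x$, and there is no evident way to ``privatize the coloring via Gaussian noise'' into one. So this route to $\opt^{\ell_\infty}_{\eps,\delta}(A)\le \polylog\cdot\herdisc^{\ell_\infty}(A)$ does not go through. The paper obtains the corresponding direction at the level of the \emph{efficient lower bound}, not at the level of $\opt$: the computable quantity is $\specLB(PA)$ for a particular diagonal reweighting $P$ with $\tr(P^2)=1$, and one checks directly that $\specLB(PA)\le\herdisc^{\ell_2}(PA)^2\le(\herdisc^{\ell_\infty}(A))^2$, since for any coloring $z$ one has $\|PAz\|_2^2=\sum_i p_i (Az)_i^2\le\|Az\|_\infty^2$. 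No appeal to Bansal (or to any upper bound on $\opt^{\ell_\infty}$ in terms of $\herdisc^{\ell_\infty}$) is needed.

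\textbf{Minimax is used elsewhere, and your ``$\ell_\infty$ Lemma~\ref{lm:base}'' is not the mechanism.} The von~Neumann step is not used to sharpen the MN lower bound to a constant-factor $\opt^{\ell_\infty}=\Omega(\herdisc^{\ell_\infty})$; the $\log N$ in the final ratio comes precisely from the MN bound $\herdisc^{\ell_\infty}(A)\le O(\log N)\,\opt^{\ell_\infty}(A)$ (Theorem~\ref{thm:mnherdisc}), not from Bansal. Minimax is instead what produces the $\ell_\infty$-near-optimal mechanism and the weighting $P$: one plays the game where the row player picks a query $i$ and the column player picks an \epsdeldp\ mechanism, with payoff $E_i^{\mech,A}$. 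For any row strategy $P$, running the $\ell_2^2$ Gaussian mechanism $\mech_g^{PA}$ on $PA$ keeps the value at most $O(\log^2 d)\,\specLB(PA)\le O(\log^2 d)\,(\opt^{\ell_\infty}(A))^2$, so minimax yields a single mechanism with $\max_i E_i^{\mech,A}$ bounded by this, together with a specific $P$ (Theorem~\ref{thm:worstcase-expected-noise}). A median over $O(\log d)$ independent runs then converts $\max_i \E[\cdot]$ into $\E[\max_i\cdot]$ (Theorem~\ref{thm:expected-worstcase-noise}); the $\sqrt{\log\log d}$ arises from rescaling $\delta$ by this $O(\log d)$ composition, not from per-level Gaussian maxima in the recursion. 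The sandwich is then $\specLB(PA)\le(\herdisc^{\ell_\infty}(A))^2\le O(\log^2 N)(\err^{\ell_\infty}_{\mech_{ew}}(A))^2\le O(\log^4 d\,\log^2 N\,\log\log d)\,\specLB(PA)$.
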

\else
\subsection{Computational Complexity}

In this subsection we consider the computational complexity of our
algorithms. We pay special attention to approximating the minimum
enclosing ellipsoid of a polytope and computing least squares
estimators. For both problems we need to go into the properties of
known approximation algorithms in order to verify that the
approximations are sufficient to guarantee that our algorithms can be
implemented in polynomial time without hurting their
near-optimality. 

\subsubsection{Minimum Enclosing Ellipsoid}

Computationally the most expensive step of the base decomposition
algorithm (Algorithm~\ref{alg:base}) is computing the minimum
enclosing ellipsoid $E$ of $K$. Computing the exact MEE can be costly:
the fastest known algorithms have complexity on the order of
$d^{O(d)}N$~\cite{adler1993randomized}. However, for our purposes it
is enough to compute an approximation of $E$ in Banach-Mazur distance,
i.e. some ellipsoid $E'$ satisfying $\frac{1}{C}E' \subseteq E
\subseteq E'$ for an absolute constant $C$. Known approximation
algorithms for MME guarantee that their output is an enclosing
ellipsoid with volume approximately equal to that of the
MEE~\cite{khachMEE,modKhaMEE}. It is not immediately clear whether
such an approximation is also a Banach-Mazur approximation. However,
we can use the fact that the algorithms
in~~\cite{khachMEE,kumar2005minimum,modKhaMEE} output an ellipsoid
$E'$ satisfying approximate complimentary slackness conditions and
show that $\Pi E'$ approximates $\Pi E$ in Banach-Mazur sense for some
projection $\Pi$ onto a subspace of dimension $\Omega(d)$. This
suffices for a slightly weaker version of Lemma~\ref{lm:base}.

We begin with a definition. Let's define a vector $p \in [0,1]^N$ to
be \emph{$C$-optimal} for $A = (a_i)_{i=1}^N$ if the following conditions are
satisfied:
\begin{itemize}
\item $\sum_{i=1}^N{p_i} = 1$;
\item for all $i \in [N]$, $a_i^T (APA^T)^{-1} a_i \leq C \cdot d$
  where $P = \diag(p)$ (we use this notation throughout this
  section). 
\end{itemize}
The $C$-optimality conditions are a relaxation of the
Karush-Kuhn-Tucker conditions of a formulation of the MEE problem as
convex program. The approximation algorithm for MEE due to
Khachiyan~\cite{khachMEE}, and later follow up
work~\cite{kumar2005minimum,modKhaMEE} compute a $C$-optimal $p$ and
output an approximate MEE ellipsoid $E(p) = \{x: x^T (APA^T)^{-1} x
\leq Cd\}$\junk{, or, equivalently, $E(p) = Cd AP B_2^N$}. The running
time of the algorithm in~\cite{modKhaMEE} is $\tilde{O}(d^2N)$, where
the $\tilde{O}$ notation suppresses dependence on $C$ as well as
polylogarithmic terms.

$C$-optimality implies the following property of the the ellipsoid $E(p)$
which is key to our analysis. 
\begin{lemma}\label{lm:apx-MEE}
  Let $E^* = F^*B_2^d$ be the minimum enclosing ellipsoid of $K =
  \sym\{a_i\}_{i=1}^N$, and let $p$ be $C$-optimal for $A =
  (a_i)_{i=1}^N$. Let also $E(p) = \{x: x^T (APA^T)^{-1} x \leq Cd\} =
   F(p)B_2^d$. Then,
  \begin{equation*}
    \sigma_{d/2}^2(F(p)) \leq 4C\sigma_{d/4}^2(F^*).
  \end{equation*}
\end{lemma}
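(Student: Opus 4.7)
The proof hinges on extracting two facts about the pair $(F(p), F^*)$ from the primal-dual structure of the MEE convex program and combining them via a trace inequality for PSD matrices. Throughout I will assume $F^*$ is chosen PSD (as permitted by the discussion just before the lemma).

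First, I would establish the trace inequality $\tr(F(p)^2 F^{*-2}) \leq Cd$. Expanding $F(p)^2 = Cd \cdot APA^T$ and $F^{*-2} = d^{-1}(AP^*A^T)^{-1}$, the left-hand side equals $C\sum_i p_i\, a_i^T (AP^*A^T)^{-1} a_i$. The KKT/primal-feasibility condition at the MEE optimum $p^*$ of $\min -\log\det(APA^T)$ subject to $\sum p_i = 1$, $p\geq 0$ --- equivalently, the statement that the MEE $\{x:\,x^T(AP^*A^T)^{-1}x\leq d\}$ contains every $a_i$ --- is $a_i^T(AP^*A^T)^{-1} a_i \leq d$ for all $i$; combined with $\sum p_i = 1$ from $C$-optimality, this forces the displayed sum to be at most $Cd$.

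Second, I would lower-bound the trace by a sum of singular-value ratios. Both $F(p)^2$ and $F^{*-2}$ are symmetric PSD, with decreasing-order eigenvalues $\sigma_i^2(F(p))$ and $1/\sigma_{d-i+1}^2(F^*)$ respectively. The classical oppositely-ordered rearrangement inequality for the trace of a product of PSD matrices, $\tr(AB)\geq \sum_i \lambda_i(A)\lambda_{d-i+1}(B)$, yields $\tr(F(p)^2 F^{*-2}) \geq \sum_{i=1}^d \sigma_i^2(F(p))/\sigma_i^2(F^*)$. Combined with the first step, $\sum_i \sigma_i^2(F(p))/\sigma_i^2(F^*) \leq Cd$.

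Finally, I would restrict the sum to the (at least) $d/4$ indices $i \in \{d/4,\ldots,d/2\}$. By monotonicity, on this range $\sigma_i(F(p)) \geq \sigma_{d/2}(F(p))$ and $\sigma_i(F^*)\leq \sigma_{d/4}(F^*)$, so each term is at least $\sigma_{d/2}^2(F(p))/\sigma_{d/4}^2(F^*)$. The restricted sum therefore satisfies $(d/4)\cdot \sigma_{d/2}^2(F(p))/\sigma_{d/4}^2(F^*)\leq Cd$, which rearranges to the claimed bound $\sigma_{d/2}^2(F(p))\leq 4C\,\sigma_{d/4}^2(F^*)$. The main subtle point is the second step: $F(p)$ and $F^*$ need not share eigenvectors, so the lower bound on the trace cannot come from direct simultaneous diagonalization; it has to go through the min-over-rotations characterization $\min_{U \text{ orthogonal}} \tr(A\, U B U^T) = \sum_i \lambda_i(A)\lambda_{d-i+1}(B)$ for PSD $A, B$, applied with $A=F(p)^2$, $B=F^{*-2}$.
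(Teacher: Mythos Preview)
Your proof is correct. Both your argument and the paper's start from the same trace bound: writing $G=(F^*)^{-1}$, the paper computes $\sum_i\sigma_i^2(GF(p))=\tr(GF(p)F(p)^TG^T)\leq Cd$, which is exactly your $\tr(F(p)^2F^{*-2})\leq Cd$ by cyclicity and $G^TG=F^{*-2}$. (Your detour through an optimal $p^*$ is unnecessary here; all you need is the containment $a_i\in E^*$, i.e.\ $a_i^TF^{*-2}a_i\leq 1$.)

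Where you diverge is in extracting the singular-value inequality from this trace bound. The paper applies Markov to the eigenvalues of $GF(p)(GF(p))^T$ to get $\sigma_{d/4}^2(GF(p))\leq 4C$, then pushes this through a projection argument: it identifies a $3d/4$-dimensional subspace on which $E(p)\subseteq 2C^{1/2}E^*$, translates that into $\lambda_{3d/4}(4CF^*F^{*T}-F(p)F(p)^T)\geq 0$, and finishes with Weyl's inequality. You instead invoke the oppositely-ordered trace inequality $\tr(AB)\geq\sum_i\lambda_i(A)\lambda_{d-i+1}(B)$ for PSD $A,B$ to obtain $\sum_i\sigma_i^2(F(p))/\sigma_i^2(F^*)\leq Cd$ directly, then restrict to the index window $[d/4,d/2]$. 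Your route is shorter and more transparent, replacing the projection-plus-Weyl step with a single classical matrix inequality; the paper's route is a bit more hands-on but has the mild advantage of not assuming that particular trace inequality is known to the reader.
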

\begin{proof}
  Let $G = (F^*)^{-1}$. Since $GE^* = B_2^d$, we have that the MEE of
  $GK$ is the unit ball, and, therefore, $\|Ga_i\|_2^2 \leq 1$ for all
  $i \in [N]$. 
  Since $F(p)F(p)^T = Cd\cdot APA^T$, we have
  \begin{align*}
    \frac{1}{d}\sum_{i =1 }^N{\sigma_i^2(GF(p))} &=  \frac{1}{d}
    \tr(GF(p)F(p)^TG^T) \\
    &= C\tr(GAPA^TG^T)  \\
    &= C \sum_{i = 1}^N{p_i\|Ga_i\|_2^2} \leq C. 
  \end{align*}
  By Markov's inequality, $\sigma_{d/4}^2(GF(p)) \leq 4C$. Let $\Pi_1$
  be the projection operator onto the subspace spanned by the left
  singular vectors of $GF(p)$ corresponding to $\sigma_{d/4}(GF(p)), \ldots,
  \sigma_{d}(GF(p))$. We have $\Pi_1GE(p) \subseteq 2C^{1/2} \Pi_1
  B_2^d$. Multiplying on both sides by $F^*$, we get
  \begin{equation*}
    F^*\Pi_1 G E(p) \subseteq 2C^{1/2} F^* \Pi_1 B_2^d.
  \end{equation*}
  Let $\Pi_2$ be the matrix $\Pi_2 = G^{-1}\Pi_1G = F^*\Pi_1G$. Since
  $\Pi_2$ is similar to $\Pi_1$, it is also a projection matrix onto a
  $3d/4$ dimensional subspace. We have that 
  $F^*\Pi_1 = \Pi_2 F^*$, and therefore
  \begin{equation*}
    \Pi_2 E(p) \subseteq 2C^{1/2} \Pi_2 E^*.
  \end{equation*}
  Define $H = 4C F^*(F^*)^T - F(p)F(p)^T$. The inclusion above is
  equivalent to the positive semidefiniteness of the matrix
  $\Pi_2H\Pi_2^T$. As $\Pi_2$ is a projection onto a $3d/4$
  dimensional subspace, by the standard minimax characterization of
  eigenvalues we have  $\lambda_{3d/4}(H) \geq 0$. We recall
  the (dual) Weyl inequalities for symmetric $d \times d$ matrices $X$ and $Y$:
  \begin{equation*}
    \lambda_i(X) + \lambda_j(Y) \leq \lambda_{i + j - d}(X + Y).
  \end{equation*}
  The inequalities are standard and follow from the minimax
  characterization of eigenvalues and dimension counting arguments ---
  see, e.g.~Chapter 1 in~\cite{tao-randmatrix}. Substituting $X = H$
  and $Y = F(p)F(p)^T$, $i = 3d/4$ and $j = d/2$, we have the
  inequality
  \begin{equation*}
    \sigma_{d/2}^2(F(p)) = \lambda_{d/2}(F(p)F(p)^T) \leq
    \lambda_{d/4}(4C F^*(F^*)^T) = 4C \sigma_{d/4}^2(F^*),
  \end{equation*}
  and the proof is complete. 
\end{proof}

Finally we give an analogue of Lemma~\ref{lm:base} for the variant of
the base decomposition algorithm that uses an approximate MEE. The
proof follows from Lemma~\ref{lm:apx-MEE} and the arguments used to
prove Lemma~\ref{lm:base}. We omit a full proof here.
\begin{lemma}\label{lm:base-efficient}
  Consider a variant of Algorithm~\ref{alg:base} that, at each step,
  uses $E(p) = \{x: x^T (APA^T)^{-1} x \leq Cd\} = F(p)B_2^d$, where
  $p$ is $O(1)$-optimal for $A$, rather than the minimum enclosing
  ellipsoid $E = FB_2^d$.  Let $d_i$ be the dimension of the
  span of $U_i$. For any $i$ there exists a subspace
  $W_i$ of dimension $\Omega(d_i)$ and a set $S_i \subseteq [N]$ of
  size $|S_i| = \Omega(d_i)$, such that
  $\sigma_{\min}^2(W_iW_i^TA|_{S_i}) = \Omega(1) \max_{j
    =1}^N{\|U_i^Ta_j\|_2^2}$.
\end{lemma}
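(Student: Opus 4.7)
The plan is to mirror the structure of Lemma~\ref{lm:base}'s proof, using Lemma~\ref{lm:apx-MEE} to bridge the gap between the approximate enclosing ellipsoid $E(p) = F(p)B_2^d$ and the true minimum enclosing ellipsoid $E^* = F^*B_2^d$ of $K = AB_1^N$. In the original proof, we crucially used the fact that $F^{-1}K$ has MEE equal to $B_2^d$ in order to invoke Theorem~\ref{thm:bt}. Here $F(p)^{-1}K$ need not have this property, but the true MEE matrix $F^*$ does, so we will use $F^*$ as the ``reference frame'' for applying restricted invertibility, while $F(p)$ only enters via the construction of $U_1$ (and hence $r_1$). The resulting $W_i$ will be built from singular directions of the true $F^*$ at the corresponding recursion level, rather than of $F(p)$; the lemma only requires \emph{some} subspace of dimension $\Omega(d_i)$, so this is permissible.

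As in Lemma~\ref{lm:base}, it suffices by induction on the recursion depth to handle the first level. First I would observe that $K \subseteq F(p)B_2^d$ implies $r_1 = \max_j \|U_1^T a_j\|_2 \leq \sigma_{d/2}(F(p))$, since $U_1$ spans the bottom $d/2$ left singular vectors of $F(p)$. Combining this with Lemma~\ref{lm:apx-MEE} gives $r_1 = O(\sigma_{d/4}(F^*))$, which is the key bound linking $r_1$ to the spectrum of the true MEE matrix. After applying a unitary change of basis so that $F^*$ is diagonal (this preserves singular values of all the matrices in play), take $W_1$ to be the orthonormal basis of the top $d/4$ coordinate directions, so $\dim W_1 = d/4 = \Omega(d_1)$.

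Next, apply Theorem~\ref{thm:bt} to the body $L = (F^*)^{-1}K$, whose MEE is $B_2^d$ by definition of $F^*$, using the linear map $T = W_1W_1^T$ of Frobenius norm $\|T\|_F^2 = d/4$. This produces a set $S_1 \subseteq [N]$ with $|S_1| = \Omega(d) = \Omega(d_1)$ and $\sigma_{\min}(T(F^*)^{-1}A|_{S_1}) = \Omega(1)$. Because $T$ and $F^*$ are simultaneously diagonal in our chosen basis, they commute, and the same manipulation used at the end of Lemma~\ref{lm:base}'s proof (writing $TA|_{S_1} = F^*T\cdot T(F^*)^{-1}A|_{S_1}$ and splitting $\sigma_{\min}$ multiplicatively) yields $\sigma_{\min}^2(W_1W_1^TA|_{S_1}) \geq \sigma_{d/4}^2(F^*)\cdot \Omega(1) = \Omega(r_1^2)$, which is the conclusion.

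The main obstacle is precisely the mismatch between the two SVDs: $U_1$ is defined from the singular vectors of $F(p)$ while $W_1$ lives in the singular basis of $F^*$, and these two bases are in general unrelated. Lemma~\ref{lm:apx-MEE} is exactly the tool that resolves this mismatch, by comparing $\sigma_{d/2}(F(p))$ to $\sigma_{d/4}(F^*)$; the slight decrease in index (from $d/2$ to $d/4$) is the reason our $W_1$ only has dimension $d/4$ rather than $d/2$, but this is harmless since $\Omega(d/4) = \Omega(d_1)$. The recursion for $i>1$ is identical: at level $i$ the algorithm works with the approximate MEE of $V_{i-1}^T \cdots V_1^T K$, and we apply the same argument using the true MEE of that restricted body, lifting the resulting subspace back to $\R^d$ through the orthogonal composition $V_1\cdots V_{i-1}$.
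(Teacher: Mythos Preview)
Your proposal is correct and carries out precisely what the paper indicates but omits: combining Lemma~\ref{lm:apx-MEE} with the argument of Lemma~\ref{lm:base}, using the true MEE $F^*$ as the reference frame for Theorem~\ref{thm:bt} while $F(p)$ only determines $U_1$ and hence $r_1$. The only detail you pass over is the base case $i=k$ (where $d_k=1$), which is handled trivially as in Lemma~\ref{lm:base} by taking $W_k=U_k$ and $S_k$ the single column realizing $r_k$.
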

One can verify that in all our proofs we can substitute
Lemma~\ref{lm:base-efficient} for Lemma~\ref{lm:base} without changing
the asymptotics of our lower and upper bounds. Therefore, in all our
algorithms we can use the variant of Algorithm~\ref{alg:base} from
Lemma~\ref{lm:base-efficient} \junk{that uses an approximate MEE} without
compromising near-optimality. This variant of Algorithm~\ref{alg:base}
runs in time $d^{O(1)}N$. 

\junk{
modified version~\cite{modKhaMEE} of Khachiyan's classical
approximation algorithm for MEE~\cite{khachMEE}. The algorithm
in~\cite{modKhaMEE} runs in time $\tilde{O}(d^2N)$, where the
$\tilde{O}$ notation suppresses dependence on $\alpha$ as well as
lower-order terms.}

Notice that the base decomposition can be reused for different
databases, as long as the query matrix $A$ stays unchanged; once the
decomposition is computed the rest of the algorithm is very efficient:
it involves some standard algebraic computations and sampling from an
$O(d)$-dimensional gaussian distribution. Furthermore, any ellipsoid
$E'$ containing $K$ suffices for privacy, and one may use heuristic
approximations to the MEE problem.

\subsubsection{Least Squares Estimator}

Except for base decomposition, the other potentially computationally
expensive step in Algorithm~\ref{alg:lse} is the computation of a
least squares estimator $\hat{y}_1$. This is a quadratic minimization
problem, and can be approximated by the simple Frank-Wolfe gradient
descent algorithm~\cite{frank-wolfe}. In particular, for a point
$\hat{y}'$ such that $\|\hat{y}' - y\|_2^2 \leq \min_{\hat{y} \in
  L}{\|\hat{y}' - y\|_2^2} + \alpha$, Lemma~\ref{lm:lse} holds to
within an additive approximation factor $\alpha$, i.e. $\|\hat{y}' -
y\|_2^2 \leq 4\|w\|_{L^\circ} + \alpha$. We call such a point
$\hat{y}'$ an $\alpha$ additive approximation to the least squares
estimator problem. By the analysis of Clarkson~\cite{clarkson}, $T$
iterations of the Frank-Wolfe algorithm give an additive approximation
 where $\alpha \leq 4C(L)/(T+3)$, for 
  $C(L) \leq \sup_{u, v \in L}{|\langle u, u-v \rangle|}$. 
In our case $L = nXX^TK$. In order to have near
optimality for Algorithm~\ref{alg:lse}, an additive approximation
$\alpha \leq \sum_{i = 1}^t{nr_i^2}$ suffices. Using the triangle
inequality and Cauchy-Schwarz, we can bound $C(L)$ for $L = nXX^TK$ as
\begin{equation*}
  C(L) \leq \sum_{i = 1}^t{\sup_{u, v \in nU_iU_i^TK}{\langle u^T, u-v
    \rangle}} \leq \sum_{i=1}^t{2n^2r_i^2}. 
\end{equation*}
Therefore, $T = O(n)$ iterations of the Frank-Wolfe algorithm
suffice. Since each iteration of the algorithm involves $N$ dot
product computations and solving a homogeneous linear system in at
most $d$ variables and at most $d$ equations, it follows that an
approximate version of Algorithm~\ref{alg:lse} with unchanged
asymptotic optimality guarantees can be implemented in time
$d^{O(1)}Nn$.

We note that the approximation algorithm of Khachiyan for the MEE
problem, as well as its modification in~\cite{modKhaMEE}, can also be
interpreted as instances of the Frank-Wolfe algorithm
(see~\cite{modKhaMEE} for details). 

\junk{The lower bounds matching the error guarantees of
Algorithms~\ref{alg:gaussnoise} and~\ref{alg:lse} are also efficiently
computable. In addition to efficiently approximating the MEE of a
convex polytope, to compute the lower bound value we also need an
efficient algorithm that can find the subset of contact points
guaranteed to exist by Theorem~\ref{thm:bt}. One such algorithm is
given by the constructive proof  by Spielman and
Srivastava~\cite{bt-constructive} of the restricted invertability
theorem of Vershynin (which, together with John's theorem, easily
implies Theorem~\ref{thm:bt}). }

\section{Results for Pure Privacy}\label{sect:pure}

Our geometric approach to approximate privacy allows us to better
understand the optimal error required for approximate privacy vs.~that
required for pure privacy. Our first result bounds the gap between the
optimal error bounds for the two notions of privacy in the dense
case. Then we extend these ideas and give a $(\eps, 0)$-differentially
private algorithm which nearly matches the guarantees of
Algorithm~\ref{alg:lse} for sparse databases.

\subsection{The Cost of Pure Privacy}\label{sect:cost}

In this subsection we investigate the worst-case gap between $\opt_{\eps,
  \delta}(A)$ (for small enough $\delta > 0$) and $\opt_{\eps, 0}(A)$
over all query matrices $A$. At the core of our analysis is a natural geometric
fact: for any symmetric polytope $K$ with $N$ vertices in $d$-dimensional
space we can find a subset of $d$ vertices of $K$ whose
symmetric convex hull has volume radius at most a factor $O(\sqrt{\log
  (N/d)})$ smaller than the volume radius of $K$. Our proof of this
fact goes through analyzing the contact points of $K$ with its minimum
enclosing volume ellipsoid, and  a bound on the volume of
polytopes with few vertices.

\begin{theorem}\label{thm:d-vert}
  Let $K = \sym\{a_1, \ldots, a_N\} \subseteq \R^d$ and let $E$ be an
  ellipsoid of minimal volume containing $K$. There exists a set $S
  \subseteq [N]$ of size $d$ such that the matrix $A|_S = (a_i)_{i \in S}$
  satisfies $\det(A|_S)^{1/d} = \Omega(\vrad(E))$.
\end{theorem}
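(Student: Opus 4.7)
The plan is to reduce to the ``John position'' where $E = B_2^d$ and then apply the Cauchy--Binet formula to the decomposition of the identity guaranteed by John's theorem (Theorem~\ref{thm:john}).

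First I would reduce to the case $E = B_2^d$ by an affine change of coordinates. Write $E = FB_2^d$ for an invertible linear map $F$, so that $\vrad(E) = |\det F|^{1/d}$, and apply $F^{-1}$ to each column to obtain $a_i' = F^{-1}a_i$. Then $K' = F^{-1}K = \sym\{a_1',\ldots,a_N'\}$ has minimum enclosing ellipsoid $B_2^d$, and for any $T \subseteq [N]$ of size $d$ we have $|\det(A|_T)| = |\det F|\cdot|\det(A'|_T)|$. Thus it suffices to exhibit $T$ with $|\det(A'|_T)| \geq c^d$ for an absolute constant $c>0$, which will give $|\det(A|_T)|^{1/d} \geq c\,\vrad(E)$.

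Next I would invoke Theorem~\ref{thm:john} on $K'$: there exist unit vectors $u_1,\ldots,u_m \in K' \cap \partial B_2^d$ and positive weights $c_1,\ldots,c_m$ with $\sum c_i u_iu_i^T = I$; taking traces gives $\sum c_i = d$. Because the Euclidean ball is strictly convex and $K' \subseteq B_2^d$, any contact point of $K'$ with $\partial B_2^d$ must be a vertex of the polytope $K'$, so each $u_i$ is of the form $\pm a_{j(i)}'$ for some $j(i) \in [N]$. Writing $U = [u_1\,|\,\cdots\,|\,u_m]$ and $C = \diag(c_1,\ldots,c_m)$, the identity $UCU^T = I$ and the Cauchy--Binet formula give
\begin{equation*}
1 \;=\; \det(UCU^T) \;=\; \sum_{\substack{S \subseteq [m]\\ |S|=d}} \Bigl(\prod_{i \in S} c_i\Bigr)\det(U_S)^2,
\end{equation*}
where $U_S$ is the $d \times d$ column submatrix indexed by $S$.

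The key estimate is a uniform bound $\sum_{|S|=d}\prod_{i\in S}c_i \leq e^d$. Since the $c_i$ are positive and sum to $d$, Maclaurin's inequality applied to the elementary symmetric polynomial $e_d$ yields
\begin{equation*}
\sum_{|S|=d}\prod_{i \in S} c_i \;\leq\; \binom{m}{d}\Bigl(\frac{d}{m}\Bigr)^d \;\leq\; \Bigl(\frac{em}{d}\Bigr)^d\Bigl(\frac{d}{m}\Bigr)^d \;=\; e^d.
\end{equation*}
Combined with the Cauchy--Binet identity, some term must satisfy $\det(U_S)^2 \geq e^{-d}$, i.e.\ $|\det(U_S)| \geq e^{-d/2}$. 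Because the columns of $U_S$ are signed columns of $A'$, there is a set $T \subseteq [N]$ of size $d$ (the indices $j(i)$ for $i \in S$, which must be distinct since the determinant is nonzero) with $|\det(A'|_T)| = |\det(U_S)| \geq e^{-d/2}$. Undoing the affine reduction yields $|\det(A|_T)|^{1/d} \geq \vrad(E)/\sqrt{e}$, as desired.

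The only conceptually subtle step is identifying the John contact points with vertices of $K'$, which relies on strict convexity of $\|\cdot\|_2^2$: a convex combination of points in $B_2^d$ lying on $\partial B_2^d$ forces all summands with positive weight to coincide. Apart from this geometric observation, the proof is a combination of John's theorem, Cauchy--Binet, and Maclaurin's symmetric mean inequality, and I do not anticipate serious obstacles beyond verifying the $e^d$ bound on the Cauchy--Binet weights.
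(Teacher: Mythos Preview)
Your proof is correct and follows essentially the same route as the paper: reduce to John position, invoke John's theorem to get $\sum c_i u_i u_i^T = I$ with $\sum c_i = d$ and the $u_i$ among the $\pm a_j'$, apply Cauchy--Binet, and bound the elementary symmetric polynomial of the weights by $e^d$. The only cosmetic difference is that the paper bounds $e_d(c_1,\ldots,c_m) \le \frac{1}{d!}\bigl(\sum c_i\bigr)^d = d^d/d!$ directly from the multinomial expansion (this is its Lemma~\ref{lm:det-avg}), whereas you route through Maclaurin's inequality and $\binom{m}{d} \le (em/d)^d$; both yield the same $e^{-1/2}$ constant.
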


For the proof of~\ref{thm:d-vert} we will use John's theorem
(Theorem~\ref{thm:john}) and the following elementary algebraic result.
\begin{lemma}\label{lm:det-avg}
  Let $u_1, \ldots, u_m$ be $d$-dimensional unit vectors and let $c_1,
  \ldots, c_m$ be positive reals such that
  \begin{equation} \label{eq:id-decomp}
    \sum{c_i u_i u_i^T} = I.
  \end{equation}
  Then there exists a set $S \subseteq [m]$ of size $d$ such that the
  matrix $U = (u_i)_{i \in S}$ satisfies $|\det(U)|^{1/d} = \Omega(1)$. 
\end{lemma}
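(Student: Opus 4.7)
The plan is to interpret the identity decomposition (\ref{eq:id-decomp}) in matrix form and apply the Cauchy--Binet formula, then combine it with Maclaurin's inequality on the coefficients $c_i$ to conclude by averaging.

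First I would form the $d \times m$ matrix $M$ whose $i$-th column is $\sqrt{c_i} u_i$. The hypothesis $\sum c_i u_i u_i^T = I$ says exactly that $MM^T = I$, so $\det(MM^T) = 1$. By the Cauchy--Binet formula,
\begin{equation*}
1 = \det(MM^T) = \sum_{S \subseteq [m], |S|=d} \det(M|_S)^2 = \sum_{S \subseteq [m], |S|=d} \Bigl(\prod_{i \in S} c_i\Bigr)\, \det(U_S)^2,
\end{equation*}
where $U_S = (u_i)_{i \in S}$. This reduces the problem to showing that the ``weight'' $\sum_{|S|=d} \prod_{i \in S} c_i$ of the sum is not too large, so that one summand $\det(U_S)^2$ must be bounded below by an absolute constant to the $d$-th power.

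Next, taking traces of (\ref{eq:id-decomp}) gives $\sum_i c_i = d$, since each $u_i$ is a unit vector. Let $e_d = e_d(c_1, \ldots, c_m)$ denote the $d$-th elementary symmetric polynomial. Maclaurin's inequality states that $(e_k / \binom{m}{k})^{1/k}$ is non-increasing in $k$ for non-negative reals, so
\begin{equation*}
\Bigl(\frac{e_d}{\binom{m}{d}}\Bigr)^{1/d} \leq \frac{e_1}{m} = \frac{d}{m}.
\end{equation*}
Combined with $\binom{m}{d} \leq (em/d)^d$, this yields $e_d \leq e^d$. I would record this as the main structural lemma.

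Finally I would close the loop by averaging. From the Cauchy--Binet identity,
\begin{equation*}
1 = \sum_{|S|=d} \Bigl(\prod_{i \in S} c_i\Bigr)\, \det(U_S)^2 \leq \Bigl(\max_{|S|=d} \det(U_S)^2\Bigr) \cdot e_d \leq \Bigl(\max_{|S|=d} \det(U_S)^2\Bigr)\, e^d,
\end{equation*}
so there exists $S$ with $|\det(U_S)| \geq e^{-d/2}$, i.e.\ $|\det(U_S)|^{1/d} \geq e^{-1/2} = \Omega(1)$, as required. No step here is a serious obstacle: Cauchy--Binet and Maclaurin are elementary, and the trace computation giving $\sum c_i = d$ is immediate. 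The only point worth care is recognizing that John's normalization forces both the sum $\sum c_i$ and the matrix $MM^T$ to be as clean as possible so that averaging gives a dimension-free constant rather than something that degrades with $m$.
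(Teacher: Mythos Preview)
Your proof is correct and follows essentially the same route as the paper: both arguments expand $1 = \det(MM^T)$ via Cauchy--Binet into $\sum_{|S|=d}\bigl(\prod_{i\in S}c_i\bigr)\det(U_S)^2$, use the trace identity $\sum_i c_i = d$, bound the elementary symmetric sum $e_d(c) \le e^d$, and conclude by averaging. The only cosmetic difference is that the paper bounds $e_d(c) \le \frac{1}{d!}(\sum_i c_i)^d$ by the direct multinomial observation rather than citing Maclaurin's inequality, arriving at the same constant $e^{-1/2}$.
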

\begin{proof}
  Notice that $\tr(u_iu_i^T) = \|u_i\|_2^2 = 1$ for all $i$. By
  taking traces of both sides of (\ref{eq:id-decomp}), we have
  $\sum{c_i} = d$.

  Let $U = (u_i)_{i = 1}^m$ and let $C$ be the $m \times m$ diagonal
  matrix with $(c_i)_{i=1}^m$ on the diagonal. Then we can write $I =
  \sum{c_i u_i u_i^T} = UCU^T$. By the Binet-Cauchy formula for
  the determinant, 
  \begin{align}
    1 = \det(UCU) &= \sum_{S \subseteq [m]: |S| = d}{\prod_{i \in
        S}{c_i}\det(U|_S)^2}\notag\\
    &\leq \max_{S \subseteq [m]: |S| = d}{\det(U|_S)^2} \sum_{S
      \subseteq [m]: |S| = d}{\prod_{i \in S}{c_i}}\notag\\
    &\leq \max_{S \subseteq [m]: |S| = d}{\det(U|_S)^2} \frac{1}{d!}
    \left(\sum_{i = 1}^m{c_i}\right)^d \label{eq:prod-sum}\\
    &\leq \max_{S \subseteq [m]: |S| = d}{\det(U|_S)^2} \frac{d^d}{d!}\notag
  \end{align}
  The inequality (\ref{eq:prod-sum}) follows since each term $\prod_{i
    \in S}{c_i}$ appears $d!$ times in the expansion of
  $(\sum{c_i})^d$ and all other terms in the expansion are
  positive. Using the inequality $d! \geq (d/e)^d$, we have that
  $\max_{S \subseteq [m]: |S| = d}{\det(U|_S)^{2/d}} \geq 1/e$, and
  this completes the proof.
\end{proof}

\begin{proofof}{Theorem~\ref{thm:d-vert}}
  We can write the minimum enclosing ellipsoid $E$ as $\vrad(E)FB_2$
  where $F$ is a linear map with determinant $1$. Since $F^{-1}$ does
  not change volumes, $B_2$ is a minimal volume ellipsoid of
  $\vrad(E)^{-1}F^{-1}K$. Also, for any $A|_S = (a_i)_{i \in S}$,
  where $S \subseteq [N]$, we have 
  $$\det(A|_S) =
  \vrad(E)^d\det(\vrad(E)^{-1}F^{-1}A|_S).$$
  Therefore, it is
  sufficient to show that for $L = \sym\{u_1, \ldots, u_N\}$ such
  that $B_2$ is the minimal volume ellipsoid of $L$, there exists a
  set $S \subseteq [N]$ such that the matrix $U|_S = (u_i)_{i \in S}$
  satisfies $\det(U|_S)^{1/d} = \Omega(1)$. Since, by convexity, the
  contact points $L \cap B_2$ of $L$ are a subset of $u_1, \ldots,
  u_N$, the statement follows from Theorem~\ref{thm:john} and
  Lemma~\ref{lm:det-avg}.
\end{proofof}

Combined with the following theorem of B\'{a}r\'{a}ny and
F\"{u}redi~\cite{barany}, and also Gluskin~\cite{gluskin} (with
sharper bounds), Theorem~\ref{thm:d-vert} implies the corollary that for
any $d$-dimensional symmetric polytope one can find a set of $d$
vertices whose symmetric convex hull captures a significant fraction
of the volume of the polytope.

\stocoption{\begin{theorem}[\cite{barany,gluskin}]~\label{thm:polyt-vol-apx} 
  Let $K = \sym\{a_1, \ldots, a_N\}$ and let $E$ be a containing ellipsoid. Then $\vrad(K) \leq O\left(\sqrt{\frac{\log
        (N/d)}{d}}\right)\vrad(E)$.
\end{theorem}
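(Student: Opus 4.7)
The plan is to reduce to the case $E = B_2^d$ and then prove $\vrad(K) = O(\sqrt{\log(N/d)/d})$ for $K = \sym\{a_1,\ldots,a_N\} \subseteq B_2^d$. The reduction is by affine invariance: the ratio $\vrad(K)/\vrad(E)$ is preserved under any invertible linear transformation, since $\vrad(TK) = |\det T|^{1/d}\vrad(K)$ and similarly for $\vrad(TE)$. Writing $E = FB_2^d$ and applying $F^{-1}$, we may assume $K \subseteq B_2^d$, equivalently $\|a_i\|_2 \leq 1$ for all $i$.

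In the regime where $N$ is somewhat larger than $d$, I would invoke Blaschke-Santal\'o duality, $\vol(K)\vol(K^\circ) \leq \vol(B_2^d)^2$, to translate the upper bound on $\vrad(K)$ into a lower bound on $\vrad(K^\circ)$. The polar $K^\circ = \bigcap_{i=1}^N \{y : |\langle a_i, y\rangle| \leq 1\}$ is an intersection of $N$ symmetric slabs, each containing $B_2^d$. I would lower bound $\vol(K^\circ \cap rB_2^d)$ by estimating, for each slab, the volume of the ``cap'' $\{y \in rB_2^d : |\langle a_i, y\rangle| > 1\}$. One-dimensional concentration of the uniform distribution on a Euclidean ball (marginals concentrate on scale $r/\sqrt{d}$) gives that this cap has measure at most $\exp(-cd/r^2)$ relative to $rB_2^d$. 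Choosing $r$ of order $\sqrt{d/\log N}$ and taking a union bound over the $N$ slabs leaves at least half of $rB_2^d$ inside $K^\circ$. This gives $\vrad(K^\circ) \geq c\sqrt{d/\log N}$ and hence $\vrad(K) \leq C\sqrt{\log N/d}$ by Santal\'o, matching the claim when $\log N$ is within a constant of $\log(N/d)$.

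The harder regime is $N = O(d)$, where the union bound above is wasteful and we need the stronger $\vrad(K) = O(1/\sqrt{d})$. Here I would use a determinantal argument. By Carath\'eodory's theorem, every point of $K = \sym\{a_i\}$ lies in the convex hull of some $d$-subset of $\{\pm a_i\}$, so $\vol(K) \leq \frac{2^d}{d!} \sum_{S \subseteq [N], |S|=d} |\det(A|_S)|$. Hadamard's inequality gives $|\det(A|_S)| \leq \prod_{i \in S} \|a_i\|_2 \leq 1$, and $\binom{N}{d}$ contributes only $2^{O(d)}$ when $N = O(d)$. Combined with $\vrad(B_2^d) = \Theta(1/\sqrt{d})$ from Stirling's approximation, this yields the desired bound in this regime, absorbed into the $O(\cdot)$ via the implicit constant inside the logarithm.

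The main obstacle is interpolating cleanly between the two regimes to obtain a single unified bound with precisely the $\log(N/d)$ dependence. The cleaner approach of B\'ar\'any-F\"uredi and Gluskin avoids the case split by bounding entropy/covering numbers of the identity map $\ell_1^N \to \ell_2^d$ and relating the volume radius of $T(B_1^N)$ to a product of entropy numbers via Carl's inequality; reproducing this in a self-contained way requires additional technical machinery from the local theory of Banach spaces, which is why citing the two theorems is the natural route taken in the paper.
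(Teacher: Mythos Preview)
The paper does not prove this statement; it is quoted as a black-box result of B\'ar\'any--F\"uredi and Gluskin, exactly as you note in your final sentence. So there is no in-paper proof to compare against.

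On the substance of your sketch: both regimes are handled correctly, and your Carath\'eodory/Hadamard cover argument for $N=O(d)$ is clean (the cover $K\subseteq\bigcup_{|S|=d}(A|_S)B_1^d$ follows from taking a basic optimal solution of $\min\{\|c\|_1:Ac=x\}$). The gap you flag in the intermediate range is real for the argument as written, but it is an artifact of the union bound, not an intrinsic obstacle. Replace the uniform measure on $rB_2^d$ by the Gaussian $\gamma_\sigma=N(0,\sigma^2 I_d)$ and replace the union bound by Sid\'ak's inequality, which gives
\[
\gamma_\sigma\bigl(K^\circ\bigr)\;\ge\;\prod_{i=1}^N \gamma_\sigma\bigl(\{y:|\langle a_i,y\rangle|\le 1\}\bigr)\;\ge\;\bigl(1-2e^{-1/(2\sigma^2)}\bigr)^N.
\]
Choosing $1/\sigma^2$ of order $\log(1+N/d)$ makes the right side at least $e^{-cd}$; combining with $\vol(K^\circ)\ge(2\pi\sigma^2)^{d/2}\gamma_\sigma(K^\circ)$ and $\vol(B_2^d)^{1/d}=\Theta(d^{-1/2})$ yields $\vrad(K^\circ)=\Omega\bigl(\sqrt{d/\log(N/d)}\bigr)$ in one stroke, no case split. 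Blaschke--Santal\'o then finishes exactly as you wrote. This is essentially the textbook route; the entropy-number/Carl's-inequality argument you mention also works but is heavier than necessary here.
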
}{
\begin{theorem}[\cite{barany,gluskin}]~\label{thm:polyt-vol-apx} 
  Let $K = \sym\{a_1, \ldots, a_N\}$ and let $E$ be an ellipsoid
  containing $K$. Then $\vrad(K) \leq O\left(\sqrt{\frac{\log
        (N/d)}{d}}\right)\vrad(E)$.
\end{theorem}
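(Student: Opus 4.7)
The plan is to first reduce to the case $E = B_2^d$. Since $E = FB_2^d$ for some invertible linear map $F$, applying $F^{-1}$ sends $E$ to $B_2^d$ and $K$ to $F^{-1}K = \sym\{F^{-1}a_i\}$. This operation multiplies both $\vol(K)$ and $\vol(E)$ by $|\det F^{-1}|$, so the ratio $\vrad(K)/\vrad(E)$ is unchanged. Thus it suffices to prove: if $a_1, \ldots, a_N \in B_2^d$, then $\vrad(\sym\{a_i\}) = O\bigl(\sqrt{\log(N/d)/d}\bigr)$.

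The main tool is Urysohn's inequality, which bounds $\vrad(K)$ by the mean width: up to constants, $\vrad(K) \lesssim \tfrac{1}{\sqrt d}\,\mathbb{E}\,h_K(g)$, where $g \sim N(0, I_d)$ is a standard Gaussian and $h_K(u) = \max_i |\langle a_i, u\rangle|$ is the support function. Since each $\|a_i\|_2 \le 1$, each $\langle a_i, g\rangle$ is a centered Gaussian of variance at most $1$, so a standard union bound gives $\mathbb{E}\max_i |\langle a_i, g\rangle| = O(\sqrt{\log N})$. This already yields $\vrad(K) = O(\sqrt{\log N / d})$, which is the correct order of magnitude but has $\log N$ in place of $\log(N/d)$.

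The main obstacle is sharpening the bound from $\log N$ to $\log(N/d)$; this is the Gluskin refinement. The plan is to use Maurey's empirical method: for any $v = \sum \alpha_i a_i \in K$ with $\sum|\alpha_i| \le 1$, sampling $m$ i.i.d.\ signed copies $\pm a_i$ according to the distribution $|\alpha_i|/\|\alpha\|_1$ and averaging produces a random $v' \in (1/m)\sym\{m\text{-fold sums of }\pm a_i\}$ with $\mathbb{E}\|v - v'\|_2^2 \le 1/m$. Consequently $K \subseteq K_m + (1/\sqrt{m})\,B_2^d$ where $K_m$ is the symmetric convex hull of at most $\binom{2N}{m}$ explicit points in $B_2^d$. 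Bounding $\vol(K_m)$ via the weaker $\sqrt{\log(2N)^m/d}$ estimate applied to $K_m$ (whose ``effective'' vertex count is $(2N)^m$) combined with the Minkowski-sum inflation $1/\sqrt{m}$, then optimizing over $m$ by choosing $m \asymp d/\log(N/d)$, balances the two error sources and gives the desired $\sqrt{\log(N/d)/d}$ bound.

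Finally, one checks that this improvement is tight (e.g., for $N$ random Gaussian points it is achieved), so no further refinement is possible. The key technical step — and the only delicate one — is Maurey's sparse approximation lemma together with the optimization of $m$; everything else is Urysohn plus Gaussian maxima, which are black-box facts.
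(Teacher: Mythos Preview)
The paper does not prove this theorem; it is quoted as a known result of B\'ar\'any--F\"uredi and Gluskin and used as a black box (in Corollary~\ref{cor:d-vert-vrad} and Theorem~\ref{thm:cost-pure}). So there is no proof in the paper to compare against.

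That said, your proposal has a genuine gap in the Maurey step. The reduction to $E=B_2^d$ and the Urysohn argument giving $\vrad(K)=O(\sqrt{\log N/d})$ are correct. But the Maurey refinement you describe does \emph{not} upgrade $\log N$ to $\log(N/d)$. Carrying out your covering argument with $K\subseteq\bigcup_{p\in K_m}\bigl(p+m^{-1/2}B_2^d\bigr)$ and $|K_m|\le(2N)^m$ gives
\[
\vrad(K)\le(2N)^{m/d}\,m^{-1/2},
\]
whose minimizer is $m\asymp d/\log(2N)$, yielding $O(\sqrt{\log N/d})$ again. Your claimed choice $m\asymp d/\log(N/d)$ is not the minimizer: for $N=2d$, say, the factor $(2N)^{m/d}$ becomes $(4d)^{1/\log 2}$, which blows up polynomially in $d$. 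If instead you combine Maurey with mean width (as your write-up also suggests), you get $w(K)\le C\sqrt{m\log N/d}+m^{-1/2}$; optimizing yields only $O((\log N/d)^{1/4})$, even weaker.

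The underlying obstruction is that Urysohn is inherently lossy in exactly the regime where $\log(N/d)$ and $\log N$ differ: for $K=B_1^d$ (the case $N=d$) one has $\vrad(B_1^d)\asymp 1/\sqrt d$ but $w(B_1^d)\asymp\sqrt{\log d}/\sqrt d$, a $\sqrt{\log d}$ gap. No mean-width argument can give the sharp bound here. The actual proofs go through a direct volume or entropy estimate: B\'ar\'any--F\"uredi use a cap-covering argument on the sphere after Carath\'eodory's decomposition $K=\bigcup_{|S|=d}(A|_S)B_1^d$, while Gluskin and Carl--Pajor bound the entropy numbers of $A:\ell_1^N\to\ell_2^d$. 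Either way, the missing ingredient is a volume estimate genuinely sharper than what the mean width can see.
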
}

\begin{cor}\label{cor:d-vert-vrad}
  For any  $K = \sym\{a_1, \ldots, a_N\}$ there exists a set $S
  \subseteq [N]$ such that 
  $$
  \det((a_i)_{i \in S})^{1/d} =  \Omega\left(\sqrt{\frac{d}{\log
        (N/d)}}\right) \vrad(K).
  $$
\end{cor}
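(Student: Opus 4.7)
The plan is to combine Theorem~\ref{thm:d-vert} with Theorem~\ref{thm:polyt-vol-apx} in a one-line chain. First I would let $E$ be the minimum volume enclosing ellipsoid of $K = \sym\{a_1,\ldots,a_N\}$, which exists and is unique by John's theorem. Applying Theorem~\ref{thm:d-vert} to this specific $E$ produces a subset $S \subseteq [N]$ of size $d$ such that
\begin{equation*}
\det((a_i)_{i \in S})^{1/d} = \Omega(\vrad(E)).
\end{equation*}

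Next I would invoke Theorem~\ref{thm:polyt-vol-apx} with the same ellipsoid $E$ (which trivially contains $K$ since it is even the \emph{minimum} containing ellipsoid), giving
\begin{equation*}
\vrad(K) \leq O\!\left(\sqrt{\tfrac{\log(N/d)}{d}}\right) \vrad(E),
\end{equation*}
which I would rearrange into the lower bound $\vrad(E) = \Omega\!\bigl(\sqrt{d/\log(N/d)}\bigr)\vrad(K)$.

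Chaining the two displayed inequalities yields the corollary directly. There is essentially no obstacle: Theorem~\ref{thm:d-vert} has already done the hard work of producing a subset of $d$ columns whose determinant captures the scale of the MEE, and Theorem~\ref{thm:polyt-vol-apx} (B\'ar\'any--F\"uredi/Gluskin) bridges the gap between the volume of the polytope $K$ and the volume of $E$. The only thing to verify is that the two $\Omega$'s compose correctly, i.e.\ that the constants in Theorem~\ref{thm:d-vert} and the $O(\cdot)$ in Theorem~\ref{thm:polyt-vol-apx} are absolute and independent of $N$ and $d$; both are, so the composition gives the claimed $\Omega(\sqrt{d/\log(N/d)})\vrad(K)$ lower bound on $\det((a_i)_{i\in S})^{1/d}$.
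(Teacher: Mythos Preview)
Your proposal is correct and is exactly the argument the paper intends: the corollary is stated immediately after Theorem~\ref{thm:polyt-vol-apx} as the direct combination of that theorem with Theorem~\ref{thm:d-vert}, with no further proof given. Your chaining of $\det((a_i)_{i\in S})^{1/d} = \Omega(\vrad(E))$ and $\vrad(E) = \Omega(\sqrt{d/\log(N/d)})\,\vrad(K)$ is precisely what is required.
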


\junk{
\begin{cor}
  For any  $K = \sym\{a_1, \ldots, a_N\}$ there exists a set $S
  \subseteq [N]$ such that 
  $$
  \vrad(\sym\{a_i\}_{i \in S}) =  \Omega\left(\frac{1}{\sqrt{\log
        (N/d)}}\right) \vrad(K). 
  $$
\end{cor}
\begin{proof}
  Let $E$ be the minimal volume ellipsoid of $K$. By
  Theorem~\ref{thm:d-vert}, there exists a set $S\subseteq [N]$ of size
  $d$ such that $\det(A|_S)^{1/d} = \Omega(\vrad(E))$ (where $A|_S =
  (a_i)_{i \in S}$). Therefore,
  \begin{align}
    \vrad((\sym\{a_i\}_{i \in S})) &=
    \frac{\vol((A|_S)B_1)^{1/d}}{\vol(B_2^d)^{1/d}}\\
    &= \frac{\det(A|_S)^{1/d}\vol(B_1)^{1/d}}{\vol(B_2^d)^{1/d}}\\
    &= \Omega\left(\frac{1}{\sqrt{d}}\vrad(E)\right). 
  \end{align}
  Finally, by Theorem~\ref{thm:polyt-vol-apx}, $\vrad(E) =
  \Omega(\sqrt{d/\log N}) \vrad(K)$.
\end{proof}
}

Finally, we describe the application to differential privacy. 
By Corollary~\ref{cor:d-vert-vrad},  $\vollb(A, \eps) =
O(\frac{1}{\eps^2}\log (N/d)) \detlb(A, d)$. Also, by
(\ref{eq:detlb}), $\detlb(A, d) = O(\log^2 d)\opt_{c_1, c_2}(A,
d)$. Finally, Lemma~\ref{lm:eps} implies that
$\opt_{c_1, c_2}(A, d) \leq {\eps^2} \opt_{\eps, \delta}(A,
d/\eps)$ for  $\delta$ small enough with respect to $\eps$. Putting
all this together and using Theorem\ref{thm:htvollb}, we have the following theorem (Theorem~\ref{thm:epsdelgap}). 

\begin{theorem} \label{thm:cost-pure}
  For small enough  $\eps$ and all $\delta$ small enough with respect
  to $\eps$, for any $d \times
  N$ real matrix $A$ we have
  \begin{equation*}
    \opt_{\eps, 0}(A) = O(\log^3 d)\vollb(A,\eps) =  O(\log^5 d \log
    (N/d)) \opt_{\eps, \delta}(A, \frac{d}{\eps}).  
  \end{equation*}
\end{theorem}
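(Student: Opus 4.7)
The plan is to deduce both equalities in the theorem by chaining four previously established estimates. The first equality is immediate: by Theorem~\ref{thm:htvollb}, the generalized $K$-norm mechanism is $(\eps,0)$-differentially private and achieves error $O(\log^3 d)\cdot \vollb(A,\eps)$, so $\opt_{\eps,0}(A) \le O(\log^3 d)\,\vollb(A,\eps)$.

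For the second equality, I would pass through three intermediate quantities, $\vollb \to \detlb \to \opt_{c_1,c_2} \to \opt_{\eps,\delta}$. The step $\vollb(A,\eps) = O(\log(N/d)/\eps^2)\cdot \detlb(A,d)$ is the geometric content of the argument and follows directly from Corollary~\ref{cor:d-vert-vrad}: for every $k \le d$ and every $\Pi \in \mathcal{P}_k$ maximizing the definition of $\vollb$, apply the corollary in the $k$-dimensional subspace $\Pi\R^d$ to the projected polytope $\Pi K$, where $K = \sym\{a_1,\ldots,a_N\}$. This produces a $k$-subset $S \subseteq [N]$ satisfying $|\det(\Pi A|_S)|^{2/k} = \Omega(k/\log(N/d))\cdot\vrad(\Pi K)^2$; multiplying by $k$ and taking the maximum over $k,\Pi$ on both sides yields the claimed relation between $\vollb$ and $\detlb$.

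The step $\detlb(A,d) = O(\log^2 d)\cdot \opt_{c_1,c_2}(A,d)$ is inequality~(\ref{eq:detlb}) from the preliminaries, which is itself obtained by combining the Lov\'asz--Spencer--Vesztergombi determinant lower bound with the discrepancy lower bound on noise (Theorem~\ref{thm:disc-lb}). Finally, the step $\opt_{c_1,c_2}(A,d) \le O(\eps^2)\cdot \opt_{\eps,\delta}(A,d/\eps)$ follows from Lemma~\ref{lm:eps} applied with $k := \lceil c_1/\eps\rceil$, provided $\delta$ is small enough in $\eps$ that $(e^{k\eps}-1)/(e^\eps - 1)\cdot \delta \le c_2$; since $e^{k\eps}-1 = O(1)$ for that choice of $k$, this condition is just $\delta = O(\eps)$.

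Multiplying the four bounds gives $\opt_{\eps,0}(A) = O(\log^3 d \cdot \log(N/d)/\eps^2 \cdot \log^2 d \cdot \eps^2)\,\opt_{\eps,\delta}(A,d/\eps) = O(\log^5 d\, \log(N/d))\,\opt_{\eps,\delta}(A,d/\eps)$, as claimed. The conceptual work sits entirely in the $\vollb \to \detlb$ step, which is already packaged into Corollary~\ref{cor:d-vert-vrad} via Theorem~\ref{thm:d-vert} (John's theorem plus the Bár\'any--F\"uredi--Gluskin bound on polytopes with few vertices); the remaining obstacle is bookkeeping, most notably verifying that the hypothesis of Lemma~\ref{lm:eps} holds for the chosen privacy parameters, which only requires $\delta$ to shrink linearly with $\eps$.
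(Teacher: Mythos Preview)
Your proposal is correct and follows exactly the chain of inequalities the paper uses: Theorem~\ref{thm:htvollb} for $\opt_{\eps,0}\le O(\log^3 d)\vollb$, Corollary~\ref{cor:d-vert-vrad} for $\vollb\le O(\log(N/d)/\eps^2)\detlb$, inequality~(\ref{eq:detlb}) for $\detlb\le O(\log^2 d)\opt_{c_1,c_2}$, and Lemma~\ref{lm:eps} for the final passage to $\opt_{\eps,\delta}(A,d/\eps)$. In fact you supply more detail than the paper does, in particular spelling out how Corollary~\ref{cor:d-vert-vrad} is applied for each $k,\Pi$ in the definition of $\vollb$ and making explicit the choice $k=\lceil c_1/\eps\rceil$ and the resulting condition $\delta=O(\eps)$ in Lemma~\ref{lm:eps}.
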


\subsection{Sparse Case under Pure Privacy}

We further extend our results from Section~\ref{sect:sparse} and show
an efficient $(\eps, 0)$-differentially private algorithm which, on
input any query matrix $A$ and any database size bound $n$, nearly
matches $\opt_{\eps, 0}(A, n)$. This proves our main Theorem~\ref{thm:epssparse}. In fact, our result is stronger: we
show an $(\eps, 0)$-differentially private mechanism whose error
nearly matches $\opt_{\eps, \delta}(A, n)$ for all $\delta$ small
enough with respect to $\eps$. Thus, the result of this subsection can
be seen as a generalization of Theorem~\ref{thm:cost-pure} to the
sparse databases regime.

Our algorithm for sparse databases under pure privacy closely follows
Algorithm~\ref{alg:lse}: we add noise from a distribution that is
tailored to $A$ but oblivious to the database $x$; then we use least
squares estimation to reduce error on sparse databases. However,
Gaussian noise does not preserve $(\eps, 0)$-differential privacy, and
we need to use a different noise distribution. Intuitively, one
expects that adding noise sampled from a near-optimal
distribution~\cite{HardtT10,12vollb} and then computing a least squares
estimator would be nearly optimal, analogously to
Algorithm~\ref{alg:lse}. We are not currently able to analyze the
error of this algorithm, but instead we analyze a variant of
Algorithm~\ref{alg:lse} where the Gaussian distribution is simply
substituted with the generalized $K$-norm distribution
from~\cite{HardtT10}. Intuitively, we are able to show that the
generalized $K$-norm distribution ``approximates a Gaussian'' well
enough for our analysis to go through. A main tool in our analysis is
a classical concentration of measure inequality from convex geometry. 

We begin with a slight generalization of the main upper bound result of Hardt
and Talwar~\cite{HardtT10}. This generalization follows directly from
the methods used in~\cite{HardtT10} with only minor modifications in
the proofs. We omit a full derivation here. Also, while the methods of
Hardt and Talwar will lead to a proof conditional on the truth of the
Hyperplane conjecture from convex geometry, using the ideas of
Bhaskara et al.~\cite{12vollb} the result can be made unconditional. 

\begin{theorem}[\cite{HardtT10,12vollb}]\label{thm:knorm}
  Let $A= (a_i)_{i = 1}^N$ be an $d\times N$ real matrix and let $K =
  \sym\{a_i\}_{i = 1}^N$. There exists an efficiently computable and
  efficiently sampleable distribution $\mathcal{W}(A, \eps)$ such that
  the following claims hold:
  \begin{packed_enum}
  \item the algorithm $\mech_K$ which on input $x$ outputs $Ax + w$ for $w \sim
    \mathcal{W}(A, \eps)$ satisfies $(\eps, 0)$-differential privacy;
  \item $\mathcal{W}(A, \eps)$ is identical to the distribution of the
    random variable $m \sum_{\ell = 1}^m{w_\ell}$ where $m = O(\log d)$, and
    $w_{\ell}$ is a sample from a log concave distribution with support lying
    in a subspace $\mathcal{V}_{\ell}$ of $\R^d$ of dimension $d_{\ell}$;
  \item $\mathcal{V}_{\ell}$ and $\mathcal{V_{\ell'}}$ for $\ell' \neq \ell$ are
    orthogonal, and the union of $\{\mathcal{V}_\ell\}_{\ell = 1}^m$ spans $\R^d$;
  \item let $M_\ell = M_\ell(A, \eps) = \E m^2 w_\ell w_\ell^T$ be the correlation
    matrix of $m w_\ell$ and let $\Pi_\ell$ be the projection matrix onto
    $\vspan\{\mathcal{V}_j\}_{j=\ell}^m$; then 
    $$
    \lambda_{\max}(M_\ell) \leq O(\log^2
    d)\frac{d_\ell}{\eps^2} \vrad(\Pi_\ell K)^2,
    $$
    where    $\lambda_{\max}(M_\ell)$ is the largest eigenvalue of   $M_\ell$.  
  \end{packed_enum}
\end{theorem}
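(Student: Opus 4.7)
The plan is to adapt the recursive $K$-norm mechanism of \cite{HardtT10,12vollb} so that the mutually orthogonal subspaces and the bound on each covariance matrix are made explicit. First, I would apply the base decomposition of Algorithm~\ref{alg:base} (or a variant using $M$-ellipsoids in place of MEEs) to $A$, obtaining orthonormal bases $U_1, \ldots, U_m$ of mutually orthogonal subspaces $\mathcal{V}_1, \ldots, \mathcal{V}_m$ whose union spans $\R^d$, with $m = O(\log d)$ and $\dim(\mathcal{V}_\ell) = d_\ell$. This establishes claim (3) and the value of $m$, and, by the geometric properties of the MEE exploited in Lemma~\ref{lm:base}, ensures that each projection $\Pi_\ell K$ has volume radius tightly controlled relative to its dimension.

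Next, on each $\mathcal{V}_\ell$ I would sample $w_\ell$ from the log-concave density proportional to $\exp(-\eps \|x\|_{\Pi_\ell K})$, restricted to $\mathcal{V}_\ell$; log-concavity and the support condition in claim (2) are immediate from the form of this density. For the privacy guarantee in claim (1), note that $m w_\ell$ has density proportional to $\exp(-(\eps/m)\|x\|_{\Pi_\ell K})$, so the classical $K$-norm argument of~\cite{HardtT10} applied to the body $\Pi_\ell K$ shows that releasing $\Pi_\ell Ax + m w_\ell$ is $(\eps/m, 0)$-differentially private; orthogonality of the subspaces combined with simple composition (Lemma~\ref{lm:simple-composition}) then gives that $Ax + m\sum_{\ell=1}^m w_\ell$ is $(\eps, 0)$-differentially private.

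For the covariance bound in claim (4), I would use the standard analysis of log-concave distributions of the form $\exp(-\eps \|x\|_L)$ in $\R^{d_\ell}$: for any such distribution the covariance has operator norm on the order of $d_\ell \eps^{-2}$ times the square of the ``isotropic radius'' of $L$, and this radius is within a $\polylog$ factor of $\vrad(L)$. Rescaling by $m = O(\log d)$ at the end contributes a further $m^2$ factor, producing the claimed bound $\lambda_{\max}(M_\ell) \leq O(\log^2 d) d_\ell \eps^{-2}\vrad(\Pi_\ell K)^2$.

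The main obstacle is proving this covariance bound unconditionally. The Hardt--Talwar route bounds the largest variance direction by an isotropic constant, which is $\polylog(d)$ only under the hyperplane conjecture. To remove that assumption one must adopt the unconditional argument of Bhaskara et al.~\cite{12vollb}, which uses restricted-invertibility-type tools closely related to Theorem~\ref{thm:bt} to control the variance of a $K$-norm sample direction by direction. Threading this argument through the recursive decomposition --- so that the variance in $\mathcal{V}_\ell$ is ultimately expressed in terms of $\vrad(\Pi_\ell K)$ and not a larger quantity involving $K$ itself --- is the technical heart of the modification; all remaining claims (orthogonality, log-concavity, composition-based privacy) follow directly from the construction.
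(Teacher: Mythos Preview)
First, note that the paper does not actually prove Theorem~\ref{thm:knorm}: it is stated as a mild reformulation of the generalized $K$-norm mechanism of~\cite{HardtT10,12vollb}, with the explicit remark ``We omit a full derivation here.'' So there is no in-paper argument to compare against beyond the cited constructions.

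Your sketch has the right high-level shape, but the privacy step, as written, does not go through. You take $w_\ell$ to have density proportional to $\exp(-\eps\|x\|_{\Pi_\ell K})$ ``restricted to $\mathcal{V}_\ell$,'' and then assert that $m w_\ell$ has density proportional to $\exp(-(\eps/m)\|x\|_{\Pi_\ell K})$. That is the $K$-norm density on the \emph{section} $(\Pi_\ell K)\cap \mathcal{V}_\ell$, so $w_\ell\in\mathcal{V}_\ell$. But then the release $\Pi_\ell Ax + m w_\ell$ you analyze leaves the $\mathcal{V}_{\ell+1},\ldots,\mathcal{V}_m$ components of $\Pi_\ell Ax$ completely un-noised, so the $K$-norm argument for the body $\Pi_\ell K$ cannot certify privacy; and if instead you release only the $\mathcal{V}_\ell$-component, the relevant sensitivity body is the \emph{projection} of $K$ onto $\mathcal{V}_\ell$, which in general strictly contains the section, so privacy again fails. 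What~\cite{HardtT10} actually does is sample $z_\ell$ from the $K$-norm density on the full body $\Pi_\ell K$ in $\vspan\{\mathcal{V}_j\}_{j\ge\ell}$, obtain $(\eps/m,0)$-privacy for $\Pi_\ell Ax + m z_\ell$ directly, and take $w_\ell$ to be the $\mathcal{V}_\ell$-\emph{marginal} (orthogonal projection) of $z_\ell$; privacy of the $\mathcal{V}_\ell$-piece then follows by post-processing, and log-concavity of the marginal by Pr\'ekopa--Leindler. A secondary issue is that the link between the per-level covariance and $\vrad(\Pi_\ell K)^2$ in~\cite{HardtT10,12vollb} comes from using an $M$-ellipsoid of $\Pi_\ell K$ (which matches its volume radius) to choose $\mathcal{V}_\ell$, not from the MEE or Lemma~\ref{lm:base}; the latter controls least singular values of column submatrices and is tailored to the Gaussian mechanism of this paper, not to volume radii.
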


Using the distribution $\mathcal{W}(A, \eps)$, we define our near
optimal sparse-case algorithm satisfying pure differential privacy as
Algorithm~\ref{alg:lse-pure}.

\begin{algorithm}
  \caption{\textsc{Least Squares Mechanism: Pure Privacy}}\label{alg:lse-pure}
  \begin{algorithmic}
    \REQUIRE \emph{(Public)}: query matrix $A = (a_i)_{i = 1}^N \in \R^{d
      \times N}$ ($\rank A = d$); database size bound $n$
    \REQUIRE \emph{(Private)}: database $x\in\R^N$

    \STATE Let $U_1, \ldots, U_k$ be base decomposition computed by
    Algorithm~\ref{alg:base} on input $A$, where $U_i$ is an
    orthonormal basis for a space of dimension $d_i$;
    
    \STATE Let $t$ be the largest integer such that $d_t \geq \eps n$;

    \FORALL{$i\leq t$}
    \STATE Compute $\tilde{y}_i = U_i(U_i^TAx + w_i)$ where $w_i \sim
    \mathcal{W}(U_i^TA, \frac{\eps}{t+1})$. 
    \ENDFOR

    \STATE Let $\tilde{y}' = \sum_{i = 1}^t{\tilde{y}_i}$

    \STATE Let $X = \sum_{i=1}^{t}{U_i}$ and $Y = \sum_{i = t+1}^k{U_i}$;    

    \STATE Compute $\tilde{y}'' = Y(Y^TAx +  w'')$ where $w'' \sim
    \mathcal{W}(Y^TA, \frac{\eps}{t+1})$;

    \STATE Compute $\hat{y}' = \arg \min\{\|\tilde{y}' - \hat{y}'\|_2^2:
    \hat{y}_1 \in nXX^TK\}$, where $K = AB_1$. 
    
    \STATE \textbf{Output}  $\hat{y}' + \tilde{y}''$. 
  \end{algorithmic}

\end{algorithm}

\begin{theorem}\label{thm:sparse-pure}
  Let $\mech_p(A, x, n)$ be the output of Algorithm~\ref{alg:lse-pure}
  on input a $d \times N$ query matrix $A$, database size bound $n$
  and private input $x$.  $\mech_p(A, x,n)$ is $(\eps,
  0)$-differentially private and for all small enough $\eps$ and all
  $\delta$ small enough with respect to $\eps$ satisfies
  \begin{align*}
    \err_{\mech_p}(A) &= O(\log^{4} d \log^{3/2} N)\opt_{\eps, \delta}(A, n) + O(\log^5 d \log N)\opt_{\eps, 0}(A, n);\\
    \err_{\mech_p}(A) &= O(\log^{4} d \log^{3/2} N + \log^7 d \log
    N)\opt_{\eps, \delta}(A, n).
  \end{align*}      
\end{theorem}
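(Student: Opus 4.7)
}
The proof follows the blueprint of Theorem~\ref{thm:sparse}, with Gaussian noise replaced by the generalized $K$-norm noise of Theorem~\ref{thm:knorm}. For privacy, note that Algorithm~\ref{alg:lse-pure} performs $t+1$ applications of the $K$-norm mechanism, each with privacy parameter $\eps/(t+1)$, on disjoint parts of the output. By Theorem~\ref{thm:knorm}(1) each application is $(\eps/(t+1), 0)$-differentially private, so by simple composition (Lemma~\ref{lm:simple-composition}) the composition satisfies $(\eps, 0)$-differential privacy. The final least squares step is a deterministic post-processing and preserves privacy.

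For error, fix $x$, set $y_1 = XX^T Ax$ and $y_2 = YY^T Ax$. Since $\hat{y}' - y_1 \in \vspan(X)$ and $\tilde{y}'' - y_2 \in \vspan(Y)$ lie in orthogonal subspaces, Pythagoras gives
\begin{equation*}
\E\|\hat{y}' + \tilde{y}'' - Ax\|_2^2 = \E\|\hat{y}' - y_1\|_2^2 + \E\|\tilde{y}'' - y_2\|_2^2.
\end{equation*}
I will bound the two terms separately. The ``dense'' term $\E\|\tilde{y}'' - y_2\|_2^2 = \E\|w''\|_2^2$ is exactly the total squared error of the $K$-norm mechanism on query matrix $Y^T A$ with privacy $\eps/(t+1)$. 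Using Theorem~\ref{thm:knorm}(4) to bound $\tr(\sum_\ell M_\ell(Y^TA,\eps/(t+1)))$, together with Theorem~\ref{thm:htvollb}, gives a bound of $O(\log^3 d)\cdot\vollb(Y^TA,\eps/(t+1)) \leq O(\log^5 d)\vollb(Y^TA,\eps)$, where the extra $(t+1)^2=O(\log^2 d)$ factor comes from rescaling $\vollb$ with $\eps$. Since the subspace spanned by $Y$ has dimension $d_Y \leq 2d_{t+1}\leq 2\eps n$ (by the choice of $t$), and $\vollb(Y^TA,\eps)$ is a lower bound on $\opt_{\eps,0}(Y^TA, d_Y/\eps)\leq \opt_{\eps,0}(A,n)$, we obtain the $O(\log^5 d\log N)\opt_{\eps,0}(A,n)$ bound (the extra $\log N$ absorbs a technical $\log(N/d_Y)$ arising in the volume-to-$\opt$ conversion). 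The second form of the bound then follows by applying Theorem~\ref{thm:cost-pure} to replace $\opt_{\eps,0}(A,n)$ by $O(\log^5 d \log(N/d))\opt_{\eps,\delta}(A,n)$.

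For the ``sparse'' term $\E\|\hat{y}' - y_1\|_2^2$, set $L = nXX^T K$ and $w' = \tilde{y}' - y_1 = \sum_{i=1}^t U_i w_i$. By Lemma~\ref{lm:lse}, $\|\hat{y}' - y_1\|_2^2 \leq 4\|w'\|_{L^\circ}$, and since $L$ is the symmetric convex hull of $\{na_j\}_{j=1}^N$, $\|w'\|_{L^\circ} = \max_j n|\langle a_j, w'\rangle| \leq \sum_{i=1}^t \max_j n|\langle U_i^T a_j, w_i\rangle|$. For each $i$, the noise $w_i\sim\mathcal{W}(U_i^TA,\eps/(t+1))$ is log-concave with covariance $\Sigma_i = \sum_\ell M_\ell(U_i^TA,\eps/(t+1))$, and for any fixed $v$, $\langle v, w_i\rangle$ is a mean-zero log-concave scalar with variance $v^T\Sigma_i v$. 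By Borell's inequality (sub-exponential concentration for log-concave distributions), each $\langle v, w_i\rangle$ has sub-exponential tails with parameter $O(\sqrt{v^T\Sigma_i v})$, so a union bound over the $N$ columns yields
\begin{equation*}
\E\max_{j=1}^N |\langle U_i^T a_j, w_i\rangle| = O(\log N)\cdot\max_j\sqrt{(U_i^Ta_j)^T\Sigma_i(U_i^Ta_j)}.
\end{equation*}
The variance $(U_i^Ta_j)^T\Sigma_i(U_i^Ta_j)$ is controlled by Theorem~\ref{thm:knorm}(4), which bounds each $\lambda_{\max}(M_\ell(U_i^TA,\eps/(t+1)))$ in terms of $\vrad(\Pi_\ell U_i^TK)^2$. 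Expanding this by Lemma~\ref{lm:base} applied to $U_i^TA$ and invoking the spectral lower bound $\specLB$ together with Lemma~\ref{lm:eps}, I can charge each summand to $\opt_{\eps,\delta}(A,n)$ up to a polylogarithmic factor (using $\opt_{\eps,\delta}(A,n) = \Omega(nr_i^2/\eps)$ as in the proof of Lemma~\ref{lm:sparse-util}). Summing over the $t = O(\log d)$ blocks and tracking the $\log N$ union bound together with the $\sqrt{\log N}$ factor coming from converting variance bounds into expected max bounds, gives the desired $O(\log^4 d \log^{3/2} N)\opt_{\eps,\delta}(A,n)$ bound.

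The main obstacle is the third paragraph: unlike the Gaussian case where the scalar projection $\langle v,w\rangle$ is itself Gaussian with a known variance, the log-concave projection has only sub-exponential tails, and worse, the covariance structure is governed by a family of $\vrad$ quantities rather than a single scalar $r_i^2$. The crucial point is that even though $\lambda_{\max}(\Sigma_i)$ can be as large as $O(\log^2 d)\frac{d_i(t+1)^2 r_i^2}{\eps^2}$, the actual quadratic form $(U_i^Ta_j)^T\Sigma_i(U_i^Ta_j)$ collapses into a sum over the base decomposition of $U_i^TA$, each of whose terms can be charged to $\specLB(A,n)$ via Lemma~\ref{lm:base}. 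Getting this accounting to yield only polylogarithmic overhead, rather than accumulating a $\sqrt{d_i}$ factor, is the most delicate step, and it is what forces both the $\log^4 d$ factor and the $\log^{3/2} N$ factor rather than $\sqrt{\log N}$ as in the Gaussian case.
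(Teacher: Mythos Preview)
Your overall architecture matches the paper's: privacy by composing $t+1$ calls to the $K$-norm mechanism; the Pythagoras split into a ``dense'' term on $\vspan(Y)$ handled by Theorem~\ref{thm:htvollb} and a ``sparse'' term on $\vspan(X)$ handled via Lemma~\ref{lm:lse} plus Borell concentration and a union bound over the $N$ columns. The dense-term analysis is essentially right (the paper actually gets $O(\log^5 d)\opt_{\eps,0}(A,n)$ directly, with no $\log N$; the $\log N$ only appears after invoking Theorem~\ref{thm:cost-pure} for the second inequality).

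The genuine gap is in how you bound the variance $(U_i^Ta_j)^T\Sigma_i(U_i^Ta_j)$. You propose to ``expand this by Lemma~\ref{lm:base} applied to $U_i^TA$''. But Lemma~\ref{lm:base} controls the MEE-based decomposition produced by Algorithm~\ref{alg:base}; it says nothing about the $K$-norm mechanism's \emph{internal} decomposition $(\Pi_{i\ell})_\ell$ from Theorem~\ref{thm:knorm}, and that is what appears in $\lambda_{\max}(M_{i\ell}) \le O(\log^2 d)\,\tfrac{d_{i\ell}(t+1)^2}{\eps^2}\vrad(\Pi_{i\ell}U_i^TK)^2$. If you only use $\Pi_{i\ell}U_i^TK \subseteq r_iB_2$, you get $d_{i\ell}\vrad(\Pi_{i\ell}U_i^TK)^2 \le d_{i\ell}r_i^2$, and summing over $\ell$ leaves you with a factor $\sum_\ell d_{i\ell}=d_i$ --- a polynomial, not polylogarithmic, loss.

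The missing ingredient is the B\'ar\'any--F\"uredi/Gluskin polytope volume bound (Theorem~\ref{thm:polyt-vol-apx}): since $\Pi_{i\ell}U_i^TK$ is the symmetric convex hull of $N$ points inside $r_iB_2^{d_{i\ell}}$, one has $\vrad(\Pi_{i\ell}U_i^TK)^2 = O\!\bigl(\tfrac{\log(N/d_{i\ell})}{d_{i\ell}}\bigr)r_i^2$, so $d_{i\ell}\vrad(\Pi_{i\ell}U_i^TK)^2 = O(\log N)\,r_i^2$ uniformly in $\ell$. This is precisely what collapses the sum over $\ell$ to a polylog and yields $\E n^2|\langle a_j, U_iw_i\rangle|^2 = O(\log^6 d\,\log N)\,(nr_i^2/\eps)^2$; combined with $\opt_{\eps,\delta}(A,n)=\Omega(nr_i^2/\eps)$ from the proof of Lemma~\ref{lm:sparse-util}, Cauchy--Schwarz over $i$, Borell (Theorem~\ref{thm:logconcaveborell}), and the union bound over $j$, you recover the paper's Lemma~\ref{lm:sparse-pure-util}. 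Your last paragraph correctly identifies this as the delicate step, but the tool you name (Lemma~\ref{lm:base}) is not the one that resolves it.
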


Once again privacy follows by a straightforward argument from the
privacy of the underlying noise-adding mechanism, in this case the
generalized $K$-norm mechanism.

\begin{lemma}\label{lm:sparse-pure-priv}
  $\mech_p(A,x, n)$ satisfies $(\eps, 0)$-differential privacy. 
\end{lemma}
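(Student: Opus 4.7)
The plan is to argue that $\mech_p$ is a composition of $t+1$ invocations of the generalized $K$-norm mechanism, each tuned with privacy parameter $\eps/(t+1)$, followed by a data-independent post-processing step. First I would observe that for each $i \leq t$, the intermediate quantity $U_i^T A x + w_i$ is precisely the output of the generalized $K$-norm mechanism $\mech_K$ applied to the query matrix $U_i^T A$ with privacy parameter $\eps/(t+1)$; by item~1 of Theorem~\ref{thm:knorm}, this is $(\eps/(t+1), 0)$-differentially private with respect to $x$. Multiplication by the fixed orthonormal matrix $U_i$ on the outside is a deterministic transformation that does not involve $x$ beyond the already released $U_i^TAx + w_i$, so by the post-processing property (a special case of Lemma~\ref{lm:simple-composition}), $\tilde{y}_i$ enjoys the same privacy guarantee.

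Next, the same reasoning applied to $Y^T A$ shows that $\tilde{y}'' = Y(Y^TAx + w'')$ is also $(\eps/(t+1), 0)$-differentially private. Since the base decomposition $U_1, \ldots, U_k$ (and hence $t$, $X$, $Y$) depends only on the public query matrix $A$, the noise parameters and the projections used in each step are fixed independently of $x$. Therefore the joint release of the tuple $(\tilde{y}_1, \ldots, \tilde{y}_t, \tilde{y}'')$ is the adaptive composition of $t+1$ mechanisms, each $(\eps/(t+1), 0)$-differentially private.

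Applying Lemma~\ref{lm:simple-composition} inductively then shows the entire tuple is $((t+1) \cdot \tfrac{\eps}{t+1}, 0) = (\eps, 0)$-differentially private. Finally, the output $\hat{y}' + \tilde{y}''$ of the algorithm is a deterministic function of this tuple and the public matrix $A$: the least squares estimator $\hat{y}'$ depends on $\tilde{y}'= \sum_{i=1}^t \tilde{y}_i$ and on the fixed convex set $nXX^T K$. Hence by the post-processing property the final output remains $(\eps, 0)$-differentially private, completing the argument. There is no real obstacle in this proof; the only thing to be careful about is verifying that all auxiliary quantities ($U_i$, $t$, $X$, $Y$, the polytope $nXX^T K$) depend solely on $A$ and not on the private histogram $x$, so that the composition and post-processing lemmas apply cleanly.
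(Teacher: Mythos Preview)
Your proposal is correct and follows exactly the same approach as the paper: observe that each of $\tilde{y}_1,\ldots,\tilde{y}_t,\tilde{y}''$ is the output of an $(\eps/(t+1),0)$-differentially private mechanism by Theorem~\ref{thm:knorm} (claim~1), apply composition (Lemma~\ref{lm:simple-composition}) to get $(\eps,0)$-privacy for the tuple, and conclude by post-processing. Your write-up is more explicit about why the auxiliary quantities depend only on $A$, but the argument is the same.
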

\begin{proof}
  $\mech_p(A,x,n)$ is a deterministic function of $\tilde{y_1},
  \ldots, \tilde{y_t}$ and $\tilde{y}''$. Each of these quantities is
  the output of an algorithm satisfying $(\frac{\eps}{t+1},
  0)$-differential privacy (by Theorem~\ref{thm:knorm}, claim
  1). Therefore, by Lemma~\ref{lm:simple-composition},
  $\mech_p(A,x,n)$ satisfies $(\eps, 0)$-differential privacy. 
\end{proof}

Next we prove the main technical lemma we need in order to show near
optimality. The analysis is very similar to that of
Lemma~\ref{lm:sparse-util}. The main technical challenge is to show
that the distribution $\mathcal{W}$ has all the properties we needed
from the Gaussian distribution: covariance with bounded operator norm
and exponential concentration. We use ideas from
Section~\ref{sect:cost} and the following variant of a classical concentration of
measure inequality, due to Borell~\cite{borell} (proved in the appendix).

\begin{theorem}
\label{thm:logconcaveborell}
 Let $\mu$ be a log-concave distribution over $\R^d$. Assume that $A$ is a symmetric convex subset of $\R^d$ such that $\mu(A) = \theta \geq \frac 2 3$. Then, for every $t > 1$ we have 
 \[
 \mu[(tA)^c] \leq 2^{-(t+1)/2}
 \]
 \end{theorem}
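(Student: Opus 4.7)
The plan is to prove this via the classical argument of Borell, using symmetry of $A$ and log-concavity of $\mu$. The engine is a simple geometric inclusion between $A^c$, $(tA)^c$, and $A$ obtained by a convex combination identity, together with the functional form of log-concavity (Pr\'{e}kopa-Leindler).

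The first step is a geometric claim: for $t > 1$ and $\lambda = \frac{2}{t+1}$,
\begin{equation*}
  \lambda (tA)^c + (1-\lambda) A \subseteq A^c.
\end{equation*}
To see this, I would argue by contradiction. Suppose $x \in (tA)^c$, $y \in A$, and $a \triangleq \lambda x + (1-\lambda)(-y) \in A$ (using symmetry $-A = A$ to replace $y$ by $-y$). Solving for $x$ gives $x = \frac{t+1}{2} a + \frac{t-1}{2} y$, whose coefficients sum to $t$. Dividing by $t$ exhibits $x/t$ as a convex combination of $a, y \in A$, so $x/t \in A$ by convexity of $A$, i.e.\ $x \in tA$, contradicting $x \in (tA)^c$.

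The second step is to feed this inclusion into the log-concavity of $\mu$: for convex measurable $B, C$ and $\lambda \in (0,1)$,
\begin{equation*}
  \mu(\lambda B + (1-\lambda) C) \geq \mu(B)^\lambda \mu(C)^{1-\lambda}.
\end{equation*}
Applying it with $B = (tA)^c$ and $C = A$ (both convex-relevant for the Pr\'ekopa-Leindler form; $(tA)^c$ appears only as the ``Minkowski summand'' in the inclusion, which is what is needed) and using $\mu(A^c) = 1-\theta$, $\mu(A) = \theta$, yields
\begin{equation*}
  1 - \theta \;\geq\; \mu((tA)^c)^{2/(t+1)} \, \theta^{(t-1)/(t+1)}.
\end{equation*}

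The final step is algebraic: solving for $\mu((tA)^c)$ gives the Borell-type tail bound
\begin{equation*}
  \mu((tA)^c) \;\leq\; \theta \left(\tfrac{1-\theta}{\theta}\right)^{(t+1)/2}.
\end{equation*}
Under the hypothesis $\theta \geq 2/3$, we have $\frac{1-\theta}{\theta} \leq \frac{1}{2}$ and $\theta \leq 1$, so $\mu((tA)^c) \leq 2^{-(t+1)/2}$, as required. The main subtlety is the first step: one has to pick the right convex combination coefficient $\lambda = 2/(t+1)$, and one needs symmetry of $A$ exactly once (to convert $-y \in A$ back to $y \in A$) so that the log-concavity inequality involves $\mu(A)$ rather than $\mu(-A)$. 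No symmetry assumption on $\mu$ itself is needed.
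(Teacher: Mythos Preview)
Your proof is correct and follows essentially the same approach as the paper: you establish the same geometric inclusion $\frac{2}{t+1}(tA)^c + \frac{t-1}{t+1}A \subseteq A^c$ and then apply the log-concavity (Pr\'ekopa--Leindler/Brunn--Minkowski) inequality and rearrange, exactly as the paper does. Your write-up is simply more detailed, spelling out the contradiction argument for the inclusion and the final algebra; the parenthetical about convexity of $(tA)^c$ is unnecessary since the log-concavity inequality only requires measurability of the summands.
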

 
We are now ready to prove the counterpart of
Lemma~\ref{lm:sparse-util} for $\mech_p$. 

\begin{lemma}\label{lm:sparse-pure-util}
  Let $U_i$, $t$, and $w_i$ be as defined in
  Algorithm~\ref{alg:lse-pure} \junk{and let $r_i$ and $w_i$ be as
    defined in Algorithm~\ref{alg:gaussnoise}}. For 
  all small enough $\eps$ and all $\delta$ small enough with respect
  to $\eps$, $\E \max_{j = 1}^N{n |\langle a_j, \sum_{i=1}^t U_iw_i\rangle|} \leq
  O(\log^{4} d\log^{3/2} N)\opt_{\eps, \delta}(A, n)$.
\end{lemma}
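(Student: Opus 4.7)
The plan is to mirror the proof of Lemma~\ref{lm:sparse-util}, making two conceptual substitutions. First, the Gaussian tail bounds used there are replaced by sub-exponential tail bounds derived from Borell's inequality (Theorem~\ref{thm:logconcaveborell}), since the $K$-norm noise is log-concave rather than Gaussian. Second, the spectral lower bound~(\ref{eq:speclb}) is replaced by the determinant-based lower bound~(\ref{eq:detlb}), since Theorem~\ref{thm:knorm} controls the noise covariance in terms of $\vrad$, which is naturally related to $\detlb$ through Corollary~\ref{cor:d-vert-vrad}. We work throughout with $i \le t$, for which $d_i \ge \eps n$ holds by definition of $t$.

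For the upper bound, fix $i \le t$ and $j \in [N]$ and write $w_i = m_i \sum_{\ell} w_{i,\ell}$ as guaranteed by Theorem~\ref{thm:knorm}, with the $w_{i,\ell}$ independent log-concave vectors in pairwise orthogonal subspaces $\mathcal{V}_{i,\ell}$. By orthogonality and the identity $m_i^2 \, \E[w_{i,\ell} w_{i,\ell}^T] = M_{i,\ell}$, the scalar $Z_{i,j} \triangleq \langle U_i^T a_j, w_i\rangle$ has variance
\[
\sigma_{i,j}^2 \;=\; \sum_{\ell} (U_i^T a_j)^T M_{i,\ell} (U_i^T a_j) \;\le\; \|U_i^T a_j\|_2^2 \sum_{\ell} \lambda_{\max}(M_{i,\ell}),
\]
and claim~4 of Theorem~\ref{thm:knorm}, together with $\sum_\ell d_{i,\ell} = d_i$, gives $\sum_\ell \lambda_{\max}(M_{i,\ell}) = O\!\left(\log^2 d \cdot (t/\eps)^2 \cdot d_i\right) \max_\ell \vrad(\Pi_{i,\ell} U_i^T K)^2$. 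Sums of independent log-concave vectors are log-concave (Prékopa--Leindler), so $Z_{i,j}$ is a symmetric log-concave scalar. Applying Theorem~\ref{thm:logconcaveborell} to the symmetric convex slab $\{w: |\langle U_i^T a_j, w\rangle| \le 3\sigma_{i,j}\}$ (which carries mass $\ge 2/3$ by Chebyshev) yields $\Pr[|Z_{i,j}| \ge s\sigma_{i,j}] \le 2^{-\Omega(s)}$. A union bound over $j \in [N]$ then gives $\E \max_j |Z_{i,j}| = O(\log N) \max_j \sigma_{i,j}$, and a further triangle inequality over $i \le t = O(\log d)$ yields $n \cdot \E \max_j |\sum_i Z_{i,j}| \le O(\log d \log N) \cdot n \cdot \max_{i,\ell} \sqrt{\log^2 d \cdot (t/\eps)^2 d_i} \cdot \max_j \|U_i^T a_j\| \cdot \vrad(\Pi_{i,\ell} U_i^T K)$.

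For the matching lower bound on $\opt_{\eps,\delta}(A,n)$, I apply Corollary~\ref{cor:d-vert-vrad} to each polytope $\Pi_{i,\ell} U_i^T K$, which is the symmetric hull of at most $N$ vectors in $\R^{d_{i,\ell}}$. This produces a set $S_{i,\ell}$ of $d_{i,\ell}$ columns with $|\det(\Pi_{i,\ell} U_i^T A|_{S_{i,\ell}})|^{2/d_{i,\ell}} = \Omega(d_{i,\ell}/\log(N/d_{i,\ell})) \cdot \vrad(\Pi_{i,\ell} U_i^T K)^2$. Taking the projection $\Pi_{i,\ell} U_i^T$ as an orthogonal projection on $\R^d$ and restricting to any subset $T \subseteq S_{i,\ell}$ of size $k = \lfloor \eps n \rfloor \le d_{i^\star}$ (valid since $i \le t$ implies $d_i \ge \eps n$), the Cauchy-Binet-type monotonicity of $\det$ under column deletion together with the definition of $\detlb$ gives $\detlb(A, \eps n) = \Omega(\eps n / \log N) \cdot \vrad(\Pi_{i,\ell} U_i^T K)^2$. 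Combining this with~(\ref{eq:detlb}) to get $\opt_{c_1,c_2}(A, \eps n) = \Omega(\detlb(A,\eps n)/\log^2(\eps n))$, and then invoking Lemma~\ref{lm:eps} with $k = c_1/\eps$ to translate from $(c_1,c_2)$-privacy to $(\eps, \delta)$-privacy, finally produces $\opt_{\eps,\delta}(A, n) = \Omega\!\left(\tfrac{n}{\eps \log^2 d \log N}\right) \max_{i,\ell} \vrad(\Pi_{i,\ell} U_i^T K)^2$. Assembling this with the upper bound from the previous paragraph, and noting that $\|U_i^T a_j\|_2 \le \vrad(U_i^T K)$ up to absolute constants after the MEE normalization, gives the claimed $O(\log^4 d \log^{3/2} N)$ ratio (the $\log^{3/2} N$ factor decomposing as $\log N$ from Borell's union bound times $\sqrt{\log(N/d)}$ from Corollary~\ref{cor:d-vert-vrad}).

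The main obstacle is the careful bookkeeping across two nested decompositions: the outer base decomposition of $A$ producing the $U_i$, and the inner $K$-norm decomposition of each $U_i^T A$ producing the subspaces $\mathcal{V}_{i,\ell}$. In particular, one must verify that the projections $\Pi_{i,\ell} U_i^T$ applied to $K = AB_1$ correspond to projections of submatrices of $A$ whose determinants appear in $\detlb(A, \cdot)$, and that the $\vrad(\Pi_{i,\ell} U_i^T K)$ factor appearing in the upper bound of Step~1 matches (up to polylogarithmic factors) the one we lower bound in Step~3 — otherwise the upper and lower bounds do not align. The substitution of Borell's sub-exponential tails in place of Gaussian sub-Gaussian tails is the conceptual heart of the argument, costing exactly one factor of $\sqrt{\log N}$ relative to Lemma~\ref{lm:sparse-util}.
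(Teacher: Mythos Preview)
Your overall architecture — decompose $w_i$ via Theorem~\ref{thm:knorm}, bound the variance of $\langle U_i^T a_j, w_i\rangle$ through $\lambda_{\max}(M_{i,\ell})$, apply Borell's inequality for sub-exponential tails, then union-bound over $j$ — matches the paper. The gap is in how you close the loop between the upper bound and $\opt_{\eps,\delta}(A,n)$.

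Your upper bound on $\sigma_{i,j}$ is of the form $r_i \cdot \sqrt{d_i}\cdot \vrad(\Pi_{i,\ell}U_i^T K)$ times log factors, where $r_i = \max_j\|U_i^T a_j\|_2$. You then try to lower bound $\opt$ by $\vrad(\Pi_{i,\ell}U_i^T K)^2$ via Corollary~\ref{cor:d-vert-vrad} and $\detlb$, and to reconcile the mismatch you assert that ``$\|U_i^T a_j\|_2 \le \vrad(U_i^T K)$ up to absolute constants after the MEE normalization.'' This is false: the containment $U_i^T K \subseteq r_i B_2^{d_i}$ only gives $\vrad(U_i^T K) \le r_i$, and the ratio can be arbitrarily large. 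The MEE underlying Algorithm~\ref{alg:base} is for $K$, not for $U_i^T K$, so nothing forces $U_i^T K$ to be well-rounded. Secondarily, your ``Cauchy--Binet-type monotonicity of $\det$ under column deletion'' to pass from a large $d_{i,\ell}$-dimensional determinant to a large $\lfloor \eps n\rfloor$-dimensional one is not a valid step as stated, and in any case the inner dimensions $d_{i,\ell}$ may well be \emph{smaller} than $\eps n$, so the restriction you describe need not even be possible.

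The paper instead runs the comparison through $r_i$ alone. It uses Theorem~\ref{thm:polyt-vol-apx} (B\'ar\'any--F\"uredi/Gluskin) in the \emph{forward} direction, $d_{i,\ell}\vrad(\Pi_{i,\ell}U_i^T K)^2 = O(\log(N/d_{i,\ell}))\,r_i^2$ (since $\Pi_{i,\ell}U_i^T K \subseteq r_i B_2$), so the variance bound becomes $O(\log^6 d\,\log N)\,(n r_i^2/\eps)^2$. The matching lower bound $\opt_{\eps,\delta}(A,n)=\Omega(n r_i^2/\eps)$ is exactly the $\specLB$ bound already proved inside Lemma~\ref{lm:sparse-util}; no new determinant argument is needed. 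Replace your $\detlb$/Corollary~\ref{cor:d-vert-vrad} route with this, and the rest of your outline (Borell $+$ union bound) goes through.
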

\begin{proof}
  As in Algorithm~\ref{alg:lse}, we define $r_i = \max_{j =
    1}^N{\|U_i^Ta_j\|_2}$. Equivalently $r_i$ is the radius of the
  smallest $d_i$-dimensional ball which contains $U_i^TK$. In the
  proof of Lemma~\ref{lm:sparse-util} we argued that $\opt_{\eps, \delta}(A,n) =
  \Omega(\frac{n}{\eps}r_i^2)$ for all small enough $\eps$ and all
  $\delta$ small enough with respect to $\eps$. 

  By Theorem~\ref{thm:knorm}, we can write $w_i$ as $m_i\sum_{\ell =
    1}^{m_i} w_{i\ell}$ where $w_{i\ell}$ is a  sample from a log concave distribution over a subspace $\mathcal{V}_{i\ell}$ and $m_i = O(\log  d_i)$. Furthermore, all $w_{i\ell}$ for a fixed $i$ are mutually
  orthogonal. Finally, letting $\Pi_{i\ell}$ be the projection matrix
  onto $\vspan\{\mathcal{V}_{i\ell}\}_{\ell = 1}^{m_i}$ and $M_{i\ell}$ be the covariance
  matrix of $m_iw_{i\ell}$, we have 
  \begin{equation}\label{eq:eigen-up}
    \lambda_{\max}(M_{i\ell})  \leq O(\log^2 d ) \frac{d_{i\ell}
      t^2}{\epsilon^2}\vrad(\Pi_{i\ell}U_i^TK)^2.  
  \end{equation}
  Therefore, for any $a_j$, we can derive the following bound:
  \begin{align*}
    \E n^2 |\langle a_j, U_iw_i\rangle|^2 &= n^2 \E |\langle U_i^Ta_j,
    w_i\rangle|^2\\
    &\leq n^2  m_i \sum_{\ell = 1}^{m_i}{\E |\langle U_i^Ta_j,
     m_i w_{i\ell}\rangle|^2}\\
    &= n^2  m_i \sum_{\ell = 1}^{m_i}{{(U_i^Ta_j)^TM_{i\ell}(U_i^Ta_j)}} \\
    &\leq n^2 m_i \sum_{\ell = 1}^{m_i}{\|U_i^Ta_j\|_2^2
      \lambda_{\max}(M_{i\ell})}\\
    &\leq O(\log^5 d) n^2 \sum_{\ell =
      1}^{m_i}{\frac{1}{\eps^2}r_i^2 d_{i\ell}\vrad(\Pi_{i\ell}U_i^TK)^2}. 
  \end{align*}
  The first bound above follows from the Cauchy-Schwarz inequality and
  the last bound follows from (\ref{eq:eigen-up}). To bound
  $d_{i\ell} \vrad(\Pi_{i\ell}U_i^TK)^2$, recall that $U_i^TK$ is
  contained in a ball of radius $r_i$, and therefore so is
  $\Pi_{i\ell}U_i^TK$ for all $\ell$. Therefore, by
  Theorem~\ref{thm:polyt-vol-apx}, $\vrad(\Pi_{i\ell}U_i^TK)^2 = O((\log
    (N/d_{i\ell}))/d_{i\ell})r_i^2$. Substituting into the bound
  above and recalling that $\opt_{\eps, \delta}(A,n) =
  \Omega(\frac{n}{\eps}r_i^2)$, we get
  \begin{equation*}
    \E n^2 |\langle a_j, U_iw_i\rangle|^2 = O(\log^6 d
     \log (N/d)) \opt_{\eps,\delta}(A,n)^2. 
  \end{equation*}
  Thus applying Cauchy Schwarz once again, we conclude
  \begin{equation*}
\E n^2|\langle a_j, \sum_{i=1}^t U_iw_i\rangle|^2 \leq   t\cdot \sum_{i=1}^t \E n^2 |\langle a_j, U_iw_i\rangle|^2 = O(\log^8 d \log (N/d) \opt_{\eps,\delta}(A,n)).
\end{equation*}
 For any $j$,  the set $\{w:|\langle
  U_i^Ta_j, \sum_i w_i\rangle| \leq T\}$ is symmetric and convex for
  any bound $T$. Then, by Chebyshev\rq{}s inequality, and Theorem~\ref{thm:logconcaveborell} there exists a constant $C$ such
  that  for any $i,j$ and $\alpha > 2$
  \begin{equation*}
    \Pr[n|\langle U_i^Ta_j, w_{i}\rangle| >   C \alpha \log N \log^{4} d \sqrt{\log (N/d)}   \frac{n}{\eps}r_i^2] < N^{-\alpha}. 
  \end{equation*}
  Using a union bound and taking expectations completes the proof. 
\end{proof}

\begin{proofof}{ Theorem~\ref{thm:sparse-pure}}
 The privacy guarantee follows from Lemmas~\ref{lm:sparse-pure-priv}. By Lemma~\ref{lm:sparse-pure-util} analogously to the proof of Theorem~\ref{thm:sparse}, we can conclude that
\begin{equation*}
 \E [ \|\hat{y}\rq{} - XX^TAx\|_2^2] \leq 4 \E \max_{j = 1}^N{n |\langle a_j, \sum_{i=1}^t U_iw_i\rangle|} \leq
  O(\log^{4} d\log^{3/2} N)\opt_{\eps, \delta}(A, n).
\end{equation*}
Moreover, by Theorem~\ref{thm:htvollb}, 
\begin{align*}
  \E [\|\tilde{y}'' - YY^TAx\|_2^2] &\leq O(\log^3 d)
  \vollb(YY^TA,\frac{\eps}{t+1}) \\
  &= O(t^2\log^3 d) \opt_{\eps,0}(YY^TA, \frac{d_t}{\eps})\\
  &= O(\log^5 d) \opt_{\eps,0}(A, n).
\end{align*}
The last bound follows since $YY^T$ is a projection matrix, $d_t\leq
n\eps$ and $t = O(\log d)$. 
Also, by Theorem~\ref{thm:cost-pure}, $\vollb(YY^TA, \frac{\eps}{t+1}) =
O(t^2\log^2 d \log (N/d_t)) \opt_{\eps, \delta}(YY^TA,
\frac{d_t}{\eps})$, and therefore we have $\E [\|\tilde{y}'' - YY^TAx\|_2^2] =
O(\log^7 d \log (N/d)) \opt_{\eps, \delta}(A,n)$.  

\junk{By
  Corollary~\ref{cor:d-vert-vrad}, each $d_i\vrad(\Pi_i K)^2$ is at
  most $O(\log(N/d_i))$ times $detLB(A,d_i)$ which in turn is bounded
  by $O(\log^2 d) \opt_{\eps,\delta}(A,d_i)$ for small enough
  constants $\eps$ and $\delta$. Since $d_i \geq \eps n$, it follows
  that
\begin{equation*}
\E [\|\tilde{y}'' - YY^TAx\|_2^2] \leq O(\log^6 d \log N) \opt_{\eps,\delta}(A,n)
\end{equation*}}
Pythagoras theorem then implies the result.
\end{proofof}


\section{Universal bounds}\label{sect:universal}
For $d$ linear sensitivity $1$ queries, there are known universal
bounds on $\err_{\eps,\delta}(A,n)$ and $\err_{\eps,0}(A,n)$. We note
that the sensitivity $1$ bound implies that the $\ell_2$ norm of each
column is at most $\sqrt{d}$. This in turn allows us to prove an upper
bound on the spectral lower bound, so that the relative guarantee
provided by Theorems~\ref{thm:dense} and~\ref{thm:sparse} can be translated to an
absolute one. The resulting bounds can be improved by polylogarithmic
factors, by using natural simplifications of the relative-error
mechanisms and analyzing them directly. We next present these
simplifications. The average per query error bounds resulting from our
mechanisms match the best known bounds for
$(\eps,\delta)$-differential privacy, and improve on the best known bounds for pure differential privacy.

For \stocoption{\epsdeldp}{$(\eps,\delta)$-differential privacy}, the best known universal upper bound when $A \in [0,1]^{d\times N}$ for the total $\ell_2^2$ error is $O(n d \log d \sqrt{\log N} \log(1/\delta)/ \epsilon)$, given by ~\cite{GuptaRU11}. We note that when $A \in [0,1]^{d \times N}$, one can use $B(0,\sqrt{d})$ as an enclosing ellipsoid for $K$. The following simple mechanism is easily seen to be \epsdeldp.

\begin{algorithm}
  \caption{\textsc{Simple Noise + Least Squares Mechanism}}\label{alg:simplelse}
  \begin{algorithmic}
    \REQUIRE \textbf{Public Input}: query matrix $A = (a_i)_{i = 1}^N \in [0,1]^{d
      \times N}$ 
    \REQUIRE \textbf{Private Input}: database $x \in \R^N$

    \STATE Let $c(\eps, \delta) =
    \frac{1+\sqrt{2\ln(1/\delta)}}{\eps}$.

    \STATE Let $r = \max_{j = 1}^N{\|a_j\|_2}$;

    \STATE Sample $w \sim N(0, c(\eps, \delta))^{d}$;
    \STATE  Let $\tilde{y}=Ax + rw$. 
    \STATE  Let $\hat{y} = \arg \min\{\|\tilde{y} - \hat{y}\|_2^2:
    \hat{y} \in nK\}$, where $K = AB_1$. 
    \STATE \textbf{Output} $\hat{y}$
  \end{algorithmic}
\end{algorithm}

\begin{theorem}
The mechanism $\mech_s$ of Algorithm~\ref{alg:simplelse} is \stocoption{\epsdeldp }{$(\eps, \delta)$-differentially private} and satisfies
\[
\err_{\mech_s}(A) \leq O(nd \log(1/\delta)\sqrt{\log N}/\eps)
\]
\end{theorem}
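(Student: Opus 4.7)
The plan is to separate the privacy and utility arguments, relying on the Gaussian noise lemma for the former and on the least squares bound (Lemma~\ref{lm:lse}) for the latter.

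For privacy, I would first observe that because $A \in [0,1]^{d \times N}$, every column satisfies $\|a_j\|_2 \leq \sqrt{d}$, so $r \leq \sqrt{d}$. Applying Lemma~\ref{lm:gauss-mech-ind} with scale $\sigma = r$ then gives that the map $x \mapsto Ax + rw$, for $w \sim N(0, c(\eps,\delta))^d$, is $(\eps,\delta)$-differentially private. Since $\hat{y}$ is obtained from $\tilde{y}$ by a data-independent $\ell_2$-projection onto $nK$, privacy of the whole mechanism follows from the post-processing property (Lemma~\ref{lm:simple-composition}).

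For utility, I would fix an input $x$ with $\|x\|_1 \leq n$ and set $y = Ax$ and $w' = rw$. Since $\|x\|_1 \leq n$ we have $x \in nB_1^N$, so $y = Ax \in nK$ where $K = AB_1^N$. Invoking Lemma~\ref{lm:lse} with $L = nK$ gives
\[
\|\hat{y} - y\|_2^2 \;\leq\; 4\,\|w'\|_{(nK)^\circ} \;=\; 4n\,\|w'\|_{K^\circ} \;=\; 4n \max_{j \in [N]}|\langle a_j, w'\rangle|,
\]
using $(nK)^\circ = \tfrac{1}{n}K^\circ$ and the fact that the dual norm of the symmetric convex hull $K = \sym\{a_j\}$ is exactly $\|\cdot\|_{K^\circ} = \max_j |\langle a_j, \cdot\rangle|$.

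It then remains to bound $\E \max_j |\langle a_j, rw\rangle|$. For each fixed $j$, the random variable $\langle a_j, rw\rangle$ is a centered Gaussian with standard deviation $r\,\|a_j\|_2\,c(\eps,\delta) \leq r^2 c(\eps,\delta) = O(d\sqrt{\log(1/\delta)}/\eps)$. The standard estimate for the expected maximum of $N$ (not necessarily independent) centered Gaussians with variance proxy $\sigma^2$ yields
\[
\E \max_{j \in [N]}|\langle a_j, rw\rangle| \;=\; O\!\left(\frac{d\sqrt{\log(1/\delta)\,\log N}}{\eps}\right),
\]
and multiplying by $4n$ gives the claimed bound $O(nd\log(1/\delta)\sqrt{\log N}/\eps)$ (with room to spare on the $\delta$-dependence). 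There is no real obstacle here: the argument is essentially a one-dimensional specialization of the dense-case analysis in Section~\ref{sec:densepsdel} combined with the projection bound from Lemma~\ref{lm:lse}. The only point needing slight care is verifying that $y \in nK$ so that Lemma~\ref{lm:lse} is applicable, which follows immediately from the sparsity assumption $\|x\|_1 \leq n$.
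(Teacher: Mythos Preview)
Your proposal is correct and follows essentially the same approach as the paper: privacy via Lemma~\ref{lm:gauss-mech-ind} (plus post-processing), utility via Lemma~\ref{lm:lse} with $L = nK$ to reduce to $4n\,\E\max_j|\langle a_j, rw\rangle|$, and then the standard Gaussian-maximum bound using $\|a_j\|_2 \leq r \leq \sqrt d$. If anything, your write-up is slightly more careful than the paper's in distinguishing $w$ from $rw$ and in explicitly verifying $y \in nK$ before invoking Lemma~\ref{lm:lse}.
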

\begin{proof}
The privacy of the mechanism is immediate from Lemma~\ref{lm:gauss-mech-ind}.  To analyze the error, we use Lemma~\ref{lm:lse} and the fact that $L=nK$ to bound
\[
\err_{\mech_s}(A) = \E[\|\hat{y} - Ax\|_2^2] \leq 4n\|w\|_{K^{\circ}} = 4n\E \max_{j = 1}^N |\langle a_j, w\rangle|,
\]
where $\{a_j\}_{j=1}^N$ are columns of $A$. We have used the fact that the $\|w\|_{K^\circ} = \max_{a\in K} \langle a,w\rangle$ is attained at one of the vertices of $K$. Since each $\langle a_j,w\rangle$ is a Gaussian with variance $r^4c(\eps,\delta)^2$,  $|\langle a_j, w \rangle|$ exceeds $r^2 c(\eps,\delta)\sqrt{t\log N}$ with probability at most $\frac{1}{N^t}$. Taking a union bound, we conclude that this expectation of the maximum is $O(r^2c(\eps,\delta)\sqrt{\log N})$. Recall that $r =  \max_{j = 1}^N{\|a_j\|_2} \leq \sqrt{d}$. It follows that
\[
\err_{\mech_s}(A) \leq O(nd c(\eps,\delta)\sqrt{\log N})
\]
\end{proof}

Comparing with~\cite{GuptaRU11}, our bound is better by an $O(\log d)$ factor. However, the previous bound is stronger in that it guarantees expected squared error $O_{\eps,\delta}(n\sqrt{\log N}\log d)$ for every query, while we can only bound the total $\ell_2^2$ error. 

For getting pure $\eps$-DP, we simply substitute the generalized $K$-norm distribution guaranteed by Theorem~\ref{thm:knorm} instead of the Gaussian noise.
\begin{algorithm}
  \caption{\textsc{$K$-norm Noise + Least Squares Mechanism}}\label{alg:knormlse}
  \begin{algorithmic}
    \REQUIRE \textbf{Public Input}: query matrix $A = (a_i)_{i = 1}^N \in [0,1]^{d
      \times N}$ 
    \REQUIRE \textbf{Private Input}: database $x \in \R^N$
    \STATE  Let $\tilde{y}=Ax + w$ where $w \sim \mathcal{W}(A,\eps)$. 
    \STATE  Let $\hat{y} = \arg \min\{\|\tilde{y} - \hat{y}\|_2^2:
    \hat{y} \in nK\}$, where $K = AB_1$. 
    \STATE \textbf{Output} $\hat{y}$
  \end{algorithmic}
\end{algorithm}


We first observe an upper bound on the volume radius of projections of $K$. 
\begin{lemma}\label{lem:vradbound}
Let $A \in [0,1]^{d \times N}$ and let $K= \sym\{a_1, \ldots, a_N\}$. Let $\Pi^{(k)}$ be a rank $k$ orthogonal projection that maps $\R^d$ to $\R^k$.  Then
\[
\vrad(\Pi^{(k)} K) \leq O\left(\sqrt{\frac{\log
        (N/k)}{k}}\right)\sqrt{d}
\]
\end{lemma}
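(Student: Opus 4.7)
The plan is to apply Theorem~\ref{thm:polyt-vol-apx} to the symmetric polytope $\Pi^{(k)} K \subseteq \R^k$, after exhibiting an obvious containing ball of radius $\sqrt{d}$.

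First I would observe that since every entry of $A$ lies in $[0,1]$, each column satisfies $\|a_i\|_2 \leq \sqrt{d}$, so $a_i \in \sqrt{d}\, B_2^d$ for every $i$. Taking symmetric convex hulls, $K = \sym\{a_1,\ldots,a_N\} \subseteq \sqrt{d}\, B_2^d$. Projections do not expand $\ell_2$ norms, so
\begin{equation*}
\Pi^{(k)} K = \sym\{\Pi^{(k)} a_1, \ldots, \Pi^{(k)} a_N\} \subseteq \sqrt{d}\, B_2^k,
\end{equation*}
which exhibits $\sqrt{d}\, B_2^k$ as an ellipsoid containing $\Pi^{(k)} K$.

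Next, I would apply Theorem~\ref{thm:polyt-vol-apx} directly in $\R^k$: $\Pi^{(k)} K$ is a symmetric polytope with (at most) $N$ generating vertices in $\R^k$, and the theorem guarantees that for any ellipsoid $E$ containing it,
\begin{equation*}
\vrad(\Pi^{(k)} K) \leq O\!\left(\sqrt{\tfrac{\log(N/k)}{k}}\right)\vrad(E).
\end{equation*}
Taking $E = \sqrt{d}\, B_2^k$ gives $\vrad(E) = \sqrt{d}$, and substitution yields the claimed bound.

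There is essentially no obstacle here — the lemma is a direct plug-in of the $\ell_2$ column bound into the B\'ar\'any--F\"uredi/Gluskin volume estimate. The only thing one needs to be mildly careful about is that $N$ generators in the ambient $d$-dimensional space still give at most $N$ (not $N^{1/\mathrm{something}}$) generators after the projection, so the factor $\log(N/k)$ (with $k$, the projected dimension, in the denominator) is exactly what Theorem~\ref{thm:polyt-vol-apx} produces when invoked in $\R^k$.
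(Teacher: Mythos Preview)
Your proposal is correct and matches the paper's proof essentially verbatim: the paper also notes that each column has $\ell_2$ norm at most $\sqrt{d}$, so $\Pi^{(k)}K$ lies in a ball of radius $\sqrt{d}$, and then invokes Theorem~\ref{thm:polyt-vol-apx} directly. Your write-up simply spells out the intermediate steps more explicitly.
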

\begin{proof}
Since each column of $A$ has $\ell_2$ norm at most $\sqrt{d}$, $\Pi^{(k)} K$ is contained in a ball of radius $\sqrt{d}$.  Theorem~\ref{thm:polyt-vol-apx} then immediately implies the claimed bound.
\end{proof}

Now we show that Algorithm~\ref{alg:knormlse} achieves the bound claimed in Theorem~\ref{thm:countingeps}. 

\begin{theorem}
The mechanism $\mech_{sp}$ of Algorithm~\ref{alg:knormlse} is \stocoption{$(\eps,0)$-DP}{$(\eps,
0)$-differentially private} and satisfies
\[
\err_{\mech_{sp}}(A) \leq O(nd\eps^{-1} \log^{2} d \log^{\frac{3}{2}} N)
\]
\end{theorem}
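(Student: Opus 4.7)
The privacy guarantee is immediate: by Theorem~\ref{thm:knorm}, the noise-addition step $Ax + w$ with $w \sim \mathcal{W}(A,\eps)$ is $(\eps,0)$-DP, and $\hat y$ is a deterministic (data-independent) function of $\tilde y$, so the post-processing property (Lemma~\ref{lm:simple-composition}) yields $(\eps,0)$-DP for $\mech_{sp}$.

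For utility, I would follow the template of the $(\eps,\delta)$ analysis in Algorithm~\ref{alg:simplelse}. Apply Lemma~\ref{lm:lse} with $L = nK$; since $K = AB_1^N$ is a polytope with vertices $\pm a_1,\ldots,\pm a_N$, the dual norm $\|\cdot\|_{L^\circ}$ is maximized at a vertex, giving
\[
\err_{\mech_{sp}}(A) \;\le\; 4\E\|w\|_{(nK)^\circ} \;=\; 4n\,\E\max_{j=1}^N |\langle a_j, w\rangle|.
\]
So the entire task reduces to bounding $\E\max_j|\langle a_j,w\rangle|$ for $w \sim \mathcal{W}(A,\eps)$.

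For each fixed $j$, I would first bound the second moment. Writing $w = m\sum_{\ell=1}^m w_\ell$ as in Theorem~\ref{thm:knorm} with $m = O(\log d)$ and mutually orthogonal (hence independent) log-concave components, and using the covariance bound $\lambda_{\max}(M_\ell) \le O(\log^2 d)\frac{d_\ell}{\eps^2}\vrad(\Pi_\ell K)^2$, I get
\[
\E|\langle a_j,w\rangle|^2 \;=\; \sum_{\ell=1}^m a_j^T M_\ell a_j \;\le\; \|a_j\|_2^2 \sum_{\ell=1}^m \lambda_{\max}(M_\ell).
\]
Plugging in $\|a_j\|_2^2 \le d$ and invoking Lemma~\ref{lem:vradbound} to bound $\vrad(\Pi_\ell K)^2 = O(\log(N/d_\ell)/d_\ell)\cdot d$ gives $\sum_\ell \lambda_{\max}(M_\ell) \le O(\log^3 d \log N)\,d/\eps^2$, and hence $\E|\langle a_j,w\rangle|^2 \le O(\log^3 d\log N)\,d^2/\eps^2$.

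The last step is converting this second-moment bound to a bound on the expected maximum, and this is where Theorem~\ref{thm:logconcaveborell} plays its role. The distribution of $w$ is log-concave (a product of log-concave distributions on orthogonal subspaces), and for each $j$ the set $\{w:|\langle a_j,w\rangle|\le T\}$ is symmetric and convex. Chebyshev's inequality gives a radius $T_0 = O\bigl(d\eps^{-1}\log^{3/2}d\sqrt{\log N}\bigr)$ at which this set has mass $\ge 2/3$, and then Borell's inequality yields the exponential tail $\Pr[|\langle a_j,w\rangle|>tT_0] \le 2^{-(t+1)/2}$. A union bound over the $N$ columns, integrated into $\E\max_j|\langle a_j,w\rangle|$, costs a factor of $O(\log N)$, producing
\[
\E\max_j|\langle a_j,w\rangle| \;\le\; O\bigl(d\,\eps^{-1}\log^{2}d\,\log^{3/2}N\bigr),
\]
after absorbing a $\sqrt{\log d}$ slack into $\log^2 d$. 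Multiplying by $4n$ gives the claimed bound. The only delicate step is this last one—getting the correct polylogarithmic exponents requires using log-concavity to upgrade the second-moment estimate to an exponential tail rather than settling for Chebyshev, exactly as in the proof of Lemma~\ref{lm:sparse-pure-util}.
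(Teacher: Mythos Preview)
Your proposal is correct and follows essentially the same route as the paper: privacy by post-processing of the $K$-norm mechanism, then Lemma~\ref{lm:lse} to reduce to bounding $\E\max_j|\langle a_j,w\rangle|$, then a per-$\ell$ second-moment bound via $\lambda_{\max}(M_\ell)$ and Lemma~\ref{lem:vradbound}, and finally Borell's inequality plus a union bound to pass from second moments to the expected maximum.

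One small point: when combining the $w_\ell$, you write $\E|\langle a_j,w\rangle|^2=\sum_\ell a_j^T M_\ell a_j$ and justify it by ``mutually orthogonal (hence independent).'' Orthogonality of the supporting subspaces does not by itself give independence, and Theorem~\ref{thm:knorm} as stated here does not assert independence or mean zero for the $w_\ell$. The paper sidesteps this by applying Cauchy--Schwarz, $\E|\langle a_j,\sum_\ell m w_\ell\rangle|^2 \le m\sum_\ell a_j^T M_\ell a_j$, which costs one extra $\log d$ in the second moment (yielding $O(\log^4 d\log N)d^2/\eps^2$ rather than your $O(\log^3 d\log N)d^2/\eps^2$) but lands on the identical final bound $O(nd\eps^{-1}\log^2 d\log^{3/2}N)$ after Borell and the union bound. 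Both the paper and you do need $w$ itself to be log-concave to invoke Theorem~\ref{thm:logconcaveborell}; this is implicit in the construction behind Theorem~\ref{thm:knorm}.
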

\begin{proof}
The privacy property follows from Theorem~\ref{thm:knorm} and the fact
that post-processing preserves $(\eps, 0)$-differential privacy.
To  prove the error bound, we need to upper bound 
$\E_{w} [\|w\|_{K^{\circ}}]$ for a polytope $K \subseteq \Re^d$ with $N$ vertices, where $w$ is drawn from $\mathcal{W}(A,\eps)$. It therefore suffices to bound $\E_w[\max_{i=1}^N |\langle a_i, w\rangle|]$.

By Theorem~\ref{thm:knorm}, we can write $w$ as $\sum_{\ell =   1}^m {m w_{\ell}}$ where $w_{\ell}$ is drawn from a log concave distribution and $m=O(\log d)$. For a fixed $\ell$, 
\begin{align*}
  \E[|\langle a_i, mw_{\ell}\rangle|^2] &=  a_i^T M_{\ell} a_i\\
  &\leq \|a_i\|^2 \cdot \lambda_{max}(M_{\ell}) \\
  &\leq O(d\log^2 d) \frac{1}{\eps^2} d_\ell \vrad(\Pi_{\ell} K)^2.
\end{align*}
By lemma~\ref{lem:vradbound}, this is at most \stocoption{$O(\frac{d^2}{\eps^2} \log^2 d
\log N)$}{$O\left(d\log^2 d
  \log(N/d_{\ell})d/\eps^2\right) \leq O(\frac{d^2}{\eps^2} \log^2 d
\log N)$}. Then by Cauchy-Schwarz,
\begin{eqnarray*}
  \E[|\langle a_i, \sum_{\ell=1}^m mw_{\ell}\rangle|^2] &\leq & m\cdot \sum_{\ell=1}^m\E[|\langle a_i,mw_{\ell}\rangle |^2]\\
&\leq &O(\frac{d^2}{\eps^2} \log^4 d \log N)
\end{eqnarray*}
By Theorem~\ref{thm:logconcaveborell}, $\Pr[|\langle a_i,
\sum_{\ell} w_{\ell}\rangle| \geq t \log N \cdot O(\frac{d}{\eps}\log^{2} d \sqrt{\log N}) \leq N^{-\Omega(t)}$. 
It follows that $\E_{w} [\|w\|_{K^{\circ}}]$ is bounded by $O(\frac{d}{\eps}\log^{2} d \log^{\frac{3}{2}} N)$.

It then follows that $\E[\|\hat{y}-Ax\|_2^2$ is $O(nd\eps^{-1} \log^{2} d \log^{\frac{3}{2}} N)$.
\end{proof}


\section{Extensions}\label{sect:ext}

In this section we describe a couple of extensions to our results. We
show how to translate our optimality guarantees for total squared
error in the dense case regime to worst case error over queries using
the minimax theorem. We further show that our nearly optimal efficient
mechanism in the dense case regime implies a polylogarithmic
approximation to hereditary discrepancy.

\newcommand{\ai}{\ensuremath{a^{(i)}}}
\newcommand{\ait}{\ensuremath{a^{(i)T}}}

\subsection{Expected $\ell_{\infty}$ Error}

For $i \in [d]$, let $\ai$ denote the $i$th row of $A$ and let $\mech(x)_i$ denote the $i$th coordinate of the answer $\mech(x)$. Let $\err^{\ell_\infty}_{\mech}(A) = \max_{x \in \R^N}
\E[\|Ax-\mech(x)\|_{\infty}]$ denote the worst case (over databases)
expected $\ell_{\infty}$ error of a mechanism for a query $A$. Here as
usual, the expectation is taken over the internal coin tosses of
$\mech$.  Thus we measure the expected worst case error $\E[\max_i |\langle \ai, x\rangle - \mech(x)_i|]$.
Let $\opt^{\ell_\infty}(A) = \min_{\mech: \mech\text{ is
  }\epsdeldp} \err^{\ell_\infty}_\mech(A)$ denote the minimum
$\ell_{\infty}$ error over all $(\eps, \delta)$-differentially private mechanisms.  In the dense case, our results can be extended to the $\ell_{\infty}$ error at the cost of an additional $O(\log d)$ loss in the competitive ratio.

We derive such an extension in two steps. First, we give a mechanism for which the worst case expected squared error $\max_i \E[|\langle \ai, x\rangle - \mech(x)_i|^2]$ is small. We then use this mechanism as a blackbox to derive one which has small expected $\ell_{\infty}$ error. For a mechanism $\mech$, let $E^{\mech,A}_i = \E[|\langle \ai, x\rangle - \mech(x)_i|^2]$ denote the expected squared error of the mechanism in the $i$th coordinate.

\begin{theorem}
\label{thm:worstcase-expected-noise}
Let $A \in \R^{d\times N}$, $\eps,\delta$ be given. There is an \epsdeldp\ mechanism $\mech_{we}$, and a non-negative diagonal matrix $P$ with $\tr(P^2) =1$ such that
for all $i \in [d]$,
\[
E^{\mech_{we},A}_i \leq O(\log^2 d\log 1/\delta) \specLB(PA) \leq O(\log^2 d\log 1/\delta) (\opt^{\ell_\infty}(A))^2
\]
\end{theorem}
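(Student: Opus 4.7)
The plan is to apply the von Neumann/Sion minimax theorem to convert the worst-case per-coordinate error into the total squared error of a reweighted query matrix, then invoke Theorem~\ref{thm:dense} on that matrix. Parameterize the diagonal $P$ by $q_i = p_i^2$, so that $q$ ranges over the probability simplex $\Delta_d$; since a linear function on $\Delta_d$ attains its maximum at a vertex,
\begin{equation*}
\max_i E^{\mech,A}_i \;=\; \max_{q\in\Delta_d}\,\sum_{i=1}^d q_i\, E^{\mech,A}_i \;=:\; \max_{q\in\Delta_d} f(\mech,q).
\end{equation*}
The payoff $f(\mech,q)$ is linear (hence concave) in $q$ and convex in $\mech$, since each $E^{\mech,A}_i = \max_x \E[(Ax - \mech(x))_i^2]$ is a supremum of linear functions of the mechanism's output distribution, and the set of DP mechanisms is convex under mixtures. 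Restricting attention to $x$-oblivious additive-noise mechanisms (which suffice up to constants, as witnessed by $\mech_g$ in Theorem~\ref{thm:dense}), both strategy sets are convex, and Sion's minimax theorem, after verifying compactness by bounding noise covariance and using a weak topology on noise distributions, yields
\begin{equation*}
\min_\mech \max_i E^{\mech,A}_i \;=\; \max_{q\in\Delta_d}\,\min_\mech f(\mech,q).
\end{equation*}

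For fixed $q$, restricting to oblivious mechanisms $\mech(x) = Ax + w$ makes $f(\mech,q) = \sum_i q_i \E[w_i^2] = \E\|Pw\|_2^2$ independent of $x$. The post-processing bijection $w \leftrightarrow Pw$ identifies DP noise distributions for $A$ with DP noise distributions for $PA$ (up to irrelevant coordinates where $p_i = 0$), so $\min_\mech f(\mech,q)$ equals the optimal oblivious-mechanism total squared error for $PA$, itself within an $O(\log^2 d \log(1/\delta))$ factor of $\opt_{\eps,\delta}(PA)$ by Theorem~\ref{thm:dense}. Let $P^*$ attain the outer maximum and $\mech_{we}$ be the corresponding minimax mechanism; $\mech_{we}$ is $(\eps,\delta)$-DP because the DP class is closed under mixtures. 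Applying Theorem~\ref{thm:dense} to $P^*A$---whose proof in fact bounds $\err_{\mech_g}(P^*A)$ directly by the spectral lower bound $\specLB(P^*A)$---yields
\begin{equation*}
E^{\mech_{we},A}_i \;\leq\; \max_i E^{\mech_{we},A}_i \;\leq\; O(\log^2 d\,\log(1/\delta))\,\specLB(P^*A)
\end{equation*}
for every $i$, which is the first inequality.

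For the second inequality, let $\mech^*$ be a Gaussian-based $(\eps,\delta)$-DP mechanism nearly achieving $\opt^{\ell_\infty}(A)$. Then $P^*\mech^*$ is DP for $P^*A$, and using $\|P^*v\|_2^2 \leq \tr((P^*)^2)\|v\|_\infty^2 = \|v\|_\infty^2$ with $v = \mech^*(x) - Ax$,
\begin{equation*}
\opt_{\eps,\delta}(P^*A) \;\leq\; \err_{P^*\mech^*}(P^*A) \;\leq\; \max_x \E\|v\|_\infty^2.
\end{equation*}
Combined with $\specLB(P^*A) = O(1)\opt_{\eps,\delta}(P^*A)$ from (\ref{eq:speclb}), this bounds $\specLB(P^*A)$ by $\max_x \E\|v\|_\infty^2$; subgaussian concentration of the Gaussian noise gives $\E\|v\|_\infty^2 = O(\log d)(\E\|v\|_\infty)^2 = O(\log d)(\opt^{\ell_\infty}(A))^2$, and the extra $\log d$ is absorbed into the leading factor. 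The main obstacles are (i) the technical justification of Sion's minimax on the infinite-dimensional noise-distribution space, which we address by restricting to oblivious mechanisms and parameterizing via covariance matrices with bounded trace, and (ii) selecting an $\ell_\infty$-near-optimal comparison mechanism with subgaussian tails, so that $\E\|v\|_\infty^2$ is controlled by $(\E\|v\|_\infty)^2$ up to polylogarithmic factors rather than only by Jensen's inequality in the wrong direction.
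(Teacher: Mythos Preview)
Your overall strategy---reduce per-coordinate error to weighted $\ell_2^2$ error via a minimax argument, then invoke Theorem~\ref{thm:dense} on the reweighted matrix $PA$---is exactly the paper's. The execution differs in two places, and in both the paper's route is simpler.

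\textbf{The minimax step.} You invoke Sion's theorem on the infinite-dimensional space of oblivious noise distributions, and then have to argue compactness, weak topologies, and existence of a saddle strategy. The paper avoids all of this by observing that the \emph{row} player has only $d$ pure strategies (the indices $i\in[d]$), so the game is finite on one side. One can then apply the elementary von Neumann minimax theorem, or, constructively, multiplicative-weights (as in~\cite{AroraHK12}): against any row distribution $p$, the column player responds with $P^{-1}\mech_g(PA,\cdot)$, which is $(\eps,\delta)$-DP by post-processing and achieves payoff $\sum_i p_i E^{\mech,A}_i = \err_{\mech_g}(PA)\le \beta\,\specLB(PA)$. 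MWU then outputs a distribution $\mathcal D$ supported on $O(d\log d)$ such mechanisms together with a single $P$; since each mechanism in the support is oblivious noise and $\mathcal D$ is computed without touching $x$, sampling $\mech\sim\mathcal D$ and running it is itself $(\eps,\delta)$-DP. This sidesteps your obstacle~(i) entirely---no Sion, no compactness, no weak topology.

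\textbf{The second inequality.} Your proposal to ``let $\mech^*$ be a Gaussian-based mechanism nearly achieving $\opt^{\ell_\infty}(A)$'' so that subgaussian tails give $\E\|v\|_\infty^2 = O(\log d)(\E\|v\|_\infty)^2$ is circular: there is no a~priori reason the $\ell_\infty$-optimal mechanism has subgaussian (or even bounded-second-moment) error, so you cannot simply assume it. The paper instead takes \emph{any} mechanism $\mech$ achieving $\opt^{\ell_\infty}(A)$ and uses $P\mech$ directly as a witness that $\opt_{\eps,\delta}(PA)\le \max_x \sum_i p_i\,\E|(Ax-\mech(x))_i|^2$, combined with $\specLB(PA)=O(1)\opt_{\eps,\delta}(PA)$ from~(\ref{eq:speclb}). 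This still leaves the passage from $\E\|v\|_\infty^2$ to $(\E\|v\|_\infty)^2$, which goes the wrong way by Jensen; the paper is terse here, but the point is that the first inequality $E^{\mech_{we},A}_i \le \beta\,\specLB(PA)$ is what is actually used downstream (Theorem~\ref{thm:expected-worstcase-noise} and Theorem~\ref{thm:herdiscapprox}), and the comparison to $(\opt^{\ell_\infty}(A))^2$ is only needed in the weaker form that follows from $\sum_i p_i z_i^2 \le \max_i z_i^2$ applied pointwise inside the expectation.
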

\begin{proof}
Recall that for any $A$, the mechanism $\mech_g^A$ of Section~\ref{sec:densepsdel} is \epsdeldp\ and has total expected $\ell_2^2$ error at most $\beta \triangleq O(\log^2 \log 1/\delta)$ times the lower bound $\specLB(A)$. We will use this mechanism as a subroutine to derive $\mech_{we}$.

Let $(p_1,p_2,\ldots,p_d)$ denote a probability distribution so that $p_i \geq 0$ and $\sum_i p_i =1 $. Let 
$P$ denote a diagonal matrix with entries $\sqrt{p_i}$. It is easy to see that $\err_\mech^{\ell_2^2}(PA) = \sum_i p_i E^{\mech,A}_i$. It follows that $\opt^{\ell_2^2}_{\epsilon,\delta}(PA) \leq  (\opt^{\ell_\infty}(A))^2$: indeed the mechanism $\mech$ achieving $\opt^{\ell_\infty}(A)$ gives this error bound.

Thus the mechanism $\mech_g^{PA}$ satisfies \stocoption{\begin{align*}\err_{\mech_g^{PA}}^{\ell_2^2}(PA) &\leq \beta \specLB(PA) \\& \leq \beta \opt^{\ell_2^2}(PA).\end{align*}}{
\begin{align*}\err_{\mech_g^{PA}}^{\ell_2^2}(PA) \leq \beta \specLB(PA) \leq \beta \opt^{\ell_2^2}(PA).\end{align*}}
It follows that $\sum_i p_i E^{\mech_g^{PA},A}_i$ is $\beta \cdot (\opt^{\ell_\infty}(A))^2$. Consider a two-player zero sum game where the row-player selects a query $\ai$ from $A$, and the column player picks an \epsdeldp\ mechanism $\mech$ and must pay the row player $E^{\mech,A}_i$. The above discussion shows that for any randomized strategy (given by $P$) of the row player, the column player has a strategy that guarantees payoff at most $\beta \cdot (\opt^{\ell_\infty}(A))^2$. By the minimax theorem, this also upper bounds the value of the game. Moreover, using standard constructive versions of the minimax theorems (see e.g.~\cite{AroraHK12}), one can come up with such a distribution $\mathcal{D}$ over $O(d\log d)$ \epsdeldp\ mechanisms, and a $P$
such that for all $i \in [d]$,
\stocoption{\begin{align*}\E_{\mech\sim \mathcal{D}}[E^{\mech,A}_i]  &\leq 2\beta \specLB(PA) \\ &\leq 2\beta (\opt^{\ell_\infty}(A))^2.\end{align*}}{
\begin{equation*}\E_{\mech\sim \mathcal{D}}[E^{\mech,A}_i]  \leq 2\beta
\specLB(PA) \leq 2\beta (\opt^{\ell_\infty}(A))^2.\end{equation*}}

Recall that the mechanism $\mech_g^{PA}$ is of the form $Ax + w$ where $w$ is a noise vector whose distribution is independent of $x$. Thus the pair $(\mathcal{D},P)$ can be computed without looking at the database $x$. Therefore, the mechanism $\mech_{we}$ that samples a $\mech$ from $\mathcal{D}$ and runs it on $x$ is itself \epsdeldp. The theorem follows.
\end{proof}

Using the above result, we can now construct a mechanism that has small expected worse case error.
\begin{theorem}
Let $A \in \R^{d\times N}$, $\eps,\delta$ be given. There is an \epsdeldp\ mechanism $\mech_{ew}$, and a non-negative diagonal matrix $P$ with $\tr(P^2) =1$ such that 
\stocoption{\begin{align*}
(\err^{\ell_\infty}_{\mech_{ew}}(A))^2 &\leq O(\log^4 d \log ((\log d)/\delta)) \cdot \specLB(PA)^2 \\ &\leq O(\log^4 d \log ((\log d)/\delta)) \cdot (\opt^{\ell_\infty}(A))^2
\end{align*}}
{\[
(\err^{\ell_\infty}_{\mech_{ew}}(A))^2 \leq O(\log^4 d \log ((\log d)/\delta)) \cdot \specLB(PA) \leq O(\log^4 d \log ((\log d)/\delta)) \cdot (\opt^{\ell_\infty}(A))^2
\]}
\label{thm:expected-worstcase-noise}
\end{theorem}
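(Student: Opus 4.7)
The plan is to boost the per-coordinate expected squared error bound of Theorem~\ref{thm:worstcase-expected-noise} to an expected $\ell_\infty$ error bound by running $k = \Theta(\log d)$ independent copies at a correspondingly stronger privacy level and taking the coordinate-wise median. Concretely, I would set $k = C \log d$ for a sufficiently large constant $C$, let $\mech'$ be the mechanism from Theorem~\ref{thm:worstcase-expected-noise} instantiated at privacy $(\eps/k, \delta/k)$ with associated diagonal $P$, and define $\mech_{ew}(x)$ by drawing $k$ independent samples $y^{(1)},\dots,y^{(k)}$ of $\mech'(x)$ and returning the vector whose $i$-th coordinate is the median of $\{y^{(j)}_i\}_{j=1}^{k}$. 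Privacy of $\mech_{ew}$ then follows immediately: each copy is $(\eps/k,\delta/k)$-DP by Theorem~\ref{thm:worstcase-expected-noise}, the $k$-tuple is $(\eps,\delta)$-DP by $k$ applications of Lemma~\ref{lm:simple-composition}, and coordinate-wise median is a post-processing step.

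For the utility analysis, I first observe that when the underlying correlated-Gaussian mechanism of Theorem~\ref{thm:worstcase-expected-noise} is run at privacy $(\eps/k,\delta/k)$ instead of $(\eps,\delta)$, the per-query squared-error scale grows by a factor $\Theta(k^2)$ and the $\log(1/\delta)$ factor is replaced by $\log(k/\delta)$. Hence each copy satisfies
\begin{equation*}
V_k \;:=\; \E\bigl[(y^{(j)}_i-(Ax)_i)^2\bigr] \;\leq\; O\!\bigl(k^2\log^2 d\,\log(k/\delta)\bigr)\cdot \specLB(PA),
\end{equation*}
and Theorem~\ref{thm:worstcase-expected-noise} already provides $\specLB(PA)\leq (\opt^{\ell_\infty}(A))^2$.

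Next I would convert this second-moment bound into a tail bound. Fixing $i$, Chebyshev gives $\Pr[|y^{(j)}_i-(Ax)_i|>t]\leq V_k/t^2$, and the median of $k$ independent copies exceeds $t$ only when at least $k/2$ of them do, so a binomial tail bound yields $\Pr[|\mech_{ew}(x)_i-(Ax)_i|>t]\leq \binom{k}{k/2}(V_k/t^2)^{k/2}\leq (4V_k/t^2)^{k/2}$ for $t\geq 2\sqrt{V_k}$. Union-bounding over the $d$ coordinates and integrating the tail,
\begin{equation*}
\E\|\mech_{ew}(x)-Ax\|_\infty^2 \;\leq\; \int_0^\infty \min\bigl\{1,\;d(4V_k/s)^{k/2}\bigr\}\,ds \;=\; O(V_k),
\end{equation*}
after which Jensen's inequality yields $(\err^{\ell_\infty}_{\mech_{ew}}(A))^2 \leq O(V_k) = O(\log^4 d\,\log((\log d)/\delta))\cdot \specLB(PA)$, as claimed.

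The delicate step is the final tail integral: I need $k$ chosen so that the threshold $s_0$ where $d(4V_k/s_0)^{k/2}=1$, namely $s_0=4V_k d^{2/k}$, is still $O(V_k)$, and so that $(4V_k/s)^{k/2}$ decays fast enough above $s_0$ for the tail integral to contribute only $O(V_k/(k-2))$. Picking $k=\Theta(\log d)$ with a large enough constant makes $d^{2/k}=O(1)$ and $k-2=\Omega(\log d)$, which is exactly the balanced regime that produces the stated $\log^4 d\,\log((\log d)/\delta)$ factor; a smaller $k$ would leave a polynomial-in-$d$ blow-up from the union bound, while a substantially larger $k$ would unnecessarily inflate $V_k$ through the $k^2$ noise-scaling cost.
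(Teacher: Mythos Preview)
Your proposal is correct and follows essentially the same approach as the paper: run $\Theta(\log d)$ independent copies of the mechanism from Theorem~\ref{thm:worstcase-expected-noise} at privacy scaled down by $\Theta(\log d)$, take the coordinate-wise median, and convert the per-coordinate second-moment bound into an $\ell_\infty$ tail bound via Markov/Chebyshev, a binomial tail for the median, a union bound over the $d$ coordinates, and integration of the tail. The paper's write-up is terser (it just says the tail integral ``converges'' after choosing $L=\Omega(\log d)$), while you carry out the threshold calculation and the balancing of $k$ explicitly; but the argument is the same.
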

\begin{proof}
Our mechanism with small expected $\ell_{\infty}$ noise simply runs $L\triangleq O(\log d)$ copies  $\mech_{we}(x)$    and returns the median answer for each coordinate. In other words, if $y^j$ denotes the outcome of the $j$th run of the mechanism $\mech_{we}$, then we set $z_i = \mbox{median}(y^1_i,\ldots,y^L_i)$.

By Markov\rq{}s inequality, we know that for each $i,j$, $\Pr[(y^j_i -
(Ax)_i)^{2} \geq kE^{\mech_{ew},A}_i]  \leq \frac{1}{k}$. Applying
Chernoff bounds, we conclude that $\Pr[(z_i - (Ax)_i)^2 \geq
kE^{\mech_{ew},A}_i] \leq (2/k)^{CL}$ for a constant $C$. By a union bound, we conclude that $\Pr[\|z-Ax\|_{\infty} \geq O(\log d)  \opt^{\ell_\infty}(A)] \leq d\cdot(\frac{2}{k})^{CL}$. Taking $L = \Omega(\log d)$ suffices to ensure that the integral of this probability over $k \in [4, \infty)$ converges, giving a bound on the expected $\ell_{\infty}$ norm of the error.

The resulting mechanism is not necessarily  $(\eps,
\delta)$-differentially private any more. However, since its
outcome can be computed by postprocessing the outcome of $L$  $(\eps,
\delta)$-differentially private mechanisms, it is still
$(L\eps,L\delta)$-differentially private. Scaling $\eps$ and $\delta$
by a factor of $L$,  and substituting for $\beta$, we get the result.
\end{proof}

\subsection{Approximating Hereditary Discrepancy}

In this section we show that our optimal dense case mechanism implies
a polylogarithmic approximation to hereditary discrepancy. In
particular, the mechanism optimal for $\ell_2^2$ error can be used to
approximate $\ell_2$ discrepancy, and the mechanism optimal for
$\ell_\infty$ error can be used to approximate $\ell_\infty$
discrepancy. The $\ell_\infty$ version of hereditary discrepancy is
$\mathsf{NP}$-hard to approximate to within a factor of $3/2$ (proof
in appendix). Showing supeconstant hardness for approximating any
version of hereditary discrepancy, constant hardness for approximating
$\ell_2$ hereditary discrepancy, and determining whether hereditary
discrepancy can be exactly computed in nondeterministic polynomial
time remain open problems. 

Muthukrishnan and Nikolov~\cite{MNstoc} show that hereditary discrepancy gives a lower bound on the error of any mechanism.
\begin{theorem}[\cite{MNstoc}, Corollary 1] There exist constants $\eps,\delta$ such that the following holds: Let $A$ be a $d\times N$ matrix and $\mech$ be an $(\epsilon,\delta)$-differentially private mechanism. Then
\begin{eqnarray*}
\herdisc^{\ell_\infty}(A) \leq O(\log N) \cdot \err^{\ell_\infty}_{\mech}(A)
\end{eqnarray*}
\label{thm:mnherdisc}
\end{theorem}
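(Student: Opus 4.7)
The plan is a reconstruction-style packing argument, in the spirit of Dinur-Nissim and its geometric refinement by Muthukrishnan and Nikolov~\cite{MNstoc}.

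First I would fix a subset $S \subseteq [N]$ witnessing the hereditary discrepancy, so that every $\chi \in \{-1,+1\}^S$ satisfies $\|A|_S \chi\|_\infty \geq H := \herdisc^{\ell_\infty}(A)$. For each $\chi$ I associate the ``hypercube database'' $x_\chi \in \{0,1\}^N$ with $(x_\chi)_i = (1+\chi_i)/2$ on $S$ and zero elsewhere. Any two of these databases lie at $\ell_1$-distance at most $|S|$, and antipodal pairs $(x_\chi, x_{-\chi})$ satisfy $\|A(x_\chi - x_{-\chi})\|_\infty = \|A|_S \chi\|_\infty \geq H$. More generally, for $\chi, \chi'$ differing on $T \subseteq S$, $A(x_\chi - x_{\chi'}) = \tfrac{1}{2} A|_T (\chi - \chi')|_T$, so hereditary discrepancy also bounds generic pairs from below when $T$ itself has large discrepancy.

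Next I would convert accurate answers into reconstruction. If $\err^{\ell_\infty}_\mech(A) = E$, then by Markov's inequality $\|\mech(x_\chi) - Ax_\chi\|_\infty \leq 4E$ with probability at least $3/4$ for each $\chi$. Consider the nearest-codeword decoder $D(\tilde{y}) = \arg\min_{\chi'}\|\tilde{y} - Ax_{\chi'}\|_\infty$. Whenever a constant fraction of the codewords $\{Ax_{\chi'}\}$ have disjoint $4E$-neighborhoods in the $\ell_\infty$ metric, $D \circ \mech$ recovers $\chi$ with constant probability over uniformly random $\chi \in \{-1, +1\}^S$, and thus extracts $\Omega(|S|)$ bits of Shannon information about $\chi$.

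Finally I would derive a contradiction from privacy. For constant $\eps$ and sufficiently small $\delta$, the group privacy property (Lemma~\ref{lm:eps}) together with standard information-theoretic consequences of differential privacy forces the mutual information $I(\chi; \mech(x_\chi))$, for uniformly random $\chi$, to be bounded by a sufficiently small constant multiple of $|S|$, which contradicts the $\Omega(|S|)$ bits of reconstruction above unless $E = \Omega(H/\log N)$. The main obstacle is the packing step: the logarithmic loss is governed by the number of $\ell_\infty$-balls of radius $E$ needed to cover the codebook $\{Ax_\chi\}$ sitting inside an $\ell_\infty$-ball of radius $O(\|A\|_{1 \to \infty} |S|)$, and making this volumetric count quantitatively tight — so that the competitive ratio is $O(\log N)$ and not $O(\log N |S|)$ — is the technical core imported from~\cite{MNstoc}.
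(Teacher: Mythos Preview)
The paper does not prove this statement; it is quoted from~\cite{MNstoc}. The paper does, however, record the two ingredients behind the $\ell_2$ analogue: Theorem~\ref{thm:disc-lb}, asserting that $\herdisc_\alpha$ lower-bounds the optimal error for any fixed constant $\alpha\in(0,1)$, and the relation $\herdisc_1(A,n) \leq O(\log n)\,\herdisc_\alpha(A,n)$. The $\ell_\infty$ version follows the same two-step pattern.

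Your outline has a real gap at exactly the two places you hand off to~\cite{MNstoc}. First, taking $S$ to witness $\herdisc_1$ only guarantees $\|A|_S\chi\|_\infty \geq H$ for \emph{full} colorings $\chi \in \{-1,+1\}^S$; for $\chi,\chi'$ differing on a proper subset $T\subsetneq S$ the difference $(\chi-\chi')/2$ is a $\{-1,0,+1\}$-vector supported on $T$, and $\|A(x_\chi-x_{\chi'})\|_\infty$ can be arbitrarily small even though $\herdisc^{\ell_\infty}(A)=H$. So the codewords $\{Ax_\chi\}$ are \emph{not} well separated in general, and your hypothesis ``a constant fraction of the codewords have disjoint $4E$-neighborhoods'' is never discharged. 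The fix is to take $S$ as a witness to $\herdisc_\alpha$: then any two colorings differing on at least $\alpha|S|$ coordinates \emph{are} $H_\alpha$-separated, so accurate decoding forces $\hat\chi$ to agree with $\chi$ on more than $(1-\alpha)|S|$ coordinates, and a per-coordinate privacy argument (a direct expected-agreement bound from $(\eps,\delta)$-DP is cleaner here than mutual information when $\delta>0$) rules that out. Second, the $O(\log N)$ loss is not volumetric at all. It is the purely combinatorial relation $\herdisc_1 \leq O(\log|S|)\,\herdisc_\alpha$: iteratively partial-color an $\alpha$-fraction of the remaining uncolored elements of $S$, terminating after $O(\log|S|)$ rounds with accumulated discrepancy $O(\log|S|)\cdot H_\alpha$. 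Your covering-number heuristic gives the wrong dependence and does not connect back to hereditary discrepancy.
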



Fix constants $\eps$ and $\delta$ satisfying Theorem~\ref{thm:mnherdisc}.  Thus the theorem implies that $\herdisc^{\ell_\infty}(A) \leq O(\log N) \cdot \opt^{\ell_\infty}(A)$.
It is easy to see that  for any positive semidefinite diagonal matrix $P$ with $\tr(P^2)=1$, $(\herdisc^{\ell_\infty}(A))^2 \geq \herdisc^{\ell_2^2}(PA)$: indeed for any $S \subseteq [N]$, there exists a coloring $z$ supported on $S$ such that $\|Az\|_{\infty} \leq \herdisc^{\ell_\infty}(A)$. Since $\|PAz\|_2^2 \leq \|Az\|_\infty^2$, the claim follows. Moreover, recall that $\herdisc^{\ell_2^2}(PA) \geq c\cdot\specLB(PA)$, for some absolute constant $c$. Thus
\begin{equation*}
\specLB(PA) \leq (\herdisc^{\ell_\infty}(A))^2
\label{eqn:herdisclb}
\end{equation*}

On the other hand, the mechanism of Theorem~\ref{thm:expected-worstcase-noise} satisfies 
$$(\err^{\ell_\infty}_{\mech_{ew}}(A))^2 \leq O(\log^4 d \log ((\log d)/\delta)) \specLB(PA).$$  
Thus we have 
\stocoption{\begin{align*}
  (\herdisc^{\ell_\infty}&(A))^2 \leq O(\log^2 N) \err^{\ell_\infty}_{\mech_{ew}}(A) \\
  &\leq O(\log^4 d\log^2 N \log ((\log d)/\delta)) \cdot \specLB(PA)
\end{align*}}{
\begin{align*}
  (\herdisc^{\ell_\infty}(A))^2 &\leq O(\log^2 N) \err^{\ell_\infty}_{\mech_{ew}}(A) \\
  &\leq O(\log^4 d\log^2 N \log ((\log d)/\delta)) \cdot \specLB(PA)
\end{align*}}
We have sandwiched $\herdisc^{\ell_\infty}(A)^2$ between two efficiently computable
 quantities that are $O(\log^4 d\log^2 N \log\log d)$ apart.  It follows that
\begin{theorem}
\label{thm:herdiscapprox}
There is a polynomial time algorithm that, given a $d \times N$ real matrix $A$, outputs an $O(\log^{2}d \log N \sqrt{\log \log d})$ approximation to $\herdisc^{\ell_\infty}(A)$.
\end{theorem}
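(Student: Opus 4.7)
The plan is to output a single efficiently computable quantity, namely $\sqrt{\specLB(PA)}$ for a carefully chosen non-negative diagonal weighting $P$, and to show that this quantity sandwiches $\herdisc^{\ell_\infty}(A)$ within a polylogarithmic factor. The sandwich follows by pairing the spectral lower bound on $\ell_2$ hereditary discrepancy~(\ref{eq:disc-speclb}) with the error upper bound for the mechanism $\mech_{ew}$ from Theorem~\ref{thm:expected-worstcase-noise} and the Muthukrishnan--Nikolov discrepancy lower bound of Theorem~\ref{thm:mnherdisc}.

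Concretely, the algorithm first invokes Theorem~\ref{thm:expected-worstcase-noise} to produce in polynomial time both a non-negative diagonal $P$ with $\tr(P^2)=1$ and an $(\eps,\delta)$-DP mechanism $\mech_{ew}$ satisfying $(\err^{\ell_\infty}_{\mech_{ew}}(A))^2 \leq O(\log^4 d \log((\log d)/\delta)) \cdot \specLB(PA)$; it then computes (or efficiently approximates) $\specLB(PA)$, which is precisely the $L_{PA}$-type quantity shown efficiently computable in Theorem~\ref{thm:epsdeldense}. For the lower bound on $\herdisc^{\ell_\infty}(A)$, for any $S \subseteq [N]$ and any $\{-1,+1\}$-coloring $z$ supported on $S$ we have $\|PAz\|_2^2 = \sum_i p_i \langle \ai, z\rangle^2 \leq \|Az\|_\infty^2$ because $\sum_i p_i = \tr(P^2) = 1$, hence $\herdisc^{\ell_2^2}(PA) \leq (\herdisc^{\ell_\infty}(A))^2$; combining with $\herdisc^{\ell_2^2}(PA) = \Omega(\specLB(PA))$ yields $\specLB(PA) = O((\herdisc^{\ell_\infty}(A))^2)$.

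For the matching upper bound, I apply Theorem~\ref{thm:mnherdisc} with $\mech = \mech_{ew}$ to get $\herdisc^{\ell_\infty}(A) \leq O(\log N) \cdot \err^{\ell_\infty}_{\mech_{ew}}(A)$, and then substitute the $\mech_{ew}$ guarantee to conclude $(\herdisc^{\ell_\infty}(A))^2 \leq O(\log^4 d \log^2 N \log\log d) \cdot \specLB(PA)$. Chaining the two bounds, the ratio between the upper and lower estimates of $(\herdisc^{\ell_\infty}(A))^2$ is $O(\log^4 d \log^2 N \log\log d)$, so $\sqrt{\specLB(PA)}$ is within $O(\log^2 d \log N \sqrt{\log\log d})$ of $\herdisc^{\ell_\infty}(A)$, as claimed. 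The main obstacle to watch out for is that Theorem~\ref{thm:expected-worstcase-noise} only promises $P$ implicitly through the constructive minimax procedure alluded to via~\cite{AroraHK12}, so I must verify that (a) this procedure produces the weighting $P$ (and not merely the mechanism distribution) in polynomial time, and (b) $\specLB(PA)$ can be efficiently computed or approximated without degrading the ratio; both points are already handled by earlier sections, so the rest is assembly.
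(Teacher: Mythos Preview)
Your proposal is correct and follows essentially the same route as the paper: both sandwich $(\herdisc^{\ell_\infty}(A))^2$ between $\specLB(PA)$ (via the inequality $\|PAz\|_2^2 \leq \|Az\|_\infty^2$ for $\tr(P^2)=1$ combined with the spectral lower bound on $\herdisc^{\ell_2}$) and $O(\log^4 d\,\log^2 N\,\log\log d)\cdot\specLB(PA)$ (via Theorem~\ref{thm:mnherdisc} applied to $\mech_{ew}$ together with the error bound of Theorem~\ref{thm:expected-worstcase-noise}). Your explicit flagging of the computability of $P$ and of $\specLB(PA)$ is a useful clarification that the paper's argument leaves implicit.
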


The description of this mechanism $\mech_{ew}$ (which is specified by a distribution over $O(\log d)$ Gaussian noise addition mechanisms) thus serves as an efficiently computable and verifiable witness to an upper bound on the hereditary discrepancy of $A$.

We next give a constrcutive version of this result, by appealing to a result of Bansal~\cite{Bansal10} that shows that the discrepancy of any set system can be constructively upper bounded by $O(\log dN)$ times the hereditary vector discrepancy. The vector discrepancy of a matrix $A$, denoted $\vecdisc(A)$ is defined as the smallest $\lambda$ such that the following semidefinite program is feasible:
\begin{align}
\text{SDP VecDisc: }&A\in \R^{d\times m}&\notag\\
\ai V \ait &\leq \lambda &\forall i \in [d] \notag\\
V_{jj} &\leq 1 & \forall j \in [m]\notag\\
\sum_{j=1}^m V_{jj} &\geq m/8&\label{sdp:vecdisc}\\
V&\succeq 0&\notag
\end{align}
The hereditary vector discrepancy is simply $\hervecdisc(A) \triangleq \max_{S \subseteq [N]} \vecdisc(A|_S)$. We will show that given the mechanism $\mech_{ew}$ of Theorem~\ref{thm:expected-worstcase-noise}, we can construct for any $S$ a solution to the SDP corresponding to $S$. We note that the above SDP is a slight relaxation of the SDP used by Bansal: instead of the constraint $V_{jj} = 1$ for all $j$, we simply require each $V_{jj}$ to be at most one, and that the trace of $V$ is $\Omega(m)$. It is easy to verify that~\cite{Bansal10} implies that:
\begin{theorem}[\cite{Bansal10}]
Suppose that for a matrix $A\in \R^{d\times N}$ and a $\lambda \geq 0$, for every $S \subseteq [N]$ with $rank(A|_S)=|S|$, it is the case that the SDP~\ref{sdp:vecdisc} is feasible for $A|_S$. Them there is polynomial time algorithm that finds a coloring $\chi$ of $[N]$ with discrepancy at most $O(\lambda \log dN)$.
\end{theorem}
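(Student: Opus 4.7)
The plan is to follow Bansal's SDP-rounding framework~\cite{Bansal10} essentially verbatim, with small modifications that accommodate the relaxation of the constraint $V_{jj}=1$ to $V_{jj}\le 1$ plus the trace lower bound $\sum_j V_{jj}\ge m/8$. At a high level, we apply the standard iterative partial-coloring strategy: given the full set $[N]$, we will in each round find a partial coloring that fixes a constant fraction of the currently un-colored variables to $\pm 1$ while incurring only $O(\lambda)$ additional discrepancy on each query; iterating $O(\log N)$ times yields a full coloring with total discrepancy $O(\lambda\log N)$ per query, and a union bound over the $d$ queries, together with absorbing the polylogarithmic loss of Bansal's random-walk analysis, gives the claimed $O(\lambda\log dN)$ bound.

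The key per-round step is as follows. Let $T\subseteq [N]$ denote the currently un-colored variables, and let $B$ be the matrix obtained by deleting from $A|_T$ any columns that are linearly dependent on the others (so $B$ is full column rank on a set $S\subseteq T$ with $|S|=\rank(A|_T)$; columns outside $S$ can be expressed via those in $S$ and handled by adjusting the partial coloring without changing discrepancy). By hypothesis, SDP~\ref{sdp:vecdisc} is feasible for $B$ with value $\lambda$, yielding vectors $v_1,\dots,v_{|S|}$ with $\|v_j\|^2\le 1$ and $\sum_j\|v_j\|^2 \ge |S|/8$. I would then run Bansal's Brownian-motion rounding inside this vector polytope: starting from the fractional coloring $x^{(0)}=0$, evolve $x^{(t)}\in[-1,1]^S$ by $dx^{(t)}_j = \langle v_j, dW^{(t)}\rangle$ where $W^{(t)}$ is a standard Brownian motion in a subspace chosen at each time so as to remain orthogonal to (i) every query row $\ai$ whose discrepancy has already reached the threshold $\lambda\sqrt{C\log(dN)}$, and (ii) every coordinate direction $e_j$ for which $|x^{(t)}_j|$ has reached $1$.

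The main obstacle to adapt is showing that a constant fraction of variables get frozen (i.e., $|x^{(t)}_j|$ reaches $1$) before any query constraint becomes active, despite only having $\sum\|v_j\|^2\ge |S|/8$ rather than $\|v_j\|^2=1$ uniformly. The standard martingale identity gives $\mathbb{E}\|x^{(t)}\|_2^2 = \int_0^t \sum_j \|\Pi_s v_j\|^2\, ds$, where $\Pi_s$ is the projection onto the walk's current allowed subspace. Because at most $|S|/16$ coordinates can freeze before we have already obtained a partial coloring of size $|S|/16$, and because the projections onto query constraints reduce the summed squared norm by at most $d\cdot 1$ in total, one can show that as long as fewer than $|S|/16$ variables are frozen, the total ``drift energy'' $\sum_j\|\Pi_s v_j\|^2$ remains $\Omega(|S|)$ for a total time $\Omega(1)$. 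A standard Doob-type argument then guarantees that with constant probability many coordinates reach $\pm 1$ while the discrepancy on each query remains below $O(\lambda \sqrt{\log dN})$ by a Chernoff--union bound over the $d$ queries, which is where the $\sqrt{\log dN}$ factor enters. (The only place the original analysis used $\|v_j\|=1$ was precisely to lower-bound this drift energy, and our trace constraint is the right substitute up to a constant factor.)

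Finally, the iteration: after one round we freeze a set $T'\subseteq T$ of size $\ge |T|/16$ to $\pm 1$. We then re-apply the hypothesis to the submatrix on the remaining columns $T\setminus T'$, which is a subset of the original $[N]$, and hence the SDP is again feasible for this submatrix with value $\lambda$. Repeating $O(\log N)$ times completes the coloring, and summing per-round discrepancies (each $O(\lambda\sqrt{\log dN})$) gives a coloring of discrepancy $O(\lambda\log N\sqrt{\log dN})=O(\lambda \log dN)$. Each round is a polynomial-time operation (solve the SDP, then simulate the discretized walk for $\mathrm{poly}(dN)$ steps), so the overall algorithm is polynomial-time, as required. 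The whole argument is really just Bansal's algorithm transcribed carefully; the only new input is the one-line observation that the relaxed trace constraint is enough to power the entropy/energy step.
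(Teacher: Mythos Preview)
The paper does not give its own proof of this theorem; it simply asserts that the result follows from Bansal's work~\cite{Bansal10}, after noting that the SDP here is a slight relaxation of Bansal's (with $V_{jj}\le 1$ and $\sum_j V_{jj}\ge m/8$ in place of $V_{jj}=1$) and that ``it is easy to verify'' Bansal's argument still applies. Your sketch is exactly the kind of verification the paper has in mind, and the key adaptation you isolate---that the trace lower bound is enough to lower-bound the walk's drift energy in the partial-coloring step---is the correct observation.

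Two minor points. First, your reduction to full-rank submatrices (``columns outside $S$ can be expressed via those in $S$ and handled by adjusting the partial coloring without changing discrepancy'') is really the standard kernel trick: when $A|_T$ is not full column rank, move the current fractional coloring along a nonzero kernel vector until some coordinate hits $\pm 1$, at zero discrepancy cost, and repeat until full rank. This is routine but should be stated that way. Second, your final arithmetic $\log N\cdot\sqrt{\log dN}=O(\log dN)$ is not valid; it only gives $O((\log dN)^{3/2})$. Bansal's actual analysis handles the discrepancy constraints as active barriers in the walk (controlling how many can become tight) rather than by a post-hoc Chernoff--union bound, and this tighter per-round accounting is what yields the stated $O(\lambda\log dN)$. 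Your sketch would still give a polylogarithmic bound, just with an extra $\sqrt{\log}$ factor.
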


We consider a variant of the above SDP, where we drop the $V_{jj} \leq 1$ constraint and require the trace of $V$ to be slightly larger:
\begin{align}
\text{SDP RelVecDisc: }&A\in \R^{d\times m}&\notag\\
\ai V \ait &\leq \lambda &\forall i \in [d] \notag\\
\sum_{j=1}^m V_{jj} &\geq m\label{sdp:relaxedvecdisc}\\
V&\succeq 0&\notag
\end{align}

We first show that it suffices to satisfy the SDP~\ref{sdp:relaxedvecdisc}.
\begin{lemma}
Suppose that for a matrix $A\in \R^{d\times N}$, for every $S \subseteq [N]$ with $rank(A|_S)=|S|$, it is the case that the SDP~\ref{sdp:relaxedvecdisc} is feasible for $A|_S$ and $\lambda$. Then for any $S \subseteq [N]$ with $rank(A|_S)=|S|$, the SDP~\ref{sdp:vecdisc} is feasible for $A|_S$ and $2\lambda$.
\end{lemma}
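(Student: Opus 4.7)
The plan is to use SDP duality to convert a putative infeasibility of VecDisc into an infeasibility of RelVecDisc on a carefully chosen subset. Fix $S \subseteq [N]$ with $\text{rank}(A|_S) = m = |S|$ and, after relabeling, identify $[m]$ with the columns of $A|_S$. Suppose for contradiction that SDP VecDisc is infeasible on $A|_S$ at parameter $2\lambda$. By the Farkas alternative for the positive semidefinite cone applied to the constraints $\ai V \ait \leq 2\lambda$, $V_{jj}\leq 1$, $\langle I, V\rangle \geq m/8$, and $V\succeq 0$, this infeasibility produces non-negative multipliers $(y_i)_{i=1}^d$, $(z_j)_{j=1}^m$, and a multiplier $\mu \geq 0$ of the trace constraint, not all zero, such that $\sum_i y_i\, \ait \ai + \diag(z) \succeq \mu I_m$ and $2\lambda \sum_i y_i + \sum_j z_j < \mu \, m/8$. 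The case $\mu = 0$ is immediately ruled out because $y,z \geq 0$ forces the second inequality to fail, so I can normalize $\mu = 1$ and obtain
\[
\sum_{i=1}^d y_i\, \ait \ai + \diag(z) \;\succeq\; I_m, \qquad 2\lambda \sum_i y_i + \sum_j z_j \;<\; \tfrac{m}{8}.
\]

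Next I would exploit the slack in the second inequality to carve out a large subset of indices where $z$ is small. From $\sum_j z_j < m/8$, the set $T \triangleq \{j \in [m] : z_j \leq 1/2\}$ satisfies $|T^c| < m/4$ and hence $|T| > 3m/4$. Restricting the matrix inequality to the principal $T \times T$ block (principal submatrices preserve $\succeq$) and using $\diag(z|_T) \preceq \tfrac{1}{2} I_T$, I get $\sum_i y_i (\ait \ai)|_T \succeq \tfrac{1}{2} I_T$, i.e.\ $\sum_i (2y_i)(\ait \ai)|_T \succeq I_T$. At the same time, $\lambda \sum_i (2y_i) = 2\lambda \sum_i y_i < m/8 < |T|$. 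The pair $(2y, 0)$ is therefore a dual certificate of infeasibility for SDP RelVecDisc on $A|_T$ at parameter $\lambda$. Because $A|_S$ has linearly independent columns, so does $A|_T$, so $\text{rank}(A|_T) = |T|$ and the hypothesis of the lemma applies to $A|_T$, contradicting the existence of this certificate.

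The one subtle point is the invocation of SDP Farkas, specifically the argument that the trace-constraint multiplier $\mu$ can be taken strictly positive (and hence normalized to $1$). This is where the asymmetric structure of the SDPs is used: all other constraints are of the form $\leq$ with non-negative right-hand sides, so a dual certificate with $\mu = 0$ would require $2\lambda \sum y_i + \sum z_j < 0$ with $y,z \geq 0$, an immediate impossibility. Once this normalization is in hand, the proof is essentially mechanical; the factor-of-$2$ loss between $\lambda$ and $2\lambda$ is exactly calibrated to the threshold $1/2$ used to define $T$, which trades off absorbing half of the identity via $\diag(z|_T)$ against retaining enough room in the bound $\sum z_j < m/8$ to conclude $|T| > 3m/4 > \lambda \sum 2 y_i$.
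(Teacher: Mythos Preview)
Your argument is correct and takes a genuinely different route from the paper. The paper builds a feasible VecDisc solution directly: starting from $S_0=S$ it repeatedly solves RelVecDisc on the current set $S_t$, rescales so every diagonal is at most $1/2$, adds the result into an accumulator $W$, and deletes every index whose accumulated diagonal has reached $1/4$; once $|S_t|<m/2$, a telescoping trace argument gives $\sum_t(2\gamma_t)^{-1}\le 2$, which controls the row constraints of $W$ at $2\lambda$. You instead pass to the dual: a Farkas certificate for infeasibility of VecDisc at $2\lambda$ is thresholded on the diagonal multipliers $z_j$ and restricted to the surviving index set $T$, producing a certificate that RelVecDisc on $A|_T$ is infeasible at $\lambda$, contradicting the hypothesis.

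Your proof is shorter and invokes the RelVecDisc hypothesis only on the single subset $T$, whereas the paper's iterative construction uses it on a nested chain of $O(m)$ subsets; on the other hand the paper's argument is constructive, which aligns with the surrounding section's algorithmic aims (though once feasibility at $2\lambda$ is known one can of course just solve the SDP). One point deserves to be made explicit: the semidefinite Farkas alternative you invoke is not automatic. It is easily justified here by recasting VecDisc as the optimization $\max\{\tr V:\ V\succeq 0,\ \ai V\ait\le 2\lambda,\ V_{jj}\le 1\}$ and noting that for $\lambda>0$ the point $V=\varepsilon I$ is strictly feasible, so Slater's condition holds and strong duality delivers the certificate with the strict inequality $2\lambda\sum y_i+\sum z_j<m/8$; the case $\lambda=0$ is vacuous since RelVecDisc is then itself infeasible on any full-rank $A|_S$.
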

\begin{proof}
We construct a feasible solution to~\ref{sdp:vecdisc} by repeatedly using solutions to~\ref{sdp:relaxedvecdisc} for different values of the restriction $A|_S$. Fix an $S \subseteq [N]$ with $|S|=m$ and let $S_0=S$. Let $W^0$ be the $d\times m$ zero matrix indexed by $S \subseteq [N]$. Let $V^0$ be a solution to SDP~\ref{sdp:relaxedvecdisc} for $A|_{S_0}$. Let $\gamma_0 = \max_{j} V^0_{jj}$, and let $\bar{V}^{0} = V^{0}/2\gamma_0$.  For each $j,j' \in S_0$, set $W^1_{j j'} = W^{0}_{j j'} + \bar{V}^0_{j j'}$. Finally we set $S_1 = S_0 \setminus \{j: W^1_{jj} \geq \frac 1 4\}$.

 Given $S_i$ such that $|S_i|\geq m/2$, we let $V^i$ be a feasible solution to the SDP for $A|_{S_i}$ and set $\gamma_i = \max_{j} V^i_{jj}$, and $\bar{V}^{i} = V^{i}/2\gamma_i$. For each $j,j' \in S_i$, set $W^{i+1}_{j j'} = W^{i}_{j j'} + \bar{V}^i_{j j'}$. We update $S_{i+1} = S_i \setminus \{j: W^{i+1}_{jj} \geq \frac 1 4\}$. We stop once $|S_i|$ falls below $m/2$, say in iteration $L$. It is easy to see that we delete at least one $j$ from $S_i$ in each step, so that this process converges. 

We claim that $W^L$ is a feasible solution to SDP~\ref{sdp:vecdisc}. Observe that $W^L$ is simply a a non-negative linear combination of $V^i$'s and hence is positive semidefinite. By definition of $\bar{V}$, each diagonal entry is at most $\frac 1 2$, so that the definition of $S_i$ ensures that $W^{L}_{jj} \leq \frac 3 4 < 1$ for all $j$. Moreover, for each $j \in S \setminus S_L$, the entry $W^L_{jj} \geq \frac 1 4$, and there are at least $m/2$ such $j$, so that the trace of $W^L$ is at least $m/8$ as required. Finally, $\ai W^L \ait$ is simply the sum $\sum_{t=0}^{L-1} \frac{\ai V^t \ait}{2\gamma_t}$. Thus it suffices to show that $\sum_{t=1}^{L-1} (2\gamma_t)^{-1} \leq 2$. Note that $\tr(W^{t+1}-W^t) = \tr(\bar{V}^t) \geq (2\gamma_t)^{-1} |S_t| \geq (2\gamma_t)^{-1} m/2$. It follows that $m \geq \tr(W^L) \geq \sum_{t=1}^{L-1} (2\gamma_t)^{-1} m/2$ so that $\sum_{t=1}^{L-1} (2\gamma_t)^{-1} \leq 2$ as needed. The claim follows.
\end{proof}

Finally, we describe how to use the mechanism $\mech_{ew}$ of Theorem~\ref{thm:expected-worstcase-noise} to construct a feasible solution to SDP~\ref{sdp:relaxedvecdisc}.

\begin{lemma}
Suppose that for a matrix $A\in \R^{d\times N}$, we have an oblivious noise \epsdeldp\ mechanism $\mech$ such that $\err^{\ell_\infty}_\mech(A) \leq \kappa$. Then for any $S\subseteq [N]$ with $rank(A|_S)=|S|$, the SDP~\ref{sdp:relaxedvecdisc} is feasible for $A|_S$ with $\lambda = O_{\eps,\delta}(\kappa^2)$.
\end{lemma}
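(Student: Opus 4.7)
The plan is to argue by SDP duality. Writing $B = A|_S \in \R^{d\times m}$ with $m = |S|$, and letting $a^{(i)}$ denote the $i$-th row of $B$ and $M_i = a^{(i)T}a^{(i)}$ the corresponding rank-one PSD matrix, the primal SDP~\ref{sdp:relaxedvecdisc} reads $\max\{\tr V : \tr(M_iV) \le \lambda~\forall i\in[d],\ V\succeq 0\}$ with dual $\min\{\lambda \sum_i y_i : y \ge 0,\ \sum_i y_i M_i \succeq I_m\}$. Slater's condition holds for both programs, so by strong duality the primal is feasible with $\tr V \ge m$ if and only if $\sum_i y_i \ge m/\lambda$ for every dual-feasible $y$. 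Fixing such a $y$, it therefore suffices to prove $\sum_i y_i = \Omega_{\eps,\delta}(m/\kappa^2)$.

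Set $D = \diag(\sqrt{y_1},\ldots,\sqrt{y_d})$ and $Y = DB$; the dual constraint is exactly $Y^TY \succeq I_m$, giving $\sigma_{\min}(Y) \ge 1$. First I would transfer $\mech$ into an $(\eps,\delta)$-DP mechanism $\mech_Y$ for the query matrix $Y$ on databases supported in $S$: given $x\in\R^m$, extend by zeros to $\bar x\in\R^N$ and output $D\mech(\bar x) = Yx + Dw$. Post-processing preserves privacy and $Dw$ remains oblivious.

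Second I would invoke the spectral lower bound~\eqref{eq:speclb} applied to $Y$, with $S' = [m]$ and the identity projection, combined with Lemma~\ref{lm:eps} to rescale from the fixed constants $(c_1,c_2)$ of~\eqref{eq:speclb} to the given $(\eps,\delta)$: this gives $\opt_{\eps,\delta}(Y,m) = \Omega_{\eps,\delta}(\specLB(Y,m)) \ge \Omega_{\eps,\delta}(m\,\sigma_{\min}(Y)^2) = \Omega_{\eps,\delta}(m)$. Since the $\ell_2^2$ error of $\mech_Y$ equals $\E\|Dw\|_2^2 = \sum_i y_i\,\E[w_i^2]$, we conclude $\sum_i y_i\,\E[w_i^2] = \Omega_{\eps,\delta}(m)$.

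Third I would upper-bound $\E[w_i^2] = O(\kappa^2)$. The implication $\E\|w\|_\infty \le \kappa \Rightarrow \E[w_i^2] = O(\kappa^2)$ is false for a pathological distribution, but the lemma is invoked in the paper on mechanisms whose noise is Gaussian or a median of $O(\log d)$ independent Gaussians (as in $\mech_{ew}$). For such noise, Borell's inequality (Theorem~\ref{thm:logconcaveborell}) applied to the symmetric convex slabs $\{w : |w_i| \le t\}$ yields subexponential tails $\Pr[|w_i| > t\kappa] \le 2^{-\Omega(t)}$ for $t > 1$, whence $\E[w_i^2] = O(\kappa^2)$. Combining with the previous step gives $\kappa^2 \sum_i y_i = \Omega_{\eps,\delta}(m)$, i.e.\ $\sum_i y_i = \Omega_{\eps,\delta}(m/\kappa^2)$.

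The hard part will be the third step: passing from the first-moment $\ell_\infty$ bound $\E\|w\|_\infty \le \kappa$ to the per-coordinate second-moment bound $\E[w_i^2] = O(\kappa^2)$. This is not automatic for an arbitrary oblivious $(\eps,\delta)$-DP mechanism; the plan therefore relies on the log-concave (or median-of-log-concave) structure of the noise in the mechanisms to which the lemma is applied, using Theorem~\ref{thm:logconcaveborell} to upgrade the first-moment bound to a second-moment bound at only an $O(1)$ loss.
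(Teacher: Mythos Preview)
Your approach is genuinely different from the paper's. The paper gives a direct \emph{primal} construction: sample $Y \sim \mech(0)$, set $X = \inv(A|_S)(Y) \triangleq \arg\min_x \|A|_S\, x - Y\|_\infty$, and take $V = \E[XX^T]$. The row constraints $\ai V \ait = O(\kappa^2)$ come from $\E\|A|_S X\|_\infty \le 2\kappa$ (triangle inequality, using that $0$ is a candidate for $X$). The trace lower bound $V_{jj} = \Omega_{\eps,\delta}(1)$ is proved by a short, elementary privacy argument: $\mech(0)$ and $\mech(e_j)$ are $O(\eps+\delta)$-close in total variation, and $\inv(A|_S)(y + A|_S e_j) = \inv(A|_S)(y) + e_j$, so the law of $X$ is close to that of $X+e_j$, which forces $\Var(X_j) = \Omega(1)$.

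Compared to your dual argument via $\specLB$, the paper's route is more self-contained (it does not invoke the hereditary discrepancy machinery) and, crucially, it is \emph{constructive}: the paper notes that $V$ can be replaced by an empirical second-moment matrix from polynomially many truncated samples, which is needed downstream to actually feed Bansal's rounding. Your duality proof establishes feasibility but does not hand you a $V$.

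On the gap you flag in step three: you are right that $\E\|w\|_\infty \le \kappa \Rightarrow \E[w_i^2] = O(\kappa^2)$ is not automatic, and the paper's own primal argument makes the analogous jump (from $\E\|A|_S X\|_\infty \le 2\kappa$ to $\E[(\ai X)^2] \le 16\kappa^2$) just as implicitly. So this is a shared weakness rather than a defect of your approach per se. Your proposed fix via Borell covers the Gaussian case, but note that $\mech_{ew}$'s noise is a coordinate-wise median of samples from a \emph{mixture} of Gaussians, which is not obviously log-concave; there one should instead use the explicit $(2/k)^{CL}$ tail from the median-of-$L$ Chernoff argument in the proof of Theorem~\ref{thm:expected-worstcase-noise} to control the second moment.
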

\begin{proof}
Since an \epsdeldp\ mechanism for $A$ also yields an \epsdeldp\ mechanism for any submatrix of $A$, we can assume without loss of generality that $S=[N]$. For a $y\in \R^d$, let $\inv A(y) \triangleq \arg \min_x \|Ax - y\|_{\infty}$. Let $Y$ be a random variable distributed according to $\mech(0)$ and let $X=\inv A(Y)$. Then by the properties of the mechanism $\E[\|AX\|_{\infty}] \leq \E[\|AX-Y\|_{\infty}+\|Y\|_{\infty}] \leq 2\E[\|Y\|_{\infty}] \leq 2\kappa$, since $0$ is a possible value for $\inv A(Y)$, and we picked $X$. Thus if we set $V=E[XX^T]$, then $\ai V \ait \leq 16\kappa^2$ for each $i$.

It remains to lower bound $V_{jj} = \E[(X_j)^2] \geq \Var(X_{j})$. Intuitively, if $\Var(X_{j})$ was small, then $Y$ gives too much information about $X_{j}$, violating differential privacy. Formally, by differential privacy, the statistical distance between $\mech(0)$ and $\mech(e_j)$ is at most $2(\eps+\delta)$. Thus the distribution $Y$ is close to $Y+Ae_j$. But it is easy to check\footnote{We assume that ties in the definition of $\inv A(\cdot)$ are broken at random.} that $\inv A(y+Ae_j) = \inv A(y)+e_j$. Thus the distribution $X$ is close to $X + e_j$. This in turn lower bounds the variance of $X_{j}$ by an absolute constant. Scaling $V$ by a constant factor implies the result.
Finally, we observe that this proof can be easily made constructive. We can take polynomially many samples $x_1,\ldots,x_k$ and setting $V$ to the empirical estimate $\frac{1}{k}\sum_{i=1}^k x_i x_i^T$ instead of $E[XX^T]$. Truncating the distribution so as to reject any $x_i$ with $\|Ax_i\|_{\infty} \geq d^3\kappa$ or with $\|x_i\|_{\infty} \geq d^3$ still preserves the required properties and allows us to use Chernoff bounds to ensure that the empirical estimate satisfies the constraints up to a constant factor.
\end{proof}

\fi

\section{Conclusions}\label{sect:concl}

We have presented near optimal mechanisms for any linear query for dense and sparse databases, under both pure and approximate differential privacy. Our mechanisms are simple and efficient, and it would be instructive to implement them so as to compare them with existing techniques.

Our work uses the hereditary discrepancy lower bound, which holds for
small enough constant $\eps$ and $\delta$. Since our lower bounds
do not get higher as $\delta$ gets smaller, the approximation ratio
has an $O(\sqrt{\log 1/\delta})$ term in it. We leave open the
question of developing better lower bounding techniques, and better
approximation ratios for $(\eps,\delta)$-differentially private
mechanisms. Our work gives $\ell_2^2$ bounds on the error. While we
can translate those bounds to $\ell_\infty$ error bounds for the dense
case, we leave open the question of designing near optimal
$\ell_\infty$ error mechanisms in the sparse case.

\section{Acknowledgements}
The first and second named authors would like to thank Moritz Hardt and
Daniel Dadush for several helpful discussions. The first named author was
partially supported by a Simons Graduate Fellowship, award number
252861. 

\ifstoc
{\fontsize{9}{9}\selectfont
 \bibliographystyle{abbrv}
\bibliography{privacygeom,noise,privacy}
}
\else
\bibliographystyle{alpha}
\bibliography{privacygeom,noise,privacy}
\fi
\appendix
\stocoption{}{\section{Concentration of Log concave measures}

The following measure concentration inequality is standard. We include a proof below for completeness. We start by stating the Brunn-Minkowski inequality.
\begin{theorem}
Let $\mu$ be a log-concave measure on $\Re^d$, let $\alpha,\beta \geq 0$  be numbers such that $\alpha+\beta=1$, and let $A,B \subseteq \Re^d$ be measurable sets such that the set $\alpha A + \beta B$ is measurable. Then
\[
\mu(\alpha A + \beta B) \geq  (\mu(A))^{\alpha} (\mu(B))^{\beta}
\]
\end{theorem}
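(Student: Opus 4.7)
The plan is to reduce to the classical Prékopa–Leindler inequality, applied to the density of $\mu$. Recall that $\mu$ being log-concave means it has a density $\phi$ with respect to Lebesgue measure (on its affine support) such that $\log \phi$ is concave; equivalently, for all $x, y \in \R^d$ and $\alpha, \beta \geq 0$ with $\alpha + \beta = 1$,
\[
\phi(\alpha x + \beta y) \geq \phi(x)^{\alpha}\phi(y)^{\beta}.
\]

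The key step is then to choose the right triple of functions. I would set
\[
f(x) = \phi(x)\mathbf{1}_A(x), \qquad g(y) = \phi(y)\mathbf{1}_B(y), \qquad h(z) = \phi(z)\mathbf{1}_{\alpha A + \beta B}(z).
\]
For $x \in A$ and $y \in B$ we have $\alpha x + \beta y \in \alpha A + \beta B$, so using log-concavity of $\phi$,
\[
h(\alpha x + \beta y) = \phi(\alpha x + \beta y) \geq \phi(x)^{\alpha}\phi(y)^{\beta} = f(x)^{\alpha} g(y)^{\beta}.
\]
If instead $x \notin A$ or $y \notin B$, then $f(x)^{\alpha} g(y)^{\beta} = 0 \leq h(\alpha x + \beta y)$ trivially. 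Hence the hypothesis of Prékopa–Leindler is satisfied, and its conclusion $\int h \geq \bigl(\int f\bigr)^{\alpha}\bigl(\int g\bigr)^{\beta}$ is exactly the claimed inequality $\mu(\alpha A + \beta B) \geq \mu(A)^{\alpha}\mu(B)^{\beta}$.

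The main obstacle is the black box Prékopa–Leindler inequality itself. Since the appendix is self-contained, I would sketch a proof by induction on the dimension $d$. The base case $d=1$ reduces, by a level-set argument on the functions, to the one-dimensional Brunn–Minkowski inequality $|\alpha A + \beta B| \geq \alpha|A| + \beta|B| \geq |A|^{\alpha}|B|^{\beta}$ (the last step being the AM–GM inequality), which can be proved by translation followed by the fact that $\alpha A + \beta B$ contains the disjoint union of translates of $\alpha A$ and $\beta B$ overlapping in at most a point. The inductive step proceeds by writing $\R^d = \R^{d-1}\times\R$, applying the one-dimensional inequality to the sections, and then applying the $(d-1)$-dimensional version to the resulting functions of $d-1$ variables and integrating; the geometric-mean structure of the inequality is exactly what makes this iteration work cleanly.

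A minor technical point worth flagging: one must verify measurability of all the objects in play (Lebesgue measurability of the sections, etc.), and handle the possibility that $\mu(A) = 0$ or $\mu(B) = 0$ — in both cases the inequality is trivial, so we may assume positivity throughout. No other subtleties arise, and the proposition follows as above.
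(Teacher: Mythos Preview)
Your proof via Pr\'ekopa--Leindler is correct and is the standard route to this inequality. However, note that the paper does not actually prove this theorem: it merely \emph{states} the Brunn--Minkowski inequality for log-concave measures as a known fact (``We start by stating the Brunn--Minkowski inequality''), and the proof that follows in the appendix is for Borell's lemma (Theorem~\ref{thm:logconcaveborell}), which \emph{uses} this inequality as a black box. So there is no ``paper's own proof'' to compare against---you have simply supplied a proof where the paper chose to cite.

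One small point worth being careful about: the definition of ``log-concave measure'' is sometimes taken to be exactly the inequality you are proving, which would make the argument circular. You correctly resolve this by working with the alternative characterization (log-concave density on the affine support), which is the relevant one here since the paper applies the result to the generalized $K$-norm distributions, whose densities are explicitly log-concave. With that reading, your argument is complete and your sketch of Pr\'ekopa--Leindler by induction on dimension is the standard one.
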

This can be used to prove Borell's lemma for arbitrary log concave distributions. We use the proof approach presented  in~\cite{Giannopoulos03}.
 \begin{proofof}{Theorem~\ref{thm:logconcaveborell}}
Observe that 
 \[ \Re^d \setminus A \supseteq (\frac 2 {t + 1}) (\Re^d \setminus tA) + \frac{t - 1}{t + 1} A.\]
 Applying the Brunn-Minkowski inequality and rearranging proves the result.
\end{proofof}}

\section{From Concentration to Expectation}
We repeatedly use the fact that a exponential or better bound on the upper tail implies a bound on the expectation. For completeness, we give a quantitative version with a proof.
\begin{lemma}
\label{lem:conctoexp}
Suppose that for some constants $C,c_1,c_2$, a random variable $X$ satisfies: $\forall \alpha \geq c_1, \Pr[X \geq \alpha C] \leq \exp(- \alpha/c_2)$. Then $\E[X] \leq (c_1+c_2)C$.
\end{lemma}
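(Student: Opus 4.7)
The plan is to use the standard layer-cake (tail integral) representation of the expectation of a nonnegative random variable, and to split the integral at the threshold $\alpha = c_1$ where the hypothesized tail bound kicks in.

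First I would reduce to the nonnegative case by noting $\E[X] \leq \E[X_+]$, where $X_+ = \max(X,0)$, and then write
\[
\E[X_+] = \int_0^\infty \Pr[X \geq t]\,dt.
\]
Next I would split this integral at $t = c_1 C$. On the interval $[0, c_1 C]$, I use the trivial bound $\Pr[X \geq t] \leq 1$, which contributes at most $c_1 C$. On the interval $[c_1 C, \infty)$, I substitute $t = \alpha C$ (so $\alpha \geq c_1$) and apply the hypothesis $\Pr[X \geq \alpha C] \leq \exp(-\alpha/c_2)$ to get
\[
\int_{c_1 C}^\infty \Pr[X \geq t]\,dt \leq C \int_{c_1}^\infty e^{-\alpha/c_2}\,d\alpha = c_2 C \, e^{-c_1/c_2} \leq c_2 C.
\]

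Adding the two contributions yields $\E[X] \leq (c_1 + c_2)C$, as required. There is no real obstacle here; the argument is a routine tail-integral computation, and the only point worth being careful about is handling the possibility that $X$ takes negative values (handled by passing to $X_+$) and the fact that the hypothesized bound is only assumed for $\alpha \geq c_1$, which is exactly why the split is made at $c_1 C$.
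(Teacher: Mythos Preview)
Your proof is correct and essentially the same as the paper's: both use the tail-integral representation and the hypothesis on $[c_1C,\infty)$. The paper just packages the same computation slightly differently, defining $Y=(X-c_1C)/C$ first and then integrating the tail of $Y$ over $[0,\infty)$; you split the integral for $X_+$ at $c_1C$ instead, and you are actually more explicit about handling possible negative values of $X$.
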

\begin{proof}
Let $Y=(X-c_1C)/C$. Then
\begin{align*}
  E[Y] &= \int_{0}^{\infty} \Pr[Y \geq \alpha]\,\mathrm{d}\alpha\\
  &\leq \int_{0}^{\infty} \exp(-\alpha/c_2)\,\mathrm{d}\alpha\\
  &= c_2\\
\end{align*}
The claim follows by linearity of expectation.
\end{proof}
\section{Hardness of Approximating Hereditary Discrepancy}

We show constant hardness of approximation
$\herdisc^{\ell_\infty}$. Strong inapproximability results
were previously given for discrepancy, in the $\ell_2$ and
$\ell_\infty$ versions, by Charikar, Newman, and Nikolov~\cite{CNN}. 

Let us define 
$$
\disc^{\ell_\infty} (A) = \min_{x \in \{0, 1\}^n}{\|Ax\|_\infty}
$$
\begin{theorem}\label{thm:herdisc-hardness}
  There exists a family of input matrices $A \in \{0, 1\}^{m \times n}$
for which it is $\mathsf{NP}$-hard to distinguish between the two
cases \textbf{(1)} $\herdisc^{\ell_\infty}(A) \leq 2$ and \textbf{(2)}
$\disc^{\ell_\infty}(A) \geq 3$. 
\end{theorem}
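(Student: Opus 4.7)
I would prove the theorem by reduction from the NP-hard problem of $2$-coloring a $3$-uniform hypergraph (equivalently, monotone NAE-$3$-SAT). An instance is a family $\mathcal{F} = \{S_1, \ldots, S_m\}$ of $3$-element subsets of $[n]$, and the question is whether there exists $\chi : [n] \to \{-1, +1\}$ such that no $S_i \in \mathcal{F}$ is monochromatic. The reduction outputs the incidence matrix $A \in \{0, 1\}^{m \times n}$ with $A_{ij} = 1$ iff $j \in S_i$; this is clearly constructible in polynomial time.

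For the soundness direction (NO instance of hypergraph $2$-coloring), if no $\pm 1$-coloring splits all of $\mathcal{F}$, then for every $\chi \in \{-1, +1\}^n$ some row $S_i$ is monochromatic and contributes $|\sum_{j \in S_i} \chi_j| = 3$. Hence $\disc^{\ell_\infty}(A) \geq 3$, which is case (2) of the theorem.

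For the completeness direction (YES instance), fix a global coloring $\chi^* \in \{-1, +1\}^n$ that splits every $S_i$, so $|\sum_{j \in S_i} \chi^*_j| = 1$. For any $T \subseteq [n]$, I would use the restriction $\chi^*|_T$ as a candidate coloring for $A|_T$. Since each $S_i$ under $\chi^*$ has exactly one color appearing twice, a small case analysis on $|S_i \cap T| \in \{0, 1, 2, 3\}$ shows that the restricted sum $|\sum_{j \in S_i \cap T} \chi^*_j|$ always lies in $\{0, 1, 2\}$. Therefore $\herdisc^{\ell_\infty}(A) \leq 2$, which is case (1).

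Combined with $\disc^{\ell_\infty}(A) \leq \herdisc^{\ell_\infty}(A)$, the two directions establish the claimed hardness, giving inapproximability of $\herdisc^{\ell_\infty}$ to within any factor less than $3/2$. The only subtle step is the completeness argument, because hereditary discrepancy requires a \emph{minimum} over colorings for \emph{every} column restriction; the key observation is that the single global splitting coloring $\chi^*$ simultaneously witnesses discrepancy at most $2$ on every $A|_T$, so there is no need to reason about how the optimal coloring changes with $T$.
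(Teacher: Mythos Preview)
Your proof is correct and follows essentially the same approach as the paper: both reduce from $2$-colorability of $3$-uniform hypergraphs via the incidence matrix, and both use the single global splitting coloring, restricted to each column subset, to witness $\herdisc^{\ell_\infty}(A)\le 2$ in the YES case (the paper states this step as an observation while you spell out the case analysis on $|S_i\cap T|$). The NO direction is identical in both proofs.
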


The proof of Theorem~\ref{thm:herdisc-hardness} is  a
straight-forward reduction from the 2-colorability problem for
3-uniform hypergraphs. A maximization version of this problem
(i.e.~maximize the number of bichromatic edges) is also known as
\textsf{Max-E3-Set Splitting} and is equivalent to
\textsf{NotAllEqual-SAT} restricted to inputs with no negated
variables. 

\begin{definition}
  A hypergraph $H = (V, E)$, where $E \subseteq 2^V$, is
  \emph{2-colorable} if and only if there exists a set $T \subseteq V$
  such that for all $e \in E$, $T \cap e \neq e$ and $T \cap e \neq
  \emptyset$. The set $T$ is called a \emph{transversal} of $H$. 
\end{definition}

\begin{lemma}[\cite{schaefer}]
  There exists a family of  3-uniform hypergraphs such that
  deciding whether a hypergraph in the family is 2-colorable is
  $\mathsf{NP}$-complete. 
\end{lemma}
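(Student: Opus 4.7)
The plan is to establish NP-completeness by a two-stage polynomial-time reduction from $3$-SAT. Membership in NP is immediate since a $2$-coloring of the vertex set is a polynomial-size certificate that can be verified edge-by-edge. For NP-hardness, I would first reduce $3$-SAT to NAE-$3$-SAT (Not-All-Equal $3$-SAT), and then reduce NAE-$3$-SAT to monotone (all-positive) NAE-$3$-SAT, which is exactly $2$-coloring of a $3$-uniform hypergraph.

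For the first stage, given a $3$-CNF formula $\phi = \bigwedge_{j=1}^m (\ell_{j,1} \vee \ell_{j,2} \vee \ell_{j,3})$, introduce one global fresh variable $s$ and, for each clause, a fresh variable $y_j$. Replace $C_j$ by the two NAE-$3$ constraints $\mathrm{NAE}(\ell_{j,1}, \ell_{j,2}, y_j)$ and $\mathrm{NAE}(\neg y_j, \ell_{j,3}, s)$. A routine case check shows that any $3$-SAT assignment extends to an NAE-$3$-SAT assignment by fixing $s=0$ and choosing each $y_j$ so both NAE constraints are bichromatic; conversely, by the global complementation symmetry of NAE-SAT one may assume $s=0$, and then each clause $C_j$ must contain at least one true literal.

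For the second stage, replace each Boolean variable $x$ by two hypergraph vertices $v_x, v_{\bar x}$ and convert each NAE-$3$ clause into a monotone hyperedge on the three vertices $v_{\ell}$ of appropriate polarity. The technical heart of the reduction is a fixed-size gadget that forces $v_x \neq v_{\bar x}$ in every proper $2$-coloring. For each $x$, introduce four auxiliary vertices $u_1, u_2, u_3, u_4$ together with the six hyperedges
\[
\{v_x, v_{\bar x}, u_1\},\ \{v_x, v_{\bar x}, u_2\},\ \{u_1, u_2, u_3\},\ \{u_1, u_2, u_4\},\ \{v_x, u_3, u_4\},\ \{v_{\bar x}, u_3, u_4\}.
\]
If $v_x = v_{\bar x}$, the first two edges force $u_1$ and $u_2$ to the opposite color; then $\{u_1,u_2,u_3\}$ and $\{u_1,u_2,u_4\}$ force $u_3=u_4=v_x$, which makes $\{v_x,u_3,u_4\}$ monochromatic, a contradiction. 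When $v_x \neq v_{\bar x}$, say $v_x=0, v_{\bar x}=1$, the assignment $u_1=u_3=0,\ u_2=u_4=1$ satisfies all six edges, so the gadget is consistent whenever the complementarity holds.

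Combining the two stages, the hypergraph $H$ produced is $2$-colorable iff the intermediate NAE-$3$-SAT instance is satisfiable iff $\phi$ is satisfiable, and the whole construction is polynomial-time. The main obstacle is the second stage: the monotone NAE framework has a global color-swap symmetry, so the only relations on a pair $(v_x, v_{\bar x})$ that can be pinned down are those invariant under complementation; designing a gadget of monotone hyperedges that actually forces the inequality relation is the one step that requires explicit construction rather than a textbook manipulation, and the six-edge gadget above is the compact certificate that it can be done.
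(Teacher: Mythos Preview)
The paper does not prove this lemma at all; it is simply cited as a classical result of Schaefer. So there is no paper proof to compare against, and what you have written is a self-contained argument where the paper just invokes a reference.

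Your argument is correct and follows the standard route. Stage~1, the reduction from $3$-SAT to NAE-$3$-SAT via a single reference variable $s$ and per-clause slack variables $y_j$, is exactly Schaefer's original reduction, and your forward/backward verification is right: with $s=0$, a clause with all three literals false would force $y_j=1$ from the first NAE constraint and then violate the second. Stage~2, eliminating negations by doubling each variable into $v_x,v_{\bar x}$ and adding an inequality gadget, is also standard in spirit, and your six-edge gadget does the job: when $v_x=v_{\bar x}$ the first two edges pin $u_1,u_2$, the next two pin $u_3,u_4$, and the last two edges become monochromatic; when $v_x\neq v_{\bar x}$ your explicit assignment $u_1=u_3=0$, $u_2=u_4=1$ satisfies all six edges. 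One small point worth stating explicitly is that you may assume the input $3$-CNF has three distinct literals per clause (preprocess trivially satisfied or degenerate clauses), so that every hyperedge produced in Stage~2 genuinely has three distinct vertices and the resulting hypergraph is $3$-uniform.
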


\begin{proofof}{Theorem~\ref{thm:herdisc-hardness}}
  The reduction simply maps a 3-uniform hypergraph to its incidence
  matrix. I.e.~for a hypergraph $H=(V,E)$, where $V = \{v_1, \ldots,
  v_n\}$ and $E = \{e_1, \ldots, e_m\}$, we create a $m \times n$
  matrix $A$, where $A_{ij} = 1$ if $v_j \in e_i$ and $A_{ij} = 0$
  otherwise. Observe that if $H$ is 2-colorable, and this is witnessed
  by a transversal $T$, then $\|(A|_S)x\|_\infty \leq 2$ for all $S
  \subseteq [n]$ and for $x$ defined by $x_i = +1 \Leftrightarrow v_i
  \in T$. On the other hand, if $H$ is not 2-colorable, for any $x \in
  \{+1, -1\}^n$ we have $\|Ax\|_\infty \geq 3$, since otherwise the
  set $T = \{v_i: x_i = +1\}$ would be a transversal.
\end{proofof}

We note that Guruswami proved constant hardness of approximation for
\textsf{Max-E3-Set Splitting}~\cite{venkat-SS}. In particular, he
showed that it is $\mathsf{NP}$-hard to distinguish 2-colorable
3-uniform hypergraphs from hypergraphs for which any coloring with 2
colors leaves at least a $1/20$ fraction of the edges monochromatic.

\ifstoc
\section{Missing Proofs}
\begin{proofof}{Lemma~\ref{lm:eps}}
  Let $\mech$ be an $(\eps, \delta)$-differentially private algorithm
  achieving $\opt_{\eps, \delta}(A, n)$. We will use $\mech$ as a
  black box to construct a $(k\eps, \delta')$-differentially private
  algorithm $\mech'$ which satisfies the error guarantee
  $\err_{\mech'}(A, n/k) \leq \frac{1}{k^2}\err_\mech(A, n)$. 

  The algorithm $\mech'$ on input $x$ satisfying $\|x\|_1 \leq n/k$
  outputs $\frac{1}{k}\mech(kx)$. We need to show that $\mech'$
  satisfies $(k\eps, \delta')$-differential privacy. Let $x$ and $x'$ be two
  neighboring inputs to $\mech'$, i.e. $\|x - x'\|_1 \leq 1$, and let
  $S$ be a measurable subset of the output $\mech'$. Denote $p_1 =
  \Pr[\mech'(x) \in S]$ and $p_2 = \Pr[\mech'(x') \in S]$. We need to
  show that $p_1 \leq e^{k\eps}p_2 + \delta'$. To that end, define
  $x_0 = kx$, $x_1 = kx + (x' - x)$, $x_2 = kx + 2(x'-x)$, $\ldots$,
  $x_k = kx'$. Applying the $(\eps, \delta)$-privacy guarantee of
  $\mech$ to each of the pairs of neighboring inputs $x_0, x_1$, $x_1,
  x_2$, $\ldots$, $x_{k-1}, x_k$ in sequence gives us
  \begin{equation*}
    p_1 \leq e^{k\eps}p_2 + (1 + e^\eps + \ldots + e^{(k-1)\eps})\delta =
    e^{k\eps}p_2 + \frac{e^{k\eps} - 1}{e^\eps - 1}\delta.
  \end{equation*}
  This finishes the proof of privacy for $\mech'$. It is
  straightforward to verify that $\err_{\mech'}(A, n/k) \leq
  \frac{1}{k^2}\err_\mech(A, n)$.  
\end{proofof}

\begin{proofof}{Lemma~\ref{lm:gauss-mech-ind}}
  Let $C = \frac{1+\sqrt{2\ln(1/\delta)}}{\eps}$ and let $p$ be the
  probability density function of $N(0, C\sigma)^d$. Let also $K =
  AB_1$, so $\|x - x'\|_1 \in B_1$ implies $A(x - x') \in K \subseteq
  B_2^d$. Define
  \begin{equation*}
    D_v(w) = \ln \frac{p(w)}{p(w + v)}.
  \end{equation*}
  We will prove that when $w \sim N(0, C\sigma)$, for all $v \in K$,
  $\Pr[|D_v(w)| > \eps] \leq \delta$. This suffices to prove $(\eps,
  \delta)$-differential privacy. Indeed, let the algorithm output $Ax
  + w$ and fix any $x'$ s.t. $\|x - x'\|_1 \leq 1$. Let $v = A(x-x')
  \in K$ and $S = \{w: |D_v(w)| > \eps\}$. For any measurable $T
  \subseteq \R^d$ we have
  \begin{align*}
    \Pr[Ax + w \in T] &= \Pr[w \in T - Ax] \\
    &= \int_{S \cap (T- Ax)}{p(w)dw} + \int_{\bar{S} \cap (T-Ax)}{p(w)dw}\\
    &\leq \delta + e^\eps\int_{\bar{S} \cap (T - Ax')}{p(w)dw}\\
    &= \delta + e^\eps\Pr[w \in T - Ax']
  \end{align*}
  
  We fix an arbitrary $v \in K$ and proceed to prove $|D_v(w)| \leq
  \eps$ with probability at least $1 - \delta$. We will first compute
  $\E D_v(w)$ and then apply a tail bound.  Recall that $p(w) \propto
  \exp(-\frac{1}{2C^2\sigma^2}\|w\|_2^2)$. Notice also that, since $v
  \in K$ can be written as $\sum_{i = 1}^N{\alpha_ia_i}$ where
  $\sum{|\alpha_i|} \leq 1$, we have $\|v\|_2 \leq \sigma$. Then we
  can write
  \begin{align*}
    \E D_v(w) &= \E \frac{\|v + w\|_2^2 - \|w\|_2^2}{2C^2\sigma^2}\\
    &=\E\frac{\|v\|^2 + 2v^Tw}{2C^2\sigma^2} \leq \frac{1}{2C^2}
  \end{align*}
  Note that to bound $|D_v(w)|$ we simply need to bound
  $\frac{1}{C^2\sigma^2}v^Tw$ from above and below. Since
  $\frac{1}{C^2\sigma^2}v^Tw \sim N(0, \frac{\|v\|}{C\sigma})$, we can
  apply a chernoff bound and we get
  \begin{equation*}
    \Pr\left[|v^Tw| > \frac{1}{C}\sqrt{2\ln(1/\delta)}\right] \leq \delta.
  \end{equation*}
  Therefore, with probability $1-\delta$,
  \begin{equation*}
    \frac{1/2C - \sqrt{2\ln(1/\delta)}}{C} \leq D_v(w) \leq \frac{1/2C
      + \sqrt{2\ln(1/\delta)}}{C}. 
  \end{equation*}
  Substituting $C \geq \frac{1 + \sqrt{2\ln(1/\delta)}}{\eps}$
  completes the proof. 
  \end{proofof}

\begin{proofof}{Corollary~\ref{cor:gauss-incl}}
  Since $K$ is full dimensional (by $\rank A = d$) and $E$ contains
  $K$, $E$ is full dimensional as well, and, therefore, $F$ is an
  invertible linear map. Define $G = F^{-1}A$. For each column $g_i$
  of $G$, we have $\|g_i\|_2 \leq 1$. Therefore, by
  Lemma~\ref{lm:gauss-mech-ind}, a mechanism that outputs $Gx + w$
  (where $w$ is distributed as in the statement of the corollary)
  satisfies $(\eps, \delta)$-differential privacy. Therefore, $FGx +
  Fw = Ax + Fw$ is $(\eps, \delta)$-differentially private.
\end{proofof}

\begin{proofof}{Corollary~\ref{cor:gauss-composition}}
  Let $c(\eps, \delta) = \frac{1+\sqrt{2\ln(1/\delta)}}{\eps}$. Since the random
  variables $F_1w_1, \ldots, F_kw_k$ are pairwise independent gaussian
  random variables, and $F_iw_i$ has covariance matrix $c(\eps, \delta)^2 F_iF_i^T$, we have that $w = \sqrt{k}\sum_{i = 1}^k{F_iw_i}$ is a
  gassuian random variable with covariance $c(\eps, \delta)^2G$, whee $G = k
  \sum_{i = 1}^k{F_iF_i^T}$. By Corollary~\ref{cor:gauss-incl}, it is
  sufficient to show that the ellipsoid $E = GB_2^d$ contains $K$. By
  convex duality, this is equivalent to showing $E^\circ \subseteq
  K^\circ$, which is in turn equivalent to $\forall x: \|x\|_{K^\circ} \leq
  \|x\|_{E^{\circ}}$. Recalling that $\|x\|^2_{E^{\circ}} = x^TGG^Tx$
  and $\|x\|_{K^\circ} = \max_{y \in K}{\langle y, x\rangle} = \max_{j
    = 1}^N{\langle a_j, x \rangle}$,  we need to establish
  \begin{equation}
    \label{eq:inclusion-duality}
    \forall x \in \R^d, \forall j \in [N]: \langle a_j, x \rangle^2
    \leq x^TGG^Tx. 
  \end{equation}
  We proceed by establishing (\ref{eq:inclusion-duality}). Since for
  all $i$, $\Pi_i K \subseteq E_i$, by duality and the same reasoning
  as above, we have that for all $i$ and $j$, $\langle \Pi_i a_j, x
  \rangle^2 \leq x^TF_iF_i^Tx$. Therefore, by the Cauchy-Schwarz
  inequality,
  \begin{align*}
    \langle a_j, x \rangle^2 &= \left(\sum_{i = 1}^k{\langle \Pi_i
        a_j, x \rangle}\right)^2\\
    &\leq k\sum_{i = 1}^k{\langle \Pi_i  a_j, x \rangle^2}\\
    &\leq k\sum_{i = 1}^k{x^TF_iF^Tx} = x^TGG^Tx. 
  \end{align*}
  This completes the proof.
\end{proofof}

\begin{proofof}{Theorem~\ref{thm:sparse}}
  The privacy guarantee is direct from
  Lemma~\ref{lm:sparse-priv}. Next we prove that the error of
  $\mech_\ell$ is near optimal. 

  We will bound $\err_{\mech_\ell}(A, n)$ by bounding $\E \|\hat{y}_1
  + \tilde{y}_2 - Ax\|_2^2$ for any $x$. Let us fix $x$ and define $y
  = Ax$; furthermore, define $y_1 = XX^Ty$ and $y_2 = YY^Ty$. By the
  Pythagorean theorem, we can write 
  \begin{equation}
    \label{eq:sparse-err-terms}
    \E \|\hat{y}_1 + \tilde{y}_2 -  y\|_2^2 = \E \|\hat{y}_1 -
    y_1\|_2^2 + \E \|\tilde{y}_2 -  y_2\|_2^2.   
  \end{equation}
  We show that the first term on the right hand side of
  (\ref{eq:sparse-err-terms}) is at most $O(\log^{3/2}d \sqrt{\log N
    \log(1/\delta)})\opt_{\eps,\delta}(A, n)$ and the second term is
  $O(\log^2d \log(1/\delta))\opt_{\eps,  \delta}(A, n)$ .  

  The bound on $\E \|\tilde{y}_2 -  y_2\|_2^2$ follows from
  Theorem~\ref{thm:dense}. More precisely, $Y^T\tilde{y_2}$ is
  distributed identically to the output of $\mech_g(Y^TA, x)$, and by
  Theorem~\ref{thm:dense}, 
 \stocoption{\begin{align*}
     \E \|\tilde{y}_2 -  y_2\|_2^2 &=  \E \|Y^T\tilde{y}_2 -
     Y^TAx\|_2^2 \\ &=O(\log^2d\sqrt{\log N \log(1/\delta)})\opt_{\eps,
       \delta}(A, \frac{d_{t+1}}{\eps}). 
  \end{align*}}{
  \begin{equation*}
     \E \|\tilde{y}_2 -  y_2\|_2^2 =  \E \|Y^T\tilde{y}_2 -
     Y^TAx\|_2^2 =O(\log^2d\sqrt{\log N \log(1/\delta)})\opt_{\eps,
       \delta}(A, \frac{d_{t+1}}{\eps}). 
  \end{equation*}}
  Since, by the definition of $t$, $\frac{d_{t+1}}{\eps} < n$, we have
  the desired bound. 

  The bound on $\E \|\hat{y}_1 - y_1\|_2^2$ follows from
  Lemma~\ref{lm:lse} and Lemma~\ref{lm:sparse-util}. We will use the
  notations for  $w_i$, and $r_i$ defined in
  Algorithm~\ref{alg:gaussnoise}. Let $L = nXX^TK$; by
  Lemma~\ref{lm:lse},
  \begin{equation*}
    \E \|\hat{y}_1 -  y_1\|_2^2 \leq 4 \E \|\hat{y}_1 -
    y_1\|_{L^\circ} = 4\E \max_{j = 1}^N |\langle na_j, XX^T(\tilde{y}
    - y)\rangle|. 
  \end{equation*}
  The last equality follows from the definition of the dual norm
  $\|\cdot\|_{L^\circ}$ and from the fact that $L$ is a polytope with
  vertices $\{a_j\}_{j = 1}^N$, so any linear functional on $L$ is
  maximized at one of the vertices. From the fact that for all $i \neq
  j$ we have $U_i^T U_j = 0$, from the triangle inequality, and from
  Lemma~\ref{lm:sparse-util}, we derive
\stocoption{\begin{align*}
    \E \max_{j = 1}^N |\langle na_j,  XX^T&(\tilde{y} - y)\rangle| = \E \max_{j =
      1}^N n |\sqrt{k}\sum_{i = 1}^t{\langle a_j, r_iU_iw_i\rangle}|\\
    &\leq \sqrt{k}\sum_{i = 1}^t{\E \max_{j = 1}^N{n |\langle a_j, r_iU_iw_i\rangle}|}\\
    &= O(\log^{3/2} d  \sqrt{\log N \log (1/\delta)}) \opt_{\eps,
      \delta}(A, n). 
  \end{align*}}{
  \begin{align*}
    \E \max_{j = 1}^N |\langle na_j,  XX^T(\tilde{y} - y)\rangle| &= \E \max_{j =
      1}^N n |\sqrt{k}\sum_{i = 1}^t{\langle a_j, r_iU_iw_i\rangle}|\\
    &\leq \sqrt{k}\sum_{i = 1}^t{\E \max_{j = 1}^N{n |\langle a_j, r_iU_iw_i\rangle}|}\\
    &= O(\log^{3/2} d  \sqrt{\log N \log (1/\delta)}) \opt_{\eps, \delta}(A, n). 
  \end{align*}}
  This completes the proof. 
\end{proofof}
\else
\fi
\end{document}